\DeclareFontFamily{U}{mathb}{\hyphenchar\font45}
\DeclareFontShape{U}{mathb}{m}{n}{
      <5> <6> <7> <8> <9> <10> gen * mathb
      <10.95> mathb10 <12> <14.4> <17.28> <20.74> <24.88> mathb12
      }{}
\DeclareSymbolFont{mathb}{U}{mathb}{m}{n}
\DeclareMathSymbol{\boxvoid}      {2}{mathb}{"6C}
\numberwithin{equation}{chapter}
\newtheorem{lem}{Lemma}[chapter]
\newtheorem{prop}[lem]{Proposition} 
\newtheorem{thm}[lem]{Theorem}
\newtheorem{cor}[lem]{Corollary}
\newtheorem{conj}[lem]{Conjecture} 
\newtheorem{summ}[lem]{Summary}
\theoremstyle{definition}
\newtheorem{Def}[lem]{Definition} 
\newtheorem{rem}[lem]{Remark}
\newtheorem{ex}[lem]{Example}
\newtheorem*{thm*}{Theorem}
\renewcommand{\thesection}{\thechapter.\arabic{section}}
\DeclareRobustCommand{\gobblefive}[5]{}
\newcommand*{\SkipTocEntry}{\addtocontents{toc}{\gobblefive}} 
\begin{document}


\addtolength{\jot}{0.1cm}			

\pagenumbering{roman}

\thispagestyle{empty}
\begin{center}
\Huge{\textbf{Twistor Theory of Higher-Dimensional Black Holes}}\\[0.5cm]
\huge{Norman Metzner}\\[1cm]
\vspace{5cm}
\begin{figure}[htbp]
\begin{center}
\hspace{0.15cm}
\includegraphics[height=5.3cm]{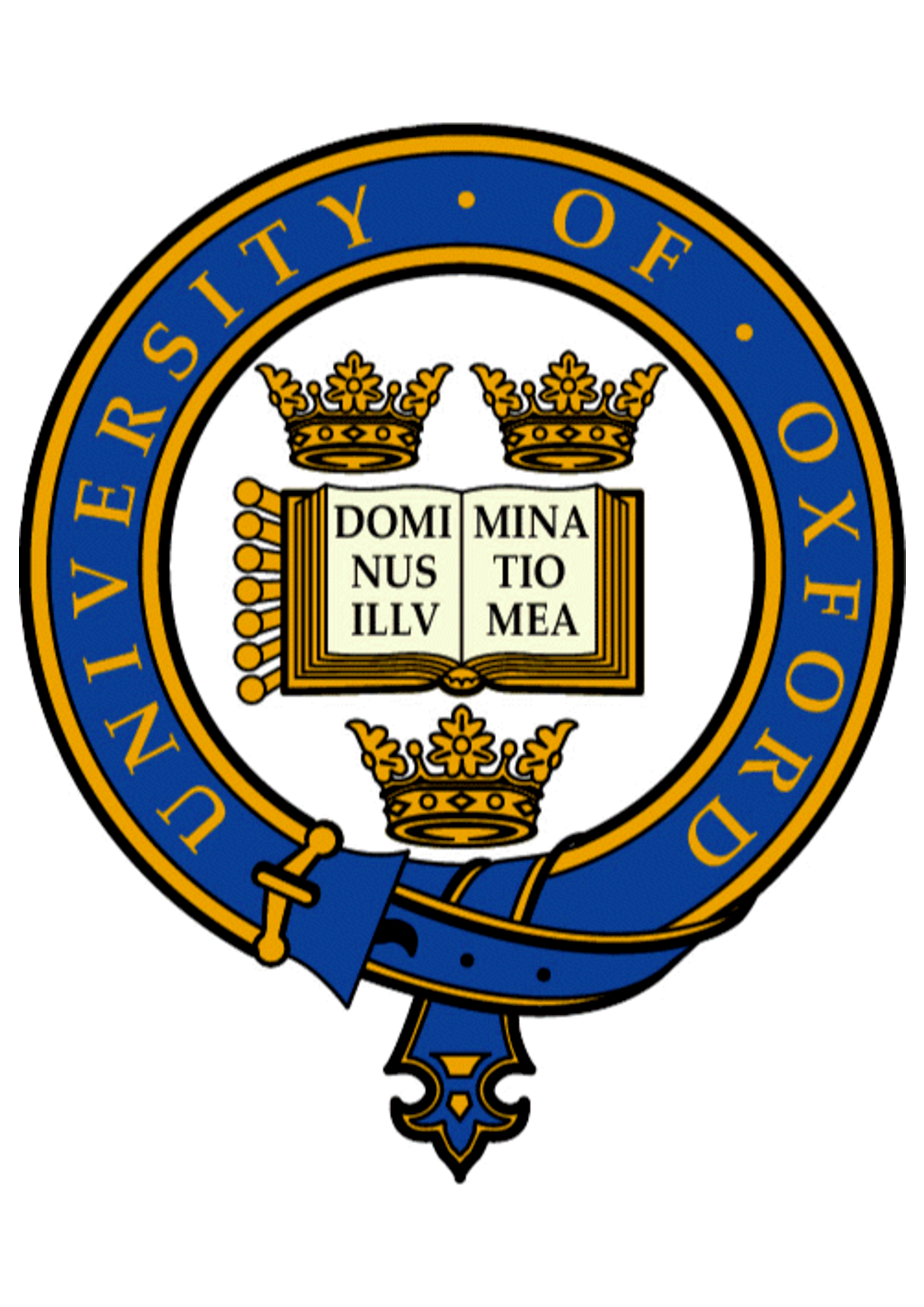}
\hspace{2cm}
\includegraphics[height=4.7cm]{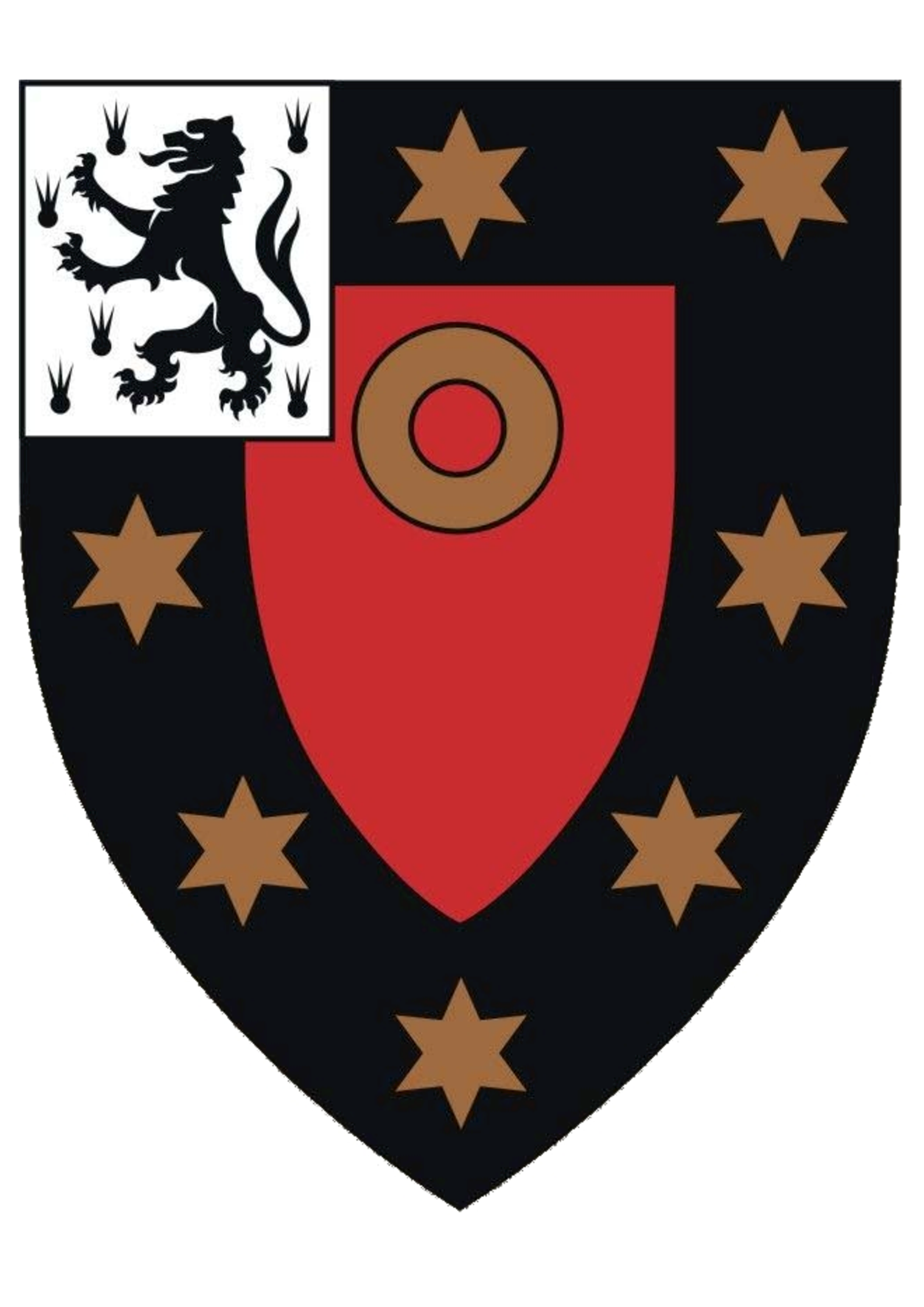}
\end{center}
\end{figure}
\vspace{-0.5cm}
\begin{large}
\begin{tabular}{ccc}
Mathematical Institute &\hphantom{fillingfillingf}& St John's College\\
University of Oxford && Oxford
\end{tabular}
\end{large}

\vspace{1.5cm}
\begin{LARGE}
Thesis submitted for the degree of \\[0.2cm]
\textit{Doctor of Philosophy in Mathematics}\\[0.2cm]
Trinity Term 2012
\end{LARGE}
\end{center}

\cleardoublepage

\vspace*{1.5cm}
{\Large \bfseries Acknowledgements}
\vspace{0.5cm}
\begin{quote}
I would like to thank my supervisors Paul Tod and Lionel Mason for the fruitful discussions and the guidance throughout all the stages towards this thesis. Furthermore, I am grateful to Piotr Chru\'{s}ciel and Nicholas Woodhouse for sharing their thoughts on some ideas, to the German National Academic Foundation (Studienstiftung des deutschen Volkes), St\,John's College Oxford and the Engineering and Physical Sciences Research Council for financial support, and to Mark Wilkinson, Parth Soneji and Christopher Hopper for making the days in the office so enjoyable. Special mention to my wonderful wife Yulia backing me up and bearing with me at all times.
\end{quote}
\SkipTocEntry\chapter*{Abstract}
\thispagestyle{plain}
{The correspondence of stationary, axisymmetric, asymptotically flat space-times and bundles over a reduced twistor space has been established in four dimensions. The main impediment for an application of this correspondence to examples in higher dimensions is the lack of a higher-dimensional equivalent of the Ernst potential. This thesis will propose such a generalized Ernst potential, point out where the rod structure of the space-time can be found in the twistor picture and thereby provide a procedure for generating solutions to the Einstein field equations in higher dimensions from the rod structure, other asymptotic data, and the requirement of a regular axis. Examples in five dimensions are studied and necessary tools are developed, in particular rules for the transition between different adaptations of the patching matrix and rules for the elimination of conical singularities.}

\cleardoublepage
       
\tableofcontents
\thispagestyle{plain}


\pagenumbering{arabic}

\chapter{Introduction} \label{ch:intro}

Although initially proposed in four dimensions, the vacuum Einstein Field Equations can be studied more generally in $n$ dimensions. Higher-dimensional general relativity plays a role for example in string theory \cite{Strominger:1996sh}, the AdS/CFT correspondence, which relates the dynamics of an $n$-dimensional black hole with those of quantum field theory in $n-1$ dimensions \cite{Maldacena:1997re, Aharony:1999ti}, and scenarios involving large extra dimensions and high-energy scattering are discussed in which higher-dimensional black holes might be produced in particle colliders \cite{Kanti:2004nr}. Furthermore, black hole space-times are Ricci-flat Lorentzian manifolds and as such central objects of study in differential geometry. This list of motivations was taken from \cite[Sec.~1]{Emparan:2008aa}.

In four dimensions general relativity has led to various striking results about black holes, for example concerning their horizon topology, their classification (see references in Section~\ref{sec:bhfacts}) or the laws of black hole mechanics \cite{Bardeen:1973aa}. Are these features exclusive to four-dimensional space-times or do some of them carry over to higher dimensions? Answers to these questions would provide valuable insights as well as a better understanding of general relativity and its objects like black holes in a broader context \cite[Sec.~1]{Emparan:2008aa}. 

In four dimensions stationary and asymptotically-flat vacuum black holes can be uniquely classified by their mass and angular momentum, and form a single family, the Kerr solutions, see \cite{Chrusciel:2008aa}, references therein and Theorem~\ref{thm:carter}. However, a generalization of this statement to five dimensions, which would be a classficiation of five-dimensional, stationary, axisymmetric and asymptotically flat black holes by their mass and two angular momenta, does not hold as the space-times found by \citet{Myers:1986aa} and \citet{Emparan:2002aa} show. These examples of black hole space-times in five dimensions have topologically different horizons, so there cannot exist a continuous parameter to link them. Now the task is: Can we determine, or at least characterize, all stationary and asymptotically flat black hole solutions of the higher-dimensional vacuum Einstein field equations \cite[Sec.~8]{Emparan:2008aa}?

Since mass and angular momenta are not enough anymore to classify the solutions, an extra piece of information is needed. This extra piece was proposed to be the so-called rod structure \cite{Emparan:2002dn,Harmark:2004rm,Hollands:2008fp} and \citet{Hollands:2008fp} were able to show that two stationary, axisymmetric and asymptotically flat black hole solutions with connected horizon must be isometric, if their mass, angular momenta and rod structures coincide. Thus the remaining problem is to prove that the only rod structures giving rise to regular black hole solutions are those associated with the known solutions or to find new examples. 

Various strategies have been employed in order to address this question, among which were direct approaches like the ones leading to the black ring, B{\"a}cklund transformations \cite[Sec.~6.6]{Mason:1996hl} or other hidden symmetries \cite{Giusto:2007fx}. Most often applied, however, was the method developed by \citet{Belinskiui:aa}, which uses the fact that the Einstein field equations for a stationary and axisymmetric space-time are integrable, see also \cite[Sec.~5.2.2.2]{Emparan:2008aa}. Via this ansatz they devised a purely algebraic procedure for generating new solutions from known seeds that lead for example to the discovery of a doubly-spinning black ring by \citet{Pomeransky:2006bd} and a candidate for a black hole with a Lens-space horizon \cite{Chen:2008fa}.

In this thesis we are going to establish another way of constructing such solutions from the rod structure and other asymptotic data. The method is based on a twistor construction by which a holomorphic rank-$(n-2)$ vector bundle over a one-dimensional complex manifold is assigned to every stationary, axisymmetric solution of the vacuum Einstein field equations in $n$ dimensions. Under not very restrictive technical assumptions the bundle is fully characterized by only one transition matrix, the so-called patching matrix, associated with each rod, and with a simple transformation from rod to rod. It was shown by \citet{Ward:1983yg} that the correspondence can be better understood and made applicable for practical examples with the help of a B{\"a}cklund transformation. This was studied extensively in \cite{Woodhouse:1988ek,Fletcher:1990aa,Fletcher:1990db}, but an application in more than four dimensions is not feasible without a generalization of the Ernst potential. 

As a first result we will therefore show that a modified version of a matrix already given in \cite{Maison:1979aa} satisfies the desired requirements. 
\begin{thm}
Let $J$ be the matrix of inner products of Killing vectors for a stationary and axisymmetric space-time in $n$ dimensions. Then for any given rod $(a_{i},a_{i+1})$ the matrix
\begin{equation*}
\renewcommand{\arraystretch}{1.5}
J'=\frac{1}{\det \skew{7}{\tilde}{A}} \left(\begin{array}{cc}\hphantom{-}1 & -χ^{\mathrm{t}} \\-χ & \det \skew{7}{\tilde}{A} \cdot \skew{7}{\tilde}{A} + χχ^{\mathrm{t}}\end{array}\right),
\end{equation*}
where $\skew{7}{\tilde}{A}$ is adapted to $(a_{i},a_{i+1})$ and $χ$ is the vector of twist potentials for $(a_{i},a_{i+1})$, is called \textit{higher-dimensional Ernst potential adapted to $(a_{i},a_{i+1})$} and obtained from $J$ by a B\"acklund transformation.
\end{thm}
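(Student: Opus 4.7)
The plan is to prove the theorem in two stages: first, to establish that the twist potentials $\chi$ are well-defined on each rod, and second, to verify by direct computation that the matrix $J'$ arises from $J$ by the Kramer--Neugebauer-type Bäcklund involution, generalising the four-dimensional transformation that exchanges the norm of a Killing vector with its twist potential.

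First I would fix the block decomposition. Being adapted to the rod $(a_i, a_{i+1})$ means choosing a basis of commuting Killing vectors in which the generator $\xi$ whose norm vanishes on that rod is listed first, so that
\begin{equation*}
J = \begin{pmatrix} \alpha & \beta^{\mathrm{t}} \\ \beta & \tilde{A} \end{pmatrix},
\end{equation*}
with $\alpha = \langle \xi,\xi\rangle$ vanishing along $(a_i, a_{i+1})$. The stationary axisymmetric vacuum Einstein equations on the two-dimensional orbit space imply that the twist one-forms built from $J$ and its derivatives are closed, so the vector $\chi$ of twist potentials for the rod exists locally as a primitive. I would record here also the algebraic constraint $\det J = -\rho^{2}$ on Weyl coordinates, which is the structural property responsible for the scaling prefactor $1/\det\tilde{A}$ in $J'$.

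Next comes the Bäcklund transformation itself. Following the construction sketched in Maison's work and its matrix-valued adaptation, the transformation is an involution that trades the data $(\alpha, \beta)$ for the twist potentials $\chi$ while leaving the complementary block $\tilde{A}$ essentially in place, rescaled by $\det\tilde{A}$. I would write this involution as an explicit matrix map and check, using the Schur-complement identity for block determinants together with $\det J = -\rho^{2}$, that its image is exactly the matrix $J'$ in the statement; a direct computation then also gives $\det J' = 1$, as expected for a genuine Ernst potential. As a sanity check one recovers the four-dimensional case: when $n = 4$ the block $\tilde{A}$ reduces to a scalar $f$, $\chi$ is a single twist potential, and $J'$ collapses to the standard Ernst matrix $\frac{1}{f}\left(\begin{smallmatrix} 1 & -\chi \\ -\chi & f^{2}+\chi^{2}\end{smallmatrix}\right)$.

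The chief obstacle is verifying that $J'$ genuinely deserves the name \emph{Ernst potential}, i.e.\ that the nonlinear sigma-model equations satisfied by $J$ (encoding the vacuum Einstein equations on the orbit space) translate under the transformation into equations of the same structural form for $J'$, with the weight $\rho$ redistributed appropriately. Without this, $J'$ would be a mere algebraic rearrangement of the data in $J$; what makes it a higher-dimensional Ernst potential is that it sits inside the same integrable-system framework and therefore feeds directly into Ward's twistor patching-matrix construction. Establishing this preservation of structure, and in particular tracking how the $\rho$-weight of the current $J^{-1}dJ$ is exchanged with that of $(J')^{-1}dJ'$ under the involution, is the technical heart of the proof.
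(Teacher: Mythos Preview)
Your overall architecture is sound, but you have the difficulty reversed. In the paper's framework, the fact that any Bäcklund transform of a solution to reduced Yang's equation is again a solution is established \emph{before} this theorem, as a general property of the block-decomposition Bäcklund map (Proposition~\ref{prop:btsolyang} and its reduced analogue). So what you call the ``chief obstacle'' --- that $J'$ satisfies the same sigma-model equations --- comes essentially for free once one shows that $J'$ \emph{is} a Bäcklund transform of $J$.

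The real technical heart, which your proposal glosses over, is the identification $\tilde B' = \chi$: showing that the new off-diagonal block produced by the Bäcklund integrability conditions
\[
\partial_r \tilde B' = r\,\tilde A\, B_x\, A, \qquad \partial_x \tilde B' = -r\,\tilde A\, B_r\, A
\]
coincides with the defining equations $\partial_r\chi = \omega_r$, $\partial_z\chi = \omega_z$ for the twist potentials. This is not a Schur-complement manipulation; it requires computing the twist one-form components explicitly (the paper does this via a double Laplace expansion of the determinant in~\eqref{eq:compsomega}) and matching them to $r^{-1}\det(\tilde A)\,\tilde A\,\partial_z(\tilde A^{-1}\beta)$, where $\beta$ is your off-diagonal column. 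The determinant identity you do need is the one giving $A = -r^{-2}\det\tilde A$, and that is indeed a Schur-complement/Laplace computation. A minor correction: the modified Bäcklund map in the reduced $(x,r)$ setting is \emph{not} an involution, because $\tilde A' = r^{-2}A^{-1}$ while $A' = \tilde A^{-1}$; only the $B \leftrightarrow B'$ part remains involutive.
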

Following that, we quickly check that important results in four dimensions hold in the same way in higher dimensions. Among these results is the important fact that the patching matrix is the analytic continuation of the Ernst potential $J'(0,z)=P(z)$. 

As the second step we calculate the patching matrices $P$ for the major examples in five dimensions, which essentially requires the computation of various twist potentials on the axis $r=0$. 

More importantly, the twistor correspondence provides a procedure for generating solutions from a given rod structure together with the asymptotic quantities, that is based on the fact that $P$ has simple poles at the nuts of the rod structure and the known fall-off towards infinity. However, this ansatz contains numerous free parameters and the aim is to fix these in terms of the given data by the use of boundary conditions. For this it is inevitable to find out how the patching matrices with adaptations to the different axis segments are related. 

First this is done for the outer sections of the axis.
\begin{thm}
Assume that we are given a rod structure with nuts at $\{a_{i} | a_{i}∈ℝ\}_{1≤i≤N}$. If $P_{+}$ is the patching matrix adapted to $(a_{\scriptscriptstyle N}, ∞)$, then $P_{-}^{\vphantom{1}}=MP_{+}^{-1}M$ with $M=\left(\begin{smallmatrix}0&0&1\\0&1&0\\1&0&0\end{smallmatrix}\right)$ is the patching matrix adapted to $(-∞,a_{1})$.
\end{thm}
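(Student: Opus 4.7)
The plan is to exploit the identification $P_\pm(z) = J'_\pm(0,z)$ (after analytic continuation) that was established just before the theorem, and to reduce the patching-matrix identity to an algebraic relation between the two adapted Ernst potentials on the axis. Equivalently, it suffices to show that $J'_+$ and $M J'_- M$, evaluated at $r=0$ and then continued meromorphically in $z$, are mutual inverses.

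I would first fix conventions in five dimensions. Let $\xi_0 = \partial_t$, $\xi_1 = \partial_{\phi_1}$, $\xi_2 = \partial_{\phi_2}$. Asymptotic flatness of an $SO(2)\times SO(2)$-invariant vacuum spacetime forces the two semi-infinite rods to be generated by distinct rotational Killing vectors, so after relabelling I may assume $\xi_1$ vanishes on $(-\infty, a_1)$ and $\xi_2$ vanishes on $(a_N, \infty)$. Both adaptations place the rod-generating Killing vector in the first slot of $J'$, so $J'_+$ is expressed in the ordering $(\xi_2, \xi_0, \xi_1)$ whereas $J'_-$ uses $(\xi_1, \xi_0, \xi_2)$. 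The matrix $M$ is precisely the change of basis between these two orderings (swap of the first and third Killing vectors, leaving $\xi_0$ fixed), so $M J'_- M$ is $J'_-$ rewritten in the basis used by $J'_+$. In this interpretation the theorem asserts that the two adapted Ernst potentials, computed in a common basis, are inverses of each other.

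To verify this I would expand the block structure of each $J'_\pm$ supplied by the preceding theorem and carry out a direct block multiplication. Three ingredients enter: (i) the simple relation between $\tilde A_+$ and $\tilde A_-$ as different principal $2 \times 2$ minors of the full Killing matrix $J$; (ii) the transformation of the twist potential vectors $\chi_+$ and $\chi_-$ when one changes which Killing vector plays the role of the rod-adapted one, derivable from their defining first-order equations in terms of $J$ and $\xi_1, \xi_2$; and (iii) the higher-dimensional analogue of the Frobenius identity $\det J = -\rho^2$, which closes the algebra by linking $\det\tilde A_+$ and $\det\tilde A_-$ with the off-diagonal data. A short computation using the block identity $\det J'_\pm = 1$ (which follows from the same block formula) shows that the product $J'_+ \cdot (M J'_- M)$ is compatible with being the identity on the level of determinants, reducing the task to checking the entrywise cancellations.

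The main obstacle is the bookkeeping of the additive constants of integration in $\chi_+$ and $\chi_-$: the twist potentials are defined only up to such constants, and to secure the identity without a residual gauge offset one must anchor both vectors consistently at spatial infinity, exploiting the Minkowskian form of $J$ there. Once the constants are matched, the pointwise identity $J'_+(0,z) = M\, J'_-(0,z)^{-1}\, M$ holds on the generic axis, and meromorphic continuation in $z$ transports it from each outer rod to the whole reduced twistor space, yielding $P_- = M P_+^{-1} M$ as claimed.
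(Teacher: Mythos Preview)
Your approach is genuinely different from the paper's, and while the underlying idea is sound, the key step is substantially harder than you suggest and is not carried out.

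The paper argues entirely on the level of the twistor bundle. Recall that an adaptation to a given rod corresponds to a prescription for assigning the roots $\zeta_i^{\pm}$ of each double point $a_i$ to the two spheres $S_0,S_1$ of the reduced twistor space $\mathcal R$. Passing from the adaptation on $(a_N,\infty)$ to the one on $(-\infty,a_1)$ means swapping the assignment at \emph{every} finite nut; but this is the same as swapping the single double point at $w=\infty$, i.e.\ relabelling the spheres $S_0\leftrightarrow S_1$. From the splitting procedure one knows abstractly that interchanging the spheres sends the solution $J$ of Yang's equation to $J^{-1}$ (property (3) in \cite{Woodhouse:1988ek}), hence $P_+^{-1}$ is already a patching matrix adapted to $(-\infty,a_1)$, possibly not in standard form. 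A comparison of the asymptotic expansions \eqref{eq:asymptPtop} and \eqref{eq:asymptPbot} then shows that conjugation by $M$ restores the standard form. No twist potentials are ever manipulated.

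Your route --- verify $J'_+ = M(J'_-)^{-1}M$ directly from the block formula for the Ernst potential and then pass to patching matrices by analytic continuation --- is in principle viable, but two points are glossed over. First, the identity cannot be checked ``pointwise on the generic axis'' as you write: $J'_+(0,z)$ is defined by the block formula only for $z>a_N$ and $J'_-(0,z)$ only for $z<a_1$, so there is no common axis domain; the identity must be established for $r>0$ (where both $\det\widetilde A_\pm\neq 0$) and then sent to the axis. Second, your ingredient (ii) --- the relation between $\chi_+$ and $\chi_-$ --- is the heart of the matter and is not ``a short computation''. The two twist-potential vectors arise from omitting \emph{different} Killing vectors in the defining Hodge-dual expression \eqref{eq:twisthodge}, and relating them requires more than matching additive constants: one needs a functional identity involving the entries of $J$ and both $\chi$'s, essentially the content of the theorem itself rewritten componentwise. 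This can be done, but it is the nontrivial step, and you have not indicated how. The paper's bundle argument bypasses this entirely, which is precisely what it buys.
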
 
With the help of this connection we are able to reconstruct the known space-times flat space, Myers-Perry and the black ring from the rod structures with up to three nuts. However, we are not able to fix the conicality of the black ring, which needs further investigation of the behaviour of $J$, $P$ and the conformal factor (appearing in the $σ$-model form of the metric) in a neighborhood of a nut. A definition of the $(u,v)$-coordinates can be found in Section~\ref{sec:Jaroundnut}.
\begin{thm}
For a space-time regular on the axis the generic form of $J$ in $(u,v)$-coordinates around a nut at $u=v=0$, where two spacelike rods meet, is
\begin{equation} \label{eq:Jaroundnutintro}
\renewcommand{\arraystretch}{1.4}
J= \left(\begin{array}{ccr}
X_{0} & u^{2}Y_{0} & v^{2}Z_{0} \\
\cdot  & u^{2}U_{0} & u^{2}v^{2} V_{0} \\
\cdot  & \cdot  & v^{2}W_{0}\end{array}\right),
\end{equation}
and, furthermore, one needs
\begin{itemize}
\item $\dfrac{U_{0}^{\hphantom{1}}}{v^{2}_{\hphantom{1}}\erm^{2\nu}}=1$ as a function of $v$ on $u=0$,
\item $\dfrac{W_{0}^{\hphantom{1}}}{u^{2}_{\hphantom{1}}\erm^{2\nu}}=1$ as a function of $u$ on $v=0$.
\end{itemize}

If one of the rods is the horizon instead of a spacelike rod corresponding statements hold.
\end{thm}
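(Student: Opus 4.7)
My plan is to combine two ingredients near the nut: smoothness of $J$ as a function of $u^{2}$ and $v^{2}$ (inherited from the good nut chart of Section~\ref{sec:Jaroundnut}), and the fact that each spacelike axial Killing vector vanishes on its own rod. With the $(u,v)$-coordinates defined by $u^{2}=R-(z-z_{0})$, $v^{2}=R+(z-z_{0})$ and $R=\sqrt{r^{2}+(z-z_{0})^{2}}$, one has $r=uv$, $z-z_{0}=\tfrac{1}{2}(v^{2}-u^{2})$ and
\begin{equation*}
\mathrm{d}r^{2}+\mathrm{d}z^{2}=(u^{2}+v^{2})(\mathrm{d}u^{2}+\mathrm{d}v^{2});
\end{equation*}
the two rods become $\{u=0,\,v>0\}$ and $\{v=0,\,u>0\}$. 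Order the Killing vectors so that $K_{1}$ is stationary while $K_{2},K_{3}$ are the spacelike Killing vectors vanishing on $u=0$ and $v=0$ respectively.

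\textbf{Form of $J$.} The entry $J_{22}=\langle K_{2},K_{2}\rangle$ vanishes on $\{u=0\}$; being smooth in $(u^{2},v^{2})$ it must take the form $u^{2}U_{0}$ with $U_{0}$ smooth, and symmetrically $J_{33}=v^{2}W_{0}$. The mixed entries $J_{12}$ and $J_{13}$ vanish on the rods where $K_{2}$ and $K_{3}$ respectively do, giving $J_{12}=u^{2}Y_{0}$ and $J_{13}=v^{2}Z_{0}$. Finally $J_{23}=\langle K_{2},K_{3}\rangle$ vanishes on \emph{both} rods, forcing $J_{23}=u^{2}v^{2}V_{0}$; the component $J_{11}=X_{0}$ is unconstrained by these axis conditions. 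This produces the claimed block structure \eqref{eq:Jaroundnutintro}.

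\textbf{Normalization.} These two identities express the absence of conical singularities on each rod adjacent to the nut. Fix $v_{0}>0$ and look at $\{u=0,\,v=v_{0}\}$: the metric restricts in the two-plane spanned by $u$ and $\phi_{2}$ (the angle generated by $K_{2}$) to
\begin{equation*}
e^{2\nu}(u^{2}+v^{2})\,\mathrm{d}u^{2}+u^{2}U_{0}\,\mathrm{d}\phi_{2}^{2},
\end{equation*}
whose $u\to 0$ limit is $v_{0}^{2}\,e^{2\nu}|_{u=0}\,\mathrm{d}u^{2}+u^{2}U_{0}|_{u=0}\,\mathrm{d}\phi_{2}^{2}$. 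Passing to the proper radial coordinate $\tilde{u}=u\,v_{0}\,e^{\nu}|_{u=0}$ and demanding that $\phi_{2}$ carry its canonical period $2\pi$ yields a regular flat plane exactly when $U_{0}/(v^{2}e^{2\nu})=1$ on $\{u=0\}$. The mirror argument along $\{v=0\}$ gives the second identity.

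\textbf{Horizon case and main obstacle.} If one adjacent rod is a Killing horizon, the vanishing-axial-vector argument is replaced by the fact that the horizon generator $\xi=K_{1}+\sum_{i}\Omega_{i}K_{i+1}$ becomes null there; rotating into a basis containing $\xi$ one obtains an analogous block decomposition, with the conical regularity condition replaced by the standard surface-gravity regularity at the bifurcation surface. The principal technical obstacle I expect is the rigorous justification that $J$ extends as a smooth \emph{even} function of both $u$ and $v$ up to the nut, rather than merely as a continuous function of $(u,v)$. This rests on the smoothness of the higher-dimensional metric in the good nut chart together with the $\mathbb{Z}_{2}\times\mathbb{Z}_{2}$ reflection symmetry induced by the two axial rotations; once that is established, the remainder of the argument is essentially algebraic.
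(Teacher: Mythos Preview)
Your approach is correct and arrives at the same conclusions, but the packaging differs from the paper's. The paper works concretely: it introduces Cartesian coordinates $x=u\cos\phi$, $y=u\sin\phi$, $z=v\cos\psi$, $w=v\sin\psi$ near the nut, writes out the full five-dimensional metric in these coordinates, and then reads off the factorization of $J$ directly by demanding that the individual metric components remain bounded as $u\to 0$ (and then as $v\to 0$); the normalization $U_{0}/(v^{2}\erm^{2\nu})=1$ emerges as the condition that a residual second-order pole in the $(x,y)$-block cancel. Your version abstracts this into two separate ingredients---smooth evenness of $J$ in $u$ and $v$, and the vanishing of $K_{2},K_{3}$ on their respective rods---and then extracts the normalization from the two-dimensional $(u,\phi_{2})$-slice. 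The content is really the same: your evenness hypothesis is precisely what the Cartesian chart delivers (since $u^{2}=x^{2}+y^{2}$ and $v^{2}=z^{2}+w^{2}$), and the Killing-vector vanishing is what forces the relevant Cartesian components to stay bounded. The paper's route has the advantage of making your acknowledged obstacle (the smooth even extension) entirely explicit and elementary, at the cost of a longer coordinate computation; your route is cleaner once that obstacle is granted. For the horizon case the paper does exactly what you suggest, replacing the trigonometric substitution by $z=v\cosh(\omega t)$, $w=v\sinh(\omega t)$.
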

Building up on this, we obtain the following results about the conformal factor.
\begin{prop}
On a segment of the axis where $u=0$ we have $\frac{U_0}{v^2\erm^{2\nu}}=\text{constant}$, and similarly where $v=0$. Therefore, the factor $\left(u^{2}+v^{2}\right)\erm^{2\nu}$ is continuous at the nut $u=v=0$ and with the conventions leading to \eqref{eq:Jaroundnutintro}, the absence of conical singularities requires
\begin{equation*}
\lim_{v\rightarrow 0}U_{0}=\lim_{u\rightarrow 0}W_0.
\end{equation*}
\end{prop}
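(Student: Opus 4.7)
The plan is to proceed in three stages: first, establish the constancy of the ratios on each rod; second, deduce the continuity of $(u^{2}+v^{2})\erm^{2\nu}$ at the nut; third, specialise the conventions and read off the matching condition.

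For the first step, I would start from the preceding theorem. Under the conventions fixed there, $U_{0}/(v^{2}\erm^{2\nu})$ equals $1$ along $u=0$ as a function of $v$. Without committing to a particular period for the axial Killing coordinate whose generator vanishes on this rod, the same regularity argument only pins down the ratio to be the squared quotient of the chosen period and the natural value $2\pi$: a small loop of fixed $v$ at small $u$ has proper circumference approximately $u\sqrt{U_{0}(0,v)}$ times the Killing period, while its proper transverse radius is approximately $u\,v\,\erm^{\nu(0,v)}$, so the local conical defect at the point $v$ is controlled precisely by $U_{0}/(v^{2}\erm^{2\nu})$. Because the normalisation of the Killing coordinate is made once along the whole rod, this defect must be the same at every $v$, and the ratio is thus constant in $v$ on $u=0$. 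An entirely analogous argument handles $W_{0}/(u^{2}\erm^{2\nu})$ on $v=0$.

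For the second step, I would substitute the first result into $(u^{2}+v^{2})\erm^{2\nu}$. On $u=0$ one gets $(u^{2}+v^{2})\erm^{2\nu}=v^{2}\erm^{2\nu}=U_{0}/c_{1}$, where $c_{1}$ is the constant on the $u$-rod and $U_{0}$ is smooth in $(u,v)$ up to the nut by the assumption of regularity on the axis. Hence the limit of $(u^{2}+v^{2})\erm^{2\nu}$ along $u=0$ as $v\rightarrow 0$ exists and equals $U_{0}(0,0)/c_{1}$; similarly the limit along $v=0$ equals $W_{0}(0,0)/c_{2}$. Smoothness of the transverse 2-metric $(u^{2}+v^{2})\erm^{2\nu}(du^{2}+dv^{2})$ in the $(u,v)$-chart, which is the reason the $(u,v)$-coordinates were introduced in the first place and is part of the regularity assumption at the nut, forces $(u^{2}+v^{2})\erm^{2\nu}$ to extend to a continuous positive function at $u=v=0$; the two directional limits therefore have to agree, so $U_{0}(0,0)/c_{1}=W_{0}(0,0)/c_{2}$. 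Specialising to the conventions of the preceding theorem, where $c_{1}=c_{2}=1$ is precisely the absence of conical singularities along the two rods, this equality reads $\lim_{v\rightarrow 0}U_{0}=\lim_{u\rightarrow 0}W_{0}$, as claimed.

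The main obstacle is the second step: one has to argue carefully that smoothness of the transverse 2-metric in the $(u,v)$-chart at the nut itself, as opposed to merely along the two rods separately, is genuinely forced by the regularity hypothesis rather than being an extra assumption smuggled in. Ultimately this reduces to analysing the quadrature that determines $\nu$ from $J$ in Maison's $\sigma$-model formulation near the joining of two rods, where the $(r,z)$-coordinate degeneration at $r=0$, $z=a_i$ is exactly what the $(u,v)$-coordinates are designed to resolve; checking that the behaviour of $\nu$ along paths tangent to neither rod matches the one-sided limits obtained along the rods is the delicate point.
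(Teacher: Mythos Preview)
Your approach differs substantially from the paper's, and both steps as written have genuine gaps that the paper closes by direct computation with the field equations.

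For the first step, your geometric argument does not actually establish constancy. You correctly observe that the local deficit at a point $v$ on the rod $u=0$ is governed by the ratio $U_{0}/(v^{2}\erm^{2\nu})$, but the sentence ``because the normalisation of the Killing coordinate is made once along the whole rod, this defect must be the same at every $v$'' is a non sequitur: fixing the period does not prevent the ratio from varying with $v$; if it did vary, one would simply have a $v$-dependent deficit. What forces constancy is the vacuum field equations, via the quadrature for $\nu$. The paper uses the reduced Einstein equation
\[
\partial_{\xi}\bigl(\log(r\,\erm^{2\nu})\bigr)=\tfrac{\irm r}{2}\,\tr\bigl(J^{-1}J_{\xi}J^{-1}J_{\xi}\bigr),\qquad \xi=z+\irm r=\tfrac{1}{2}\chi^{2},\ \chi=u+\irm v,
\]
substitutes the local form of $J$ from the preceding theorem, expands in powers of $u$, and finds that on $u=0$ the equation collapses to $\partial_{v}\log(v^{2}\erm^{2\nu})=\partial_{v}\log U_{0}$. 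Without invoking this equation, constancy does not follow.

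For the second step, you correctly identify the obstacle and then do not resolve it: appealing to ``smoothness of the transverse 2-metric in the $(u,v)$-chart at the nut'' is precisely the assumption you flag as possibly smuggled in. The paper avoids this by another explicit computation: in polar coordinates $u=R\cos\Theta$, $v=R\sin\Theta$ one evaluates $\partial_{\Theta}\log\bigl((u^{2}+v^{2})\erm^{2\nu}\bigr)$ using the same field equation, substitutes the local form of $J$, and shows the result is $\Ocal(R)$. Integrating in $\Theta$ from $0$ to $\pi/2$ and sending $R\to 0$ then gives zero jump between the two rods, hence continuity of $(u^{2}+v^{2})\erm^{2\nu}$ at the nut. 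With continuity established this way, the matching condition $\lim_{v\to 0}U_{0}=\lim_{u\to 0}W_{0}$ follows exactly as you describe in your final paragraph.

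In short, the missing idea in both parts is that the $\sigma$-model equation for $\nu$ must be used explicitly; the results are dynamical, not purely kinematical.
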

Last in the set of tools we show how $P$ switches when going past a nut.
\begin{thm} 
Let at $z=α$ be a nut where two spacelike rods meet and assume that we have chosen a gauge where the twist potentials vanish when approaching the nut. Then 
\begin{equation*}
\renewcommand{\arraystretch}{1.5}
P_{-}^{\vphantom{\frac{1}{2}}}=\left(\begin{array}{ccc}0 & 0 & \dfrac{1}{2(z-{\alpha})} \\0 & 1 & 0 \\ 2(z-{\alpha}) & 0 & 0\end{array}\right)P_{+}^{\vphantom{\frac{1}{2}}}\left(\begin{array}{ccc}0 & 0 & 2(z-{\alpha}) \\0 & 1 & 0 \\\dfrac{1}{2(z-{\alpha})} & 0 & 0\end{array}\right),
\end{equation*}
where $P_{+}$ is adapted to $u=0$ and $P_{-}$ is adapted to $v=0$.
\end{thm}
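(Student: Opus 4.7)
The plan is to evaluate $P_+(z)$ and $P_-(z)$ directly on the two adjacent rods using the explicit form of $J$ near the nut from the preceding theorem, and then verify the stated matrix identity.

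First, I will use that the twist potentials adapted to any given rod are locally constant along that rod---their defining exterior derivatives are proportional to a wedge containing the vanishing rod Killing vector---so the gauge choice ``vanishing at the nut'' extends to $\chi_+ \equiv 0$ on $u=0$ and $\chi_- \equiv 0$ on $v=0$. With vanishing twist potentials the Ernst formula of the first theorem collapses to the block-diagonal shape $J'_\pm = \operatorname{diag}(1/\det\tilde A_\pm,\,\tilde A_\pm)$, with the scalar block occupying the slot of the rod Killing vector (which is different for the two adaptations, since the rod KVs differ).

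Next, substituting the form of $J$ from the previous theorem and continuing analytically via $v^2 = 2(z-\alpha)$ on the $u=0$ rod and $u^2 = -2(z-\alpha)$ on the $v=0$ rod yields explicit expressions for $P_\pm(z)$ in terms of the axis values of $X_0,Z_0,W_0$ on one side and of $X_0,Y_0,U_0$ on the other. At this point the structure of the matrices in the statement becomes transparent: $L$ and $R$ swap the rod-KV slots (which occupy different positions in $P_+$ and $P_-$) while inserting the factor $2(z-\alpha)^{\pm 1}$ that converts the $v^2$-continuation to the $u^2$-continuation. A direct computation of $LP_+R$ then produces a matrix of the same block pattern as $P_-$, and the identity reduces to a short list of scalar identities between the analytic continuations of the axis data on the two sides of the nut.

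The main obstacle will be verifying these remaining scalar identities. The one concerning $J_{00}$ is immediate, since $g(K_0,K_0)$ restricted to $r=0$ is smooth through the nut and therefore admits a single analytic continuation agreeing on both sides. The other identities, which compare the off-diagonal component ($Z_0$ versus $Y_0$) and the axial diagonal ($W_0$ versus $U_0$) with the prescribed powers of $2(z-\alpha)$, rely essentially on the regularity conditions of the preceding proposition that tie $U_0$ and $W_0$ to the common conformal factor at the nut. Once these are in hand, the $(3,3)$-entry of the matrix equation follows automatically from $\det J' = 1$---a consequence of a block-determinant calculation applied to the Ernst formula---and the remainder is routine algebra.
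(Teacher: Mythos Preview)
Your proposal contains a genuine error at the first step. The twist potentials adapted to a rod are \emph{not} constant along that rod. Recall from the definition in the paper that for adaptation to a rod with rod vector $X_K$, the twist 1-forms are built from the wedge of the \emph{remaining} Killing 1-forms, omitting $X_K$. So the wedge does not contain the vanishing Killing vector, and there is no reason for $d\chi$ to vanish on the rod. Indeed, the paper's motivating calculation just before the proof shows explicitly that on the rod $v=0$ (with adaptation omitting $\partial_\psi$) one has $\chi_1 = \chi_1^0 + \chi_1^1 u^2 + \cdots$ and $\chi_2 = \chi_2^0 + \chi_2^1 u^4 + \cdots$, so after gauging $\chi_i^0=0$ the potentials are $O(u^2)$ and $O(u^4)$, not identically zero. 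Consequently $P_\pm$ are \emph{not} block-diagonal and your simplification $J'_\pm = \operatorname{diag}(1/\det\tilde A_\pm,\tilde A_\pm)$ fails.

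Even if you repair this, your strategy faces a second obstacle: computing $P_\pm$ from the local expansion of $J$ around the nut and matching entry-by-entry only verifies the identity to leading order in $z-\alpha$. The paper carries out exactly this leading-order check as motivation and remarks that ``the switching is correct to leading order in $z$''. But $P_\pm$ are global analytic functions of $z$, and you need the identity exactly, not asymptotically near the nut. Your proposed scalar identities (e.g.\ relating $Z_0|_{u=0}$ to $Y_0|_{v=0}$) compare restrictions of two-variable functions to different coordinate lines; these are not related by analytic continuation in a single variable. The paper's actual proof is quite different: it works with the splitting procedure, writing $\widehat{P_-} = A\widehat{P_+}B$ and showing that $A(r,z,\zeta)$ is holomorphic and equals the identity at $\zeta=0$ (similarly for $B$ at $\zeta=\infty$), so that the Birkhoff splittings of the two sides yield the same $J(r,z)$. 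This is a global bundle-theoretic argument, not a local power-series comparison.
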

By this we can impose more boundary conditions on the free parameters in $P$ that are possibly left and we have enough at hand to solve the conicality problem as exemplified for the black ring in Section~\ref{sec:applbr}.

This thesis is structured as follows. In Chapters~\ref{ch:backgrd}, \ref{ch:twistorspace}, \ref{ch:pwtrf}, \ref{ch:yang}, \ref{ch:bundles} we give a detailed description of the background material in order to make the thesis more self-contained, to familiarize the reader with the established construction and to identify the points which impede a generalization to higher dimensions. The material for this purpose is mainly taken from \cite{Mason:1996hl,Fletcher:1990db}. In Chapter~\ref{ch:bh} we revisit some relevant facts about black holes. The aforementioned generalization of the Ernst potential follows in Chapter~\ref{ch:fivedim}, succeeded by the patching matrices for the examples in Chapter~\ref{ch:Pexamples}. Chapter~\ref{ch:converse} contains the reconstruction of $P$ from the rod structure, results about the relation of different adaptations of $P$ across nuts and the removal of conical singularities.
\chapter{Mathematical Background} \label{ch:backgrd}

This chapter is mainly based on \citet[Ch.~2]{Mason:1996hl} and gives a brief summary of some mathematical tools which are needed later on. It is far from exhaustive, and some parts are only aimed to make the reader familiar with the notation used. In particular, we assume basic knowledge about principal bundles and Yang-Mills theory as can be found for example in \cite{Kobayashi:1996aa,Kobayashi:1996ab}. 

The starting point for twistor theory is complex Minkowski space $\mathbb{C}\Mbb$. For our purposes \textit{double null coordinates} are often convenient, that is coordinates on $\mathbb{C}\Mbb$ in which the metric takes the form
\begin{equation*}
\drm s^{2}=2(\drm z \, \drm \tilde z - \drm w \, \drm \tilde w), 
\end{equation*}
and the volume element is
\begin{equation*}
{\nu}= \drm w \wedge  \drm \tilde w \wedge  \drm z \wedge  \drm \tilde z. 
\end{equation*}
Then the vector fields $\partial _{w}$, $\partial _{\tilde w}$, $\partial _{z}$, $\partial _{\tilde z}$ form a \textit{null tetrad} at each point of $\mathbb{C}\Mbb$. In general, a basis of 4-vectors $\{W,\tilde W, Z, \tilde Z\}$ is called a null tetrad if 
\begin{equation*}
{\eta}(Z,\tilde Z)=-{\eta}(W,\tilde W)=1, \quad 24\, {\nu}(W, \tilde W, Z, \tilde Z)=1
\end{equation*}
where ${\eta}$ is the metric tensor on $\mathbb{C}\Mbb$, and all other inner products vanish.

Within this setting we can recover various real spaces (``real slices'') by imposing reality conditions on $w$, $\tilde w$, $z$, $\tilde z$. 
\begin{itemize}
\item Euclidean real slice $\Ebb$: We identify real Cartesian coordinates $x^{0}$, $x^{1}$, $x^{2}$, $x^{3}$ with $w$, $\tilde w$, $z$, $\tilde z$ via
\begin{equation*}
\left(\begin{array}{cc}
\tilde z & w \\ \tilde w & z
\end{array}\right)
=
\frac{1}{\sqrt{2}}\left(\begin{array}{cc}
x^{0}+\irm x^{1} & -x^{2}+\irm x^{3} \\x^{2}+\irm x^{3} & \hphantom{-}x^{0}-\irm x^{1}
\end{array}\right). 
\end{equation*}
That is, we get $\Ebb$ by imposing the reality conditions $\bar w = - \tilde w$ and $\bar z = \tilde z$.
\item Minkowski real slice $\Mbb$: The real coordinates $x^{0}$, $x^{1}$, $x^{2}$, $x^{3}$ on $\Mbb$ are identified with $w$, $\tilde w$, $z$, $\tilde z$ via
\begin{equation*}
\left(\begin{array}{cc}
\tilde z & w \\ \tilde w & z
\end{array}\right)
=
\frac{1}{\sqrt{2}}\left(\begin{array}{cc}
x^{0}+x^{1} & x^{2}-\irm x^{3} \\x^{2}+\irm x^{3} & x^{0}-x^{1}
\end{array}\right), 
\end{equation*}
that is we pick out the real space by the condition that $z$ and $\tilde z$ should be real, and that $\bar w= \tilde w$. 
\end{itemize}

For the definition of self-duality and anti-self-duality we need the Hodge star operation. To clarify notation and conventions first a short reminder. Let $(M,g)$ be an $n$-dimensional Riemannian or pseudo-Riemannian manifold. Given a $p$-form ${\beta}$ on $M$ with (skew-symmetric) components ${\beta}_{ab\dotsc c}$ its exterior derivative $\drm {\beta}$ has components $\partial _{[a}{\beta}_{bc\dotsc d]}$ where the square brackets stand for antisymmetrization.\footnote{For a $\binom{0}{p}$ tensor
\begin{equation*}
T_{[ab\dotsc c]}=\frac{1}{p!} \sum _{{\sigma}\in S_{p}} \sgn({\sigma}) T_{{\sigma}(a){\sigma}(b)\dotsc {\sigma}(c)}
\end{equation*}
where $S_{p}$ is the group of permutations of $p$ elements. The symmetrization $T_{(ab\dotsc c)}$ is defined in the same way but without the signum.
} The exterior or wedge product ${\beta}\wedge {\gamma}$ with a $q$-form ${\gamma}$ has components
\begin{equation*}
{\beta}_{[ab\dotsc c}{\gamma}_{de\dotsc f]}.
\end{equation*}
Now let ${\varepsilon}$ be the $n$-dimensional alternating symbol, that is ${\varepsilon}_{[a\dotsc b]}={\varepsilon}_{a\dotsc b}$ and ${\varepsilon}_{0\dotsc n}=1$, and ${\Delta}=\sqrt{|\det (g_{ab})|}$. We then define the (\textit{Hodge}) \textit{dual} of ${\beta}$ to be the $(n-p)$-form $*{\beta}$ with components
\begin{equation} \label{eq:defhodge}
*{\beta}^{\vphantom{f}}_{ab\dotsc c}=\frac{1}{(n-p)!} {\Delta} \tensor{{\varepsilon}}{_{ab\dotsc c}^{de\dotsc f}}{\beta}_{de\dotsc f}^{\vphantom{-1}},
\end{equation}
where indices are raised and lowered with the metric $g_{ab}$ or its inverse $g^{ab}$. To see that the definition is actually independent of the basis consider $p$ vector fields $X_{1},\dotsc ,X_{p}$ with their covariant images ${\theta}_{1},\dotsc ,{\theta}_{p}$ (according to the metric) and fix a volume form $\drm\, \mathrm{vol}$. Then $*{\beta}$ is defined as the unique $(n-p)$-form satisfying
\begin{equation*}
{\beta}(X_{1},\dotsc ,X_{p}) \,\drm \mathrm{vol}= (*{\beta})\wedge {\theta}_{1}\wedge \dotsc \wedge {\theta}_{p}.
\end{equation*}
This is a basis independent definition of $*{\beta}$ and coincides with \eqref{eq:defhodge}, since in coordinates
\begin{equation*}
\drm\mathrm{vol} = {\Delta} {\varepsilon}_{a\dotsc d}.
\end{equation*}
Thus one can think of $*{\beta}$ as the complement of ${\beta}$ with respect to the volume form (and the appropriate prefactor). For example in terms of an oriented orthonormal basis ${e_{1}, \dotsc , e_{n}}$ (of a vector space) the Hodge star operation is defined completely by
\begin{equation*}
*(e_{i_{1}}\wedge \dotsc \wedge e_{i_{k}})=e_{i_{k+1}}\wedge \dotsc \wedge e_{i_{n}},
\end{equation*}
where $\{i_{1},\dotsc ,i_{k},i_{k+1},\dotsc ,i_{n}\}$ is an even permutation of $\{1,\dotsc ,n\}$. Of particular interest for us are 2-forms on $(\mathbb{C}\Mbb, {\eta})$ where we have
\begin{equation*}
*{\beta}^{\vphantom{f}}_{ab}=\frac{1}{2} {\Delta} {\varepsilon}^{\vphantom{-1}}_{abcd}{\eta}^{ce}{\eta}^{df}{\beta}_{ef}^{\vphantom{-1}}. 
\end{equation*}
Using the properties of the alternating symbol, it follows that the Hodge star operation is idempotent, that is $*^{2}=1$, thus has eigenvalues $\pm 1$.\footnote{In general, on a Riemannian manifold $(M,g)$ for a $p$-form ${\beta}$ it is $*\negthinspace*\negthinspace{\beta}=(-1)^{p(n-p)}s{\beta}$, where $s$ is the signature of $g$. For complex manifolds the concept of signature is void, thus we can set $s=1$ and then $*^{2}=\id$ for 2-forms on a four-dimensional manifold.} 
\begin{Def}
A 2-form is called \textit{self-dual} (SD) if $*{\beta}={\beta}$, and \textit{anti-self-dual} (ASD) if $*{\beta}=-{\beta}$. 
\end{Def}
The space of 2-forms then decomposes into the direct sum of eigenspaces, because in double null coordinates we have 
\begin{equation} \label{eq:sdforms}
{\alpha}=\drm w \wedge  \drm z, \quad \tilde {\alpha} = \drm \tilde w \wedge  \drm \tilde z, \quad {\omega}=\drm w \wedge  \drm \tilde w - \drm z \wedge  \drm \tilde z
\end{equation}
as a basis for SD 2-forms, and 
\begin{equation*}
\drm w \wedge  \drm \tilde z, \quad \drm \tilde w \wedge  \drm z, \quad \drm w \wedge  \drm \tilde w + \drm z \wedge  \drm \tilde z
\end{equation*}
as a basis for ASD 2-forms. \\

Let $E$ be a rank-$n$ vector bundle with a \textit{connection} $\Drm$, that is a differential operator that maps sections $s$ of $E$ to 1-forms with values in $E$ and in a local trivialization it is given by
\begin{equation*}
\Drm s=\drm s+{\Phi}s 
\end{equation*}
where ${\Phi}$ is matrix-valued 1-form called \textit{gauge potential}, \textit{potential} or sometimes also \textit{connection}. In a general gauge theory, that is a principal bundle $P(M,G)$ over a (pseudo-) Riemannian manifold $M$ with with gauge group $G\subseteq \GL (n)$, the \textit{curvature} of $\Drm$ is the matrix-valued 2-form $F=F_{ab}\, \drm x^{a}\wedge \drm x^{b}$ with
\begin{equation*}
F_{ab}=\partial _{a}{\Phi}_{b}-\partial _{b}{\Phi}_{a}+[{\Phi}_{a},{\Phi}_{b}]. 
\end{equation*}
The Yang-Mills equations are 
\begin{equation*}
\Drm F=0, \quad \Drm *F=0
\end{equation*}
where the first one is called the \textit{Bianchi identity} and the second one is the Euler-Lagrange equation of the Lagrangian density $\frac{1}{2}\tr(F\wedge *F)=\frac{1}{4}\tr(F_{ab}F^{ab})$. We then see that if $F$ is SD or ASD then $\Drm *F=\pm \Drm F=0$, that is the second set of Yang-Mills equations follow from the Bianchi identity. \\

Important objects for twistor theory is null 2-planes. 
\begin{Def}
Let ${\Pi}$ be an affine 2-plane in $\mathbb{C}\Mbb$ and ${\Pi}^{\perp}$ the normal bundle of ${\Pi}$. The 2-plane ${\Pi}$ is called (\textit{partially}) \textit{null} if $\{0\}\subsetneq {\mathrm{T}\Pi^{\vphantom{1}}_{p}}\cap {\Pi}_{p}^{\perp} \subsetneq {\mathrm{T}\Pi^{\vphantom{1}}_{p}}$ for all $p∈\Pi$, and it is called (\textit{totally}) \textit{null} if ${\mathrm{T}\Pi}\cap {\Pi}^{\perp} = {\mathrm{T}\Pi}$.
\end{Def}
Hence, totally null means that ${\eta}(A,B)=0$ for all tangent vectors $A$, $B$ of ${\Pi}$. From now on when we speak about null 2-planes we will always mean totally null 2-planes (unless mentioned differently). With each null 2-plane ${\Pi}$ we associate a \textit{tangent bivector} ${\pi}=A\wedge B$ where $A$, $B$ are independent tangent vectors of ${\Pi}$. The tangent bivector has components ${\pi}^{ab}=A^{[a}B^{b]}$. 

\begin{lem} \label{lem:null2plane}
If ${\Pi}$ is a null 2-plane, then ${\pi}_{ab}{\pi}^{ab}=0$, and ${\pi}_{ab}\, \drm x^{a}\wedge \drm x^{b}$ is either SD or ASD.
\end{lem}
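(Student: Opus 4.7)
For the first identity, my plan is a short index computation. With $\pi^{ab}=A^{[a}B^{b]}$, expanding gives
\[\pi_{ab}\pi^{ab}=\tfrac{1}{2}\bigl(\eta(A,A)\eta(B,B)-\eta(A,B)^{2}\bigr),\]
which vanishes because $A$ and $B$ span a totally null 2-plane. This half of the lemma takes only a few lines.

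For the SD/ASD claim, my plan is to classify null 2-planes by adapting a null tetrad to $\Pi$. First I would pick any null vector $A\in\Pi$ and, using non-degeneracy of $\eta$, produce a partner null vector $C\notin A^{\perp}$ normalized so that $\eta(A,C)=1$. The 2-plane $A^{\perp}\cap C^{\perp}$ is complementary to $\mathrm{span}(A,C)$ and inherits a non-degenerate restriction of $\eta$, so over $\mathbb{C}$ it admits a null basis $D,E$ with $\eta(D,E)=1$; together $(A,C,D,E)$ form a null tetrad. Any second vector of $\Pi$ independent of $A$ lies in $\Pi\subset\Pi^{\perp}\subset A^{\perp}=\mathrm{span}(A,D,E)$, and writing it as $B=\alpha A+\beta D+\gamma E$, the nullness condition $\eta(B,B)=2\beta\gamma=0$ forces $\beta=0$ or $\gamma=0$. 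Hence $\Pi=\mathrm{span}(A,D)$ or $\Pi=\mathrm{span}(A,E)$. Identifying the tetrad with $\{\partial_{w},\partial_{\tilde w},\partial_{z},\partial_{\tilde z}\}$, the bivector $\pi$ reduces after lowering indices to one of the simple 2-forms $\drm w\wedge \drm z$, $\drm \tilde w\wedge \drm \tilde z$, $\drm w\wedge \drm \tilde z$, $\drm \tilde w\wedge \drm z$ (up to an overall factor), and these are precisely elements of the SD basis in \eqref{eq:sdforms} or of the ASD basis displayed immediately after.

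The hard part will be this classification step, and in particular convincing oneself that it is really needed. A tempting shortcut is to decompose $\pi=\pi^{+}+\pi^{-}$ and compute $\pi\wedge\pi=(|\pi^{+}|^{2}-|\pi^{-}|^{2})\,\mathrm{vol}$ together with $\pi\wedge{*}\pi=(|\pi^{+}|^{2}+|\pi^{-}|^{2})\,\mathrm{vol}$; simplicity of $A\wedge B$ combined with the first part of the lemma then yields $|\pi^{\pm}|^{2}=0$. Over $\mathbb{R}$ this would immediately force $\pi^{+}=0$ or $\pi^{-}=0$, but over $\mathbb{C}$ the induced quadratic form on $\Lambda^{2}_{\pm}$ is indefinite with a non-trivial null cone, so one cannot conclude from inner-product identities alone that a component vanishes. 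The explicit tetrad reduction therefore seems unavoidable, even though each individual step is routine.
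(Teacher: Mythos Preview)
Your argument is correct, and the first part coincides with the paper's. For the SD/ASD statement you take a genuinely different route: you build an adapted null tetrad and classify totally null $2$-planes explicitly, reducing $\pi$ to one of the four simple tetrad $2$-forms. The paper instead argues directly that $\pi$ and $*\pi$ are proportional. It observes that the tangent space of $\Pi$ is exactly the kernel of $P^{b}\mapsto\pi_{ab}P^{b}$ (since $\pi_{ab}P^{b}=\tfrac{1}{2}(A_{a}\,\eta(B,P)-B_{a}\,\eta(A,P))$ vanishes iff $P\in\Pi^{\perp}=\Pi$), and equally the kernel of $P^{b}\mapsto *\pi_{ab}P^{b}$ (since $*\pi_{ab}P^{b}\propto\varepsilon_{abcd}A^{c}B^{d}P^{b}$ vanishes iff $A,B,P$ are linearly dependent). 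A skew tensor of rank~$2$ is determined up to scale by its $2$-dimensional kernel, so $*\pi=\mu\pi$, and $*^{2}=\mathrm{id}$ forces $\mu=\pm 1$. This bypasses entirely the issue you correctly flag about null cones in $\Lambda^{2}_{\pm}$ over $\mathbb{C}$, without any choice of frame. Your approach costs more, but it buys the explicit parametrisation of $\alpha$-planes by $\zeta$ that the paper establishes separately immediately after the lemma; so your extra work is not wasted, just front-loaded.
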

\begin{proof}
First of all we observe that ${\pi}$ is determined up to scale by the tangent space of ${\Pi}$, because another choice of independent tangent vectors can be written as a linear combination of $A$ and $B$ and thus gives the same ${\pi}$ up to a nonzero scalar factor. Conversely, the tangent space of ${\Pi}$ is given by all $P^{b}$ with ${\pi}_{ab}P^{b}=0$, since $A$, $B$ are null and orthogonal. 
But ${\pi}$ can also be characterized up to a nonzero scalar factor by the condition that $*{\pi}_{ab}P^{b}=0$ for all $P^{b}$ tangent vectors of ${\Pi}$. This follows from ${\pi}_{ab}=A_{[a}B_{b]}$, and $*{\pi}_{ab}={\varepsilon}_{abcd}^{\vphantom{1}}A^{c}B^{d}$ which implies that $*{\pi}_{ab}P^{b}=0$ if and only if $P$ is a linear combination of $A$ and $B$. Hence, ${\pi}={\mu}*{\pi}$ for some ${\mu}\neq 0$. However, the eigenvalues of $*$ are $\pm 1$, therefore ${\pi}_{ab}^{\vphantom{1}}=*{\pi}_{ab}$ or ${\pi}_{ab}^{\vphantom{1}}=-*{\pi}_{ab}$, that is ${\pi}$ SD or ASD. Again from ${\pi}_{ab}=A_{[a}B_{b]}$ and the fact that $A$, and $B$ are null and orthogonal it is obvious that ${\pi}_{ab}{\pi}^{ab}=0$. 
\end{proof}

\begin{Def}
An affine null 2-plane ${\Pi}$ is called an \textit{${\alpha}$-plane} if ${\pi}$ is SD and a \textit{${\beta}$-plane} if ${\pi}$ is ASD. 
\end{Def}
In double null coordinates the surfaces of constant $w$, $z$ and $\tilde w$, $\tilde z$, respectively, have tangent bivectors $\drm w \wedge  \drm z$ and $\drm \tilde w \wedge  \drm \tilde z$, respectively, and are therefore ${\alpha}$-planes. 

If ${\pi}$ is the tangent bivector of an ${\alpha}$-plane through the origin, then it must be a linear combination of the 2-forms in \eqref{eq:sdforms}. It is only determined up to a scalar factor, that is we can set the coefficient of $\drm w\wedge \drm z$ to 1 without loss of generality. Then we have
\begin{equation*}
{\pi}=\drm w\wedge \drm z-{\zeta}(\drm w\wedge \drm \tilde w-\drm z\wedge \drm \tilde z)+{\mu}\, \drm \tilde w \wedge  \drm \tilde z. 
\end{equation*}
So, the nonzero coefficients are ${\pi}_{wz}=1$, ${\pi}_{w\tilde w}=-{\pi}_{z\tilde z}=-{\zeta}$ and ${\pi}_{\tilde w \tilde z}={\mu}$. Then the requirement from Lemma~\ref{lem:null2plane} turns into
\begin{equation*}
0={\pi}_{ab}{\pi}^{ab}={\pi}_{ab}{\pi}_{cd}{\eta}^{ac}{\eta}^{bd}=-{\mu}-{\zeta}^{2}-{\zeta}^{2}-{\mu} \quad \Leftrightarrow   \quad {\mu}=-{\zeta}^{2}.
\end{equation*}
This implies
\begin{equation*}
{\pi}^{ab}=L^{[a}M^{b]}
\end{equation*}
where
\begin{equation*}
L=\partial _{w}-{\zeta}\partial _{\tilde z}, \quad M=\partial _{z}-{\zeta}\partial _{\tilde w}
\end{equation*}
for some ${\zeta}\in \mathbb{C}$. The case where the coefficient of $\drm w\wedge \drm z$ vanishes yields $ζ=0$, so that ${\pi}=\drm \tilde w \wedge  \drm \tilde z$ (up to a constant) and $L=\partial _{w}$, $M=\partial _{z}$. There is no point in the argument where we needed that our null tetrad is induced by coordinates, that is the above statement is also true if we set $L=W-{\zeta}\tilde Z$ and $M=Z-{\zeta}\tilde W$ for some null tetrad $W$, $\tilde W$, $Z$, $\tilde Z$ not necessarily induced by coordinates. Conversely, $L$ and $M$ span an ${\alpha}$-plane through the origin for every ${\zeta}\in \mathbb{C}$. Including the point ${\zeta}=\infty $ by mapping it to the ${\alpha}$-plane spanned by $\partial _{\tilde w}$ and $\partial _{\tilde z}$ then yields a one-to-one correspondence between ${\alpha}$-planes through the origin and points of the Riemann sphere, ${\Pi}_{{\zeta}}\leftrightarrow {\zeta}$. This correspondence will be important for the characterization of twistor spaces. \\

As a last point in this chapter a reminder about equivalence and reconstruction of holomorphic vector bundles (these statements can be found in standard textbooks such as \cite{Grauert:1994aa, Fritzsche:2002aa}). Let $π:E→X$ be a rank-$r$ holomorphic vector bundle with an open cover $(U_{i})_{i∈I}$ of $X$ and trivialization functions (biholomorphic maps) $h_{i}: π^{-1}(U_{i})→U_{i}×\Cbb^{r}$.

The (holomorphic) transition functions $g_{ij}: U_{i}∩U_{j}→\GL(r,\Cbb)$ for all $i,j∈I$ are obtained from the biholomorphic maps
\begin{equation*}
\skew{3}{\tilde}{g}_{ij}^{\vphantom{1}} \coloneqq h_{i}^{\vphantom{1}} \circ h_{j}^{-1} : U_{i}^{\vphantom{1}}∩U_{j}^{\vphantom{1}} × \Cbb^{r} → U_{i}^{\vphantom{1}}∩U_{j}^{\vphantom{1}} × \Cbb^{r}, 
\end{equation*}
which are of the form $\skew{3}{\tilde}{g}_{ij}(x,v)=(x, g_{ij}(x)v)$. By construction the transition functions obviously satisfy the relations
\begin{equation}\label{eq:transfct}
g_{ij}g_{jk}=g_{ik}, \quad g_{ii} = \id.
\end{equation}

Two holomorphic vector bundles $E→X$ and $F→X$ are called \textit{isomorphic} if there exists a bijective map $f:E→F$ such that $f$ and its inverse are vector bundle homorphisms, that is $f$ is a fibre-preserving holomorphic map $π_{E}=π_{F}\circ f$ such that for any $x∈X$ a linear map $f_{x}: E_{x}→F_{x}$ is induced and analogously for the $f^{-1}$.

Given two systems of transition functions, $(\hat{g}_{ij})_{i,j∈I}$ and $(\check{g}_{i'j'})_{i',j'∈I}$ with covers $(U_{i})_{i∈I}$ and $(V_{i'})_{i'∈I'}$, then they are called \textit{equivalent} if for a common refinement of the open covers $(W_{k})_{k∈K}$ there are holomorphic maps $f_{k}: W_{k}→\GL(r,\Cbb)$ with
\begin{equation*}
f_{k}\hat{g}_{kl} = \check{g}_{kl} f_{l}.
\end{equation*}
The transition functions with respect to the refined cover can be taken as the restrictions of the initial transition functions and a common refinement is for example obtained by taking all possible intersections of $U_{i}$ and $V_{i'}$.
\begin{prop} \label{prop:bundle}
For holomorphic  vector bundles the following statements hold. \hphantom{a long word}
\begin{enumerate}
\item Let $\{U_{i}\}_{i∈I}$ be an open covering of $X$, and let $g_{ij}∈ \GL_{r}(\Ocal_{X}(U_{i}∩U_{j}))$ satisfy the cocycle relation \eqref{eq:transfct}. Then there exists as holomorphic vector bundle $E$ of rank $r$ with these transition functions. \cite{Grauert:1994aa}
\item Isomorphic bundles can be represented by equivalent systems of transition functions. \cite{Fritzsche:2002aa}
\end{enumerate}
\end{prop}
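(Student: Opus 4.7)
The plan is to treat the two parts separately, since part (i) is an explicit construction and part (ii) is an extraction of local data from a global isomorphism. For part (i), I would build $E$ as a quotient of the disjoint union
\begin{equation*}
\widetilde{E} = \bigsqcup_{i ∈ I} U_{i} × \Cbb^{r}
\end{equation*}
by the equivalence relation that identifies $(x,v) ∈ U_{i} × \Cbb^{r}$ with $(y,w) ∈ U_{j} × \Cbb^{r}$ whenever $x = y ∈ U_{i} ∩ U_{j}$ and $w = g_{ji}(x)v$. The cocycle conditions \eqref{eq:transfct} are precisely what is needed to make this an equivalence relation: reflexivity follows from $g_{ii} = \id$, symmetry from $g_{ji}g_{ij} = g_{ii} = \id$ (so $g_{ji} = g_{ij}^{-1}$), and transitivity from the cocycle relation itself on triple overlaps. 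The quotient $E = \widetilde{E}/\sim$ carries a natural projection $π: E → X$ together with bijections $h_{i}: π^{-1}(U_{i}) → U_{i} × \Cbb^{r}$ coming from the inclusions $U_{i} × \Cbb^{r} \hookrightarrow \widetilde{E}$. The product holomorphic structure on each $U_{i} × \Cbb^{r}$ transfers along these bijections to a holomorphic structure on $E$, and on overlaps the transition maps $h_{i} \circ h_{j}^{-1}$ agree by construction with the prescribed $\skew{3}{\tilde}{g}_{ij}(x,v) = (x, g_{ij}(x)v)$, which are biholomorphic because the $g_{ij}$ are holomorphic with holomorphic inverses.

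For part (ii), suppose $f: E → F$ is an isomorphism, with trivializations $h_{i}$ of $E$ over a cover $(U_{i})_{i ∈ I}$ giving transition functions $\hat{g}_{ij}$, and $\tilde h_{i'}$ of $F$ over $(V_{i'})_{i' ∈ I'}$ giving $\check{g}_{i'j'}$. I would pass to a common refinement $(W_{k})_{k ∈ K}$ obtained by intersecting the two covers, over which both bundles trivialize and both sets of transition functions restrict. On each $W_{k}$, the composition $\tilde h_{k} \circ f \circ h_{k}^{-1}: W_{k} × \Cbb^{r} → W_{k} × \Cbb^{r}$ is fibre-preserving, linear on fibres and holomorphic, hence of the form $(x,v) \mapsto (x, f_{k}(x)v)$ for a unique holomorphic $f_{k}: W_{k} → \GL(r,\Cbb)$. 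Writing $\tilde h_{k} \circ f \circ h_{l}^{-1}$ on $W_{k} ∩ W_{l}$ in two ways, either as $(\tilde h_{k} \circ \tilde h_{l}^{-1}) \circ (\tilde h_{l} \circ f \circ h_{l}^{-1})$ or as $(\tilde h_{k} \circ f \circ h_{k}^{-1}) \circ (h_{k} \circ h_{l}^{-1})$, immediately yields the required intertwining relation $\check{g}_{kl} f_{l} = f_{k} \hat{g}_{kl}$.

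The substance of both parts is bookkeeping rather than deep obstruction, which is presumably why the detailed proofs are left to the cited references. The only genuinely delicate point lies in (i): one must verify that the quotient topology is Hausdorff (this uses that each $g_{ij}$ is defined and continuous on the full intersection $U_{i} ∩ U_{j}$, so that points in different fibres over the same base point cannot be forced together) and that the family $\{(π^{-1}(U_{i}), h_{i})\}$ genuinely defines a holomorphic atlas on $E$ compatible with the vector bundle projection. Once these checks are in place, the rest of (i) and all of (ii) reduce to writing down the transition maps and observing that they do what the statement claims.
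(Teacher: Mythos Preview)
Your proof is correct and follows the standard textbook construction. The paper itself gives no proof of this proposition; it simply states the result and cites the references \cite{Grauert:1994aa} and \cite{Fritzsche:2002aa} for the two parts, so there is nothing to compare against beyond noting that your argument is precisely the kind of gluing-and-bookkeeping proof one finds in those sources.
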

\chapter{Twistor Space} \label{ch:twistorspace}

In this chapter the twistor space for complexified Minkowski space is introduced using \citet[Sec.~9.2]{Mason:1996hl}.

\section{Definition of Twistor Space}

In the previous chapter we have seen that ${\alpha}$-planes through the origin in $\mathbb{C}\Mbb$ are spanned by $L=\partial _{w}-{\zeta}\partial _{\tilde z}$ and $M=\partial _{z}-{\zeta}\partial _{\tilde w}$ or by $\partial _{\tilde w}$ and $\partial _{\tilde z}$
in the limiting case $({\zeta}=\infty )$. As before, the coordinates $w$, $z$, $\tilde w$, $\tilde z$ will always be double-null coordinates. Thus, a general ${\alpha}$-plane, not necessarily passing through the origin, is labelled by three complex coordinates: the parameter ${\zeta}$ which determines the tangent space, together with the parameters 
\begin{equation} \label{eq:aplanepar}
{\lambda}={\zeta}w+\tilde z \quad \text{and} \quad  {\mu}={\zeta}z+\tilde w
\end{equation}
that are constant over the ${\alpha}$-plane.

\begin{Def}
The \textit{twistor space} of $\mathbb{C}\Mbb$ is the three-dimensional complex manifold consisting of all (affine) ${\alpha}$-planes in $\mathbb{C}\Mbb$. 
\end{Def}
A way to determine the global geometry is by writing the equations of an ${\alpha}$-plane in homogeneous form
\begin{equation} \label{eq:aplanehom}
\tilde z Z^{2}+wZ^{3}=Z^{0}, \quad \tilde w Z^{2}+zZ^{3}=Z^{1}
\end{equation}
with complex constants $Z^{{\alpha}}$, ${\alpha}=0,1,2,3$. The order of these constants is a convention in twistor theory. If $Z^{2}\neq 0$, then \eqref{eq:aplanehom} is equivalent to \eqref{eq:aplanepar} for
\begin{equation*}
{\lambda}=\frac{Z^{0}}{Z^{2}}, \ {\mu}=\frac{Z^{1}}{Z^{2}}, \ {\zeta}=\frac{Z^{3}}{Z^{2}}. 
\end{equation*}
In the case $Z^{2}=0$, $Z^{3}\neq 0$, the parameter ${\zeta}$ must be infinite and the tangent space is spanned by $\partial _{\tilde w}$ and $\partial _{\tilde z}$. Consequently, we can identify the twistor space of $\mathbb{C}\Mbb$ with an open subset of $\mathbb{C}\Pbb^{3}$ if we include this ${\alpha}$-plane of constant $\tilde w$, $\tilde z$ and regard the $Z^{{\alpha}}$s as homogeneous coordinates. 

The excluded points of $\mathbb{C}\Pbb^{3}$ are
\begin{equation*}
I=\Big\{\left[Z^{0}:Z^{1}:Z^{2}:Z^{3}\right]\in \mathbb{C}\Pbb^{3}\,:\, Z^{2}=Z^{3}=0\Big\}.
\end{equation*}
There are 2 homogeneous coordinates left which parameterize $I$, hence it is a $\mathbb{C}\Pbb^{1}$ and the twistor space of $\mathbb{C}\Mbb$ is, as a complex manifold, $\mathbb{C}\Pbb^{3}-\mathbb{C}\Pbb^{1}$. We can cover the twistor space of $\mathbb{C}\Mbb$ by two coordinate patches $V$ and $\tilde V$ with $V$ being the complement of the plane $Z^{2}=0$ (that is the plane ${\zeta}=\infty $) and $\tilde V$ being the complement of the plane $Z^{3}=0$ (that is the plane ${\zeta}=0$). The parameters ${\lambda}$, ${\mu}$, ${\zeta}$ are coordinates on $V$, and on $\tilde V$ we can use coordinates $\tilde {\lambda}$, $\tilde {\mu}$, $\tilde {\zeta}$ with
\begin{equation*}
\tilde {\lambda}=\frac{Z^{0}}{Z^{3}}, \ \tilde {\mu}=\frac{Z^{1}}{Z^{3}}, \ \tilde {\zeta}=\frac{Z^{2}}{Z^{3}},
\end{equation*}
which on the overlap $V\cap \tilde V$ gives the relations
\begin{equation*}
\tilde {\lambda}=\frac{{\lambda}}{{\zeta}}, \ \tilde {\mu}=\frac{{\mu}}{{\zeta}}, \ \tilde {\zeta}=\frac{1}{{\zeta}}.
\end{equation*}
For the twistor space we denote by $\Tbb$ the copy of $\mathbb{C}^{4}$ on which $(Z^{0},Z^{1},Z^{2},Z^{3})$ are linear coordinates, and by $\Pbb\Tbb$ the corresponding projective space $\mathbb{C}\Pbb^{3}$.

Now let $U\subset \mathbb{C}\Mbb$ and assume that its intersection with each ${\alpha}$-plane is connected (but possibly empty).\footnote{The connectivity assumption is not necessary for the considerations in this chapter, but will later make the Penrose-Ward transform work in a natural way.} 
\begin{Def}
The \textit{twistor space} of $U$ is the subset
\begin{equation*}
\Pcal = \{Z \in  \Pbb\Tbb\, : \, Z\cap U \neq  \varnothing\}
\end{equation*}
of $\Pbb\Tbb$. 
\end{Def}
If $U$ is open, then $\Pcal$ is open, and if $U=\mathbb{C}\Mbb$, then $\Pcal$ is the complement of $I$. 

\begin{rem}\mbox{}
\begin{enumerate}
\item For many aspects of twistor theory it turns out that spinor calculus is a suitable tool. However, in our context we do not gain anything by using spinors and thus they are not going to be introduced in this work.\\ \mbox{} \hfill $\blacksquare$
\item The \textit{Klein correspondence} shows that compactified $\mathbb{C}\Mbb$ together with the complex conformal group can be identified with a $\Cbb\Pbb^{3}⊂\Cbb\Pbb^{5}$, the \textit{Klein quadric}, together with the so-called projective general linear group $\GL(4,\mathbb{C})/\mathbb{C}^{\times }$. \cite[Sec.~2.4 and 9.2]{Mason:1996hl}

\end{enumerate}
\end{rem}

\section{Lines in $\Pbb\Tbb$} 

The equations~\eqref{eq:aplanehom} allow further conclusions: if we hold $w$, $z$, $\tilde w$, $\tilde z$ fixed, and vary $Z^{{\alpha}}$, then the equations determine a two-dimensional subspace of $\Tbb$, that is a projective line in $\Pbb\Tbb$. This is the Riemann sphere of ${\alpha}$-planes through the point in $\mathbb{C}\Mbb$ with coordinates $w$, $z$, $\tilde w$, $\tilde z$ which we denote by $\hat x$. It corresponds to the twistor space of $\{x\}\subset \mathbb{C}\Mbb$. 

Two points $x,y\in \mathbb{C}\Mbb$ are null separated if and only if they lie on an ${\alpha}$-plane, hence if and only if $\hat x \cap  \hat y \neq  \varnothing$. In other words, two lines in the twistor space intersect if and only if the corresponding points are separated by a null vector. A conformal metric (that is a class of conformally equivalent metrics) is completely determined by saying when two vectors are null separated (Appendix~\ref{app:confmetric}). Hence, the conformal geometry of $\mathbb{C}\Mbb$ is encoded in the linear geometry of $\Pbb\Tbb$. 

\section{The Correspondence Space}

Let $U$ be a subset of $\mathbb{C}\Mbb$. The \textit{correspondence space} $\Fcal$ is the set of pairs $(x,Z)$ with $x\in U$ and $Z$ an ${\alpha}$-plane through $x$. It is fibred over $U$ and $\Pcal$ by projections
\begin{equation*}
\begin{xy}
  \xymatrix@C=0.7cm@R=0.7cm{
      & \Fcal \ar[ld]_q \ar[rd]^p		&  \\
      U	 	&	&	\Pcal     
  }
\end{xy}
\end{equation*}
which map $(x,Z)$ to $x$ and $Z$, respectively, see Figure~\ref{fig:corrspace} (the lines through $x$ respectively $y$ in $\Fcal$ visualize the projective lines of ${\alpha}$-planes through $x$ respectively $y$). 
\begin{figure}[htbp]
\begin{center}
     \scalebox{0.4}{\input{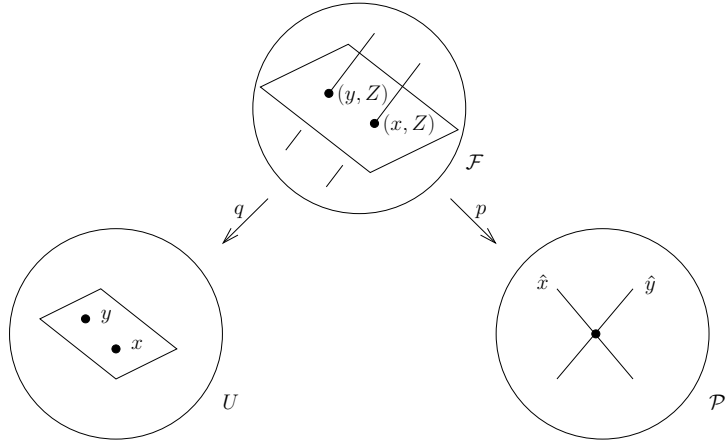}}
     \caption{The correspondence space between $U$ and its twistor space $\Pcal$. \cite{Mason:1996hl}} 
     \label{fig:corrspace}
\end{center}
\end{figure}
The projection maps $p$ and $q$ are surjective. Points in $\Fcal$ are labelled by $(w,z,\tilde w,\tilde z, {\zeta})$ (including ${\zeta}=\infty $), and the coordinate expressions for the projections are
\begin{align*}
p: (w,z,\tilde w,\tilde z, {\zeta}) & \mapsto  ({\lambda},{\mu},{\zeta})=({\zeta}w+\tilde z, {\zeta}z+\tilde w, {\zeta}),\\
q: (w,z,\tilde w,\tilde z, {\zeta}) & \mapsto  (w,z,\tilde w,\tilde z).
\end{align*}
The tangent spaces to the leaves of the fibration of $p$ are spanned at each point by vector fields $l=\partial _{w}-{\zeta}\partial _{\tilde z}$ and $m=\partial _{z}-{\zeta}\partial _{\tilde w}$ on $\Fcal$. 

A function on $\Pcal$ is a function of the twistor coordinates $({\lambda},{\mu},{\zeta})$, and pulling it back by $p:\Fcal\to \Pcal$ therefore yields a function on the correspondence space $\Fcal$ that is constant along $l$ and $m$. 

\section{Reality Structures}

In the previous chapter we saw that the real slices are characterized as fixed point sets of an antiholomorphic involution ${\sigma}:\mathbb{C}\Mbb\to \mathbb{C}\Mbb$. Using double-null coordinates ${\sigma}$ was defined for the two cases as
\begin{enumerate}
\item [$(\Ebb)$] ${\sigma}(w,z,\tilde w,\tilde z)=(- \overline{\tilde w},\overline{\tilde z},- \overline w,\overline z)$,
\item [$(\Mbb)$] ${\sigma}(w,z,\tilde w,\tilde z)=(\overline{\tilde w},\overline z, \overline w,\overline{\tilde z})$.
\end{enumerate}
For $\Mbb$ this picks out a real hypersurface $\Pbb\mathbb{N}\subset \Pbb\Tbb$ by $Z\cap {\sigma}(Z)\neq \varnothing$. The ${\alpha}$-plane $Z$ has complex dimension 2, and by imposing $Z\cap {\sigma}(Z)\neq \varnothing$ in the $\Mbb$-case we add three real conditions which leaves one real degree of freedom. Hence, if $Z\in \Pbb\mathbb{N}-I$, then $Z\cap {\sigma}(Z)$ is a real line that is null (${\alpha}$-planes are totally null), that is a real null geodesic. In turn, given a real null geodesic we can pick two points $x$, $y$ on it. They are null separated, and determine an ${\alpha}$-plane $Z$ as described above. Furthermore, they are real points, thus $Z\cap {\sigma}(Z)\neq \varnothing$ which means $Z\in \Pbb\mathbb{N}-I$. Therefore, we have shown that $\Pbb\mathbb{N}-I$ is the space of real null geodesics. Together with what we have seen in the last chapter this implies that the set of real null geodesics through a real point $x\in \Mbb$ is a Riemann sphere. Figure~\ref{fig:realcorr} depicts the correspondence in this real case. 
\begin{figure}[htbp]
\begin{center}
     \scalebox{0.4}{\input{twistorcorrespondence2.tex}}
     \caption{The correspondence between $\Mbb$ and $\Pbb\mathbb{N}-I$. \cite{Penrose:2007aa}} 
     \label{fig:realcorr}
\end{center}
\end{figure}
It also indicates how a point in $\Pbb\mathbb{N}-I$ corresponds to a locus in $\Mbb$ and a point $x\in \Mbb$ represented by its light cone corresponds to a locus (Riemann sphere) in $\Pbb\mathbb{N}-I$. Thus, the relationship between $U$ and its twistor space $\Pcal$ is a non-local correspondence in general. 

\chapter{The Penrose-Ward Transform} \label{ch:pwtrf}

This chapter is about the Penrose-Ward transform which associates a solution to the anti-self-dual Yang-Mills (ASDYM) equation on a domain $U$ in $\mathbb{C}\Mbb$ to a holomorphic vector bundle on the twistor space $\Pcal$ of $U$, using \citet[Sec.~10.1 and 10.2]{Mason:1996hl}. 

That raises the question: What is the special importance of the ASDYM equation for our considerations? To answer this, we have to say a few words about integrable systems. In classical mechanics the notion of integrability is a precise concept. However, for systems obeying partial differential equations in which there are infinitely many degrees of freedom, we do not have a clear-cut characterization of integrability. A lot of theories were developed and it is easy to give examples of `integrability', yet there is no single effective characterization that covers all cases \cite[Ch.~1]{Mason:1996hl}. 

But whatever the definition of integrability is, a reduction of an integrable system, which is obtained by imposing a symmetry or specifying certain parameters, always yields another integrable system. Thus we have a partial ordering, in the sense that we say system $A$ is less than system $B$ if $A$ is a reduction of $B$. This ordering motivates the search for a `maximal element', that is an integrable system from which all others can be derived. Such a system has not yet been found, but it turns out that almost all known systems in dimension 1 and 2, and a lot of important systems in dimension 3 arise as reductions of the ASDYM equation \cite[Ch.~4.1]{Mason:1996hl}. 

The example of special significance for us will be the Ernst equation for stationary axisymmetric gravitational fields. 

\section{Lax Pairs and Fundamental Solutions}

Let $D$ be a connection on a complex rank-$n$ vector bundle $E$ over some region $U$ in $\mathbb{C}\Mbb$, and $F$ its curvature 2-form. If $D=\drm + {\Phi}$, then $F=F_{ab} \,\drm x^{a}\wedge \drm x^{b}$, where in a coordinate induced local trivialization
\begin{equation*}
F_{ab}=\partial _{a}{\Phi}_{b}-\partial _{b}{\Phi}_{a}+[{\Phi}_{a},{\Phi}_{b}]. 
\end{equation*}
The ASD conditions then take in double-null coordinates the form
\begin{equation} \label{eq:asdcond1}
\begin{split}
F_{zw} & = \partial _{z}{\Phi}_{w}-\partial _{w}{\Phi}_{z}+[{\Phi}_{z},{\Phi}_{w}]=0, \\
F_{\tilde z\tilde w} & = \partial _{\tilde z}{\Phi}_{\tilde w}-\partial _{\tilde w}{\Phi}_{\tilde z}+[{\Phi}_{\tilde z},{\Phi}_{\tilde w}]=0, \\
F_{z\tilde z}-F_{w \tilde w} & = \partial _{z}{\Phi}_{\tilde z}-\partial _{\tilde z}{\Phi}_{z}-\partial _{w}{\Phi}_{\tilde w}+\partial _{\tilde w}{\Phi}_{w}\\
& \hspace{0.4cm}+[{\Phi}_{z},{\Phi}_{\tilde z}]-[{\Phi}_{w},{\Phi}_{\tilde w}]=0.
\end{split}
\end{equation}
With 
\begin{equation*}
\Drm_{w}=\partial _{w}+{\Phi}_{w},\ \Drm_{z}=\partial _{z}+{\Phi}_{z},\ \Drm_{\tilde w}=\partial _{\tilde w}+{\Phi}_{\tilde w},\ \Drm_{\tilde z}=\partial _{\tilde z}+{\Phi}_{\tilde z}
\end{equation*}
the ASD condition can be written as
\begin{equation} \label{eq:asdcond2}
[\Drm_{z}, \Drm_{w}]=0,\ [\Drm_{\tilde z}, \Drm_{\tilde w}]=0,\ [\Drm_{z}, \Drm_{\tilde z}]-[\Drm_{w}, \Drm _{\tilde w}]=0,
\end{equation}
so that the ASDYM equation corresponds to the vanishing of the curvature on every ${\alpha}$-plane. Equivalently, we can require that the \textit{Lax pair} of operators
\begin{equation*}
L=\Drm_{w}-{\zeta}\Drm_{\tilde z}, \ M=\Drm_{z}-{\zeta}\Drm_{\tilde w}
\end{equation*}
should commute for every value of the complex `spectral parameter' ${\zeta}$, where $L$ and $M$ act on vector-valued functions on $\mathbb{C}\Mbb$. 

This last compatibility condition says that for a section $s$ of $E$, represented by a column vector of length $n$, the linear system
\begin{equation*}
Ls=0,\ Ms=0
\end{equation*}
can be integrated for each fixed value of ${\zeta}$. Therefore, putting $n$ independent solutions together, we obtain an $n\times n$ matrix fundamental solution $f$ (dependent on ${\zeta}$) such that the columns of $f$ form a frame field for $E$ consisting of sections that are covariantly constant on the ${\alpha}$-planes tangent to $\partial _{w}-{\zeta}\partial _{\tilde z}$ and $\partial _{z}-{\zeta}\partial _{\tilde w}$. Henceforth, we suppose that $U$ is open and that each ${\alpha}$-plane that meets $U$ intersects it in a connected and simply connected set, for example $U$ an open ball. The second condition about simply connectedness ensures that the sections are single valued.

The matrix $f$ satisfies
\begin{equation} \label{eq:reconpot1}
\begin{split}
(\partial _{w}+{\Phi}_{w})f-{\zeta}(\partial _{\tilde z}+{\Phi}_{\tilde z})f & = 0,\\
(\partial _{z}+{\Phi}_{z})f-{\zeta}(\partial _{\tilde w}+{\Phi}_{\tilde w})f & = 0, 
\end{split}
\end{equation}
and depends holomorphically on ${\zeta}$ (varied over the complex plane), and the coordinates $w$, $z$, $\tilde w$, $\tilde z$. However, if $f$ were regular, by which we mean holomorphic with non-vanishing determinant, on the entire ${\zeta}$-Riemann sphere, then Liouville's theorem would imply that $f$ is independent of ${\zeta}$. Consequently,
\begin{equation*}
\Drm_{w}f=\Drm_{z}f=\Drm_{\tilde w}f=\Drm_{\tilde z}f=0, 
\end{equation*}
in other words $f$ were covariantly constant and the connection flat.

Given a choice of gauge, $f$ is unique up to $f\mapsto fH$ where $H$ is a non-singular matrix-valued function of ${\zeta}$, $w$, $z$, $\tilde w$, $\tilde z$ such that
\begin{equation} \label{eq:free1}
\partial _{w}H-{\zeta}\partial _{\tilde z}H=0,\ \partial _{z}H-{\zeta}\partial _{\tilde w}H=0. 
\end{equation}
So, $H$ is essentially a function of ${\lambda}={\zeta}w+\tilde z$, ${\mu}={\zeta}z+\tilde w$ and ${\zeta}$. 

If $\Pcal$ denotes the twistor space of $U$, and $V$, $\tilde V$ is a two-set Stein open cover of $\Pcal$\footnote{For Stein manifolds see for example \citet[Def.~4.2.7]{Field:1982aa} and how the two-set cover can be chosen see for example \citet[Sec.~3.3]{Popov:1999aa}.} such that $V$ is contained in the complement of ${\zeta}=\infty $, and $\tilde V$ is contained in the complement of ${\zeta}=0$, then $f$ can be regarded as a function on the correspondence space $\Fcal$ and $H$ as the pull-back of a holomorphic function on $V$, as by \eqref{eq:free1} $H$ is constant along the leaves of $p:\Fcal\to \Pcal$. 

If $D$ is not flat, $f$ cannot be chosen so that it is regular for all finite values of ${\zeta}$ and at ${\zeta}=\infty $. However, with $\tilde {\zeta}=\frac{1}{{\zeta}}$ we get a solution $\tilde f$ for the linear system 
\begin{equation} \label{eq:reconpot2}
\tilde {\zeta} \Drm_{w}\tilde f-\Drm_{\tilde z}\tilde f=0,\ \tilde {\zeta} \Drm_{z}\tilde f-\Drm_{\tilde w}\tilde f=0,
\end{equation}
which is holomorphic on the whole $\tilde {\zeta}$-Riemann sphere except for $\tilde {\zeta}=0$. This solution is unique up to $\tilde f\mapsto \tilde f \tilde H$, where $\tilde H$ is holomorphic on $\tilde V$ and satisfies the equation corresponding to \eqref{eq:free1}.

\section{The Patching Matrix}

On the overlap of the domains of $f$ and $\tilde f$ in $\Fcal$ we must have
\begin{equation*}
f=\tilde f P,
\end{equation*}
where $P$ satisfies \eqref{eq:free1}, and thus it is the pull-back by $p$ of a holomorphic function on $V\cap \tilde V$. We call $P$ \textit{patching matrix} associated to $\Drm$. It is determined by $\Drm$ up to equivalence $P \sim \tilde H^{-1}PH$ where $H$ is regular on $V$,  $\tilde H$ is regular on $\tilde V$ and both satisfy \eqref{eq:free1}. The matrices in the equivalence class of $P$ are called \textit{patching data} of $\Drm$. If $P$ is in the equivalence class of the identity solution, then $P=\tilde H^{-1}H$, hence
\begin{equation} \label{eq:free2}
fH=\tilde f \tilde H. 
\end{equation}
The left-hand side of \eqref{eq:free2} is regular in $V$, and the right-hand side is regular in $\tilde V$, so we have a global solution in ${\zeta}$, and therefore vanishing curvature. If there is no such solutions, the curvature has to be nonzero. 

The transformation of ${\Phi}$ under a gauge transformation is 
\begin{equation*}
{\Phi} \mapsto  {\Phi}'=g^{-1}{\Phi}g+g^{-1}\,\drm g, 
\end{equation*}
where $g$ is a function of $w$, $z$, $\tilde w$, $\tilde z$ with values in the gauge group. A solution for the new potential can be attained by replacing $f\mapsto g^{-1}f$ and $\tilde f\mapsto g^{-1}\tilde f$, which leaves the patching matrix unchanged. 

We have obtained a map which assigns patching data to every ASDYM field. It is called \textit{forward Penrose-Ward transform}. Indeed, the converse is true as well, that is a patching matrix encodes an ASDYM field.

\section{The Reverse Transform}

For the following arguments we need the so-called Birkhoff's factorization theorem. We do not state it in full precision, but only to the extend that is necessary here (for more details see \citet[Sec.~9.3]{Mason:1996hl}). 

Suppose $P({\lambda},{\mu},{\zeta})$ is holomorphic matrix-valued function on $V\cap \tilde V$ with non-vanishing determinant. Birkhoff's factorization theorem says that for fixed values of $w$, $z$, $\tilde w$, $\tilde z$ we can factorize $P$ in the form
\begin{equation*}
P({\zeta}w+\tilde z, {\zeta}z+\tilde w,{\zeta})=\tilde f^{-1}{\Delta}f,
\end{equation*}
where $f(w,z,\tilde w,\tilde z,{\zeta})$ is regular for $|{\zeta}|\leq 1$, $\tilde f(w,z,\tilde w,\tilde z,{\zeta})$ is regular for $|{\zeta}|\geq 1$ (including ${\zeta}=\infty $), and ${\Delta}=\diag ({\zeta}^{k_{1}},\dotsc ,{\zeta}^{k_{n}})$ for some integers $k_{1},\dotsc ,k_{n}$ which may depend on the point in $\mathbb{C}\Mbb$. For functions $P$ for which ${\Delta}=1$, this factorization is unique up to $f\mapsto cf$, $\tilde f\mapsto c\tilde f$ for some constant $c\in \GL(n,\mathbb{C})$. Furthermore, given a $P$ such that ${\Delta}=1$ at some point of $\mathbb{C}\Mbb$, then ${\Delta}=1$ in an open set of $\mathbb{C}\Mbb$.\footnote{\label{fn:trivass}This statement is consequence of Birkhoff's factorization theorem. If $P(w,{\zeta})$ depends smoothly on additional parameters $w=(w_{1},w_{2},\dotsc )$ and ${\Delta}=1$ at some point $w$, then ${\Delta}=1$ in an open neighbourhood of $w$, and $f$, $\tilde f$ can be chosen such that they depend smoothly on the parameters. The statement also holds if we replace `smooth' by `holomorphic' in the case that $P$ depends holomorphically on ${\zeta}$ (in a neighbourhood of the unit circle) and on the complex parameters $w$. Attempts to extend the factorization to the entire parameter space typically fail on a submanifold of codimension 1, where ${\Delta}$ `jumps' to another value than the identity \cite[Prop.~9.3.4]{Mason:1996hl}.
} 

Given a Birkhoff factorization for fixed $(w,z,\tilde w,\tilde z)$ such that
\begin{equation*}
P({\zeta}w+\tilde z,{\zeta}z+\tilde w,{\zeta})=\tilde f^{-1}f,
\end{equation*}
we can recover ${\Phi}$ in terms of $f$ or $\tilde f$ by 
\begin{equation} \label{eq:pot1}
{\Phi}_{w}-{\zeta}{\Phi}_{\tilde z}=(-\partial _{w}f+{\zeta}\partial _{\tilde z}f)f^{-1}=(-\partial _{w}\tilde f+{\zeta}\partial _{\tilde z}\tilde f)\tilde f^{-1}
\end{equation}
with a similar equation for the other two components. By the uniqueness statement any other factorization is given by $f'=gf$, $\tilde f'=g\tilde f$, where $g$ is independent of ${\zeta}$. The new potential ${\Phi}'$, which is obtained from $f'$, $\tilde f'$, is related to the previous one via
\begin{equation*}
{\Phi}=g^{-1}{\Phi}'g+g^{-1}\,\drm g. 
\end{equation*}
This is obvious regarding \eqref{eq:pot1}. Hence, $P$ determines ${\Phi}$ up to gauge transformations. 

It remains to show that our patching matrix $P$ is the same as the one that is associated to the ASDYM field we have just constructed. In other words, we must ensure that applying first the reverse and then the forward Penrose-Ward transform to the function $P$ gets us back to the starting point. 

Suppose $P$ is chosen such that ${\Delta}=1$ at some point of $\mathbb{C}\Mbb$, then ${\Delta}=1$ in an open set $U$ of $\mathbb{C}\Mbb$. The constancy of $P$ along $\partial _{w}-{\zeta}\partial _{\tilde z}$ implies
\begin{equation*}
0=(\partial _{w}-{\zeta}\partial _{\tilde z})(\tilde f^{-1}f)=-\tilde f^{-1}((\partial _{w}-{\zeta}\partial _{\tilde z})\tilde f)\tilde f^{-1}f+\tilde f^{-1}(\partial _{w}-{\zeta}\partial _{\tilde z})f, 
\end{equation*}
or equivalently
\begin{equation} \label{eq:const1}
(\partial _{w}f-{\zeta}\partial _{\tilde z}f)f^{-1}=(\partial _{w}\tilde f-{\zeta}\partial _{\tilde z}\tilde f)\tilde f^{-1},
\end{equation}
at every point in $U$ and for all ${\zeta}$ in some neighbourhood of the unit circle. The left-hand side of \eqref{eq:const1} is holomorphic for $|{\zeta}|<1$, and the right-hand side is holomorphic for $|{\zeta}|>1$ except for a simple pole at infinity. Therefore, by an extension of Liouville's theorem both sides have to be of the form $-{\Phi}_{w}+{\zeta}{\Phi}_{\tilde z}$, where ${\Phi}_{w}$ and ${\Phi}_{\tilde z}$ are independent of ${\zeta}$. We take them to be the two components of ${\Phi}$. The same procedure with $\partial _{z}-{\zeta}\partial _{\tilde w}$ instead of $\partial _{w}-{\zeta}\partial _{\tilde z}$ defines ${\Phi}_{z}$ and ${\Phi}_{\tilde w}$. Then, by construction
\begin{equation*}
\Drm_{w}f-{\zeta}\Drm_{\tilde z}f=0,\ \Drm_{z}-{\zeta}\Drm_{\tilde w}f=0,
\end{equation*}
where $\Drm=\drm+{\Phi}$ acts on the columns of $f$. Thus, the linear system associated to $\Drm$ is integrable, and $\Drm=\drm+{\Phi}$ is ASD.

The above constructions gives an explicit formula of the potential ${\Phi}$ using $f$ and $\tilde f$.

\begin{lem} \label{lem:recon}
The gauge potential ${\Phi}$ is given in terms of $f$ and $\tilde f$ by
\begin{equation*}
{\Phi}=h\partial h^{-1}+\tilde h \tilde \partial  \tilde h^{-1},
\end{equation*}
where $h=\left.f\right|_{{\zeta}=0}$ and $\tilde h=\left.\tilde f\right|_{{\zeta}=\infty }$.
\end{lem}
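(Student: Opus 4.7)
The plan is to read off the four components of $\Phi$ from the two formulas
\begin{equation*}
(\partial_{w}f-\zeta\partial_{\tilde z}f)f^{-1} = -\Phi_{w}+\zeta\Phi_{\tilde z}
 = (\partial_{w}\tilde f-\zeta\partial_{\tilde z}\tilde f)\tilde f^{-1}
\end{equation*}
(and the analogous identity with $\partial_{z}-\zeta\partial_{\tilde w}$) that were derived in the reverse-transform step via Liouville's theorem. Both sides of each identity must be linear in $\zeta$, and that is precisely the hook: evaluating at $\zeta=0$ or $\zeta=\infty$ isolates a single component of $\Phi$, and the regularity of $f$ at $\zeta=0$ and of $\tilde f$ at $\zeta=\infty$ means these evaluations are well defined and produce $h$ and $\tilde h$ respectively.

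First I would set $\zeta=0$ in the left-hand identity, giving $\Phi_{w}=-(\partial_{w}h)h^{-1}$, and then differentiate $hh^{-1}=\id$ to rewrite this as $\Phi_{w}=h\,\partial_{w}h^{-1}$. Second, I would divide the right-hand identity by $\zeta$ and let $\zeta\to\infty$; the $\zeta^{-1}\partial_{w}\tilde f$ term drops out because $\tilde f$ is holomorphic at $\zeta=\infty$, leaving $\Phi_{\tilde z}=-(\partial_{\tilde z}\tilde h)\tilde h^{-1}=\tilde h\,\partial_{\tilde z}\tilde h^{-1}$. The same argument applied to the other Lax equation yields $\Phi_{z}=h\,\partial_{z}h^{-1}$ and $\Phi_{\tilde w}=\tilde h\,\partial_{\tilde w}\tilde h^{-1}$.

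Finally, I would assemble these four coefficients into the 1-form
\begin{equation*}
\Phi=\Phi_{w}\,\drm w+\Phi_{z}\,\drm z+\Phi_{\tilde w}\,\drm \tilde w+\Phi_{\tilde z}\,\drm \tilde z,
\end{equation*}
and introduce the conventional splitting of the exterior derivative relative to the double-null coordinates, $\drm=\partial+\tilde\partial$ with $\partial=\drm w\,\partial_{w}+\drm z\,\partial_{z}$ and $\tilde\partial=\drm\tilde w\,\partial_{\tilde w}+\drm\tilde z\,\partial_{\tilde z}$. Under this convention the $(w,z)$ components pack into $h\,\partial h^{-1}$ and the $(\tilde w,\tilde z)$ components into $\tilde h\,\tilde\partial\tilde h^{-1}$, giving the stated formula.

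The only subtlety I expect is bookkeeping: one has to justify that both sides of each Lax identity are genuinely affine in $\zeta$ (so that taking limits at $0$ and $\infty$ returns the correct coefficients), and one should check that the $\partial/\tilde\partial$ splitting used in the statement matches the one naturally produced by the two patches $V$ and $\tilde V$ of $\Pcal$. Both are immediate from the construction in the preceding section, so no genuinely difficult step arises; the argument is essentially a two-line extraction of Taylor coefficients in $\zeta$ combined with the identity $h\,\partial h^{-1}=-(\partial h)h^{-1}$.
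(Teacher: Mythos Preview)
Your argument is correct and is essentially the same as the paper's: the paper sets $\zeta=0$ in the Lax equations for $f$ to extract $\Phi_{w},\Phi_{z}$, and sets $\tilde\zeta=0$ in the companion system for $\tilde f$ to extract $\Phi_{\tilde w},\Phi_{\tilde z}$. The only cosmetic difference is that the paper works in the $\tilde\zeta=1/\zeta$ variable for the second pair, so the ``divide by $\zeta$ and send $\zeta\to\infty$'' step becomes literally ``set $\tilde\zeta=0$''.
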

\begin{proof}
Setting ${\zeta}=0$ in \eqref{eq:reconpot1} gives the first two components ${\Phi}_{w}$, ${\Phi}_{z}$; setting $\tilde {\zeta}=0$ in \eqref{eq:reconpot2} gives the second pair of components ${\Phi}_{\tilde w}$, ${\Phi}_{\tilde z}$.
\end{proof}

So, we have shown that $\Drm$ can be recovered from the patching matrix, and indeed even more, namely that any patching matrix such that ${\Delta}=1$ at some point of $\mathbb{C}\Mbb$ generates a solution of the ASDYM equation in an open set of $\mathbb{C}\Mbb$.

\section{The Abstract Form of the Transform}

The construction as described so far is very explicit and suggests that it depends on the cover $V$, $\tilde V$ and the chosen coordinates. However, it does not show, as stated in its abstract form, that the transform is in fact between ASDYM fields on $U$ and holomorphic vector bundles on $\Pcal$. The above choices were only a way to represent the bundle in a concrete way.

If $\Drm$ is an ASD connection on a rank-$n$ vector bundle $B\to U$, then the patching matrix $P$ on $U$ determines a rank-$n$ vector bundle $B'\to \Pcal$, where $P$ can be regarded as the transition matrix between the holomorphic trivializations (of $B'$) over $V$ and $\tilde V$. The fibre of $B'$ attached to a point $Z\in \Pcal$ is 
\begin{equation*}
B'_{Z}=\{s\in {\Gamma}(Z\cap U,B): \left. \Drm s\right|_{Z\cap U}=0\},
\end{equation*}
where ${\Gamma}(Z\cap U,B)$ is the space of sections of $B$ over $Z\cap U$. That is the fibre over an ${\alpha}$-plane $Z\in \Pcal$ is the vector space of covariantly constant sections of $B$ over $Z\cap U$ with respect to $\Drm$. Because $P$ has the Birkhoff factorization with ${\Delta}=1$ at each point of $U$, the restriction of $B'$ to each line in $\Pcal$ corresponding to a point in $U$ is holomorphically trivial. The holomorphic bundle is uniquely determined up to equivalence by $\Drm$, as the freedom in the construction of $P$ from $\Drm$ is precisely the freedom in choice of two local holomorphic trivializations. Conversely, given $B'\to \Pcal$, the patching matrix $P$ (and hence $\Drm$) can be recovered in a direct geometrical way. 

\begin{thm}[\citet{Ward:1977ab}] \label{thm:twicorr}
Let $U\subset \mathbb{C}\Mbb$ be an open set such that the intersection of $U$ with every ${\alpha}$-plane that meets $U$ is connected and simply connected. Then there is a one-to-one correspondence between solutions to the ASDYM equation on $U$ with gauge group $\GL(n,\mathbb{C})$ and holomorphic vector bundles $B'\to \Pcal$ such that $\left.B'\right|_{\hat x}$ is trivial for every $x\in U$.
\end{thm}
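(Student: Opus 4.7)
The plan is to verify that the forward and reverse Penrose-Ward constructions already set up in the preceding sections pass to equivalence classes and are mutually inverse, and that the triviality hypothesis on the lines $\hat{x}$ is precisely what is needed to make the reverse construction work globally on $U$.

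\textbf{Forward direction.} Starting from an ASD connection $\Drm$ on $B\to U$, I would build $B'\to\Pcal$ by taking the Stein cover $\{V,\tilde V\}$ and declaring the patching matrix $P$ constructed from $f,\tilde f$ in Section~4.2 to be the transition function between the two trivializations. Since $P$ satisfies \eqref{eq:free1}, it descends from $\Fcal$ to a holomorphic function on $V\cap\tilde V$, and the cocycle conditions $g_{VV}=\id$, $g_{V\tilde V}g_{\tilde V V}=\id$ are automatic from $P$ being invertible; Proposition~\ref{prop:bundle} then produces the bundle. I would check that different choices of $f,\tilde f$ produce patching matrices related by $P\sim \tilde H^{-1}PH$, which by the definition of equivalence of transition systems yields isomorphic bundles, so $B'$ depends only on $\Drm$ up to isomorphism. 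Triviality of $\left.B'\right|_{\hat x}$ for each $x\in U$ is then exactly the statement that, when $w,z,\tilde w,\tilde z$ are fixed, $P({\zeta}w+\tilde z,{\zeta}z+\tilde w,{\zeta})$ admits a Birkhoff factorization with ${\Delta}=1$; but by construction $P=\tilde f^{-1}f$, so this factorization is manifest.

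\textbf{Reverse direction.} Conversely, given $B'\to\Pcal$ with $\left.B'\right|_{\hat x}$ trivial for every $x\in U$, I would, after possibly passing to a common refinement, represent $B'$ by a transition matrix on $V\cap\tilde V$, pull it back to $\Fcal$ by $p$, and call the resulting function $P(w,z,\tilde w,\tilde z,\zeta)$. The triviality hypothesis on $\hat x$ provides, at each $x\in U$, a Birkhoff factorization $P=\tilde f^{-1}f$ with $\Delta=1$; by the footnote statement on~p.~\pageref{fn:trivass} (the openness/parameter-dependence consequence of Birkhoff), $f$ and $\tilde f$ can be chosen to depend holomorphically on $(w,z,\tilde w,\tilde z)$ on a neighborhood of $x$, and since the set where $\Delta=1$ is open and the hypothesis is global, we cover $U$. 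The identity~\eqref{eq:const1} obtained from constancy of $P$ along $\partial_w-\zeta\partial_{\tilde z}$ combined with Liouville's theorem then defines $\Phi_w,\Phi_{\tilde z}$, and similarly $\Phi_z,\Phi_{\tilde w}$, yielding a connection $\Drm=\drm+\Phi$ whose Lax pair annihilates $f$. Hence~\eqref{eq:asdcond2} holds, so $\Drm$ is ASD.

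\textbf{Bijectivity on equivalence classes.} I would then verify that these two constructions are inverse. If $\Drm$ gives $P$ and the reverse transform applied to $P$ gives $\Drm'$, then the explicit formula in Lemma~\ref{lem:recon} together with $P=\tilde f^{-1}f$ shows $\Drm'=\Drm$ in the gauge determined by the factorization, hence up to gauge transformation in general. The other composition is handled similarly: starting from $B'$ with transition $P$, the reverse transform produces $f,\tilde f$, and applying the forward transform to the resulting $\Drm$ reproduces $P$ up to the equivalence $P\sim \tilde H^{-1}PH$, i.e.\ up to bundle isomorphism. I would also note that gauge transformations $\Phi\mapsto g^{-1}\Phi g+g^{-1}\drm g$ act on $(f,\tilde f)$ by $(g^{-1}f,g^{-1}\tilde f)$ and leave $P$ invariant, which matches the fact that isomorphic bundles correspond to equivalent transition systems.

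\textbf{Main obstacle.} The nontrivial content is not in the algebra of the two transforms but in tying the global triviality hypothesis $\left.B'\right|_{\hat x}\cong\hat x\times\Cbb^n$ for \emph{every} $x\in U$ to the existence of a single factorization $P=\tilde f^{-1}f$ with $\Delta=1$ holomorphic in all four spacetime variables simultaneously. Without this condition $\Delta$ can jump on a codimension-one locus (footnote~\ref{fn:trivass}), and $\Phi$ would only be defined off a singular set; the triviality assumption is exactly what removes that obstruction, and handling this globalization step carefully is the main point of the proof.
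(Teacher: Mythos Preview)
Your proposal is correct and follows the standard argument; in fact you have essentially written out the proof that the paper itself omits. The paper's own ``proof'' of this theorem is simply a citation to \citet[Thm.~10.2.1]{Mason:1996hl}, so there is no independent argument to compare against. What you have done is assemble the forward and reverse constructions already presented in Sections~4.1--4.4 into a verification that they are mutually inverse on equivalence classes, which is exactly how the cited reference proceeds and is the intended content behind the citation.
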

\begin{proof}
See \citet[Thm.~10.2.1]{Mason:1996hl}.
\end{proof}
\chapter[Yang's Equation]{Yang's Equation,  \texorpdfstring{${\sigma}$}{Sigma}-Model and Ernst Potential} \label{ch:yang}

We will see how the ASDYM can be written in a form called Yang's equation using \citet[Sec.~3.3]{Mason:1996hl}. Then it is shown how Yang's equation turns by a symmetry reduction into the Ernst equation which describes stationary axisymmetric solutions to the Einstein equations \cite[Sec.~6.6]{Mason:1996hl}.

\section{Yang's Equation and the $J$-Matrix}
The ASD condition on the curvature 2-form $*F=-F$ is coordinate-independent and invariant under gauge transformations as well as under conformal isometries of $\mathbb{C}\Mbb$. It can be written in other forms that are more tractable for certain aspects, even though some of the symmetries are broken. 

The first two equations of \eqref{eq:asdcond1}, equivalent to
\begin{equation*}
[\Drm_{z}, \Drm_{w}]=0,\ [\Drm_{\tilde z}, \Drm_{\tilde w}]=0,
\end{equation*}
are an integrability condition for the existence of matrix-valued functions $h$ and $\tilde h$ on $\mathbb{C}\Mbb$ such that
\begin{align*}
\partial _{w}h+{\Phi}_{w}h=0,\quad \partial _{z}h+{\Phi}_{z}h=0, \\
\partial _{\tilde w}\tilde h+{\Phi}_{\tilde w}\tilde h=0,\quad \partial _{\tilde z}\tilde h+{\Phi}_{\tilde z}\tilde h=0.
\end{align*}
The potential ${\Phi}$ determines $h$ and $\tilde h$ uniquely up to $h\mapsto h B$, $\tilde h \mapsto  \tilde h A$, where $B$ and $A$ are matrices depending only on $\tilde w$, $\tilde z$ and $w$, $z$, respectively. For a gauge transformed potential ${\Phi}\mapsto g{\Phi}g^{-1}+g^{-1}\,\drm g$ we can replace $h\mapsto g^{-1}h$ and $\tilde h \mapsto  g^{-1} \tilde h$, which leaves the expression $\tilde h^{-1}h$ invariant. 

We define \textit{Yang's matrix} \cite{Yang:1977aa} as $J \coloneqq \tilde h^{-1}h$. It is determined by the connection $\Drm$ up to $J\mapsto A^{-1}JB$. Note that $h$ and $\tilde h$ satisfy the same differential equations as those in Lemma~\ref{lem:recon}. Thus, the definition for Yang's matrix is equivalent to $J=f({\zeta}=0)\tilde f^{-1}({\zeta}=\infty )$ with $f$ and $\tilde f$ as in Chapter~\ref{ch:pwtrf}. To consider the converse direction let $J=\tilde h^{-1}h$ be a Yang's matrix. The connection $\Drm$ is determined by $J$, since we have
\begin{equation} \label{eq:yang1}
\begin{split}
J^{-1}\tilde \partial  J & = J^{-1}\partial _{\tilde w} J \,\drm \tilde w + J^{-1}\partial _{\tilde z} J \,\drm \tilde z\\
			& = h^{-1}\tilde h\, \partial _{\tilde w}(\tilde h^{-1}h)\,\drm \tilde w + \tilde w \leftrightarrow  \tilde z\\
			& = \left(h^{-1}\tilde h (\partial _{\tilde w}\tilde h^{-1})h+h^{-1}\partial _{\tilde w}h\right)\drm \tilde w + \tilde w \leftrightarrow  \tilde z\\
			& = \left(-h^{-1}(\partial _{\tilde w}\tilde h)\tilde h^{-1}h+h^{-1}\partial _{\tilde w}h\right)\drm \tilde w + \tilde w \leftrightarrow  \tilde z\\
			& = \left(h^{-1} {\Phi}_{\tilde w}h+h^{-1}\partial _{\tilde w}h\right)\drm \tilde w + \tilde w \leftrightarrow  \tilde z, 
\end{split}
\end{equation}
so that ${\Phi}$ is equivalent to $J^{-1}\tilde \partial  J$ by the gauge transformation ${\Phi}\mapsto h{\Phi}h^{-1}+h^{-1}\,\drm h$.\footnote{Note that after the gauge transformation the potential $h{\Phi}h^{-1}+h^{-1}\,\drm h$ has vanishing $w$ and $z$ component.
} 
Given the matrix $J$ we see that the first two equations of \eqref{eq:asdcond1} are implied by \eqref{eq:yang1},
because 
\begin{equation*}
{\Phi}_{w}=0,\ {\Phi}_{z}=0, \ {\Phi}_{\tilde w}=J^{-1}\partial _{\tilde w}J, \ {\Phi}_{\tilde z}=J^{-1}\partial _{\tilde z}J
\end{equation*}
yields the first equation in \eqref{eq:asdcond1} trivially, and the second one follows from
\begin{align*}
\partial _{\tilde z}{\Phi}_{\tilde w}-&\partial _{\tilde w}{\Phi}_{\tilde z}+[{\Phi}_{\tilde z},{\Phi}_{\tilde w}] \\
& = \partial _{\tilde z}(J^{-1}\partial _{\tilde w}J)-\partial _{\tilde w}(J^{-1}\partial _{\tilde z}J)+[J^{-1}\partial _{\tilde z}J,J^{-1}\partial _{\tilde w}J]\\
& = 0.
\end{align*}
A calculation similar to \eqref{eq:yang1} shows that
\begin{align*}
\partial _{w}(J^{-1}\partial _{\tilde w}J)-\partial _{z}(J^{-1}\partial _{\tilde z}J) = h^{-1} (F_{w\tilde w}-F_{z\tilde z})h,
\end{align*}
which is equivalent to $F_{w\tilde w}-F_{z\tilde z}$ under the above gauge transformation, and hence the ASD equations are equivalent to \textit{Yang's equation}
\begin{equation} \label{eq:yang2}
\partial _{w}(J^{-1}\partial _{\tilde w}J)-\partial _{z}(J^{-1}\partial _{\tilde z}J)=0.
\end{equation}
However, they are no longer covariant under conformal transformations as this would change the 2-planes spanned by $\partial _{w}$, $\partial _{z}$ and $\partial _{\tilde w}$, $\partial _{\tilde z}$.

Geometrically the construction of $J$ can be interpreted as follows. Starting from the original gauge we make a transformation by $g=h$ or $g=\tilde h$, respectively. This yields an equivalent gauge with vanishing ${\Phi}_{w}$, ${\Phi}_{z}$ in the first case and vanishing ${\Phi}_{\tilde w}$, ${\Phi}_{\tilde z}$ in the second case. If we have frame fields $\{e_{1},\dotsc ,e_{n}\}$ and $\{\tilde e_{1},\dotsc ,\tilde e_{n}\}$, respectively, corresponding to the new gauge potentials, then
\begin{align}
\Drm_{w}e_{i}=0,\quad \Drm_{z}e_{i}=0, \label{eq:yangframe1} \\
\Drm_{\tilde w}\tilde e_{i}=0,\quad \Drm_{\tilde z}\tilde e_{i}=0, \label{eq:yangframe2}
\end{align}
for $i=1,\dotsc ,n$. Furthermore, by definition of Yang's matrix $e_{j}=\tilde e_{i}J_{ij}$. Therefore, $J$ is the linear transformation from a frame field satisfying \eqref{eq:yangframe1} to a frame field satisfying \eqref{eq:yangframe2}.
The connection potentials in the frames $e_{i}$ and $\tilde e_{i}$ are, respectively, 
\begin{equation*}
J^{-1}\tilde \partial  J \quad \text{and} \quad J\partial J^{-1}.
\end{equation*}
The freedom in the construction of $J$ from $\Drm$ is the freedom to transform the first frame by $B$ and the second frame by $A$.

\section{Reduction of Yang's Equation}

In this section we show how Yang's equation can be reduced by an additional symmetry so that it provides the linkage to Einstein equations. 

Such a reduction can in general be approached as follows. Suppose we are given a two-dimensional subgroup of the conformal group, generated by two conformal Killing vectors $X$ and $Y$, which span the tangent space of the orbit of our two-dimensional symmetry group at each point. To impose the symmetry on a solution of the ASDYM equation, we require that the Lie derivative of the gauge potential ${\Phi}$ along $X$ and $Y$ vanishes.\footnote{This requirement can be justified in a more systematic way if invariant connections, Lie derivates on sections of vector bundles etc. are introduced \cite{Mason:1996hl}. Here the Lie derivative of ${\Phi}$ along the Killing fields is the ordinary Lie derivative operator on differential form.
}
If the symmetry group is Abelian, $[X,Y]=0$, then there exists a coordinate system with $X$, $Y$ as the first two coordinate vector fields and by the symmetry the components of ${\Phi}$ depend only upon the second pair of coordinates. The reduced version of equation~\eqref{eq:asdcond1} is obtained by changing the coordinates and discarding the derivatives with respect to the ignorable coordinates. 

Another way is to impose the symmetry on Yang's equation which is not always straightforward and we will see how it works in our case.

We are interested in the reduction by the commuting Killing vectors
\begin{equation*}
X=w\partial _{w}-\tilde w \partial _{\tilde w},\ Y=\partial _{\tilde z}+\partial _{z}. 
\end{equation*}
The coordinates can be adapted by the transformation
\begin{equation*}
w=r\erm^{\irm{\theta}},\ \tilde w=r\erm^{-\irm{\theta}},\ z=t-x,\ \tilde z=t+x.
\end{equation*}
Then the metric is easily calculated to have the form
\begin{equation*}
\drm s^{2}=\drm t^{2}-\drm x^{2}-\drm r^{2}-r^{2}\,\drm {\theta}^{2},
\end{equation*}
and the Killing vectors are $X=-2\irm \partial _{{\theta}}$, $Y=2\partial _{t}$ so that the symmetries are a rotation ${\theta}\mapsto {\theta}+{\theta}_{0}$ and a time translation $t\mapsto t+t_{0}$. 

In the Minkowski real slice the coordinates are real and the spatial metric is of cylindrical polar form. Thus, a reduction by $X$ and $Y$ refers to stationary axisymmetric solutions of the ASDYM equation and their continuations to $\mathbb{C}\Mbb$. The crucial point for our considerations is the coincidence that apart from their role in Yang-Mills theory the reduced equation turns out to be equivalent to the Ernst equation for stationary axisymmetric gravitational fields in general relativity. 

We want to construct Yang's matrix for the invariant potential in this stationary axisymmetric reduction. First, we note that it is still possible to choose the invariant gauge such that ${\Phi}_{w}={\Phi}_{z}=0$ and
\begin{equation} \label{eq:redpot1}
{\Phi}=-P\frac{\drm \tilde w}{\tilde w}+Q \,\drm \tilde z, 
\end{equation}
where\footnote{This $P$ should not be confused with the Patching matrix.} $P$ and $Q$ depend only on $x$ and $r$. This can be seen as follows. Above we have shown that for $h$ being a solution of
\begin{equation} \label{eq:gauge1}
\partial _{w}h+{\Phi}_{w}h=0,\quad \partial _{z}h+{\Phi}_{z}h=0,
\end{equation}
the gauge transformation with $g=h$ yields a potential with ${\Phi}_{w}={\Phi}_{z}=0$, because \eqref{eq:gauge1} is equivalent to
\begin{equation*}
h^{-1}{\Phi}_{w}h+h^{-1}\partial _{w}h=0,\quad h^{-1}{\Phi}_{z}h+h^{-1}\partial _{z}h=0.
\end{equation*}
The integrability of \eqref{eq:gauge1} and hence the existence of $h$ was ensured by the ASD condition \eqref{eq:asdcond2}.

Now with the additional symmetry the question arises whether this is still possible but with $h$ and $\Phi$ depending on $x$ and $r$ only. Under a gauge transformation we then have
\begin{equation*}
{\Phi}_{w}\mapsto h^{-1}{\Phi}_{w}h+\frac{1}{2}\erm^{-\irm {\theta}} h^{-1}h_{r}, \quad {\Phi}_{z}\mapsto h^{-1}{\Phi}_{z}h-\frac{1}{2} h^{-1}h_{x},
\end{equation*}
which is obtained by just substituting the coordinates and discarding the dependency on $t$ and ${\theta}$. In the same way the first equation of \eqref{eq:asdcond1} becomes
\begin{equation} \label{eq:redasdcond1}
-\frac{1}{2}\partial _{x}{\Phi}_{w}-\frac{1}{2}\erm^{-\irm {\theta}} \partial _{r}{\Phi}_{z}+[{\Phi}_{z},{\Phi}_{w}]=\left[\Phi_{z}-\frac{1}{2}∂_{x}, \Phi_{w}+\frac{1}{2}\erm^{-\irm {\theta}}∂_{r}\right] =0.
\end{equation}
Then, analogous to the general case, \eqref{eq:asdcond1} in form of \eqref{eq:redasdcond1} is exactly the integrability condition for the existence of an invariant gauge in which ${\Phi}_{w}={\Phi}_{z}=0$ and where $\Phi$ and $h$ depend only on $x$ and $r$, that is
\begin{equation*}
{\Phi}_{w}h+\frac{1}{2}\erm^{-\irm {\theta}}\partial _{r}h=0, \quad {\Phi}_{z}h-\frac{1}{2}\partial _{x}h=0.
\end{equation*}

Using the notation from \eqref{eq:redpot1} and by a similar calculation like above the second equation in \eqref{eq:asdcond1}, $F_{\tilde z \tilde w}=0$, takes the form
\begin{equation} \label{eq:redasdcond2}
P_{x}+rQ_{r}+2[Q,P]=0,
\end{equation}
and the third equation, $F_{z\tilde z}-F_{w\tilde w}=0$, becomes
\begin{equation*}
0=\partial _{z}{\Phi}_{\tilde z}-\partial _{w}{\Phi}_{\tilde w}=-\frac{1}{2}\partial _{x}Q+\frac{1}{2}\erm^{-\irm {\theta}}\partial _{r}\left(\frac{P}{\tilde w}\right),
\end{equation*}
or equivalently
\begin{equation} \label{eq:redasdcond3}
P_{r}-rQ_{x}=0.
\end{equation}
Condition~\eqref{eq:redasdcond2} guarantees the existence of Yang's matrix $J(x,r)$ such that
\begin{equation*}
{\Phi}_{\tilde w}=-\frac{P}{\tilde w} = J^{-1} \partial _{\tilde w} J = \frac{1}{2} \erm^{\irm {\theta}}J^{-1}\partial _{r}J,\ {\Phi}_{\tilde z}=J^{-1}\partial _{\tilde z}J,
\end{equation*}
equivalent to
\begin{equation} \label{eq:redpot2}
2P=-rJ^{-1}\partial _{r}J,\quad 2Q=J^{-1}\partial _{x}J.
\end{equation}
It can again be interpreted as a change of gauge, but this time $h$ and $\tilde h$ depend only on $x$ and $r$, hence so does $J$. 

Starting with a matrix $J(x,r)$, and defining the gauge potential by \eqref{eq:redpot2}, it is an easy calculation that \eqref{eq:redasdcond2} is satisfied. Equation~\eqref{eq:redasdcond3} becomes
\begin{equation} \label{eq:redyang}
r\partial _{x}(J^{-1}\partial _{x}J)+\partial _{r}(rJ^{-1}\partial _{r}J)=0.
\end{equation}
So, every solution to \eqref{eq:redyang} determines a stationary axisymmetric ASDYM field and every stationary axisymmetric ASDYM field can be obtained in that way. The Yang's matrix $J$ determines the connection up to $J\mapsto A^{-1}JB$ with constant matrices $A$ and $B$. Note that in the general case above $A$, $B$ depended on $w$, $z$ and $\tilde w$, $\tilde z$, respectively. But here we restricted to gauge transformations such that the condition ${\Phi}_{w}={\Phi}_{z}=0$ is preserved, hence they have to be constant (formula for gauge transformation of ${\Phi}$ involves derivatives of $h$, respectively $\tilde h$).

\section{Reduction of Einstein Equations}

The next step will be to show how reduced Yang's equation~\eqref{eq:redyang} emanates from a reduction of the Einstein equations. This was originally discovered by \citet{Witten:1979aa, Ward:1983yg}.

Let $g_{ab}$ be a metric tensor in $n$ dimensions (real or complex), and $X_{i}^{a}$, $i=0,\dotsc ,n-s-1$, be $n-s$ commuting Killing vectors that generate an orthogonally transitive isometry group with non-null $(n-s)$-dimensional orbits. This means the distribution of $s$-plane elements orthogonal to the orbits of $X_{i}$ is integrable, in other words $[U,V]$ is orthogonal to all $X_{i}$ whenever $U$ and $V$ are orthogonal to all $X_{i}$. 

Define $J=(J_{ij})\coloneqq (g_{ab}^{\vphantom{1}}X_{i}^{a}X_{j}^{b})$, and denote by $\nabla $ the Levi-Civita connection. We have
\begin{equation*}
X_{k}J_{ij}=(\Lcal_{X_{k}}g)(X_{i},X_{j})+g(\Lcal_{X_{k}}X_{i},X_{j})+g(X_{i},\Lcal_{X_{k}}X_{j})=0,
\end{equation*}
as the first term vanishes due to the fact that $X_{k}$ is a Killing vector, and the last two terms since the Killing vectors commute. Thus $J$ is constant along the orbits of the Killing vectors. The Killing equation is
\begin{equation} \label{eq:killequ}
0=\Lcal_{X_{i}}g_{ab}=\nabla _{a}X_{ib}+\nabla _{b}X_{ia},
\end{equation}
and since
\begin{equation} \label{eq:comm}
[X_{i},X_{j}]=0 \quad \forall \, i,j
\end{equation}
we get
\begin{equation} \label{eq:delJredeinst}
\begin{split}
\partial _{b}^{\vphantom{b}}J_{ij}^{\vphantom{b}} & = \nabla _{b}^{\vphantom{b}}(X_{i}^{a\vphantom{b}}X_{ja}^{\vphantom{b}})=X_{i}^{a\vphantom{b}}\nabla _{b}^{\vphantom{b}}X_{ja}^{\vphantom{b}}+X_{j}^{a\vphantom{b}}\nabla _{b}^{\vphantom{b}}X_{ia}^{\vphantom{b}} \stackrel{\eqref{eq:killequ}}{=} X_{i}^{a\vphantom{b}}\nabla _{b}^{\vphantom{b}}X_{ja}^{\vphantom{b}}-X_{j}^{a\vphantom{b}}\nabla _{a}^{\vphantom{b}}X_{ib}^{\vphantom{b}} \\
& \stackrel{\eqref{eq:comm}}{=} X_{i}^{a\vphantom{b}}\nabla _{b}^{\vphantom{b}}X_{ja}^{\vphantom{b}}-X_{i}^{a\vphantom{b}}\nabla _{a}^{\vphantom{b}}X_{jb}^{\vphantom{b}}\stackrel{\eqref{eq:killequ}}{=} 2X_{i}^{a\vphantom{b}}\nabla _{b}^{\vphantom{b}}X_{ja}^{\vphantom{b}}.
\end{split}
\end{equation}
This yields
\begin{equation*}
X_{j}^{a\vphantom{b}}\nabla _{a}^{\vphantom{b}}X_{ib}^{\vphantom{b}}=X_{i}^{a\vphantom{b}}\nabla _{a}^{\vphantom{b}}X_{jb}^{\vphantom{b}}=-X_{i}^{a\vphantom{b}}\nabla _{b}^{\vphantom{b}}X_{ja}^{\vphantom{b}}=-\frac{1}{2}\partial _{b}^{\vphantom{b}}J_{ij}^{\vphantom{b}}.
\end{equation*}
Let $U$ and $V$ be vector fields orthogonal to the orbits. From
\begin{align*}
0 & = \nabla _{a}(U^{a}\underbrace{V^{b}X_{ib}}_{=0}) -\nabla _{a}(V^{a}\underbrace{U^{b}X_{ib}}_{=0}) \\
& = (\nabla _{a}U^{a})\underbrace{V^{b}X_{ib}}_{=0}+U^{a}(\nabla _{a}V^{b})X_{ib}+U^{a}V^{b}(\nabla _{a}X_{ib}) - U \leftrightarrow V
\end{align*}
we obtain
\begin{align*}
U^{a}V^{b}\nabla _{a}X_{ib}-V^{a}U^{b}\nabla _{a}X_{ib} & = -X_{ib}U^{a}\nabla _{a}V^{b}+X_{ib}V^{a}\nabla _{a}U^{b} \\
											& = -X_{ib}(U^{a}\nabla _{a}V^{b}-V^{a}\nabla _{a}U^{b})\\
											& = -X_{ib}[U,V]^{b}\\
											& = 0,
\end{align*}
where the last step follows from the orthogonal transitivity. This result together with the Killing equation $\nabla _{a\vphantom{b]}}X_{ib\vphantom{b]}}=\nabla _{[a}X_{|i|b]}$ leads to
\begin{align*}
U^{a}V^{b}\nabla _{a}X_{ib} & = U^{[a}V^{b]}\nabla _{a}X_{ib} = 0, \\
U^{a\vphantom{b}}X_{j}^{b}\nabla _{a}^{\vphantom{b}}X_{ib}^{\vphantom{b}} & = \frac{1}{2} U^{a\vphantom{b}} \partial _{a}^{\vphantom{b}}J_{ij}^{\vphantom{b}}, \\
X_{j}^{a \vphantom{b}}X_{k}^{b}\nabla _{a}^{\vphantom{b}}X_{ib}^{\vphantom{b}} & = \frac{1}{2} X_{j}^{a \vphantom{b}} \partial _{a}^{\vphantom{b}}J_{ki}^{\vphantom{b}} = 0.
\end{align*}
However, 
\begin{equation*}
\frac{1}{2}J^{jk}\Big((\partial _{a}J_{ki})X_{jb}-(\partial _{b}J_{ki})X_{ja}\Big),
\end{equation*}
where $J^{ij}J_{jk}={\delta}^{i}_{k}$,\footnote{Here we need that the orbits of the isometry group are non-null, otherwise $J$ would be degenerate.} gives the same expressions when contracted with combinations of Killing and orthogonal vectors. Hence, they must have the same components, 
\begin{equation} \label{eq:cov1}
\nabla _{a}X_{ib}=\frac{1}{2}J^{jk}\Big((\partial _{a}J_{ki})X_{jb}-(\partial _{b}J_{ki})X_{ja}\Big).
\end{equation}
Moreover, for a Killing vector $X$ we can use the Ricci identity
\begin{equation*}
\nabla _{b}\nabla _{c\vphantom{b}}X_{d}=R_{abcd}X^{a}
\end{equation*}
with $R_{abcd}$ the Riemann tensor, to take the second derivative of \eqref{eq:cov1}. After some computations (see Appendix~\ref{app:redeinst}), we get
\begin{equation*}
R_{ab}^{\vphantom{b}}X_{i}^{a\vphantom{b}}X_{j}^{b}=-\frac{1}{2}J_{ik}^{\vphantom{b}} \sqrt{g}^{-1}\partial _{a}^{\vphantom{b}}(\sqrt{g}g^{ab}J^{kl}\partial _{b}^{\vphantom{b}}J_{lj}^{\vphantom{b}}),
\end{equation*}
where $g=\det (g_{ab})$ and $R_{ab}^{\vphantom{c}}=\tensor{R}{^{c}_{acb}}$ is the Ricci tensor. If the Einstein vacuum equations, $R_{ab}=0$, hold, then
\begin{equation} \label{eq:redeinst1}
\partial _{a}(\sqrt{g}g^{ab}J^{kl}\partial _{b}J_{lj})=0.
\end{equation}
Since $J$ is constant along the orbits \eqref{eq:redeinst1} is essentially an equation on $S$, where $S$ is the quotient space by the Killing vectors identified with any of the $s$-surfaces orthogonal to the orbits. Denote by $h_{ab}$ the metric on $S$ and by $D$ the corresponding Levi-Civita connection so that we get for the determinant
\begin{equation*}
g=\det{g_{ab}}=-r^{2}\det(h_{ab}),
\end{equation*}
where $-r^{2}=\det J$.\footnote{The sign might vary depending on the signature of the metric and the Killing vectors (spacelike or timelike). In some of the literature the condition is $r^{2}=|\det J|$. Here it is adapted to the Lorentzian case with one timelike and one spacelike Killing vector.} We know that for functions $u$ on $S$ covariant and partial derivative are equal $\partial _{a}u=D_{a}u$. Considered this together with the expression for the Laplace-Beltrami operator
\begin{equation*}
D^{2}u=D_{a}D^{a}u= \frac{1}{\sqrt{|h|}}\partial _{a}\left(\sqrt{|h|}h^{ab}\partial _{b}u\right)
\end{equation*}
equation~\eqref{eq:redeinst1} becomes
\begin{equation} \label{eq:redeinst2}
D_{a}(rJ^{-1}D^{a}J)=0,
\end{equation}
where the indices now run over $1,\dotsc ,s$ and are lowered and raised with $h_{ab}$ and its inverse. Now remembering that 
\begin{equation*}
\drm \det J = \det J \tr (J^{-1} \,\drm J) \quad \Rightarrow  \quad \drm (\log \det J)=\tr (J^{-1}\,\drm J)
\end{equation*}
we can take the trace of \eqref{eq:redeinst2} to get furthermore
\begin{align*}
0 & = D_{a}(r \tr(J^{-1}D^{a}J))=D_{a}(r D^{a} (\log \det J)) \\
& = D_{a}(rD^{a}(\log -r^{2}))= D_{a}(2r \cdot  \frac{1}{r}D^{a}r)\\
& = 2D^{2}r,
\end{align*}
hence $r$ is harmonic on $S$. From now on we assume that the gradient of $r$ is not null.

In the case $s=2$ isothermal coordinates always exist (see Appendix~\ref{app:isothermal}), that is we can write the metric on $S$ in the form
\begin{equation*}
{\erm}^{2ν}(\drm r^{2}+\drm x^{2})
\end{equation*}
where $x$ is the harmonic conjugate to $r$.\footnote{The function $x$ is said to be \textit{harmonic conjugate} to $r$, if $x$ and $r$ satisfy the Cauchy-Riemann equations.} As the Killing vectors commute, there exist coordinates $(y_{0},\dotsc ,y_{n-3})$, where the $X_{i}$ are the first $n-2$ coordinate vector fields. Taking furthermore the isothermal coordinates for the last two components, the full the metric then has the form
\begin{equation} \label{eq:sigmametric}
\drm s^{2} = \sum _{i,j=0}^{n-3}J_{ij}\,\drm y^{i} \drm y^{j} + {\erm}^{2ν} (\drm r^{2}+\drm x^{2}),
\end{equation}
which is known as the (\textit{Einstein}) \textit{${\sigma}$-model}. In this case \eqref{eq:redeinst2} reduces to \eqref{eq:redyang} and we obtain the following proposition.

\begin{prop}[Proposition 6.6.1 in \citet{Mason:1996hl}]
Let $g_{ab}$ be a solution to Einstein's vacuum equation in $n$ dimensions. Suppose that it admits $n-2$ independent commuting Killing vectors generating an orthogonally transitive isometry group with non-null orbits, and that the gradient of $r$ is non-null. Then $J(x,r)$ is the Yang's matrix of a stationary axisymmetric solution to the ASDYM equation with gauge group $\GL(n-2,\mathbb{C})$.
\end{prop}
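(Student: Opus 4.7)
The proposition is essentially the culmination of the computations carried out in the preceding section, and my proof plan is to assemble those pieces carefully and match them with the reduced Yang's equation derived earlier.

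First, I would verify that the hypotheses let us bring the metric into the $\sigma$-model form \eqref{eq:sigmametric}. The $n-2$ commuting Killing vectors $X_0,\dots,X_{n-3}$ can be simultaneously straightened into coordinate vector fields, giving coordinates $y^0,\dots,y^{n-3}$ along the orbits, and orthogonal transitivity ensures that the $2$-planes orthogonal to the orbits are integrable, producing a well-defined quotient surface $S$ with metric $h_{ab}$. The non-nullness of the orbits guarantees $J$ is non-degenerate, so $\det J=-r^2$ is a nonzero function (with $r$ the area radius of the orbits). The scalar $r$ was shown in the excerpt to satisfy $D^2 r = 0$ on $S$, and since the hypothesis requires $\nabla r$ non-null, the harmonic conjugate $x$ exists and isothermal coordinates $(r,x)$ with $h=e^{2\nu}(dr^2+dx^2)$ are available on $S$ (by the result cited from Appendix~\ref{app:isothermal}). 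Together with the Killing coordinates $y^i$, this is precisely \eqref{eq:sigmametric}.

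Second, I would invoke the general identity derived above, namely
\begin{equation*}
R_{ab}X_i^a X_j^b = -\tfrac{1}{2}J_{ik}\sqrt{g}^{-1}\partial_a(\sqrt{g}\,g^{ab}J^{kl}\partial_b J_{lj}),
\end{equation*}
so that the vacuum Einstein equations $R_{ab}=0$ reduce to \eqref{eq:redeinst1}, i.e.\ $\partial_a(\sqrt{g}\,g^{ab}J^{kl}\partial_b J_{lj})=0$. Here $J$ is constant along the Killing orbits, so the equation descends to $S$ and becomes $D_a(rJ^{-1}D^a J)=0$. Using the isothermal form of $h$, a direct computation (substituting $\sqrt{|h|}\,h^{ab}=\delta^{ab}$ up to the conformal factor which cancels) reduces this to
\begin{equation*}
r\partial_x(J^{-1}\partial_x J) + \partial_r(rJ^{-1}\partial_r J) = 0,
\end{equation*}
which is exactly the reduced Yang's equation \eqref{eq:redyang}.

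Third, I would conclude by reading the earlier analysis in reverse: every matrix solution $J(x,r)$ of \eqref{eq:redyang} was shown to determine, via \eqref{eq:redpot2}, a stationary axisymmetric gauge potential $\Phi$ satisfying the ASDYM equations, and conversely every such ASDYM field arises this way. Since $J$ is the $(n-2)\times(n-2)$ Gram matrix of the Killing vectors and is non-degenerate by the non-null orbit assumption, the Yang's matrix takes values in $\GL(n-2,\mathbb{C})$, and the gauge group of the associated ASDYM field is $\GL(n-2,\mathbb{C})$.

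The main obstacle is really bookkeeping rather than conceptual: one must carefully check that the substitution of the $\sigma$-model metric into the covariant equation \eqref{eq:redeinst2} produces the Yang form \eqref{eq:redyang} with the correct prefactors of $r$, since the conformal factor $e^{2\nu}$ and the determinant $\sqrt{g}=r\,e^{2\nu}$ conspire in a slightly delicate way. Everything else is a direct appeal to identities already established in the excerpt.
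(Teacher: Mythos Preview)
Your proposal is correct and follows essentially the same route as the paper: the proposition is stated there as the summary of the preceding derivation, and your three steps (bringing the metric to $\sigma$-model form via orthogonal transitivity, harmonicity of $r$, and isothermal coordinates; invoking the identity for $R_{ab}X_i^aX_j^b$ to reduce the vacuum equations to \eqref{eq:redeinst2} and then to \eqref{eq:redyang}; and identifying the result with the reduced Yang's equation for a $\GL(n-2,\mathbb{C})$ gauge field) match the paper's argument exactly. Your remark about the conformal factor cancelling because $\sqrt{|h|}\,h^{ab}=\delta^{ab}$ in 2D isothermal coordinates is precisely the mechanism by which \eqref{eq:redeinst2} collapses to \eqref{eq:redyang}.
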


The partial converse of the Proposition yields a technique for solving Einstein's vacuum equations as follows. Any real solution $J(x,r)$ to reduced Yang's equation~\eqref{eq:redyang} such that
\begin{enumerate}[(a)]
\item $\det J=-r^{2}$,
\item $J$ is symmetric
\end{enumerate}
determines a solution to the Einstein vacuum equations, because we can reconstruct the metric from given $J$ and ${\erm}^{2ν}$ via \eqref{eq:sigmametric}, and then \eqref{eq:redyang} is equivalent to the vanishing of the components of $R_{ab}$ along the Killing vectors (as we have shown above). The remaining components of the vacuum equations can be written as
\begin{equation} \label{eq:redeinst3}
2 \irm \partial _{{\xi}}\left(\log\left(r{\erm}^{2ν}\right)\right)=r\tr \left(\partial_{\xi}\left(J^{-1}\right) \partial_{\xi} J\right),
\end{equation}
with ${\xi}=x+\irm r$, together with the complex conjugate equation (if $x$ and $r$ are real), where ${\xi}$ is replaced by $\skew{1}{\bar}{{\xi}}= x-\irm r$ and $\irm$ by $-\irm$. These equations can be obtained by a direct calculation of the Christoffel symbols, curvature tensors and so on \cite[App.~D, Eq.~(D9)]{Harmark:2004rm}. They are automatically integrable if \eqref{eq:redyang} is satisfied and under the constraint $\det J=-r^{2}$ (see Appendix~\ref{app:intefe}), and they determine ${\erm}^{ν}$ up to a multiplicative constant. The constraint, however, is not significant for the following reason. We know that in polar coordinates $u=\log r$ is a solution to the (axisymmetric) Laplace equation
\begin{equation} \label{eq:axilap}
\partial_{r}^{\vphantom{2}}(r\partial_{r}^{\vphantom{2}}u)+r\partial _{x}^{2}u=0,
\end{equation}
so is $u=c \log r + \log d$ for constants $c$ and $d$. Now suppose $J$ is a solution to \eqref{eq:redyang}, and consider $\erm^{u}J=dr^{c}J$. Plugging this new matrix in \eqref{eq:redyang}, we see that it is again a solution of reduced Yang's equation if \eqref{eq:axilap} holds. The determinant constraint can thus be satisfied by an appropriate choice of the constants, since we have
\begin{equation*}
\det(\erm^{u}J)=\erm^{(n-2)u}\det J=d^{n-2} r^{(n-2)c}\det J.
\end{equation*}
The condition $J=J^{\mathrm{t}}$ is a further $\mathbb{Z}_{2}$ symmetry of the ASD connection. 

This coincidence between Einstein equations and Yang's equation is remarkable, since, although we started from a curved-space problem, by the correspondence it can essentially be regarded as a problem on Minkowski space reduced by a time translation and a rotation, hence on flat space.

Considering the case $n-2=s=2$, Yang's matrix $J$ can be written as
\begin{equation} \label{eq:backldec4d}
J=\left(\begin{array}{cc}
f{\alpha}^2-r^2f^{-1} & -f{\alpha} \\
-f{\alpha} & f
\end{array}\right),
\end{equation}
where $f$ and ${\alpha}$ are functions of $x$ and $r$. It can be read off that the metric takes the form
\begin{equation*}
\drm s^{2} = f(\drm t-{\alpha} \,\drm {\theta})^{2}-f^{-1}r^{2}\,\drm{\theta}^{2}-{\erm}^{2ν}(\drm r^{2}+\drm x^{2}).
\end{equation*}
If $f$ and ${\alpha}$ are real for real $x$ and $r$ this is known as the stationary axisymmetric gravitational field written in canonical Weyl coordinates.

\section[Solution Generation]{Solution Generation by the Symmetries of Yang's Equation}

A tedious but straightforward calculation shows that in canonical Weyl coordinates reduced Yang's equation~\eqref{eq:redyang} becomes
\begin{equation} \label{eq:weylequ}
\begin{split}
r^{2}\nabla ^{2}\log f+(f\partial _{r}{\alpha})^{2}+(f\partial _{x}{\alpha})^{2} & = 0,\\
\partial _{r}(r^{-1}f^{2}\partial _{r}{\alpha})+\partial _{x}(r^{-1}f^{2}\partial _{x}{\alpha}) & = 0,
\end{split}
\end{equation}
where $\nabla ^{2}=r^{-1}\partial _{r}(r\partial _{r})+\partial _{x}^{2}$ for real $r$ is the axisymmetric form of the three-dimensional Laplacian in cylindrical polar coordinates. The second equation is an integrability condition for ${\psi}$ with 
\begin{equation*}
\partial _{x}{\psi}=-r^{-1}f^{2}\partial _{r}{\alpha},\quad \partial _{r}{\psi}=-r^{-1}f^{2}\partial _{x}{\alpha},
\end{equation*}
or equivalently
\begin{equation*}
r\partial _{x}{\psi}+f^{2}\partial _{r}{\alpha}=0, \quad r \partial _{r}{\psi}-f^{2}\partial _{x}{\alpha}=0. 
\end{equation*}
If we consider the matrix
\begin{equation} \label{eq:ernstpot4d}
J'=\frac{1}{f} \left(\begin{array}{cc}
{\psi}^2+f^2 & {\psi} \\
{\psi} & 1
\end{array}\right)
\end{equation}
instead of $J$,\footnote{Since ${\psi}$ is only determined up to constant $J'$ is only determined up to
\begin{equation*}
J'\mapsto \left(\begin{array}{cc}
1 & 0 \\
{\gamma} & 1
\end{array}\right)
J'
\left(\begin{array}{cc}
1 & {\gamma} \\
0 & 1
\end{array}\right)
\end{equation*}
for a constant ${\gamma}$.} we find that \eqref{eq:redyang} for $J'$ again comes down to \eqref{eq:weylequ}, but with ${\alpha}$ replaced by ${\psi}$ as one of the variables. Solutions to Einstein's vacuum equations can now be obtained by solving \eqref{eq:redyang} for $J'$ subject to the conditions $\det J'=1$, $J'=J'^{\mathrm{t}}$. In this context \eqref{eq:redyang} is called \textit{Ernst equation} and the complex function $\Ecal=f+\irm {\psi}$ is the \textit{Ernst potential} \cite{Ernst:1968aa}, which is often taken as the basic variable in the analysis of stationary axisymmetric fields. We will also refer to $J'$ as the Ernst potential.

Note that \eqref{eq:redyang} has the obvious symmetry $J\mapsto A^{\mathrm{t}}JA$, where $A\in \SL(2,\mathbb{C})$ is constant.\footnote{The requirement $\det A=1$ is necessary to preserve the constraint $\det J'=1$.} This corresponds to the linear transformation 
{\arraycolsep=2pt
\begin{equation} \label{eq:killtrf}
(\begin{array}{cc}X & Y\end{array}) \mapsto  (\begin{array}{cc}X & Y\end{array}) A
\end{equation}
}of Killing vectors in the original space-time. Yet, the construction of $J'$ is not covariant with respect to general linear transformations in the space of Killing vectors, that is for \eqref{eq:killtrf} we do not have $J'\mapsto A^{\mathrm{t}}J'A$.
Given a solution of the Einstein vacuum equations in form of $J'$, this leads to a method of generating new solutions. First, recover the corresponding $J$ by solving for ${\alpha}$ in terms of ${\psi}$ and $f$. Then replace $J$ by $C^{\mathrm{t}}JC$, $C\in \SL(2,\mathbb{C})$, and construct $J'$ from the new $J$. Again replace $J'$ by $D^{\mathrm{t}}J'D$, $D\in \SL(2,\mathbb{C})$, and so on. This produces an infinite-parameter family of solutions to the Einstein vacuum equations starting from one original seed (if $C$ and $D$ are real, the transformations preserve the reality, stationarity and axisymmetry of the solutions as well). 
\chapter{Black Holes and Rod Structure} \label{ch:bh}

A remarkable consequence of Einstein's theory of gravitation is that under certain circumstances an astronomical object cannot exist in an equilibrium state and hence must undergo a gravitational collapse. The result is a space-time in which there is a ``region of no escape'' --- a black hole. Black hole space-times are of interest not only in four but also in higher dimensions. 

This chapter will provide basic knowledge in general relativity that is needed later on. Yet, proofs or further details are omitted at many points as this would be beyond the scope of our discussion. Moreover, we will see what questions arise in higher-dimensional black hole space-times and in the next chapter we will describe how twistor theory can be useful in tackling these problems. 

In the following a space-time will be a real time-orientable and space-orientable\footnote{In fact, one can argue that time-orientability implies space-orientability \cite[Sec.~6.1]{Hawking:1973aa}.} Lorentzian manifold, where not stated differently. 

\section{Relevant Facts on Black Holes} \label{sec:bhfacts}

The following definitions are taken from \citet{Wald:1984rz} and \citet[Sec.~2]{Chrusciel:2008aa}. Denote by ${\Jcal}^{\pm}(m)$ the causal future or past of a space-time point $m$. An important concept is asymptotic flatness which roughly speaking means that the gravitational field and matter fields (if present) become negligible in magnitude at large distance from the origin. More precisely we say an $n$-dimensional space-time $(M,g)$ is \textit{asymptotically flat and stationary} if $M$ contains a spacelike hypersurface $S_{\mathrm{ext}}$ diffeomorphic to $\mathbb{R}^{n-1}\backslash B(R)$, where $B(R)$ is an open coordinate ball of radius $R$, that is contained in a hypersurface satisfying the requirements of the positive energy theorem and with the following properties. There exists a complete Killing vector field ${\xi}$ which is timelike on $S_{\mathrm{ext}}$ (stationarity) and there exists a constant ${\alpha} > 0$ such that, in local coordinates on $S_{\mathrm{ext}}$ obtained from $\mathbb{R}^{n-1}\backslash B(R)$, the metric ${\gamma}$ induced by $g$ on $S_{\mathrm{ext}}$, and the extrinsic curvature tensor $K_{ij}$ of $S_{\mathrm{ext}}$, satisfy the fall-off conditions
\begin{equation*}
{\gamma}_{ij} - {\delta}_{ij} = O_{k}(r^{-{\alpha}}),\quad K_{ij} = O_{k-1}(r^{-1-{\alpha}})
\end{equation*}
for some $k >1$, where we write $f = O_{k}(r^{{\alpha}})$ if $f$ satisfies 
\begin{equation*}
\partial _{k_{1}} \cdots \partial _{k_{l}}f=O(r^{{\alpha}-l}),\quad	0\leq l\leq k.
\end{equation*}
In \cite[Sec.~2.1]{Chrusciel:2008aa} it is argued that these assumptions together with the vacuum field equations imply that the full metric asymptotes that of Minkowski space. 

The Killing vector ${\xi}$ models a ``time translation symmetry''. The one-parameter group of diffeomorphisms generated by ${\xi}$ is denoted by $\phi_{t}:M\to M$. Let $M_{\mathrm{ext}}=\bigcup_{t}\phi_{t}(S_{\mathrm{ext}})$ be the \textit{exterior region}, then the \textit{domain of outer communications} is defined as
\begin{equation*}
\langle\langle M_{\mathrm{ext}} \rangle \rangle = {\Jcal}^{+}(M_{\mathrm{ext}})\cap {\Jcal}^{-}(M_{\mathrm{ext}}).
\end{equation*}
We call $B= M \backslash {\Jcal}^{-}(M_{\mathrm{ext}})$ the \textit{black hole region} and its boundary $\Hcal^{+}=\partial B$ the \textit{black hole event horizon}.\footnote{This definition does in fact not depend on the choice of $S_{\mathrm{ext}}$ \cite[Sec.~2.2]{Chrusciel:2008aa}.} Analogously, the \textit{white hole} and the \textit{white hole event horizon} are $W= M \backslash {\Jcal}^{+}(M_{\mathrm{ext}})$ and $\Hcal^{-}=\partial W$. So, if the space time contains a black hole, that means it is not (entirely) contained in the causal past of its exterior region. $\Hcal=\Hcal^{+}\cup \Hcal^{-}$ is a null hypersurface generated by (inextendible) null geodesics \cite[p.~312]{Hawking:1973aa}. 

The second type of symmetry we will impose models a rotational symmetry. We say a space-time admits an \textit{axisymmetry} if there exists a complete spacelike Killing vector field with periodic orbits. This is a $\U(1)$-symmetry so we imagine it as a rotation around a codimension-2 hypersurface. As shown in \citet[Sec.~3.1]{Myers:1986aa} and \citet[Sec.~1.1]{Emparan:2008aa}, note that for $\dim M>4$ there is the possibility to rotate around multiple independent planes (if $M$ is asymptotically flat). For spatial dimension $n-1$ we can group the coordinates in pairs $(x_{1},x_{2}), (x_{3},x_{4}), \ldots$ where each pair defines a plane for which polar coordinates $(r_{1},\varphi_{1}), (r_{2},\varphi_{2}), \ldots$ can be chosen. Thus there are $N=\lfloor \frac{n-1}{2}\rfloor$ independent (commuting) rotations each associated to an angular momentum. An $n$-dimensional space-time $M$ will be called \textit{stationary and axisymmetric} if it admits $n-3$ of the above $\U(1)$ axisymmetries in addition to the timelike symmetry. However, note that this yields an important limitation. For globally asymptotically flat space-times we have by definition an asymptotic factor of $S^{n-2}$ in the spatial geometry, and $S^{n-2}$ has isometry group $\operatorname{O}(n-1)$. The orthogonal group $\operatorname{O}(n-1)$ in turn has Cartan subgroup $\U(1)^{N}$ with $N=\lfloor \frac{n-1}{2}\rfloor$, that is there cannot be more than $N$ commuting rotations. But each of our rotational symmetries must asymptotically approach an element of $\operatorname{O}(n-1)$ so that $\U(1)^{n-3}\subseteq\U(1)^{N}$, and hence
\begin{equation*}
n-3\leq N=\left\lfloor \frac{n-1}{2}\right\rfloor,
\end{equation*}
which is only possible for $n=4,5$. Therefore, stationary and axisymmetric solutions in our sense can only have a globally asymptotically flat end in dimension four and five. However, much of the theory, for example the considerations in Chapter~\ref{ch:yang}, is applicable in any dimension greater than four so that henceforth we still consider stationary and axisymmetric space-times and it will be explicitly mentioned if extra care is necessary.

Note for example that we have to change the above definitions of a black hole and stationarity in dimensions greater or equal than 6 because they require asymptotical flatness. Instead of asking $S_{\mathrm{ext}}$ to be diffeomorphic to $\mathbb{R}^{n}\backslash B(R)$ we set the condition that $S_{\mathrm{ext}}$ is diffeomorphic to $\mathbb{R}^{n}\backslash B(R)\times N$, where $B(R)$ is again an open coordinate ball of radius $R$ and $N$ is a compact manifold with the relevant dimension. This is also called asymptotic Kaluza-Klein behaviour. The fall-off conditions then have to hold for the product metric on $\mathbb{R}^{n}\backslash B(R)\times N$. The definition of stationarity is verbatim the same only with other asymptotic behaviour of $S_{\mathrm{ext}}$.

The time translation isometry must leave the event horizon of the black hole region invariant as it is completely determined by the metric which is invariant under the action of ${\xi}$. Hence ${\xi}$ must lie tangent to the horizon, but a tangent vector to a hypersurface with null normal vector must be null or spacelike. So, ${\xi}$ must be null or spacelike everywhere on the horizon. This motivates the following definition. A null hypersurface $\Ncal$ is called a \textit{Killing horizon} of a Killing vector ${\xi}$ if (on $\Ncal$) ${\xi}$ is normal to $\Ncal$, that is ${\xi}$ is in particular null on $\Ncal$. The integral curves of the normal vector of a null hypersurface are in fact geodesics and these geodesics generate the null hypersurface.\footnote{This and some of the following statements can be found in standard textbooks or lecture notes, for example \cite{Townsend:1997aa}.} If $l$ is normal to $\Ncal$ such that $\nabla _{l}l=0$ (affinely parameterized), then ${\xi}=f\cdot l$ and
\begin{equation*}
\nabla _{{\xi}}{\xi}=f\nabla _{l}(fl)={\xi}(f)\cdot l+f^{2}\nabla _{l}l=f^{-1}{\xi}(f)\cdot {\xi}={\kappa}\cdot {\xi},
\end{equation*}
where ${\kappa}={\xi}(\ln|f|)$ is the so-called \textit{surface gravity}. It can be shown that
\begin{equation*}
{\kappa}^{2}=-\frac{1}{2} \left.(\nabla ^{{\mu}}{\xi}^{{\nu}})(\nabla _{{\mu}}{\xi}_{{\nu}}) \right|_{\Ncal}\ ,
\end{equation*}
and that ${\kappa}$ is constant on the horizon. Thus, the following definition makes sense: A Killing horizon is called \textit{non-degenerate} if ${\kappa}\neq 0$, and \textit{degenerate} otherwise. Henceforth, we will restrict our attention to non-degenerate horizons, since it is a necessary assumption in the upcoming theorems and degenerate horizons may have a somewhat pathological behaviour.

The following theorem links the two notions of event and Killing horizon.

\begin{thm}[Theorem~1 in \citet{Hollands:2007aa}] \label{thm:eviskill}
Let $(M,g_{ab})$ be an analytic, asymptotically flat $n$-dimensional solution of the vacuum Einstein equations containing a black hole and possessing a Killing field $ξ$ with complete orbits which are timelike near infinity. Assume that a connected component, $\Hcal$, of the event horizon of the black hole is analytic and is topologically $R×Σ$, with $Σ$ compact, and that $κ≠0$. Then there exists a Killing field $K$, defined in a region that covers $\Hcal$ and the entire domain of outer communication, such that $K$ is normal to the horizon and $K$ commutes with $ξ$.
\end{thm}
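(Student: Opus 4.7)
The plan is to construct $K$ first on $\mathcal{H}$, then promote it to a Killing field in a neighbourhood of the bifurcation cross-section, and finally propagate it throughout $\langle\langle M_{\mathrm{ext}}\rangle\rangle$ by analyticity. The strategy mirrors Hawking's four-dimensional rigidity argument, but replaces the Newman--Penrose manipulations (which do not generalise to $n\geq 5$) by an order-by-order Taylor argument that uses the vacuum Einstein equations together with the analyticity hypothesis on $g$.

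First I would use $\kappa\neq 0$ and the product topology $\mathcal{H}\cong\mathbb{R}\times\Sigma$ to locate a bifurcation cross-section $B\subset\mathcal{H}$ on which any Killing field normal to $\mathcal{H}$ must vanish. Introducing Gaussian null coordinates $(u,r,x^{A})$ adapted to $\mathcal{H}$, so that $r=0$ on $\mathcal{H}$ and $\partial_{u}$ is tangent to the null generators, the metric assumes the standard canonical form for a non-degenerate bifurcate horizon. Because the flow of $\xi$ preserves the metric, it preserves $\mathcal{H}$ and therefore permutes the null generators, inducing an isometry of each cross-section $\Sigma_{u}$. Combining this action with the intrinsic boost symmetry of a non-degenerate bifurcate horizon, I would construct a candidate vector field $K$ on $\mathcal{H}$ that is tangent to the null generators, vanishes on $B$, and carries surface gravity $\kappa$; in the degenerate case where $\xi$ is already normal to $\mathcal{H}$ this simply recovers $K=\xi$.

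The central step is to show that, after analytic extension off $\mathcal{H}$ by the exponential map from $B$, this candidate actually satisfies Killing's equation in an open neighbourhood. Working in Gaussian null coordinates, one Taylor-expands the components of $\Lcal_{K}g$ in $r$ about $B$. By construction $\Lcal_{K}g\rvert_{B}=0$ and its first transverse derivative vanishes. The vacuum equations $R_{ab}=0$, specialised to the canonical form of $g$, then yield recursion relations along the null generators which force every higher-order Taylor coefficient of $\Lcal_{K}g$ to vanish as well. Combined with analyticity of $g$ this gives $\Lcal_{K}g\equiv 0$ in a full neighbourhood of $B$. This infinite-order propagation step, driven by the field equations, is what I expect to be the main obstacle: producing the recursion and verifying that no obstruction persists at any order is the genuine replacement for Hawking's two-dimensional argument, and is where the analyticity hypothesis is used in an essential way.

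Having secured a genuine Killing field $K$ on an open set meeting $B$, I would extend it throughout $\langle\langle M_{\mathrm{ext}}\rangle\rangle$ by the standard unique-continuation property of Killing vectors: the Killing equation together with the identity $\nabla^{c}\nabla_{c}K_{a}=-R_{a}{}^{c}K_{c}$ is an overdetermined linear system whose solutions are uniquely determined by a $1$-jet at any point and can be continued along any path in a connected analytic manifold. Since $\langle\langle M_{\mathrm{ext}}\rangle\rangle$ is connected and $g$ is analytic there, $K$ extends uniquely. Finally, $[K,\xi]$ is itself a Killing field, and it vanishes on $B$ because $K\rvert_{B}=0$ and $\xi$ is tangent to $B$; the same unique-continuation principle then yields $[K,\xi]=0$ throughout the domain of outer communication, completing the proof.
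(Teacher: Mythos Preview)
The paper does not contain a proof of this theorem at all: it is quoted verbatim from \citet{Hollands:2007aa} and used as a black-box input to the subsequent discussion of rod structures. There is therefore nothing in the paper to compare your proposal against.

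That said, your sketch is a reasonable outline of the actual Hollands--Ishibashi--Wald argument (Gaussian null coordinates, a candidate $K$ tangent to the generators, recursion from $R_{ab}=0$ for the transverse Taylor coefficients of $\Lcal_{K}g$, then analytic continuation). Two places where your write-up is a bit too quick if you intend it as a self-contained proof: first, the existence of a bifurcation cross-section is not part of the hypotheses and has to be established (or bypassed) from $\kappa\neq 0$ and the product topology; the original paper works directly in Gaussian null coordinates based at an arbitrary cross-section rather than assuming a bifurcation surface from the start. Second, for the commutation $[K,\xi]=0$, vanishing of $[K,\xi]$ on $B$ is not by itself sufficient for unique continuation of a Killing field, which is determined by its full $1$-jet $(K_{a},\nabla_{a}K_{b})$ at a point; you would need to argue that $\nabla[K,\xi]$ also vanishes on $B$, or run the commutation argument differently.
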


This theorem essentially states that the event horizon of a black hole that has settled down is a Killing horizon (not necessarily for $∂_{t}$). In four dimensions this theorem has already been proven earlier \cite{Hawking:1973aa}. In four dimensions Theorem~\ref{thm:eviskill} is an essential tool in the proof of the following uniqueness theorem, which is the result of a string of papers \cite{Israel:1968aa, Carter:1971aa, Hawking:1972aa, Robinson:1975aa}. 

\begin{thm} \label{thm:carter}
Let $(M,g)$ be a non-degenerate, connected, analytic, asymptotically flat, stationary and axisymmetric four-dimensional space-time that is non-singular on and outside an event horizon, then $(M,g)$ is a member of the two-parameter Kerr family. The parameters are mass $M$ and angular momentum $L$.
\end{thm}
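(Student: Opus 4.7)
The approach I would take follows the classical line of attack developed by Israel, Carter, Hawking and Robinson, making heavy use of the Ernst reduction of Chapter~\ref{ch:yang}. First, invoke Theorem~\ref{thm:eviskill} (in the earlier four-dimensional form due to Hawking) to conclude that each connected component of the event horizon $\Hcal$ is a Killing horizon for some Killing field $K={\xi}+{\Omega}{\psi}$, where ${\xi}$ is the stationary Killing vector, ${\psi}$ generates the axial symmetry and ${\Omega}$ is the angular velocity of the horizon. Next, Hawking's topology theorem gives that spatial cross-sections of $\Hcal$ are 2-spheres, and the Carter--Gibbons analysis yields orthogonal transitivity of the $\mathbb{R}\times \U(1)$ action on the domain of outer communications. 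By the results of Chapter~\ref{ch:yang} the metric can then be cast in Weyl--Papapetrou form with a symmetric $2\times 2$ Yang matrix $J(x,r)$ satisfying reduced Yang's equation \eqref{eq:redyang} on the half-plane $\{r\geq 0\}$.

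Replacing $J$ by the Ernst matrix $J'$ of \eqref{eq:ernstpot4d} reformulates the problem as the Ernst equation for $\Ecal=f+\irm{\psi}$ on the half-plane. The key step here is to translate the geometric input (asymptotic flatness, regularity of the axis, smoothness and non-degeneracy of the horizon, absence of singularities in $\langle\langle M_{\mathrm{ext}}\rangle\rangle$) into precise boundary data for $\Ecal$. The boundary of the half-plane splits in Weyl coordinates into three rods: two semi-infinite axis segments on which ${\psi}$-norm vanishes and a finite horizon segment of length $\sqrt{M^{2}-(L/M)^{2}}$ on which $K$-norm vanishes. Asymptotic flatness fixes the leading fall-off of $f$ (encoding the ADM mass $M$) and of the twist potential (encoding the angular momentum $L$), while regularity of the axis forces $\Ecal$ to take prescribed values on each axis rod.

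Existence of at least one solution for any admissible $(M,L)$ is then settled by exhibiting the Kerr metric in canonical Weyl coordinates and checking that its Ernst potential realises exactly these boundary conditions. The main obstacle, and the content of the uniqueness argument, is to show that any two solutions $\Ecal_{1},\Ecal_{2}$ with the same boundary data must coincide. This is achieved by Robinson's divergence identity: a clever bilinear combination of $(\Ecal_{1},\Ecal_{2})$ and their derivatives can be written, modulo the Ernst equation for each potential, as the divergence of a vector field whose integral over the half-plane, evaluated by Stokes' theorem on each rod and at infinity, vanishes by the matching boundary data, while the integrand itself is manifestly non-negative. Hence $\Ecal_{1}=\Ecal_{2}$, so $J$ is determined by $(M,L)$; the conformal factor ${\erm}^{2\nu}$ is then fixed up to a multiplicative constant by \eqref{eq:redeinst3}, and this constant is pinned down by elementary flatness on the axis, completing the identification of $(M,g)$ with the Kerr solution of parameters $(M,L)$.
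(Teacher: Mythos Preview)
The paper does not actually prove Theorem~\ref{thm:carter}; it is stated as a classical result and attributed to the papers \cite{Israel:1968aa, Carter:1971aa, Hawking:1972aa, Robinson:1975aa}, with no argument given beyond the remark that Theorem~\ref{thm:eviskill} is an essential ingredient. Your sketch is therefore not being compared against a proof in the paper but against the literature it cites, and in that respect your outline is faithful to the Carter--Robinson route: reduction to the Ernst equation via orthogonal transitivity and Weyl--Papapetrou coordinates, identification of boundary data from asymptotic flatness and axis/horizon regularity, existence via the explicit Kerr Ernst potential, and uniqueness via Robinson's divergence identity.

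One point worth flagging: you implicitly assume a connected horizon when you speak of a single finite horizon rod, whereas the theorem as stated in the paper says only that the space-time is connected and non-singular on and outside an event horizon. Ruling out multiple horizon components (multi-black-hole configurations) is a separate and historically delicate step that is not covered by the Robinson identity alone; strictly speaking your argument establishes uniqueness among solutions with the given rod structure, and some additional input is needed to exclude other rod structures. The paper sidesteps this by simply citing the result, so this is a gap relative to the full statement rather than relative to the paper.
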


Since stationarity can be interpreted as having reached an equilibrium, this says that in four dimensions the final state of a gravitational collapse leading to a black hole is uniquely determined by its mass and angular momentum. The assumption of axisymmetry is actually not necessary in four dimensions, as it follows from the higher-dimensional rigidity theorem.

\begin{thm} \label{thm:rig}
Let $(M,g_{ab})$ be an analytic, asymptotically flat $n$-dimensional solution of the vacuum Einstein equations containing a black hole and possessing a Killing field $ξ$ with complete orbits which are timelike near infinity. Assume that a connected component, $\Hcal$, of the event horizon of the black hole is analytic and is topologically $R×Σ$, with $Σ$ compact, and that $κ≠0$.
\begin{enumerate}
\item If ${\xi}$ is tangent to the null generators of $\Hcal$, then the space-time must be static; in this case it is actually unique (for $n=4$ the Schwarzschild solution). \cite{Sudarsky:1992aa}
\item \label{item:rig} If ${\xi}$ is not tangent to the null generators of $\Hcal$, then there exist $N$, $N\geq 1$, additional linear independent Killing vectors  that commute mutually and with ${\xi}$. These Killing vector fields generate periodic commuting flows, and there exists a linear combination
\begin{equation*}
K={\xi}+{\Omega}_{1}X_{1}+\dotsc +{\Omega}_{N}X_{N}, \quad {\Omega}_{i}\in \mathbb{R},
\end{equation*}
so that the Killing field $K$ is tangent and normal to the null generators of the horizon $\Hcal$, and $g(K,X_{i})=0$ on $\Hcal$. \cite[Thm.~2]{Hollands:2007aa}
\end{enumerate}
Thus, in case \eqref{item:rig} the space-time is axisymmetric with isometry group $\mathbb{R}\times \mathrm{U}(1)^{N}$.
\end{thm}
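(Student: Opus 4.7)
The plan is to leverage Theorem~\ref{thm:eviskill}, whose hypotheses are identical to those here, as the engine. Applying it produces a Killing field $K$, defined throughout the domain of outer communications, that is normal (hence tangent) to the null generators of $\Hcal$ and commutes with $\xi$. The case split of the statement is then governed by whether $\xi$ is parallel to $K$ on $\Hcal$ or not: in the parallel case we land in (1), and otherwise we take $X_{1}=\Omega_{1}^{-1}(K-\xi)$ as the first candidate axial Killing field.

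For case (1), $\xi$ itself is tangent to the generators, so it is null and hypersurface-orthogonal on $\Hcal$, and the Frobenius condition $\xi_{[a}\nabla_{b}\xi_{c]}=0$ holds on the horizon. The next step is to show that the twist satisfies in vacuum a transport-type equation whose vanishing on $\Hcal$ together with analyticity propagates to the whole of $\langle\langle M_{\mathrm{ext}}\rangle\rangle$. This yields staticity, after which the uniqueness assertion is a direct appeal to the static uniqueness theorem of \cite{Sudarsky:1992aa}.

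For case (2), $X_{1}$ is by construction a nontrivial Killing field, commutes with $\xi$, and is tangent to $\Hcal$. If the joint flow of $\xi$ and $X_{1}$ is still not sufficient to cancel the full transverse component of $\xi$ against the generator direction of every connected horizon component, one iterates: assuming commuting Killing fields $\xi,X_{1},\dotsc,X_{k}$ have already been produced, one reapplies the mechanism underlying Theorem~\ref{thm:eviskill} to this enlarged Killing algebra to extract $X_{k+1}$. This iterative step is the substantive extension carried out in \cite[Thm.~2]{Hollands:2007aa}. The dimension bound $N\leq \lfloor (n-1)/2\rfloor$ recorded in Section~\ref{sec:bhfacts}, coming from the Cartan subgroup $\mathrm{U}(1)^{N}$ of the asymptotic rotation group $\operatorname{O}(n-1)$, guarantees termination.

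It remains to establish that each $X_{i}$ generates a periodic flow and that all $X_{i}$ pairwise commute. The fall-off conditions force each $X_{i}$ to asymptote to a Killing field of Euclidean space that commutes with time translation and preserves the $S^{n-2}$ at infinity; the only complete such fields are generators of rotations in orthogonal $2$-planes, whose flows are manifestly $\mathrm{U}(1)$. Analyticity of the metric together with the rigidity of Killing fields, which are determined by their $1$-jet at a point, then propagates periodicity and commutativity from the asymptotic region to all of $\langle\langle M_{\mathrm{ext}}\rangle\rangle$, giving the isometry group $\mathbb{R}\times \mathrm{U}(1)^{N}$. The main obstacle is the iterative extension in case (2): manufacturing a bulk Killing field from data on (or near) $\Hcal$ relies essentially on analyticity, on $\kappa\neq 0$ for the existence of well-behaved Gaussian null coordinates adapted to $\Hcal$, and on a delicate verification that each successive extension commutes with the previously constructed Killing fields. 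This analytic step is precisely where the higher-dimensional theorem goes beyond the four-dimensional Hawking argument and constitutes the technical heart of \cite{Hollands:2007aa}.
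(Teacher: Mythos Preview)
The paper does not supply a proof of this theorem at all: it is stated as a black-box result, with part (1) attributed to \cite{Sudarsky:1992aa} and part (2) to \cite[Thm.~2]{Hollands:2007aa}, followed only by a remark that the four-dimensional case goes back to Hawking. There is therefore nothing in the paper to compare your proposal against.

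That said, your sketch is a reasonable high-level outline of how the arguments in those references proceed, and your identification of the iterative analytic-continuation step from \cite{Hollands:2007aa} as the technical core is accurate. One point to be careful about in part (1): the citation \cite{Sudarsky:1992aa} is for the uniqueness of the static vacuum solution, not for the implication ``$\xi$ tangent to the generators $\Rightarrow$ static''; the latter is the staticity theorem (Hawking--Carter in four dimensions, extended to higher dimensions in the rigidity literature), and your sketch of propagating the twist-free condition off the horizon via analyticity is the right idea for that step, but it is a separate result from what \cite{Sudarsky:1992aa} provides.
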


\begin{rem}
In dimension four the last theorem has already been shown in \cite{Hawking:1972aa, Hawking:1973aa}.\\ \mbox{} \hfill $\blacksquare$
\end{rem}

Interestingly, the higher-dimensional analogue of Theorem~\ref{thm:carter} is not true, that is there exist different stationary and axisymmetric vacuum solutions with the same mass and angular momenta. Examples for five dimensions are given in \citet{Hollands:2008fp}. They have topologically different horizons so there cannot exist a continuous parameter to link them. A useful tool for the study of solutions in five dimensions is the so-called rod structure. Before defining it let us set the basic assumptions about our space-time.

Henceforth we are going to assume that, if not mentioned differently, we are given a \textit{vacuum (non-degenerate black hole) space-time $(M,g)$ which is five-dimensional, globally hyperbolic, asymptotically flat, stationary and axisymmetric, and that is analytic up to and including the boundary $r=0$.\footnote{The coordinate $r$ is defined in Chapter~\ref{ch:yang}. Further statements on that are below in the following section.} We are not considering space-times where there are points with a discrete isotropy group.} Note that the stationarity and axisymmetry implies orthogonal transitivity, which was necessary for the construction in Chapter~\ref{ch:yang} (see Appendix~\ref{app:orthtrans}). The assumption about analyticity might seem unsatisfactory, but in this thesis we are going to focus on concepts concerning the uniqueness of five-dimensional black holes rather than regularity. 

\section{Rod Structure} \label{sec:rodstr}

Remember the ${\sigma}$-model construction in Chapter~\ref{ch:yang}. Using the $(r,x)$-coordinates from there we define as in \citet[Sec.~III.B.1]{Harmark:2004rm}.
\begin{Def} \label{def:rodstr}
A \textit{rod structure} is a subdivision of the $x$-axis into a finite number of intervals where to each interval a constant three-vector (up to a non-zero multiplicative factor) is assigned. The intervals are referred to as \textit{rods}, the vectors as \textit{rod vectors} and the finite number of points defining the subdivision as \textit{nuts}.
\end{Def}
In order to assign a rod structure to a given space-time we quote the following proposition. 
\begin{prop}[Proposition 3 in \citet{Hollands:2008fp}]
Let $(M, g_{ab})$ be the exterior of a stationary, asymptotically flat, analytic, five-dimensional vacuum black hole space-time with connected horizon and isometry group $G=\U(1)^{2}\times \mathbb{R}$. Then the orbit space $\hat M = M / G$ is a simply connected 2-manifold with boundaries and corners. If $\skew{7}{\tilde}{A}$ denotes the matrix of inner products of the spatial (periodic) Killing vectors then furthermore, in the interior, on the one-dimensional boundary segments (except the segment corresponding to the horizon), and at the corners $\skew{7}{\tilde}{A}$ has rank 2, 1 or 0, respectively.
\end{prop}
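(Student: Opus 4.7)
The strategy is to analyze the isotropy stratification of the $G$-action and then read off the rank of $\skew{7}{\tilde}{A}$ on each stratum.

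First I would establish that $\hat{M}$ is a 2-manifold with boundaries and corners. Since $\dim M = 5$ and $\dim G = 3$, generic orbits are 3-dimensional and the quotient is generically 2-dimensional. By our standing assumption there are no points with discrete isotropy, so the possible isotropy subgroups are the closed connected subgroups of $\U(1)^{2} \times \mathbb{R}$. On the exterior, the $\mathbb{R}$-factor acts freely (its generator $\xi$ is timelike, while orbits of nontrivial isotropy would force $\xi$ to lie in a compact subgroup, contradicting completeness of the timelike direction), so the only possible isotropy subgroups are trivial, a one-parameter $\U(1)$, or all of $\U(1)^{2}$. The slice theorem applied at each type gives local models: open disks near free orbits (interior of $\hat{M}$), half-disks near one-dimensional isotropy (1-dimensional boundary segments), and quarter-disks near $\U(1)^{2}$ isotropy (corners).

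Second, I would identify the rank of $\skew{7}{\tilde}{A}$ on each stratum. In the interior, both $X_{1}$ and $X_{2}$ are linearly independent and spacelike (since they generate a free $\U(1)^{2}$ whose orbits are 2-dimensional and spacelike by the asymptotic structure), so $\skew{7}{\tilde}{A}$ has rank 2. On a 1-dimensional boundary segment (other than the horizon), some nontrivial linear combination $aX_{1}+bX_{2}$ vanishes (the vector field generating the $\U(1)$-isotropy subgroup), so the corresponding row/column dependence forces $\mathrm{rank}\,\skew{7}{\tilde}{A}=1$. At a corner, the isotropy is all of $\U(1)^{2}$, hence both $X_{1}$ and $X_{2}$ vanish, giving $\mathrm{rank}\,\skew{7}{\tilde}{A}=0$. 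The horizon segment is handled separately: by the rigidity theorem (Theorem~\ref{thm:rig}), the Killing field vanishing there is $K=\xi+\Omega_{1}X_{1}+\Omega_{2}X_{2}$, which involves $\xi$, not a pure combination of the spatial Killings, so $\skew{7}{\tilde}{A}$ need not drop rank there --- explaining the exclusion.

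Third, I would prove simple connectedness of $\hat{M}$. The cleanest route is to invoke the topological censorship theorem for asymptotically flat spacetimes: the domain of outer communications $\langle\langle M_{\mathrm{ext}}\rangle\rangle$ is simply connected, and a standard argument (the projection $M \to \hat{M}$ lifts loops, and $G$ itself is connected) transfers this to $\hat{M}$. Alternatively, one can use the fact that $\hat{M}$ deformation retracts to an arc by exploiting the asymptotic flatness near infinity together with the connectedness of the horizon.

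The main obstacle is the global geometric input needed in the third step: simple connectedness of $\hat{M}$ relies on deep results (topological censorship in higher dimensions, or an explicit analysis of the boundary structure combined with the connectedness hypothesis on the horizon). The stratification part, by contrast, is essentially formal once the slice theorem and the non-degeneracy assumptions are in place. A subsidiary technical point is verifying that the 1-dimensional boundary segment coming from the horizon joins the axis segments at nuts in a compatible way to give $\hat{M}$ the structure of a manifold with corners; this requires a careful local analysis near the endpoints of the horizon rod using Theorem~\ref{thm:eviskill}.
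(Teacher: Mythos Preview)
The paper does not prove this proposition: it is quoted verbatim as Proposition~3 of \citet{Hollands:2008fp} and used as a black box to justify the rod-structure decomposition of the axis. So there is no proof in the paper against which to compare your proposal.

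That said, your outline is a faithful summary of how the result is actually established in \citet{Hollands:2008fp}: classification of the possible isotropy subgroups of $\U(1)^{2}\times\mathbb{R}$ (with the standing assumption of no discrete isotropy), the slice theorem to produce the local models (interior/edge/corner), and topological censorship for simple connectedness of the domain of outer communications, then pushed down to the quotient. Your identification of the rank of $\skew{7}{\tilde}{A}$ on each stratum and the reason the horizon segment is excluded are exactly right. The only point where your sketch is slightly glib is the claim that on a non-horizon boundary segment the rank drops to exactly~1 (rather than~0): in \citet{Hollands:2008fp} this requires a separate argument that the second axial Killing field remains nonvanishing and spacelike there, which is where analyticity and the asymptotic structure are actually used.
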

Furthermore, since $\det \skew{7}{\tilde}{A} \neq  0$ in the interior of $\hat M$, the metric on the quotient space must be Riemannian. Then $\hat M$ is an (orientable) simply connected two-dimensional analytic manifold with boundaries and corners. The Riemann mapping theorem thus provides a map of $\hat M$ to the complex upper half plane where some further arguments show that the complex coordinate can be written as ${\zeta}=x+ \irm r$. So, starting with a space-time $(M,g)$ the line segments of the boundary $\partial  \hat M$ give a subdivision of the $x$-axis 
\begin{equation} \label{eq:sptrodsdef}
(-\infty , a_{1}),\ (a_{1},a_{2}), \dotsc  , (a_{N-1},a_{N}),\ (a_{N},\infty )
\end{equation}
as the boundary of the complex upper half plane. This subdivision is moreover unique up to translation $x\mapsto x+\text{const.}$ which can be concluded from the asymptotic behaviour \cite[Sec.~4]{Hollands:2008fp}. This subdivision is now our first ingredient for the rod structure assigned to $(M,g)$. For reasons which will become clear in Sections~\ref{sec:twistconstr} and \ref{sec:adapt} we take the set of nuts to be $\{a_{0}=∞,a_{1},…,a_{N}\}$.

As the remaining ingredient we need the rod vectors. The imposed constraint $-r^{2}=\det J(r,x)$ implies $\det J(0,x)=0$, and therefore
\begin{equation*}
\dim \ker J(0,x)\geq 1.
\end{equation*}
We will refer to the set $\{r=0\}$ as the \textit{axis}. Taking the subdivision \eqref{eq:sptrodsdef} we define the rod vector for a rod $(a_{i},a_{i+1})$ as the vector that spans $\ker J(0,x)$ for $x\in (a_{i},a_{i+1})$ (we will not distinguish between the vector and its $\mathbb{R}$-span). A few comments on that. 

First consider the horizon. From Theorem~\ref{thm:rig} we learn that $K={\xi}+{\Omega}_{1}X_{1}+{\Omega}_{2}X_{2}$~(I) is null on the horizon and $\left.g(K,X_{i})\right|_{\Hcal}=0$~(II).\footnote{It is not hard to see that if $\tilde X_{1}$, $\tilde X_{2}$ is a second pair of commuting Killing vectors which generate an action of $\U(1)^{2}$, then $\tilde X_{i}$ is related to $X_{i}$ by a constant matrix \cite[Eq. (9) and Sec.~4]{Hollands:2008fp}. Hence, we can without loss of generality assume that our periodic Killing vectors are the ones from Theorem~\ref{thm:rig}.} These conditions are equivalent to
\begin{align}
g_{ti} + \sum _{j}{\Omega}_{j}g_{ij} & = 0 \quad \text{on }\Hcal, \tag{II} \\
g_{tt} + 2 \sum _{i}{\Omega}_{i}g_{ti} +\sum _{i,j}{\Omega}_{i}{\Omega}_{j}g_{ij} & = 0 \quad \text{on }\Hcal, \tag{I}\\
\stackrel{\text{(II)}}{\Longleftrightarrow} g_{tt} + \sum _{i} {\Omega}_{i} g_{ti} & = 0 \quad \text{on }\Hcal. \notag
\end{align}
Hence
\begin{equation*}
\renewcommand{\arraystretch}{1.5}
J\tilde K = \left(\begin{array}{c}g_{tt} + \sum _{i} {\Omega}_{i} g_{ti} \\g_{ti} + \sum _{j}{\Omega}_{j}g_{ij}\end{array}\right) = 0  \quad \text{on }\Hcal,\ \text{where}\ \tilde K = \left(\begin{array}{c}1 \\{\Omega}_{1} \\{\Omega}_{2}\end{array}\right).
\end{equation*}
In other words $\tilde K$ is an eigenvector of $J$ on $\Hcal$. So, by the change of basis ${\xi}\mapsto K$, $X_{i}\mapsto X_{i}$ the first row and column of $J$ diagonalizes with vanishing eigenvalue towards $\Hcal$. On the other hand away from any of the rotational axes the axial symmetries $X_{1}$, $X_{2}$ are independent and non-zero, thus the rank of $J$ drops on the horizon precisely by one and the kernel is spanned by $\tilde K$. Note that if the horizon is connected precisely one rod in \eqref{eq:sptrodsdef} will correspond to $\Hcal$. 

Second, consider the rods which do not correspond to the horizon (assuming that $\Hcal$ is connected). Proposition 1 and the argument leading to Proposition 3 in \citet{Hollands:2008fp} show that on those rods the rotational Killing vectors are linearly dependent and the rank of $J$ again drops precisely by one. Whence, on each rod $(a_{i}, a_{i+1})$ that is not the horizon, there is a vanishing linear combination $aX_{1}+bX_{2}$. Therefore the vector $\left(\begin{array}{ccc}0 & a & b\end{array}\right)^{\mathrm{t}}$ spans the $\ker J(0,x)$, $x\in (a_{i}, a_{i+1})$. By \citet[Prop.~1]{Hollands:2008fp} $a$ and $b$ are constant so that we take $aX_{1}+bX_{2}$ as the rod vector on $(a_{i}, a_{i+1})$.

\begin{rem}
The fact that $a$ and $b$ are constant is not explicitly shown in the proof of \cite[Prop.~1]{Hollands:2008fp}, but follows quickly from \cite[Eq.~(11)]{Hollands:2008fp}. For $x$ being a point where $X_{1}$, $X_{2}$ are linearly dependent or where one (but not both) of them vanishes, and for $O_{x}$ being the orbit of $x$ under the action which is generated by $X_{1}$, $X_{2}$ we have
\begin{equation*}
0=a(\tilde x)X_{1}+b(\tilde x)X_{2}, \quad \tilde x ∈ O_{x}.
\end{equation*}
Since $X_{1}$ and $X_{2}$ commute it is
\begin{equation*}
0 = \Lcal_{X_{1}} (0) = \Lcal_{X_{1}}(a)X_{1}+\Lcal_{X_{1}}(b)X_{2}.
\end{equation*}
On the other hand $\Lcal_{X_{1}}(a)=\skew{2}{\dot}{a}$ is the derivative along the orbit (the orbit is one-dimensional), hence
\begin{equation*}
0 = b X_{2} \left(-\frac{\skew{2}{\dot}{a}}{a}+\frac{\dot b}{b}\right). 
\end{equation*}
Assuming that $b$ and $X_{2}$ do not vanish, this can be integrated and implies $\frac{a}{b}=\text{const.}$ so that without loss of generality both factors can be taken as constants. If $b$ or $X_{2}$ vanishes on the orbit, then one obtains immediately $\skew{2}{\dot}{a} = 0$. Note that if one of the Killing vectors vanishes somewhere on $O_{x}$ it vanishes everywhere on $O_{x}$, otherwise one could follow the integral curve where the Killing vector is non-zero up to the first point where it vanishes and there it stops, which is a contradiction to the periodicity. This shows that the two cases above are disjoint.\\ \mbox{} \hfill $\blacksquare$
\end{rem}

Note that the nuts of the rod structure are the points which correspond to the corners of $\hat M$ and that is where the rank of $J$ drops precisely by two. So, at those points $\dim \ker J = 2$. 

\begin{ex}[Rod Structure of Four-Dimensional Schwarzschild Space-Time, taken from Section~3.1 in \citet{Fletcher:1990aa}] \label{ex:schwarz}
The Schwarzschild solution in four dimensions has in usual coordinates the form
\begin{equation*}
\drm s^{2} = \left(1-\frac{2m}{R}\right) \drm T^{2}-\left(1-\frac{2m}{R}\right)^{-1} \drm R^{2} - R^{2}(\drm {\Theta}^{2}+ \sin^{2}{\Theta} \,\drm {\Phi}^{2}),
\end{equation*}
and is obtained in Weyl coordinates $(t,r,{\varphi},x)$ by replacing
\begin{equation*}
x=(R-m)\cos {\Theta}, \quad r=(R^{2}-2mR)^{\frac{1}{2}}\sin {\Theta},\quad t=T, \quad {\varphi}= {\Phi}.
\end{equation*}
If the symmetry group is generated by $X=\partial _{{\varphi}}$ and $Y=\partial _{t}$, then one can calculate the matrix of inner products of the Killing vectors (see Appendix~\ref{app:schwarz}) as
\begin{equation*}
\renewcommand\arraystretch{1.8}
J=\left(\begin{array}{cc}-\dfrac{r^{2}}{f} & 0 \\ \hphantom{-}0 & f\end{array}\right)
\end{equation*}
where
\begin{equation*}
f=\frac{r_{+}+r_{-}-2m}{r_{+}+r_{-}+2m}\quad  \text{ with }\quad  r^{2}_{\pm}=r^{2}+(x\pm m)^{2}.
\end{equation*}
Note that $r_{+}=|x+m|$, $r_{-}=|x-m|$ for $r=0$ so that for $-m\leq x\leq m$ and $r=0$ we have $r_{+}=x+m$, $r_{-}=m-x$. Hence, $f$ vanishes for $r=0$, $-m\leq x\leq m$. Yet, applying l'H\^opital's rule twice shows that $\frac{r^{2}}{f}$ does not vanish for $r=0$, $-m<x<m$. So, the rod structure can be read off. It consists of the subdivision of the $x$-axis into $(-\infty ,-m)$, $(-m,+m)$ and $(+m,+\infty )$ and the rod vectors as in Figure~\ref{fig:SchWrodstr}. The semi-infinite rods correspond to the rotation axis and the finite one to the horizon. At $\{r=0, x=\pm m\}$ the entry $\frac{r^{2}}{f^{\hphantom{2}}}$ blows up. Furthermore, we see that the boundary values of the rods are related to the mass of the black hole.\\ \mbox{} \hfill $\blacksquare$
\begin{figure}[htbp]
\begin{center}
     \scalebox{0.8}{\input{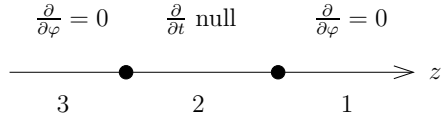}}
     \caption{Rod structure of the four-dimensional Schwarzschild solution. The numbers are only for the ease of reference to the parts of the axis later on.} 
     \label{fig:SchWrodstr}
\end{center}
\end{figure}
\end{ex}

There is a better way of visualizing the topology associated with the rod structure in five dimensions (from private communication with Piotr Chru\'{s}ciel). First consider five-dimensional Minkowski space. We leave the time coordinate and only focus on the spatial part. It is Riemannian and has dimension four, thus we can write it in double polar coordinates $(r_{1},{\varphi}_{1},r_{2},{\varphi}_{2})$. Then the first quadrant in Figure ~\ref{fig:Minkrodstr}, that is $\{r_{1}\geq 0,r_{2}\geq 0\}$, corresponds to the space-time.
\begin{figure}[htbp]
\begin{center}
     \scalebox{0.6}{\input{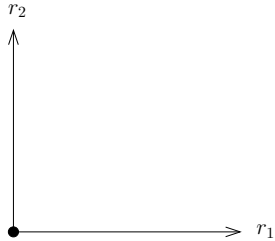}}
     \caption{Rod Structure for five-dimensional Minkowski space.} 
     \label{fig:Minkrodstr}
\end{center}
\end{figure}
The diagram suppresses the angles, so that each point in $\{r_{1}\geq 0,r_{2}\geq 0\}$ represents $S^{1}\times S^{1}$ where the radius of the corresponding circle is $r_{i}$. On the axes it thus degenerates to $\{\text{pt}\}\times S^{1}$. The boundary of our space-time, $r=0$, is in these polar coordinates $\{r_{1}=0\}\cup \{r_{2}=0\}$, and the nut is at the origin $r_{1}=r_{2}=0$ (see also Section~\ref{sec:exMinkrodstr}).

Since our interest lies in asymptotically flat space-times, the rod structures for other space-times will be obtained from this one by modifying its interior and leaving the asymptotes unchanged. For example we can cut out a quarter of the unit disc as in Figure~\ref{fig:MProdstr}.
\begin{figure}[htbp]
\begin{center}
     \scalebox{0.6}{\input{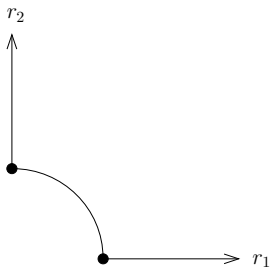}}
     \caption{Rod Structure with horizon topology $S^{3}$.} 
     \label{fig:MProdstr}
\end{center}
\end{figure}
But cutting out the quarter of the unit disc is nothing else than cutting out $r_{1}^{2}+r_{2}^{2}\leq 1$ (obviously taking the radius not to be one does not make any difference for the topology). Therefore the middle rod is the boundary of a region with topology $S^{3}$. So, if this is the horizon of a black hole, then the black hole has horizon topology $S^{3}$. Finally look at Figure~\ref{fig:BRrodstr1}.
\begin{figure}[htbp]
\begin{center}
     \scalebox{0.6}{\input{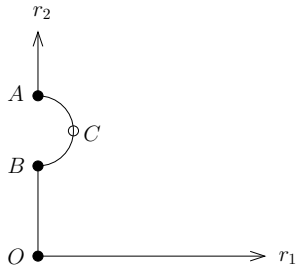}}
     \caption{Rod Structure with horizon topology $S^{2}\times S^{1}$. The nuts are at $A$, $B$ and $O$, where the rod between $A$ and $B$ corresponds to the horizon.} 
     \label{fig:BRrodstr1}
\end{center}
\end{figure}
This rod structure has three nuts:\;at $A$, at $B$ and at the origin $O$, that is at $r_{1}=r_{2}=0$. If the rod limited by $A$ and $B$ represents the horizon then the horizon topology is $S^{2}\times S^{1}$, which can be seen by rotating Figure~\ref{fig:BRrodstr1} first about the vertical and then about the horizontal axis. Another visualization is depicted in Figure~\ref{fig:BRrodstr2}, 
\begin{figure}[htbp]
\begin{center}
     \scalebox{0.6}{\input{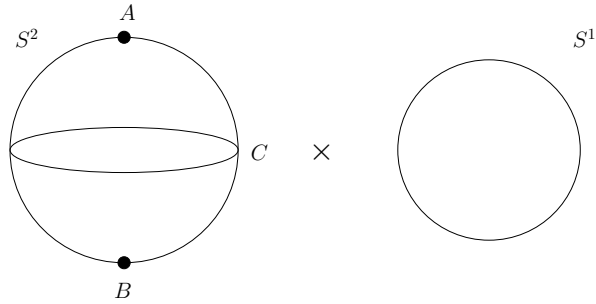}}
     \caption{Visualization of $S^{2}\times S^{1}$ topology.} 
     \label{fig:BRrodstr2}
\end{center}
\end{figure}
where the labelled points correspond to
\begin{align*}
A&:\ \{\text{pt}\}\times S^{1} \in  \mathbb{R}^{2}\times \mathbb{R}^{2},\ B:\ \{\text{pt}\}\times S^{1} \in  \mathbb{R}^{2}\times \mathbb{R}^{2}, \ O:\ \{\text{pt}\}\times \{\text{pt}\} \in  \mathbb{R}^{2}\times \mathbb{R}^{2},\\ \ C&:\ \hphantom{\{}S^{1} \hspace{0.1cm}\times S^{1} \in  \mathbb{R}^{2}\times \mathbb{R}^{2}.
\end{align*}
Interpolating the transition between those points explains the topology as well.

The relation between those more geometrical diagrams and the above definition of rod structure, that is only the $z$-axis with the nuts, can be made by the Riemann mapping theorem \cite[Sec.~4]{Hollands:2008fp}.

The rod structure is essential for the characterization of stationary axisymmetric black hole solutions. As mentioned above, such a solution is in five dimensions no longer uniquely given by its mass and angular momenta. But two solutions with connected horizon are isometric if their mass, angular momenta and rod structures coincide, if the exterior of the space-time contains no points with discrete isotropy group \cite{Hollands:2008fp} (see also Theorem~\ref{thm:holluniqueness}). An important question for the classification of five-dimensional black holes is which rod structures are admissible. Are there any restrictions on the possible configurations?
\chapter{Bundles over Reduced Twistor Space} \label{ch:bundles}

We have seen earlier that by the Penrose-Ward transform an ASDYM field corresponds to a holomorphic vector bundle over twistor space. However, we are not interested in the full ASDYM field, but in its reduction through symmetries. This chapter will provide a construction of bundles representing solutions which are invariant under a subgroup of the conformal group, and explain how we obtain in this way a twistor characterization of solutions of the Einstein equations using \citet[Ch.~11]{Mason:1996hl} and \citet{Fletcher:1990db}. 

\section{Reduced Twistor Space}

Proper conformal transformations of compactified $\Cbb\Mbb$ map ${\alpha}$-planes to ${\alpha}$-planes, and induce holomorphic motions of the twistor space which coincide with those of the natural action of $\GL(4,\mathbb{C})$ on $\mathbb{C}\Pbb^{3}$, see Chapter~\ref{ch:twistorspace} and \cite[Sec.~9.2]{Mason:1996hl}. Given an ASDYM field that is invariant under a group of conformal symmetries, by the Penrose-Ward transform the bundle over the twistor space belonging to the ASDYM field is invariant under the subgroup of $\GL(4,\mathbb{C})$ that corresponds to the group of conformal symmetries. 

This can be made more precise as follows. Let
\begin{equation*}
X = X^{a}\partial _{a}=a\,\partial _{w \vphantom{\tilde w}}+b\,\partial _{z \vphantom{\tilde w}}+\tilde a\,\partial _{\tilde w}+\tilde b\,\partial _{\tilde z}
\end{equation*}
be a conformal Killing vector on $U\subset \mathbb{C}\Mbb$ open, $\Pcal$ the twistor space and $\Fcal$ the correspondence space associated to $U$,
\begin{equation*}
\begin{xy}
  \xymatrix@C=0.7cm@R=0.7cm{
      & \Fcal \ar[ld]_q \ar[rd]^p		&  \\
      U	 	&	&	\Pcal     
  }
\end{xy}
\end{equation*}
The idea is to lift $X$ from $U$ to $\Fcal$ and project into $\Pcal$. Since $X$ is a conformal Killing vector, its flow preserves the metric up to scaling. Hence, $\partial _{(c}X_{d)}$ is proportional to the metric tensor, that is in the usual double-null coordinates
\begin{equation*}
(g_{ab})=\left(\begin{array}{cccc}\hphantom{-}0 & 0 & -1 & 0 \\\hphantom{-}0 & 0 & 0 & 1 \\-1 & 0 & 0 & 0 \\\hphantom{-}0 & 1 & 0 & 0\end{array}\right)
\end{equation*}
the conditions
\begin{equation} \label{eq:condconfkil}
\begin{split}
\partial _{(w \vphantom{\tilde w}}X_{\tilde w \vphantom{\tilde w})}=-\partial _{(z \vphantom{\tilde w}}X_{\tilde z)} \quad  & \Leftrightarrow  \quad \partial _{w \vphantom{\tilde w}}a+\partial _{\tilde w} \tilde a=\partial _{z \vphantom{\tilde w}}b+\partial _{\tilde z} \tilde b,\\
\partial _{(\tilde w}X_{z \vphantom{\tilde w})} = 0 \quad  & \Leftrightarrow  \quad \partial _{z \vphantom{\tilde w}}a=\partial _{\tilde w} \tilde b,\\
\partial _{(\tilde w}X_{\tilde w)}=\partial _{(\tilde z}X_{\tilde z)}=0 \quad  & \Leftrightarrow  \quad \partial _{\tilde w}a=\partial _{\tilde z}b=0,\\
\partial _{(\tilde w}X_{\tilde z)}=0 \quad  & \Leftrightarrow  \quad \partial _{\tilde z}a=\partial _{\tilde w}b,
\end{split}
\end{equation}
together with the same equations but tilded and untilded variables interchanged.

Now we are looking for a vector field $X'$ that acts on ${\alpha}$-planes and is induced by the vector field $X$ on $U$. So, we have to define a vector field $X''$ on $\Fcal$ such that
\begin{equation} \label{eq:projkil}
q_{*}X''=X,
\end{equation}
and that preserves the distribution spanned by $l$, $m$, that is
\begin{equation*}
[X'',l]=0,\quad [X'',m]=0
\end{equation*}
modulo combinations of $l$ and $m$. From \eqref{eq:projkil} we see that $X''$ has to be of the form
\begin{equation} \label{eq:liftedkil}
X''=a\partial _{w \vphantom{\tilde w}}+b\partial _{z \vphantom{\tilde w}}+\tilde a\partial _{\tilde w}+\tilde b\partial _{\tilde z}+Q\partial _{{\zeta} \vphantom{\tilde w}}.
\end{equation}
For fixed ${\zeta}$ the conditions~\eqref{eq:condconfkil} imply
\begin{align*}
[X,l] & =-lX={\zeta}(a_{\tilde z}\partial _{w \vphantom{\tilde w}}+{b_{\tilde z}\partial _{z \vphantom{\tilde w}}}+{\tilde a}_{\tilde z}\partial _{\tilde w}+{\tilde b}_{\tilde z}\partial _{\tilde z})\\
& \hspace{0.4cm}-(a_{w \vphantom{\tilde w}}\partial _{w \vphantom{\tilde w}}+b_{w \vphantom{\tilde w}}\partial _{z \vphantom{\tilde w}}+{{\tilde a}_{w \vphantom{\tilde w}}\partial _{\tilde w}}+{\tilde b}_{w \vphantom{\tilde w}}\partial _{\tilde z})\\
& = {\zeta}^{2}a_{\tilde z}\partial _{\tilde z}+{\zeta}a_{\tilde z}l-b_{w \vphantom{\tilde w}}m+{\zeta} {\tilde b}_{\tilde z}\partial _{\tilde z}-{\zeta}a_{w \vphantom{\tilde w}}\partial _{\tilde z}-a_{w \vphantom{\tilde w}}l-{\tilde b}_{w \vphantom{\tilde w}}\partial _{\tilde z}\\
& = Q \partial _{\tilde z} \mod (l,m)
\end{align*}
where 
\begin{equation} \label{eq:liftq}
Q={\zeta}^{2}a_{\tilde z}+{\zeta}({\tilde b}_{\tilde z}-a_{w \vphantom{\tilde w}})-{\tilde b}_{w}.
\end{equation}
Similarly,
\begin{equation*}
[X,m] = Q \partial _{\tilde w} \mod(l,m).
\end{equation*}
An easy calculation using \eqref{eq:condconfkil} again shows also $lQ=mQ=0$, hence $[X'',l]=[X'',m]=0$. Therefore, with $Q$ as in \eqref{eq:liftq} we define $X''$ to be the \textit{lift} of $X$ from $U$ to $\Fcal$. Its projection $X'=p_{*}X''$ is a well-defined holomorphic vector field on the twistor space, and the flow of $X'$ is the action of conformal motions generated by $X$ on ${\alpha}$-planes. 

Remember that ${\lambda}={\zeta}w+\tilde z$, ${\mu}={\zeta}z+\tilde w$ and ${\zeta}$ defined coordinates on the twistor space and using $w$, $z$, $\tilde w$, $\tilde z$, ${\zeta}$ as coordinates on $\Fcal$ the projection was given by
\begin{equation*}
p:(w,z,\tilde w,\tilde z,{\zeta}) \mapsto  ({\lambda},{\mu},{\zeta})=({\lambda}={\zeta}w+\tilde z,{\mu}={\zeta}z+\tilde w,{\zeta}).
\end{equation*}
Thus, 
\begin{equation*}
X'=({\zeta}a+\tilde b+wQ)\partial _{{\lambda}}+({\zeta}b+\tilde a+zQ)\partial _{{\mu}}+Q\partial _{{\zeta}}
\end{equation*}
where the components are constant on ${\alpha}$-planes, that is functions of ${\lambda}$, ${\mu}$, ${\zeta}$.

Let $U⊂\Cbb\Mbb$ be an open set satisfying the condition as in Theorem~\ref{thm:twicorr}, and $H$ be a subgroup of the conformal group. We have shown that the Lie algebra $\hfr$ gives rise to Killing vectors $X$ on $U$ and to holomorphic vector fields $X'$ on $\Pcal$. Now assume that $\hfr$ acts freely on $\Pcal$, then we define the \textit{reduced twistor space} $\Rcal$ as the quotient of $\Pcal$ over $\hfr$.\footnote{A technically more detailed definition can be found in \cite[Sec.~11.3]{Mason:1996hl}.} A vector bundle $B'\to \Pcal$, that is the Penrose-Ward transform of an ASDYM connection $\Drm$ on a vector bundle $B\to U$, is said to be \textit{invariant} under $\hfr$ if it is the pullback of an unconstrained vector bundle $E\to \Rcal$.

In our example of stationary and axisymmetric solutions the reduction was generated by the two commuting Killing vectors
\begin{equation*}
X=w\,\partial _{w \vphantom{\tilde w}}-\tilde w\,\partial _{\tilde w},\quad Y=\partial _{\tilde z}+\partial _{z \vphantom{\tilde w}}. 
\end{equation*}
The above construction yields
\begin{equation*}
Q_{X}=-{\zeta}a_{w \vphantom{\tilde w}}=-{\zeta}, \quad Q_{Y}=0.
\end{equation*}
This gives us
\begin{equation*}
\begin{array}{ll}
X''=w\partial _{w \vphantom{\tilde w}}-\tilde w\partial _{\tilde w}-{\zeta}\partial _{{\zeta}}, & Y''=\partial _{\tilde z}+\partial _{z \vphantom{\tilde w}},\\
X'=-{\mu}\partial _{{\mu}}-{\zeta}\partial _{{\zeta}} & Y'=\partial _{{\lambda}}+{\zeta}\partial _{{\mu}}.
\end{array}
\end{equation*}
With coordinates $t$, ${\theta}$, $x$, $r$ as in Chapter~\ref{ch:yang} and the gauge such that \eqref{eq:redpot1} holds,
\begin{equation*}
{\Phi}=-P\frac{\drm \tilde w}{\tilde w}+Q \,\drm \tilde z.\footnote{This is a different $Q$ as in the lift of the conformal Killing vector field.}
\end{equation*}
The pullback of local invariant sections of a bundle $E\to \Pcal$ to $\Fcal$ by $p$ are simultaneous solutions to
\begin{equation} \label{eq:locinvsec1}
\begin{split}
0 & = \Drm_{l}s=\partial _{l}s+{\Phi}(l)s=\partial _{w}s-{\zeta}(\partial _{\tilde z}+Q)s,\\
0 & = \Drm_{m}s=\partial _{m}s+{\Phi}(m)s=\partial _{z}s-{\zeta}(\partial _{\tilde w}-\tilde w^{-1}P)s,\\
0 & = X''(s), \quad 0=Y''(s)
\end{split}
\end{equation}
where $s$ is a function of ${\zeta}$ and the space-time coordinates. The first pair of equations is required by the constancy of $s$ on ${\alpha}$-planes, and the second pair is the symmetry condition.

Introducing the \textit{invariant spectral parameter} ${\sigma}={\zeta}\erm^{\irm {\theta}}$ the symmetry conditions can be stated as $s=s(x,r,{\sigma})$. A substitution of coordinates gives for the first pair of equations in \eqref{eq:locinvsec1} the form
\begin{equation} \label{eq:locinvsec2}
\begin{split}
(\partial _{r}-{\sigma}\partial _{x}+r^{-1}{\sigma}\partial _{{\sigma}})s-{\sigma}(J^{-1}J_{x})s & = 0,\\
(\partial _{x}+{\sigma}\partial _{r}-r^{-1}{\sigma}^{2}\partial _{{\sigma}})s+{\sigma}(J^{-1}J_{r})s & = 0,
\end{split}
\end{equation}
where $J(x,r)$ is defined by
\begin{equation*}
P=-rJ^{-1}J_{r}, \quad Q=J^{-1}J_{x}.\footnote{Change of scale compared to \eqref{eq:redpot1}.}
\end{equation*}
Equations~\eqref{eq:locinvsec2} is a linear system for the reduced form of Yang's equation that is integrable if and only if \eqref{eq:redyang} holds. 

Another useful parameter for this example is
\begin{equation} \label{eq:deftau}
{\tau}=x+\frac{1}{2}r({\sigma}-{\sigma}^{-1})=\frac{1}{2}({\lambda}-{\zeta}^{-1}{\mu}),
\end{equation}
which is constant along $l$ and $m$ as it depends only on ${\lambda}$, ${\mu}$, ${\zeta}$. Hence, ${\tau}$ is a function on the twistor space. 

The reduced twistor space has dimension $3-2=1$, thus there exists one invariant coordinate. Since
\begin{equation*}
X'{\tau} = 0, \quad Y'{\tau}=0,
\end{equation*}
${\tau}$ is constant along the orbits of $X'$ and $Y'$, so we can take it to be this coordinate. The pair of planes with ${\sigma}=0$ and ${\sigma}=\infty $ corresponds to ${\tau}=\infty $.

The section $s$ can now also be written in terms of $x$, $r$, ${\tau}$, and the reduced system becomes
\begin{align*}
(\partial _{r}-{\sigma}\partial _{x})s-{\sigma}(J^{-1}J_{x})s & = 0,\\
(\partial _{r}+{\sigma}\partial _{x})s+{\sigma}(J^{-1}J_{r})s & = 0,
\end{align*}
where ${\sigma}$ is a function of $x$, $r$, ${\tau}$ by \eqref{eq:deftau}, and $J$ satisfies the reduced form of Yang's equation if and only if this reduced system is integrable for every ${\tau}$. 

A point of $\Rcal$ is a leaf of the foliation of $\Pcal$ spanned by $X'$, $Y'$. This is a two-parameter family of ${\alpha}$-planes, each member being the orbit of one ${\alpha}$-plane under the flow of $X'$ and $Y'$. Points of $\Rcal$ are labelled by ${\tau}$. 

The reduced twistor space $\Rcal$ has a non-Hausdorff topology, which can be seen as follows. Let $U\subset \mathbb{C}\Mbb$ as in Theorem~\ref{thm:twicorr}, and $\Pcal$ the twistor space of $U$. Consider a leaf of $\Pcal$, that is a point of $\Rcal$, with constant ${\tau}$. This is the family of ${\alpha}$-planes two of which pass through a general point of $U$ with coordinates $t$, ${\theta}$, $x$, $r$, corresponding to the two roots of the quadratic equation
\begin{equation} \label{eq:quadleaf}
r{\sigma}^{2}+2(x-{\tau}){\sigma}-r=0
\end{equation}
for ${\sigma}$. There are two cases. If we can continuously change one root into the other by moving the point of $U$ but keeping ${\tau}$ fixed, then there is only one leaf in the foliation for this value of ${\tau}$, hence ${\tau}$ labels a single point in $\Rcal$. Otherwise ${\tau}$ labels two points. The discriminant of \eqref{eq:quadleaf} is $\left(\frac{x-{\tau}}{r}\right)^{2}+1$, so that there is only one solution to \eqref{eq:quadleaf} if ${\tau}=x\pm \irm r$. Thus, ${\tau}$ labels only one point of $U$ if ${\tau}=x \pm \irm r$ for some point of $U$, and two otherwise. For ${\tau}=\infty $ we will always get two points, the leaves on which ${\sigma}=0$ and ${\sigma}=\infty $. If $\Pcal$ is compact, then $\Rcal$ as a quotient of $\Pcal$ is compact, and from the range of the coordinate ${\tau}$ we see that it covers $\mathbb{C}\Pbb^{1}$. So, $\Rcal$ is a compact Riemannian surface covering $\mathbb{C}\Pbb^{1}$, but not Hausdorff, as for ${\tau}=x \pm \irm r$ with $x$, $r$ belonging to a point of the boundary of $U$ there are two points in $\Rcal$. They cannot be separated in the quotient topology since every neighbourhood contains a ${\tau}=x \pm \irm r$ for some point in $U$. 

Despite our complex manifold not being Hausdorff it still makes sense in our case to consider holomorphic vector bundles over it \cite[App.~1]{Woodhouse:1988ek}.

\section{The Twistor Construction} \label{sec:twistconstr}

The considerations below are based on \citet{Fletcher:1990db} which in turn go back to \citet{Woodhouse:1988ek}. An alternative reference is \citet[App.~B]{Klein:2005aa}.

To obtain the formulae in the following in accordance with most of the literature we change the notation such that \emph{the coordinate $-x$ is now named $z$, the parameter $-{\tau}$ will be $w$} as of now. Due to the clash with the standard notation for double-null coordinates this would have been confusing in the previous paragraphs. Furthermore, we will take ${\zeta}$ to be the spectral parameter instead of ${\sigma}$.

Another way of defining the reduced twistor space is to take the quotient of $U$ and $\Fcal$ by $\hfr$ so that we have ${\Sigma}=U/\hfr$ and $\Fcal_{\mathrm{r}}={\Sigma}\times \mathbb{C}\Pbb^{1}$. Then we obtain the reduced version of the double fibration
\begin{equation*}
\begin{xy}
  \xymatrix@C=0.7cm@R=0.7cm{
      & \Fcal_{\mathrm{r}} \ar[ld]_q \ar[rd]^p		&  \\
      {\Sigma}	 	&	&	\Rcal     
  }
\end{xy}
\end{equation*}
which can be used for a slightly more abstract approach. Henceforth, instead of beginning with an ASD vacuum space-time, we take a two-dimensional complex conformal manifold ${\Sigma}$ on which we are given a holomorphic solution $r$ of the Laplace equation. By $z$ we denote the harmonic conjugate of $r$. We define $\Fcal_{\mathrm{r}}={\Sigma}\times \Xcal$, where $\Xcal$ is a ${\zeta}$ Riemann sphere. The \textit{reduced twistor space} $\Rcal$ associated to ${\Sigma}$ and $r$ is constructed from $\Fcal_{\mathrm{r}}$ by identifying $(σ,{\zeta})$ and $(σ',{\zeta}')$ if they lie on the same connected component of one of the surfaces given by
\begin{equation} \label{eq:quadw}
r{\zeta}^{2}+2(w-z){\zeta}-r=0
\end{equation}
for some value of $w$ and where $z=z(σ)$, $r=r(σ)$.

We can use $w$ as a local holomorphic coordinate on $\Rcal$, which is a non-Hausdorff Riemann surface. Like above, $w$ corresponds to one point of $\Rcal$ if one can continuously change the roots of \eqref{eq:quadw} into each other by going on a path in ${\Sigma}$ and keeping $w$ fixed; and two points otherwise.\footnote{Note that the condition for $w$ to correspond only to one point is an open condition in $\Rcal$ as $z$ and $r$ are smooth functions on ${\Sigma}$.} Let $S$ be the $w$ Riemann sphere, and $V$ be the set of values for $w$ which correspond only to one point in $\Rcal$. Then $V\subset S$ is open, and if ${\Sigma}$ is simply connected, then
\begin{equation} \label{eq:eqV}
V=\{z({\sigma})+\irm r({\sigma}): {\sigma}\in {\Sigma}\}. 
\end{equation}
In general, $V$ is not connected. 

However, for axis-regular solutions (for the definition see below) $V$ can be enlarged so that it becomes a simply connected open set $V'\subset S$. This situation is depicted in Figure~\ref{fig:redtwis}.
\begin{figure}[htbp]
\begin{center}
     \scalebox{0.7}{\input{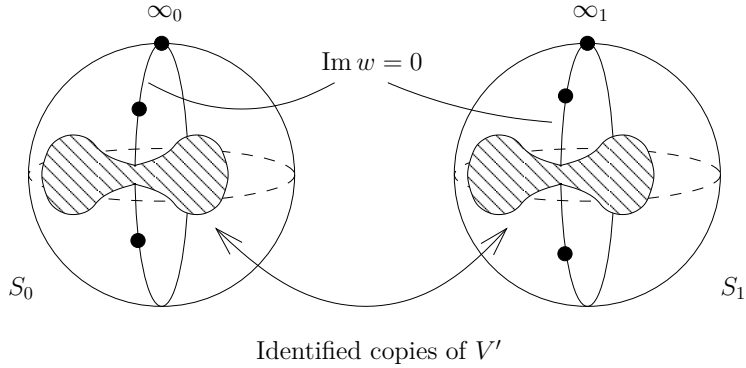}}
     \caption{Non-Hausdorff reduced twistor space with real poles (bullet points) as in the relevant examples.} 
     \label{fig:redtwis}
\end{center}
\end{figure}
For $w=\infty $ we will always have two points, corresponding to ${\zeta}=0$ and ${\zeta}=\infty $, whatever $x$ and $r$ are. Therefore, $w=\infty $ is never in $V$. We denote the points in $\Rcal$ corresponding to $w=\infty $ by $\infty _{0}$ (for ${\zeta}=0$) and $\infty _{1}$ (for ${\zeta}=\infty $).

Now we need to construct the reduced form of the Penrose-Ward transform. First the forward direction, where we are given $J$ as a solution of \eqref{eq:redyang}. In the unconstrained case the bundle over twistor space was defined by specifying what the fibres are. For an ${\alpha}$-plane $Z$, that is a point in the twistor space $\Pcal$, it was the space of covariantly constant sections over $Z$ in the given ASD bundle $B\to U$ (see end of Chapter~\ref{ch:pwtrf}). With our symmetry assumptions these are the sections defined by \eqref{eq:locinvsec1}. We have seen above that with our additional symmetry \eqref{eq:locinvsec1} comes down to \eqref{eq:locinvsec2}. Hence, the holomorphic bundle $E\to \Rcal$ is constructed by taking the fibre over a point of $\Rcal$ to be the space of solutions of \eqref{eq:locinvsec2} on the corresponding connected surface in $\Fcal_{\mathrm{r}}$. The integrability condition is precisely \eqref{eq:redyang}. 

Conversely, let $E\to \Rcal$ be a holomorphic rank-$n$ vector bundle together with a choice of frame in the fibres. For a fixed ${\sigma}\in {\Sigma}$ let ${\pi}:\Xcal\to \Rcal$ be the restricted projection of $\Fcal_{\mathrm{r}}\to \Rcal$ to $\{{\sigma}\}\times \Xcal$, that is the identification $({\sigma},{\zeta}) \sim ({\sigma}',{\zeta}')$ as above, and denote by ${\pi}^{*}(E)$ the pullback bundle of $E$ to $\Fcal_{\mathrm{r}}$. We have to assume that ${\pi}^{*}(E)$ is a trivial holomorphic bundle over $\Xcal$.\footnote{This is less restrictive than it seems since if it is satisfied at one point ${\sigma}$ then it holds in a neighbourhood of ${\sigma}$, compare comment on page~\pageref{fn:trivass}.} 

The matrix $J$ can again be recovered within $J\mapsto AJB$, where $A$ and $B$ are constant, by the splitting procedure as in Chapter~\ref{ch:pwtrf}, but adapted to the symmetry constraint. Suppose $E$ is  given by patching matrices $\{P_{{\alpha}{\beta}}(w)\}$ according to an open cover $\{\Rcal_{{\alpha}}\}$ of $\Rcal$ such that $\infty _{0}\in \Rcal_{0}$ and $\infty _{1}\in \Rcal_{1}$. Then ${\pi}^{*}(E)$ is given by patching matrices
\begin{equation} \label{eq:patmat1}
P_{{\alpha}{\beta}}(w({\sigma}))=P_{{\alpha}{\beta}}\left(\frac{1}{2}r({\sigma})({\zeta}^{-1}-{\zeta})+z({\sigma})\right)
\end{equation}
according to the open cover $\{{\pi}^{-1}(\Rcal_{{\alpha}})\}$ of $\Xcal$. The triviality assumption implies that there exist splitting matrices $f_{{\alpha}}({\zeta})$ such that
\begin{equation}\label{eq:patmat2}
P_{{\alpha}{\beta}}\left(\frac{1}{2}r({\sigma})({\zeta}^{-1}-{\zeta})+z({\sigma})\right)=f_{{\alpha}}^{\vphantom{-1}}({\zeta})f_{{\beta}}^{-1}({\zeta}).\footnote{See Proposition~\ref{prop:bundle}.}
\end{equation}
We define $J\coloneqq f_{0}(0)f_{1}(\infty )^{-1}$. Another splitting would be of the form $f_{{\alpha}}C$ for an invertible matrix $C$, which has to be holomorphic on the entire ${\zeta}$ Riemann sphere, thus $C$ is constant. But this leaves $J$ invariant and the definition is independent of the choice of splitting. The splitting matrices $f_{{\alpha}}$ depend smoothly on $r$, $z$ as ${\sigma}$ varies, so $J$ does. Although $J$ might have singularities where the triviality condition does not hold. 

Next we have to show that the so defined $J$ satisfies \eqref{eq:redyang}. Therefore, we exploit the freedom to choose the splitting matrices such that $f_{1}(\infty )=1$, thus $J=f_{0}(0)$. Let $Z_{1}=\partial _{r}+{\zeta}\partial _{z}+r^{-1}{\zeta}\partial _{{\zeta}}$ and $Z_{2}=-\partial _{z}+{\zeta}\partial _{r}-r^{-1}{\zeta}^{2}\partial _{{\zeta}}$ be the vector fields in \eqref{eq:locinvsec2}. An easy calculation shows $Z_{i}(w)=0$, $i=1,2$. Using this and acting with $Z_{i}$, $i=1,2$, on \eqref{eq:patmat2} yields for the left-hand side zero and hence for the right-hand side
\begin{equation*}
0=Z_{i}^{\vphantom{-1}}(f_{{\alpha}}^{\vphantom{-1}})f_{{\beta}}^{-1}+f_{{\alpha}}^{\vphantom{-1}}Z_{i}^{\vphantom{-1}}(f_{{\beta}}^{-1})=Z_{i}^{\vphantom{-1}}(f_{{\alpha}}^{\vphantom{-1}})f_{{\beta}}^{-1}-f_{{\alpha}}^{\vphantom{-1}}f_{{\beta}}^{-1}Z_{i}^{\vphantom{-1}}(f_{{\beta}}^{\vphantom{-1}})f_{{\beta}}^{-1},
\end{equation*}
which can be rearranged as
\begin{equation} \label{eq:Jsolproof1}
f_{{\alpha}}^{-1}Z_{i}^{\vphantom{-1}}(f_{{\alpha}}^{\vphantom{-1}})=f_{{\beta}}^{-1}Z_{i}^{\vphantom{-1}}(f_{{\beta}}^{\vphantom{-1}}).
\end{equation}
In \eqref{eq:Jsolproof1} the left-hand side is holomorphic on $\Rcal_{{\alpha}}$ and the right-hand side on $\Rcal_{{\beta}}$, so we obtain a function that is holomorphic on the entire ${\zeta}$ Riemann sphere, and therefore by Liouville's theorem both sides are independent of ${\zeta}$. Evaluate \eqref{eq:Jsolproof1} at ${\zeta}=0$, that is on $\Rcal_{0}$. This can be written as
\begin{equation*}
f_{0}^{-1}(0)Z_{i}^{\vphantom{-1}}(0)(f_{0}^{\vphantom{-1}}(0)) = 
\begin{cases}
J^{-1}\partial _{r}J,	& i=1\\
J^{-1}\partial _{x}J,	& i=2.
\end{cases}
\end{equation*}
But we can also write it as
\begin{equation*}
f_{0}^{\vphantom{-1}}(0)^{-1}Z_{i}^{\vphantom{-1}}(0)(f_{0}^{\vphantom{-1}}(0)) = -Z_{i}^{\vphantom{-1}}(f_{0}^{-1}(0))f_{0}^{\vphantom{-1}}(0).
\end{equation*}
Combining both and rearranging yields
\begin{equation*}
Z_{i}^{\vphantom{-1}}(f_{0}^{-1})+\left.(f_{0}^{-1}Z_{i}^{\vphantom{-1}}(f_{0}^{\vphantom{-1}}))f_{0}^{-1}\right|_{{\zeta}=0}=0,\quad i=1,2.
\end{equation*}
Therefore, $f_{0}^{-1}$ is a solution for \eqref{eq:locinvsec2} at ${\zeta}=0$ and then for all ${\zeta}$ since we have already seen that it is independent of ${\zeta}$. The existence of a simultaneous solution to both equations means they are integrable, thus $J$ satisfies \eqref{eq:redyang}.

Another freedom in the construction of $J$ from $E$ is the change of local trivialization $P_{{\alpha}{\beta}}\mapsto H_{{\alpha}}^{\vphantom{-1}}P_{{\alpha}{\beta}}^{\vphantom{-1}}H_{{\beta}}^{-1}$, where each $H_{{\alpha}}: \Rcal_{{\alpha}}\to \GL(n,\mathbb{C})$ is holomorphic. Then $J$ becomes $AJB$ with $A=H_{0}(\infty _{0})$ and $B=H_{1}(\infty _{1})$. This freedom of multiplying $J$ by constant matrices is the freedom of making linear transformations in the fibres of $E$ over $\infty _{0}$ and $\infty _{1}$. By construction $J$ is a linear map $E_{\infty _{1}}\to E_{\infty _{0}}$, but in the context of general relativity it was also the matrix of inner products of Killing vectors. Hence, $E_{\infty _{1}}$ has to be interpreted as the space of Killing vectors in space-time, and $E_{\infty _{0}}$ as its dual.

Furthermore, from the context of general relativity we have the requirement for $J$ to be real and symmetric, and the constraint $\det J=-r^{2}$. These conditions lead to further constraints on the bundle $E\to \Rcal$ which are in general still rather complicated \cite[Sec.~5]{Woodhouse:1988ek}. To obtain a simplification, we consider only axis-regular solutions.
\begin{Def}
An Ernst potential $J'$ is called \textit{axis-regular} if the corresponding bundle $E' \to  \Rcal_{V}$ satisfies $E' = η^{*}(\hat E)$ where $\hat E$ is a bundle over $\Rcal'=\Rcal_{V'}$ such that $\hat E|_{S_{0}}$ and $\hat E|_{S_{1}}$ are trivial.

Here $\Rcal'$ is a double cover of the $w$-Riemann sphere identified over the two copies of the set $V'$ where $V'$ is open, simply connected, invariant under $w \mapsto  \bar w$ and $V \subseteq  V'$ ($V$ as in \eqref{eq:eqV}). The map $η: \Rcal_{V}\to \Rcal_{V'}$ is the projection.

We shall also say that a metric $J$ is axis-regular if the corresponding Ernst potential $J'$ is.
\end{Def}
Roughly speaking this definition just says that $J'$ is axis-regular if we can enlarge the region where two spheres are identified to a simply connected patch such that this identification also extends to the fibres of $\hat E$. The exact shape of $V'$ is not important. Despite the further identifications we still have the projections $\Fcal_{\mathrm{r}}\to \Rcal'$ and can therefore construct $J$ from a holomorphic bundle $E'\to \Rcal'$ that satisfies the triviality condition. Choose the copies of the two Riemann spheres in $\Rcal'$ such that $\infty _{0}\in S_{0}$ and $\infty _{1}\in S_{1}$. 

If the bundle was not axis-regular, it meant that there are more than only isolated points where the two spheres cannot be identified. Thus, in the light of later results (Proposition~\ref{prop:AnalyCont}, Corollary~\ref{cor:singofP}) and \cite[App.~F]{Harmark:2004rm} axis-regularity is necessary for the the space-time not to have curvature singularities at $r=0$.

This completes our construction of a correspondence between general relativity and twistor theory. The following diagram depicts the established relation in a nutshell. 
\begin{equation*}
\hspace{-0.15cm}
\begin{xy}
  \xymatrix@!R@C=1.7cm{
      & \parbox{3.2cm}{\footnotesize{ASDYM connection for rank-$(n-2)$ vector bundle $B\to U$}} \ar@{<->}[r]^(0.615){\text{\tiny PW trf}}_(0.62){\text{\tiny Ch.~\ref{ch:pwtrf}}} \ar[d]^(0.54){\txt{\tiny Ch.~\ref{ch:bundles}}}_(0.545){\txt{\tiny symmetry \\ \tiny reduction}} &  \parbox[c][0.9cm][c]{1.2cm}{\footnotesize{$B'\to \Pcal$}} \ar[d]^(0.54){\txt{\tiny Ch.~\ref{ch:bundles}}}_(0.545){\txt{\tiny symmetry \\ \tiny reduction}} \\
      \parbox{2.8cm}{\footnotesize{stationary axi\-sym\-me\-tric space-time of di\-men\-sion $n$}} \ar@{<->}[r]^(0.48){\text{\tiny coincidence}}_(0.47){\text{\tiny Ch.~\ref{ch:yang}}}	 	& \makebox[3.4cm][c]{\footnotesize{reduced ASDYM}} \ar@{<->}[r]_(0.62){\text{\tiny Ch.~\ref{ch:bundles}}}	& \parbox[c][0.3cm][c]{1.4cm}{\footnotesize{\hspace{0.1cm} $E\to \Rcal$}}     
  }
\end{xy}
\end{equation*}

\section{Review of the Four-Dimensional Case}

For the rest of this chapter we review results for the case $n-2=2$, that is $J$ is a $2\times 2$-matrix and $E$ a rank-2 vector bundle. To characterize the bundle $E\to \Rcal'$ in terms of patching matrices we choose a four-set open cover $\{U_{0},\dotsc ,U_{3}\}$ of $\Rcal'$ such that $U_{0}\cup U_{2}\supset S_{0}$ with $V'\subset U_{2}$ and $\infty _{0}\in U_{0}$, and $U_{1}\cup U_{3}\supset S_{1}$ with $V'\subset U_{3}$ and $\infty _{1}\in U_{1}$. Now we use the following theorem.
\begin{thm}[Grothendieck]
Let $E\to \mathbb{C}\Pbb^{1}$ be a rank-$a$ vector bundle. Then
\begin{equation*}
E=L^{k_{1}}\oplus \dotsc \oplus L^{k_{a}}=\Ocal(-k_{1})\oplus \dotsc \oplus \Ocal(-k_{a})
\end{equation*}
for some integers $k_{1},\dotsc ,k_{a}$ unique up to permutation. Here, $L^{k_{i}}=L^{\otimes k_{i}}$ with $L$ the tautological bundle.\footnote{$\Ocal(1)$ is the \textit{tautological bundle}: If $z^{0}$ and $z^{1}$ are linear coordinates on $\mathbb{C}^{2}$, then ${\zeta}=z^{1}/z^{0}$ is an affine (stereographic) coordinate on $\mathbb{C}\Pbb^{1}$. Each value of ${\zeta}$, including ${\zeta}=\infty $, defines a one-dimensional subspace $L_{{\zeta}}\subset \mathbb{C}^{2}$. So, as ${\zeta}$ varies, these subspaces form a line bundle $L\to \mathbb{C}\Pbb^{1}$, called tautological bundle $\Ocal(1)$. Covering $\mathbb{C}\Pbb^{1}$ by $V=\{z^{0}\neq 0\}$ and $\tilde V=\{z^{1}\neq 0\}$, we have a trivialization given by
\begin{equation*}
t ([z], {\lambda}(z^{0}, z^{1})) = ([z], {\lambda}z^{0}) \quad \text{and} \quad \tilde t ([z], {\lambda}(z^{0}, z^{1})) = ([z], {\lambda}z^{1})
\end{equation*}
so that the transition function is $z^{1}/z^{0}={\zeta}$.} 
\end{thm}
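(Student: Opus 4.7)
The plan is to proceed in two stages: first classify the line bundles on $\mathbb{C}\Pbb^{1}$, then prove the splitting for arbitrary rank by an inductive argument that peels off a sub-line-bundle of maximal degree.

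For the line bundle classification, I would cover $\mathbb{C}\Pbb^{1}$ by the two affine patches $V = \{z^{0}\neq 0\}$ and $\tilde V = \{z^{1}\neq 0\}$ with overlap $\mathbb{C}^{\times}$ coordinatized by $\zeta$. Any holomorphic line bundle is determined by a single transition function $g(\zeta) \in \mathcal{O}^{*}(V\cap\tilde V)$. Expanding $g$ as a Laurent series on the annulus $V\cap\tilde V$, pull out the leading monomial to write $g(\zeta)=\zeta^{k}\,h(\zeta)/\tilde h(\zeta)$ where $h$ extends to a nowhere-vanishing holomorphic function on $V$ and $\tilde h$ extends to $\tilde V$; this is the Birkhoff factorization in the scalar (abelian) case. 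The factors $h,\tilde h$ absorb a change of trivialization, leaving only the integer $k$ as an invariant. Hence every line bundle is isomorphic to $L^{k}=\mathcal{O}(-k)$ for a unique $k\in\mathbb{Z}$.

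For the higher-rank existence statement I would induct on the rank $a$. Given $E$ of rank $a\geq 2$, let $k$ be the largest integer such that $H^{0}(\mathbb{C}\Pbb^{1},E(-k))\neq 0$; this $k$ is finite because for $n\gg 0$ the twist $E(-n)$ has no sections (by Serre vanishing or a direct transition-matrix estimate), while for $n\ll 0$ it has many. Pick $0\neq s\in H^{0}(E(-k))$. The key subclaim is that $s$ has no zeros: if $s$ vanished at a point $p$, then writing $s=\zeta_{p}\,s'$ with $\zeta_{p}\in H^{0}(\mathcal{O}(1))$ a defining section of $p$ would produce a nonzero $s'\in H^{0}(E(-k-1))$, contradicting the maximality of $k$. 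Therefore $s$ trivializes a line sub-bundle $L\cong\mathcal{O}(-k)$ of $E$, and the quotient $F=E/L$ has rank $a-1$, so by induction $F\cong\mathcal{O}(-k_{2})\oplus\dotsb\oplus\mathcal{O}(-k_{a})$.

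The main obstacle is the splitting of the short exact sequence
\begin{equation*}
0\longrightarrow L\longrightarrow E\longrightarrow F\longrightarrow 0,
\end{equation*}
which is classified by $\operatorname{Ext}^{1}(F,L)=H^{1}(\mathbb{C}\Pbb^{1},F^{*}\otimes L)=\bigoplus_{i=2}^{a}H^{1}(\mathbb{C}\Pbb^{1},\mathcal{O}(k_{i}-k))$. The maximality of $k$ applied to each summand $\mathcal{O}(-k_{i})\subset F$ lifted back to $E$ forces $k\geq k_{i}$, hence $k_{i}-k\leq 0$, so each degree $k_{i}-k\geq -1$ fails only if $k_{i}-k\leq -2$; in either range we have the standard vanishing $H^{1}(\mathbb{C}\Pbb^{1},\mathcal{O}(m))=0$ for $m\geq -1$. (Checking the maximality inequality carefully is the delicate point and uses that any sub-line-bundle of $E$ of degree exceeding $-k$ would provide a section of $E(-k-1)$.) Consequently all extension classes vanish and $E\cong L\oplus F$, completing the inductive step.

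Finally, uniqueness follows from a numerical invariant: if $E=\bigoplus\mathcal{O}(-k_{i})$, then
\begin{equation*}
\dim H^{0}(\mathbb{C}\Pbb^{1},E(n))=\sum_{i=1}^{a}\max(n-k_{i}+1,0),
\end{equation*}
and the piecewise-linear function $n\mapsto \dim H^{0}(E(n))$ determines the multiset $\{k_{1},\dotsc,k_{a}\}$ via its break points, so the splitting type is unique up to permutation.
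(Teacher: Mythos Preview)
The paper does not prove this theorem; it is quoted as Grothendieck's classical result and used as a black box to put the restrictions $E|_{S_0}$ and $E|_{S_1}$ into the diagonal form $\diag((2w)^{p},(2w)^{q})$. So there is no ``paper's proof'' to compare against, and your proposal is supplying a proof the thesis deliberately omits.

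Your overall strategy (peel off a sub-line-bundle of maximal degree, split the resulting extension by an $\operatorname{Ext}^1$ vanishing, induct) is the standard one and is correct in outline. However, the write-up has sign and inequality errors that make the Ext step, as written, a non-argument. If $k$ is the largest integer with $H^0(E(-k))\neq 0$ and $s$ is a nowhere-vanishing section of $E(-k)$, then $s$ gives an inclusion $\mathcal{O}\hookrightarrow E(-k)$, i.e.\ $L\cong\mathcal{O}(k)\hookrightarrow E$, not $\mathcal{O}(-k)$. With $F=\bigoplus_i\mathcal{O}(l_i)$ the obstruction group is $\bigoplus_i H^1(\mathcal{O}(k-l_i))$, and what you must show is $l_i\le k$ for every $i$, so that each $k-l_i\ge 0>-1$ and the $H^1$ vanishes. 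Your sentence ``$k\ge k_i$, hence $k_i-k\le 0$, so each degree $k_i-k\ge -1$ fails only if $k_i-k\le -2$; in either range we have the standard vanishing'' has the inequality the wrong way round for the vanishing to follow, and the ``in either range'' clause is simply false (for $m\le -2$ one has $H^1(\mathcal{O}(m))\neq 0$). The honest argument for $l_i\le k$ is: any $\mathcal{O}(m)\hookrightarrow F$ pulls back to a rank-two sub-bundle $E'\subset E$ sitting in $0\to\mathcal{O}(k)\to E'\to\mathcal{O}(m)\to 0$; twisting by $-k-1$ and taking sections gives $H^0(E'(-k-1))\cong H^0(\mathcal{O}(m-k-1))$, which is nonzero if $m\ge k+1$ and then contradicts the maximality of $k$ for $E$. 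You gesture at this in your parenthetical remark, but it should replace the muddled inequality sentence rather than sit beside it.
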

Hence we can choose a trivialization such that $\left.E\right|_{S_{0}}=L^{p}\oplus L^{q}$ and $\left.E\right|_{S_{1}}=L^{p'}\oplus L^{q'}$, that is 
\begin{equation*}
\renewcommand{\arraystretch}{1.5}
P_{02}=\left(\begin{array}{cc}(2w)^{p} & 0 \\0 & (2w)^{q}\end{array}\right),\quad 
P_{13}=\left(\begin{array}{cc}(2w)^{p'} & 0 \\0 & (2w)^{q'}\end{array}\right), 
\end{equation*}
where we assume that without loss of generality $\{w=0\}\subset V'$ which can be achieved by adding a real constant to $w$. The above form of the patching matrices can also be concluded from the Birkhoff factorization. 

Now the triviality assumption and the symmetry imply that $p=-p'$ and $q=-q'$ which can be seen as follows. As a generalization of the winding number the determinant of a patching matrix has to be topologically invariant which implies that $p+q$ is topologically invariant since $\det P_{02} \sim w^{p+q}$. The triviality of the pullback of $E$ to $\Fcal_{\mathrm{r}}$ then implies that $p+q=p'+q'$. We will see later that the symmetry of $J$ requires that $P_{02}^{\vphantom{-1\mathrm{t}}}=P_{13}^{-1\mathrm{t}}$, thus $p=-p'$ and $q=-q'$. 

That reduces the patching data to two integers $p$, $q$ and a single holomorphic patching matrix $P(w)=P_{23}(w)$ defined for $w\in V'$. The remaining patching matrices are obtained by concatenation. 

We can then also reduce the reconstruction of $J$ from $E'\to \Rcal'$ given $p$, $q$ and $P$. For fixed $r$, $z$ we get a map ${\pi}:\Xcal\to \Rcal'$ by
\begin{equation} \label{eq:quadw2}
{\zeta} \mapsto  w=\frac{1}{2} r({\zeta}^{-1}-{\zeta})+z,
\end{equation}
where $\Xcal$ is again the ${\zeta}$ Riemann sphere, namely $\{(r,z)\}\times \Xcal \subset  \Fcal_{\mathrm{r}}$. The map is not yet well-defined as outside $V'$ a value of $w$ corresponds to two points of $\Rcal'$. So, we have to add a rule which of the two points is to be ${\pi}({\zeta})$. For the construction of $J(z,r)$ we need $z \pm \irm r$ to be in $V'$, otherwise ${\pi}$ is only a double cover of one of the spheres in $\Rcal'$ (${\zeta}=0$ and ${\zeta}=\infty $ lie on different spheres) for the following reason. First note that ${\zeta}$ and $-{\zeta}^{-1}$ are mapped to the same $w$, so if $w\notin V'$ it corresponds to two points, each on one sphere, and we have to decide to which sphere we map ${\zeta}$ and $-{\zeta}^{-1}$. Note further, as soon as we can find a path from ${\zeta}$ to $-{\zeta}^{-1}$ whose image does not go through $V'$, ${\pi}$ covers only one sphere, because we can only change the sphere by going through $V'$. Now the image of every path from ${\zeta}$ to $-{\zeta}^{-1}$ has to go through one of the points $w=z + \irm r$ as ${\zeta}$ and $-{\zeta}^{-1}$ lie on different branches of the root function of quadratic equation~\eqref{eq:quadw2}, hence the path has to go through a branch point. Any other value of $w$ can be avoided by picking a suitable path. Thus, if $w=z + \irm r\notin V'$, then we can find a path connecting ${\zeta}$ and $-{\zeta}^{-1}$ whose image does not go through $V'$ and ${\pi}$ must be a double cover of one of the spheres. However, as we have seen above, if $r=r({\sigma})$ and $z=z({\sigma})$ for ${\sigma}\in {\Sigma}$ this condition is satisfied automatically (which will be assumed further on).

If $V'$ is simply connected, ${\pi}$ can be fixed by the condition that ${\zeta}=0$ is mapped to $\infty _{0}\in S_{0}$ and ${\zeta}=\infty $ to $\infty _{1}\in S_{1}$. Yet, even if the space-time is regular on $r=0$, that is there exist smooth non-singular coordinates on a neighbourhood of $\{r=0\}$, the Ernst Potential may have poles on $r=0$ (see Example~\ref{ex:schwarz}). In most of the relevant examples $V'$ is the complement of a finite set --- infinity and a finite set of isolated singularities, $V'=\mathbb{C}\Pbb^{1} \backslash \{\infty ,w_{1},\dotsc ,w_{n}\}$, where the $w_{i}$ lie on the real axis, thus, suppose $V'$ is not simply connected. The $w_{i}$ correspond to two points of $\Xcal$, the roots of
\begin{equation} \label{eq:rootswi}
r{\zeta}^{2}+2(w_{i}-z){\zeta}-r=0,
\end{equation}
where $r$ and $z$ are still fixed. Here we need to assign the roots to $S_{0}$, $S_{1}$, say ${\pi}({\zeta}_{i}^{0})\in S_{0}$ and ${\pi}({\zeta}_{i}^{1})\in S_{1}$. Choose a cover $V_{0}$, $V_{1}$ of $\Xcal$ such that $\{0,{\zeta}_{1}^{0},\dotsc ,{\zeta}_{n}^{0}\}\subset V_{0}$ and $\{\infty ,{\zeta}_{1}^{1},\dotsc ,{\zeta}_{n}^{1}\}\subset V_{1}$. To define $J(z,r)$ we use what is called the Ward ansatz in \citet[Sec.~5.5]{Woodhouse:1988ek}. With given patching data $p$, $q$ and $P$ we have patching matrices
\begin{equation*}
\renewcommand{\arraystretch}{1.5}
P_{02}^{\vphantom{-1}}=P_{13}^{-1}=\left(\begin{array}{cc}(2w)^{p} & 0 \\0 & (2w)^{q}\end{array}\right), \quad P_{23}=P.
\end{equation*}
Now instead of looking for splitting matrices for $P_{01}$, we first do a transformation to our patching matrices. Let $m_{0}=\frac{2{\zeta}w}{r}$ and $m_{1}=-\frac{2w}{{\zeta}r}$, and define
\begin{equation*}
\renewcommand{\arraystretch}{1.5}
M_{0}=\left(\begin{array}{cc}m_{0}^{-p} & 0 \\0 & m_{0}^{-q}\end{array}\right), \  
M_{1}=\left(\begin{array}{cc}m_{1}^{p} & 0 \\0 & m_{1}^{q}\end{array}\right), \  
M_{2}=\left(\begin{array}{cc}r^{p}{\zeta}^{-p} & 0 \\0 & r^{q}{\zeta}^{-q}\end{array}\right). 
\end{equation*}
Then $M_{{\alpha}}({\zeta})$, ${\alpha}=0,1$, is holomorphic and invertible on $U_{{\alpha}}$ and $M_{2}$ is holomorphic and invertible on $U_{2}$ and $U_{3}$. Furthermore, since 
\begin{equation*}
m_{0}(0)=\left.\frac{2{\zeta}}{r}\left(\frac{1}{2}r({\zeta}^{-1}-{\zeta})+z\right)\right|_{{\zeta}=0}=1,
\end{equation*}
and similarly $m_{1}(\infty )=1$, we get $M_{0}(0)=M_{1}(\infty )=\id$ so that $J(z,r)$ is unchanged by using $\tilde P_{{\alpha}{\beta}}^{\vphantom{1}}=M_{{\alpha}}^{\vphantom{1}}P_{{\alpha}{\beta}}^{\vphantom{1}}M_{{\beta}}^{-1}$ instead of $P_{{\alpha}{\beta}}^{\vphantom{1}}$. Yet, $\tilde P_{02}^{\vphantom{1}}=1$, which means we can reduce our open cover to $U'_{0}=U_{0}^{\vphantom{1}}\cup U_{2}^{\vphantom{1}}$ and $U'_{{\alpha}}=U_{{\alpha}}^{\vphantom{1}}$, ${\alpha}=1,3$. For the evaluation of $J(z,r)$ we seek a splitting $Q_{0}({\zeta})$ and $Q_{1}({\zeta})$ of the new patching matrix $P'_{01}$ which is
\begin{equation} \label{eq:regsplit}
\renewcommand{\arraystretch}{1.5}
\begin{split}
P'_{01} & = M_{0}^{\vphantom{-1}} P_{01}^{\vphantom{-1}} M_{1}^{-1} = M_{0}^{\vphantom{-1}} P_{02}^{\vphantom{-1}}P_{23}^{\vphantom{-1}}P_{31}^{\vphantom{-1}} M_{1}^{-1} \\
	& = \left(\begin{array}{cc}r^{p}{\zeta}^{-p} & 0 \\0 & r^{q}{\zeta}^{-q}\end{array}\right)
P\left(\frac{1}{2}r({\zeta}^{-1}-{\zeta})+z\right)
\left(\begin{array}{cc}(-r{\zeta})^{p} & 0 \\0 & (-r{\zeta})^{q}\end{array}\right)\\
 & = Q_{0}^{\vphantom{-1}}Q_{1}^{-1}.
\end{split}
\end{equation}
Here $Q_{0}$, $Q_{1}$ are holomorphic in ${\zeta}$ and non-singular in $V_{0}$ and $V_{1}$, respectively. As before, we set $J \coloneqq Q_{0}^{\vphantom{-1}}(0)Q_{1}^{-1}(\infty )$, and obtain a solution of \eqref{eq:redyang}. A different labeling of the roots of \eqref{eq:rootswi} yields a different solution $J$, yet these different solutions are analytic continuations of each other and one can show that they are different parts of the Penrose diagram of the maximal analytic extension of the metric (this will be considered again in Chapter~\ref{ch:fivedim} from a slightly different point of view).

From the interpretation of $J$ as the matrix of inner products of Killing vectors in general relativity, we require $J$ to be real and symmetric. Therefore, the questions arises which conditions we have to impose on our bundle to obtain solutions with the desired properties. The following can be immediately read off from the splitting procedure in any dimension (restriction to $n=2$ not necessary).

Let $i$ be the map that interchanges the spheres, that is it is the identity on $V'$ and otherwise the two points of $\Rcal'$ that correspond to the same $w$ are interchanged. Then $J^{-1}$ is obtained from the pullback bundle $i^{*}(E)$ and furthermore $J^{-1\mathrm{t}}$ is obtained from the dual bundle $E^{*}$. Hence, 
\begin{equation*}
J=J^{\mathrm{t}} \Leftrightarrow  J^{-1}=(J^{-1})^{\mathrm{t}} \Leftrightarrow  i^{*}(E)=E^{*} \Leftrightarrow  P^{-1}=(P^{-1})^{\mathrm{t}} \Leftrightarrow  P=P^{\mathrm{t}},
\end{equation*}
or in other words $J$ is symmetric, which means $J=J^{\mathrm{t}}$, if and only if $P=P^{\mathrm{t}}$, that is $P$ is symmetric.

Remember in general for a holomorphic function ${\varphi}(a)$ we have that ${\varphi}$ is real on the real numbers if and only if ${\varphi}(\skew{2}{\bar}{a})=\overline{{\varphi}(a)}$. So, if $P$ is real in the sense $\overline{P(w)}=P(\skew{1}{\bar}{w})$, then by \eqref{eq:regsplit} we have
\begin{equation*}
\overline{P(w)}=P(\skew{1}{\bar}{w}) \Rightarrow  
\begin{cases}
\overline{Q_{0}({\zeta})}\cdot  \overline{Q_{1}({\zeta})}^{-1}=Q_{0}(\bar {\zeta})Q_{1}(\bar {\zeta})^{-1},\ & r,z\in \mathbb{R};\\[0.2cm]
\overline{Q_{0}({\zeta})}\cdot  \overline{Q_{1}({\zeta})}^{-1}=Q_{0}(-\bar {\zeta})Q_{1}(-\bar {\zeta})^{-1},\ & z\in \mathbb{R}, r\in \irm \mathbb{R}.
\end{cases}
\end{equation*}
So, $J$ must be real as well, provided $r,z\in \mathbb{R}$ or $z\in \mathbb{R}$, $r\in \irm \mathbb{R}$. In terms of the bundle this is the condition $\bar E=j^{*}(E)$ where $\bar E$ is the complex conjugate bundle of $E$ and $j^{*}(E)$ is the pullback with the complex conjugation on the spheres
\begin{equation*}
\renewcommand{\arraystretch}{1.2}
j: \begin{array}{c}
S_{0} \to  S_{0},\ w\mapsto  \skew{1}{\bar}{w}; \\
S_{1} \to  S_{1},\ w\mapsto  \skew{1}{\bar}{w}.
\end{array}
\end{equation*}
The converse, that is a real $J$ implies a real $P$, is then also obvious.

Moreover, for $n=2$ it can be shown that (see for example \cite{Woodhouse:1988ek}, but as cited here it is taken from \cite{Fletcher:1990db}):
\begin{itemize}
\item If $\det P=1$, then $\det J=\left(-r^{2}\right)^{p+q}$. 
\item If $J$ is obtained from an axis-regular space-time and if the definition of ${\pi}$ is such that ${\zeta}_{i}^{0}\to 0$ and ${\zeta}_{i}^{1}\to \infty $ for $r\to \infty $ and all $i$, then $p=1$, $q=0$ and $P(z)=J'(z,0)$ on the rotational axis or on the horizon. Here, $J'$ is the Ernst potential. Thus, $P$ is the analytic continuation of the boundary values of the Ernst potential.

As an argument in \citet{Fletcher:1990aa} shows, this can be seen from the determinant condition above and the fact that $J$ is bounded on the axis. The determinant condition implies $p+q=1$ and an asymptotic relation between $J$ and $P$ reveals that if $J$ is bounded as $r\to 0$ then $p$ and $q$ must be non-negative. So, either $p=1$, $q=0$ or $p=0$, $q=1$. We can assume that the trivialization over $S_{0}$ and $S_{1}$ can be chosen such that the first holds. The same argument implies that the twistor data obtained from $J'$ has $p=q=0$, since $\det J'=1$. This goes along with the interesting effect that the gauge groups for $J$ and $J'$  (in their Yang-Mills interpretation) are different \cite[Sec.~6]{Woodhouse:1988ek}.
\item If $J$ comes from an asymptotically flat space-time in the sense that its Ernst potential has the same asymptotic form as the Ernst potential of Minkowski space with rotation and translation as Killing vectors, then $P(\infty )=1$, and conversely.
\end{itemize}

These results can be used to look at some examples.
\begin{ex}
\mbox{}
\begin{enumerate}
\item \textit{Minkowski Space with Translation and Rotation} \\
Written in cylindrical polar coordinates
\begin{equation*}
\drm s^{2}=\drm t^{2}- r^{2} \,\drm {\theta}^{2} - \drm r^{2} - \drm z^{2}
\end{equation*}
we can take the Killing vectors to be $X=\partial _{{\theta}}$, $Y=\partial _{t}$. The matrix of inner products and the Ernst potential are then
\begin{equation*}
J=\left(\begin{array}{cc}-r^{2} & 0 \\\hphantom{-}0 & 1\end{array}\right),\quad  J'=\left(\begin{array}{cc}1 & 0 \\0 & 1\end{array}\right).
\end{equation*}
Since it is axis-regular we also have $p=1$ and $q=0$ with $V'$ the entire complex plane. 
\item \textit{The Kerr solution} \\
The patching data for the Kerr solution is (without proof)
\begin{equation*}
\renewcommand{\arraystretch}{1.5}
P(w)=\frac{1}{w^{2}-{\sigma}^{2}}\left(\begin{array}{cc}
(w+m)^{2}+a^{2} & 2am \\2am & (w-m)^{2}+a^{2}
\end{array}\right)
\end{equation*}
where ${\sigma}=\sqrt{m^{2}-a^{2}}$ for $a<m$. Axis-regularity implies again $p=1$ and $q=0$. The open set $V'$ is the complement of $\{\infty ,w_{1},w_{2}\}$ where $w_{1}={\sigma}$ and $w_{2}=-{\sigma}$. 
\end{enumerate}
\end{ex}
\chapter{Twistor Approach in Five Dimensions} \label{ch:fivedim}

Most of what we have seen about the twistor construction at the end of Chapter~\ref{ch:bundles} generalizes without any effort to five and higher dimensions. Only, instead of two, the rank of the bundle will be three so that we have three integers instead of only $p$ and $q$ in our twistor data (respectively $n-2$ in higher dimensions). The splitting procedure itself is not affected by increasing the rank. However, since the splitting itself is complicated, we have seen that the Ernst potential $J'$ is a crucial tool for any practical application of the twistor characterization of stationary axisymmetric solutions of Einstein's field equations, and the way we obtained $J'$ in Chapter~\ref{ch:yang} seemed to be tailored to four dimensions with two Killing vectors. So, in order to pursue this strategy we have to define an Ernst potential in five dimensions. First, we are going to say a few words about B\"acklund transformations, because we will see that we secretly used them to obtain $J'$. Since some of the following extends immediately to higher dimensions as well, we will present most of it in $n$ dimensions. 

In the last part of this section we generalize results from \citet[Sec.~2.4]{Fletcher:1990aa} in order to conclude the important fact that the integers in the twistor data are non-negative as in four dimensions.

\section{B\"acklund Transformations}

In Chapter~\ref{ch:yang} we derived Yang's equation as one way of writing the ASDYM equations with gauge group $\GL(n,\mathbb{C})$. It has a number of `hidden' symmetries one of which is the B\"acklund transformation. 

As in \citet[Sec.~4.6]{Mason:1996hl} we can decompose a generic $J$-matrix in the following way\footnote{Using $\left(\begin{array}{cc}1 & \skew{4}{\tilde}{B} \hphantom{^{-1}} \\0 & \skew{7}{\tilde}{A}^{-1}\end{array}\right)^{-1}
=\left(\begin{array}{cc}1 & -\skew{4}{\tilde}{B} \skew{7}{\tilde}{A} \\0 & \skew{7}{\tilde}{A}\end{array}\right)$.}
\begin{equation} \label{eq:bdecomp}
\renewcommand{\arraystretch}{1.5}
J=
\left(\begin{array}{cc}A^{-1}-\skew{4}{\tilde}{B} \skew{7}{\tilde}{A} B & -\skew{4}{\tilde}{B} \skew{7}{\tilde}{A} \\ \skew{7}{\tilde}{A} B & \skew{7}{\tilde}{A}\end{array}\right)
=
\left(\begin{array}{cc}1 & \skew{4}{\tilde}{B} \hphantom{^{-1}} \\0 & \skew{7}{\tilde}{A}^{-1}\end{array}\right)^{-1}
\left(\begin{array}{cc}A^{-1} & 0 \\B \hphantom{^{-1}} & 1\end{array}\right),
\end{equation}
where $A$ is a $k\times k$ non-singular matrix ($k<n$), $\skew{7}{\tilde}{A}$ is $\skew{2}{\tilde}{k} \times  \skew{2}{\tilde}{k}$ non-singular matrix with $k+\skew{2}{\tilde}{k}=n$. Then, $B$ is a $\skew{2}{\tilde}{k}\times k$ and $\skew{4}{\tilde}{B}$ a $k\times \skew{2}{\tilde}{k}$ matrix. The term `generic' rules out for example cases where $\skew{7}{\tilde}{A}$ is not invertible. Substituting this in Yang's equation~\eqref{eq:yang2} we get the coupled system of equations
\begin{equation} \label{eq:btrf}
\begin{split}
& ∂ _{\tilde z}(\skew{7}{\tilde}{A} B_{z \vphantom{\tilde z} }A)-∂ _{\tilde w}(\skew{7}{\tilde}{A} B_{w \vphantom{\tilde z} }A)=0,\\
& ∂ _{z \vphantom{\tilde z} }(A \skew{4}{\tilde}{B}_{\tilde z}\skew{7}{\tilde}{A})-∂ _{w \vphantom{\tilde z} }(A \skew{4}{\tilde}{B}_{\tilde w}\skew{7}{\tilde}{A})=0,\\
& ∂ _{z \vphantom{\tilde z} }(\skew{7}{\tilde}{A}^{-1}\skew{7}{\tilde}{A}_{\tilde z})\skew{7}{\tilde}{A}^{-1}-∂ _{w \vphantom{\tilde z} }(\skew{7}{\tilde}{A}^{-1}\skew{7}{\tilde}{A}_{\tilde w})\skew{7}{\tilde}{A}^{-1}+B_{z \vphantom{\tilde z} }A\skew{4}{\tilde}{B}_{\tilde z}-B_{w \vphantom{\tilde z} }A\skew{4}{\tilde}{B}_{\tilde w}=0,\\
& A^{-1}∂ _{z \vphantom{\tilde z} }(A_{\tilde z} A^{-1})-A^{-1}∂ _{w \vphantom{\tilde z} }(A_{\tilde w} A^{-1})+\skew{4}{\tilde}{B}_{\tilde z}\skew{7}{\tilde}{A} B_{z \vphantom{\tilde z} }-\skew{4}{\tilde}{B}_{\tilde w}\skew{7}{\tilde}{A} B_{w \vphantom{\tilde z} }=0, 
\end{split}
\end{equation}
where an index denotes a partial derivative (see Appendix~\ref{app:btrfcalc} for detailed calculation). The first two equations are integrability conditions and they imply the existence of matrices $B'$ and $\skew{4}{\tilde}{B}'$ such that
\begin{align*}
∂ _{\tilde z} \skew{4}{\tilde}{B}' & = \skew{7}{\tilde}{A} B_{w \vphantom{\tilde z} }A, & ∂ _{\tilde w} \skew{4}{\tilde}{B}' & = \skew{7}{\tilde}{A} B_{z \vphantom{\tilde z} }A,\\
∂ _{z \vphantom{\tilde z} } B' & = A\skew{4}{\tilde}{B}_{\tilde w} \skew{7}{\tilde}{A}, & ∂ _{w \vphantom{\tilde z} } B' & = A\skew{4}{\tilde}{B}_{\tilde z} \skew{7}{\tilde}{A}.
\end{align*}
\begin{Def}
Together with $B'$ and $\skew{4}{\tilde}{B}'$ we define the other primed quantities as 
\begin{equation*}
\left(A,\skew{7}{\tilde}{A},B,\skew{4}{\tilde}{B},k,\skew{2}{\tilde}{k}\right) \mapsto \left(A'=\skew{7}{\tilde}{A}^{-1},\skew{7}{\tilde}{A}'=A^{-1},B',\skew{4}{\tilde}{B}',k'=\skew{2}{\tilde}{k},\skew{2}{\tilde}{k}'=k\right).
\end{equation*}
We call the matrix $J'$, that is obtained from $J$ by \eqref{eq:bdecomp} with the primed blocks instead of the unprimed, the \textit{B\"acklund transform}.
\end{Def}
\begin{prop}[Section~4.6 in \citet{Mason:1996hl}] \label{prop:btsolyang}\mbox{}
\begin{enumerate}
\item $J'$ is again a solution of Yang's equation.
\item $(J')'=J$
\end{enumerate}
\end{prop}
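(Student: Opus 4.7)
The plan is to use the fact that by direct substitution Yang's equation~\eqref{eq:yang2} for a matrix $J$ decomposed as in \eqref{eq:bdecomp} is equivalent to the coupled system~\eqref{eq:btrf} for the blocks $(A,\skew{7}{\tilde}{A},B,\skew{4}{\tilde}{B})$. Hence both parts of the proposition reduce to showing that the system~\eqref{eq:btrf} is invariant under the Bäcklund map and that the map squares to the identity on the block data.

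For part (1), I would argue as follows. Assume $J$ solves Yang's equation, so the four equations in \eqref{eq:btrf} hold for $(A,\skew{7}{\tilde}{A},B,\skew{4}{\tilde}{B})$. The first two are integrability conditions that guarantee the existence of $B'$ and $\skew{4}{\tilde}{B}'$ defined by the displayed potential equations. Now consider the primed system, i.e.\ \eqref{eq:btrf} with all quantities primed. The first primed equation reads $\partial_{\tilde z}(\skew{7}{\tilde}{A}' B'_{z}A')-\partial_{\tilde w}(\skew{7}{\tilde}{A}' B'_{w}A')=0$; substituting $\skew{7}{\tilde}{A}'=A^{-1}$, $A'=\skew{7}{\tilde}{A}^{-1}$ and the definitions $\partial_{z}B'=A\skew{4}{\tilde}{B}_{\tilde w}\skew{7}{\tilde}{A}$, $\partial_{w}B'=A\skew{4}{\tilde}{B}_{\tilde z}\skew{7}{\tilde}{A}$, the products telescope to give $\partial_{\tilde z}\skew{4}{\tilde}{B}_{\tilde w}-\partial_{\tilde w}\skew{4}{\tilde}{B}_{\tilde z}=0$, which is the equality of mixed partials. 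The second primed equation reduces by the same mechanism to $\partial_{z}B_{w}-\partial_{w}B_{z}=0$. The work is therefore in the third and fourth primed equations: the third, expressed in the unprimed quantities via $\skew{7}{\tilde}{A}'=A^{-1}$, $\skew{7}{\tilde}{A}'^{-1}\partial_{\tilde z}\skew{7}{\tilde}{A}'=-A_{\tilde z}A^{-1}$, plus the two relations for $B'_{z}$, $\skew{4}{\tilde}{B}'_{\tilde z}$ etc., should rearrange into the fourth original equation (and vice versa). Verifying this is a careful but essentially mechanical calculation; I would conjugate by $A$ on the left and $\skew{7}{\tilde}{A}$ on the right to clear inverses and then match terms.

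For part (2), applying the transformation twice gives $A''=\skew{7}{\tilde}{A}'^{-1}=A$ and $\skew{7}{\tilde}{A}''=A'^{-1}=\skew{7}{\tilde}{A}$ immediately. The potentials $B''$, $\skew{4}{\tilde}{B}''$ are defined by
\begin{equation*}
\partial_{z}B''=A'\skew{4}{\tilde}{B}'_{\tilde w}\skew{7}{\tilde}{A}'=\skew{7}{\tilde}{A}^{-1}(\skew{7}{\tilde}{A} B_{z}A)A^{-1}=B_{z},
\end{equation*}
and analogously $\partial_{w}B''=B_{w}$, $\partial_{\tilde z}\skew{4}{\tilde}{B}''=\skew{4}{\tilde}{B}_{\tilde z}$, $\partial_{\tilde w}\skew{4}{\tilde}{B}''=\skew{4}{\tilde}{B}_{\tilde w}$, so $B''=B$ and $\skew{4}{\tilde}{B}''=\skew{4}{\tilde}{B}$ up to additive constants that can be absorbed into the constants of integration in the original definitions of $B'$ and $\skew{4}{\tilde}{B}'$. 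With all four block matrices reproduced, the reconstruction formula \eqref{eq:bdecomp} gives $(J')'=J$.

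The main obstacle I anticipate is the third-/fourth-equation swap in part (1): the presence of the quadratic terms $B_{z}A\skew{4}{\tilde}{B}_{\tilde z}-B_{w}A\skew{4}{\tilde}{B}_{\tilde w}$ alongside the inhomogeneous Maurer--Cartan terms $\partial(\skew{7}{\tilde}{A}^{-1}\partial\skew{7}{\tilde}{A})\skew{7}{\tilde}{A}^{-1}$ means one must differentiate products of inverses and repeatedly use the defining relations for $B'$, $\skew{4}{\tilde}{B}'$ to convert derivatives of primed potentials back into expressions in the unprimed blocks; the accounting of signs and of the order of matrix factors is where the transformation's symmetry is non-obvious.
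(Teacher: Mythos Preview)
Your proposal is correct and follows exactly the same approach as the paper: for part (1) you substitute the primed quantities into the system~\eqref{eq:btrf}, and for part (2) you invert the defining relations of $B'$, $\skew{4}{\tilde}{B}'$ to recover $B$, $\skew{4}{\tilde}{B}$ (together with the obvious involutivity of $A\mapsto\skew{7}{\tilde}{A}^{-1}$, $\skew{7}{\tilde}{A}\mapsto A^{-1}$). The paper's proof is in fact far terser than yours---it simply says ``substitute'' for (1) and writes out one of the inverted relations $B_{w}=\skew{7}{\tilde}{A}^{-1}\skew{4}{\tilde}{B}'_{\tilde z}A^{-1}=A'\skew{4}{\tilde}{B}'_{\tilde z}\skew{7}{\tilde}{A}'$ for (2)---so your elaboration of how the first two primed equations collapse to mixed-partial identities and how the third and fourth swap is exactly the content the paper leaves implicit.
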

\begin{proof}\mbox{}
\begin{enumerate}
\item Substitute the unprimed by the primed versions in \eqref{eq:btrf}.
\item Noting that $B_{w}^{\vphantom{1}}=\skew{7}{\tilde}{A}^{-1}\skew{4}{\tilde}{B}'_{\tilde z}A^{-1}=A'\skew{4}{\tilde}{B}'_{\tilde z}\skew{7}{\tilde}{A}$ and similar for the other integrability equations, this statement is obvious.
\end{enumerate}
\end{proof}

However, the connection between ASDYM and Einstein's field equations in the stationary and axisymmetric case was given by the reduced Yang's equation 
\begin{equation*}
r ∂ _{x}(J^{-1}∂ _{x}J)+∂ _{r}(rJ^{-1}∂ _{r}J)=0,
\end{equation*}
see \eqref{eq:redyang}. The reduction was induced by the coordination transformation
\begin{equation*}
z=t+x,\ \tilde z=t-x,\ w= r \erm^{\irm {\theta}},\ \tilde w=r \erm^{-\irm {\theta}}.
\end{equation*}
Now we have to be careful, because the $r$-derivatives also act on the $r$-factors appearing. But an analogous calculation as above (see Appendix~\ref{app:redbtrfcalc}) shows that the reduced Yang's equation is equivalent to a similar set of equations
\begin{equation} \label{eq:redbtrf}
\begin{split}
& r ∂ _{x}(\skew{7}{\tilde}{A} B_{x}A)+∂ _{r}(r\skew{7}{\tilde}{A} B_{r}A)=0,\\
& r∂ _{x}(A \skew{4}{\tilde}{B}_{x}\skew{7}{\tilde}{A})+∂ _{r}(rA \skew{4}{\tilde}{B}_{r}\skew{7}{\tilde}{A})=0,\\
& r∂ _{x}(\skew{7}{\tilde}{A}^{-1}\skew{7}{\tilde}{A}_{x})\skew{7}{\tilde}{A}^{-1}-∂ _{r}(r\skew{7}{\tilde}{A}^{-1}\skew{7}{\tilde}{A}_{r})\skew{7}{\tilde}{A}^{-1}+rB_{r}A\skew{4}{\tilde}{B}_{r}-rB_{r}A\skew{4}{\tilde}{B}_{r}=0,\\
& rA^{-1}∂ _{x}(A_{x} A^{-1})-A^{-1}∂ _{r}(rA_{r} A^{-1})+r\skew{4}{\tilde}{B}_{r}\skew{7}{\tilde}{A} B_{r}-r\skew{4}{\tilde}{B}_{r}\skew{7}{\tilde}{A} B_{r}=0, 
\end{split}
\end{equation}
where all matrices are functions of $x$ and $r$. The integrability conditions are now
\begin{align*}
∂ _{r} \skew{4}{\tilde}{B}' & = r \skew{7}{\tilde}{A} B_{x}A, & ∂ _{x} \skew{4}{\tilde}{B}' & = -r\skew{7}{\tilde}{A} B_{r}A,\\
∂ _{r} B' & = rA\skew{4}{\tilde}{B}_{x} \skew{7}{\tilde}{A}, & ∂ _{x} B' & = -rA\skew{4}{\tilde}{B}_{r} \skew{7}{\tilde}{A}.
\end{align*}
So, we have to adapt the transform, namely we define $\skew{7}{\tilde}{A}'=r^{-2}A^{-1}$, $A'=\skew{7}{\tilde}{A}^{-1}$ and the rest as above\footnote{Note that this modification is not mentioned in \cite[Sec.~4.6]{Mason:1996hl}, but necessary to obtain the correct result in Example~\ref{ex:BTfourdim}.}. Again by substitution it can be checked that this gives a solution of the reduced Yang's equation. However, one should note that this modified B\"acklund transformation is not an involution anymore since the definition for $\skew{7}{\tilde}{A}'$, $A'$ is not involutive. The definition for $\skew{4}{\tilde}{B}'$, $B'$ is still an involution so that the inverse B\"acklund transformation is obtained by $B=B''$, $\skew{4}{\tilde}{B}=\skew{4}{\tilde}{B}''$ and $A=r^{-2}(\skew{7}{\tilde}{A}')^{-1}$, $\skew{7}{\tilde}{A}=(A')^{-1}$.

\begin{prop}\label{prop:bdet}
\begin{equation*}
\det J' = (-r)^{2(1-k)}
\end{equation*}
\end{prop}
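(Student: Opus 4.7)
The plan is to read both determinants off the block-triangular factorisation in \eqref{eq:bdecomp} and then substitute the definitions of the primed blocks, invoking the space-time constraint $\det J = -r^{2}$ at the last step.

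The factorisation
\[
J \;=\; \begin{pmatrix} 1 & \tilde B \\ 0 & \tilde A^{-1} \end{pmatrix}^{-1}\begin{pmatrix} A^{-1} & 0 \\ B & 1 \end{pmatrix}
\]
expresses $J$ as a product of two block-triangular matrices whose determinants are $\det\tilde A$ and $\det A^{-1}$ respectively, so $\det J = \det\tilde A/\det A$. Applying the same factorisation to $J'$ with the primed blocks in place of the unprimed ones yields
\[
\det J' \;=\; \frac{\det\tilde A'}{\det A'}.
\]

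Next I would substitute the defining relations $A' = \tilde A^{-1}$ and $\tilde A' = r^{-2}A^{-1}$, being careful about block sizes: $\tilde A'$ has size $\tilde k'\times\tilde k' = k\times k$, so pulling the scalar $r^{-2}$ out of the determinant contributes a factor $r^{-2k}$, and $\det\tilde A' = r^{-2k}/\det A$. Likewise $A'$ has size $k'\times k' = \tilde k\times\tilde k$, giving $\det A' = 1/\det\tilde A$. Combining,
\[
\det J' \;=\; r^{-2k}\cdot\frac{\det\tilde A}{\det A} \;=\; r^{-2k}\det J.
\]

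The constraint $\det J = -r^{2}$, inherited from the general-relativistic origin of $J$ discussed in Chapter~\ref{ch:yang}, then yields $\det J' = -r^{\,2(1-k)}$, which is the stated value $(-r)^{2(1-k)}$ (the exponent $2(1-k)$ is even, so the expression $(-r)^{2(1-k)}$ is understood up to the overall sign convention adopted for the B\"acklund partners).

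There is no hard step in this argument; the whole proof is an elementary block-determinant identity combined with a single substitution. The only pitfall, and the point where a careless reader may obtain the wrong exponent, is to remember that the power of $r^{-2}$ appearing in $\det\tilde A'$ is the size $k$ of $\tilde A'$ rather than $\tilde k$; this is precisely what converts the factor $r^{-2}$ in the definition of $\tilde A'$ into the correct $r^{\,2(1-k)}$ scaling of $\det J'$ relative to $\det J$.
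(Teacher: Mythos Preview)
Your argument is correct and essentially identical to the paper's: both establish $\det J=\det\tilde A\,\det A^{-1}$ (the paper via the Schur-complement formula applied to the block form of \eqref{eq:bdecomp}, you via the triangular factors directly), then substitute the primed definitions and invoke $\det J=-r^{2}$. Your parenthetical remark about the sign is apt --- the paper's own proof carries the same ambiguity in passing from $(-r)^{-2k}\det J$ to $(-r)^{2(1-k)}$, and the only case actually used later is $k=1$, where $\det J'=1$.
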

\begin{proof} 
Note that for the decomposition of a general matrix in block matrices we know from basic linear algebra 
\begin{equation*}
\det \left(\begin{array}{cc}P & Q \\R & S\end{array}\right)
= \det(S) \det(P-QS^{-1}R),
\end{equation*}
if $S$ is invertible. Applied to our decomposition \eqref{eq:bdecomp} this yields
\begin{equation} \label{eq:detj}
\det J = \det(\skew{7}{\tilde}{A}) \det(A^{-1}-\skew{4}{\tilde}{B} \skew{7}{\tilde}{A} B + \skew{4}{\tilde}{B} \skew{7}{\tilde}{A} \skew{7}{\tilde}{A}^{-1}\skew{7}{\tilde}{A} B)=\det(\skew{7}{\tilde}{A})\det(A^{-1}).
\end{equation}
Now using the fact that $\det J =- r^{2}$ we obtain
\begin{align*}
\det J' & = \det (\skew{7}{\tilde}{A}') \det ((A')^{-1}) \\
	& = (-r)^{-2k} \det (A^{-1}) \det (\skew{7}{\tilde}{A})\\
	& = (-r)^{-2k} \det (J)\\
	& = (-r)^{2(1-k)}
\end{align*}
\end{proof}

\begin{ex}[Bäcklund Transformation in Four Dimensions] \label{ex:BTfourdim}
As in \cite[Sec.~6.6]{Mason:1996hl}, we consider the four-dimensional case with two Killing vectors. As in \eqref{eq:backldec4d} we set $A=-r^{-2} f$, $\skew{7}{\tilde}{A} = f$, $B=-\skew{4}{\tilde}{B}={α}$. Then the integrability equations take the form
\begin{equation*}
∂ _{x}(r^{-1}f^{2}∂ _{x}{α})+∂ _{r}(r^{-1}f^{2} ∂ _{r}{α})=0,
\end{equation*}
like in \eqref{eq:weylequ}. The blocks for the B\"acklund transform are $B'=-\skew{4}{\tilde}{B}'={ψ}$, $A'=\skew{7}{\tilde}{A}^{-1}=f^{-1}$, $\skew{7}{\tilde}{A}'=-r^{-2}A^{-1}=f^{-1}$ defining
\begin{equation*}
J'=\frac{1}{f}\left(\begin{array}{cc}f^{2}+{ψ}^{2} & {ψ} \\{ψ} & 1\end{array}\right),
\end{equation*}
where the existence of ${ψ}$ is ensured by the integrability condition as above. Hence, our Ernst potential is obtained by a B\"acklund transformation.\\ \mbox{} \hfill $\blacksquare$
\end{ex}

\section{Higher-Dimensional Ernst Potential} \label{sec:highErnst}

Let us first recall the definition of twist 1-forms, twist potentials and some of their properties.
\begin{Def}
Consider an $n$-dimensional (asymptotically flat) space-time $M$ with $X_{0}$ a stationary and $X_{1},\dotsc ,X_{n-3}$ axial Killing vectors, all mutually commuting. The \textit{twist $1$-forms} are defined as
\begin{align*}
ω_{1 a}^{\vphantom{1}} & = Δ\,{ε}_{ab\dotsc cde}^{\vphantom{1}}X_{1}^{b}\cdots X_{n-3}^{c}∇ ^{d}X_{1}^{e}, \\
& \hspace{0.2cm} \vdots \\
ω_{n-3, a}^{\vphantom{1}} & =Δ\,{ε}_{ab\dotsc cde}^{\vphantom{1}}X_{1}^{b}\cdots X_{n-3}^{c}∇ ^{d}X_{n-3}^{e},
\end{align*}
where $Δ=\sqrt{-g}=r{\erm}^{2ν}$, according to \eqref{eq:sigmametric}.\footnote{Note that in \cite{Hollands:2008fp} the notation is taken from \cite{Wald:1984rz} where ${ε}$ is already the volume element.
}
\end{Def}
Adopting a vector notation
\begin{equation*}
ω=\left(\begin{array}{c}ω_1 \\ \vdots \\ω_{n-3}\end{array}\right), \quad X=\left(\begin{array}{c}X_1 \\ \vdots \\X_{n-3}\end{array}\right)
\end{equation*}
this can be written as
\begin{equation*}
ω_{a}^{\vphantom{1}} = Δ\,{ε}_{ab\dotsc cde}^{\vphantom{1}}X_{1}^{b}\cdots X_{n-3}^{c}∇ ^{d}X^{e}. 
\end{equation*}
Let ${\theta}_{I}^{\vphantom{j}}$, $I\in \{1,\dotsc ,n-3\}$, be the dual to $X_{I}^{\vphantom{j}}$, that is ${\theta}_{Ik}^{\vphantom{j}}=g_{kj}^{\vphantom{j}}X_{I}^{j}$. 
\begin{prop} \mbox{}
\begin{enumerate}
\item The twist 1-forms can be written as
\begin{equation} \label{eq:twisthodge}
ω_{I}=* ({\theta}_{1}∧ \dotsc ∧ {\theta}_{n-3}∧ \drm {\theta}_{I}), \quad I\in \{1,\dotsc ,n-3\}.
\end{equation}
\item $ω$ is closed.
\item $ω$ annihilates the Killing vector fields $X_{0},\dotsc ,X_{n-3}$.
\end{enumerate}
\end{prop}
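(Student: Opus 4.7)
The plan is to dispose of (1), (3), (2) in that order, since (1) is essentially bookkeeping, (3) follows from (1) plus orthogonal transitivity, and (2) is the substantive calculation.

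For (1), the Killing equation $\nabla_{(d}\theta_{Ie)}=0$ says that $\nabla^{d}X_{I}^{e}$ is antisymmetric in $d,e$, and hence equals $\tfrac{1}{2}(\drm\theta_{I})^{de}$ after raising indices. Unwinding the conventions for $\wedge$ and $*$ from Chapter~\ref{ch:backgrd}, the right-hand side of \eqref{eq:twisthodge} becomes $\Delta\,\varepsilon_{ab_{1}\dotsc b_{n-3}de}X_{1}^{b_{1}}\dotsb X_{n-3}^{b_{n-3}}\nabla^{d}X_{I}^{e}$ up to combinatorial factors, which match those implicit in the definition of $\omega_{I,a}$ and cancel. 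This is a routine verification.

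For (3), the identity $i_{V}(*\alpha)=\pm *(V^{\flat}\wedge\alpha)$, obtained by inserting $V^{a}$ into the Levi-Civita symbol appearing in the Hodge star, converts $i_{X_{K}}\omega_{I}$ into the Hodge dual of $\theta_{K}\wedge\theta_{1}\wedge\dotsb\wedge\theta_{n-3}\wedge\drm\theta_{I}$. For $K\in\{1,\dotsc,n-3\}$ the wedge is zero because $\theta_{K}$ appears twice. For $K=0$, the resulting $n$-form vanishes precisely by the orthogonal transitivity established in Appendix~\ref{app:orthtrans}: this is the dual Frobenius integrability condition for the $2$-plane distribution orthogonal to the Killing orbits.

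The main work is (2). I would compute $(\drm\omega_{I})_{ab}=2\nabla_{[a}\omega_{I,b]}$ directly, expanding by the Leibniz rule on the product $X_{1}^{c_{1}}\dotsb X_{n-3}^{c_{n-3}}\nabla^{d}X_{I}^{e}$ (and using $\nabla_{a}\Delta=0$). The $n-3$ terms in which the derivative lands on some $X_{J}$ are handled using the Killing equation for $X_{J}$ together with the commutator identity $X_{J}^{a}\nabla_{a}X_{K}^{b}=X_{K}^{a}\nabla_{a}X_{J}^{b}$ (which follows from $[X_{J},X_{K}]=0$): each such term becomes symmetric in a pair of indices contracted against the totally antisymmetric $\varepsilon$, and so vanishes. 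The remaining ``curvature'' term, in which the derivative hits $\nabla^{d}X_{I}^{e}$, is simplified by the Ricci identity for Killing vectors, $\nabla_{a}\nabla_{b}X_{Ic}=R_{cba}{}^{f}X_{If}$, followed by the algebraic Bianchi identity; this collapses the double contraction with $\varepsilon$ into one involving the Ricci tensor, which is killed by the vacuum condition $R_{ab}=0$. The main obstacle is the combinatorial management of the $\varepsilon$-contractions and the Bianchi rearrangement, which is the higher-dimensional analogue of the classical identity expressing closure of the twist form for a single Killing vector in four-dimensional vacuum.
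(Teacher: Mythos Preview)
Your argument is correct. Parts (1) and (2) align with the paper: for (1) the paper does the same index unwinding, and for (2) the paper simply cites Wald's Theorem~7.1.1 and says closure follows ``analogously'' from the vacuum equations, so your sketch of the covariant computation (Leibniz rule, Killing and commutator identities for the $X_J$-terms, Ricci identity plus algebraic Bianchi reducing the curvature term to a Ricci contraction killed by vacuum) is actually more explicit than what the paper writes.

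The genuine difference is in (3). The paper does not use the contraction identity $i_{V}{*}\alpha=\pm{*}(V^{\flat}\wedge\alpha)$ or invoke Frobenius directly. Instead it passes to the adapted $\sigma$-model coordinates $(y^{0},\dots,y^{n-3},r,z)$, writes $\theta_{I}=J_{kI}\,\drm y^{k}$ (no $\drm r$, $\drm z$ terms, so orthogonal transitivity is used implicitly through the coordinate form), computes $\drm\theta_{I}$ and the full wedge product $\theta_{1}\wedge\dotsb\wedge\theta_{n-3}\wedge\drm\theta_{I}$ by hand, and reads off from the Hodge star that only the $\drm r$ and $\drm z$ components of $\omega_{I}$ survive. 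Your route is cleaner and more conceptual; the paper's brute-force calculation has a payoff, however: it produces along the way an explicit determinantal formula for $\omega_{Ir}$ (equation~\eqref{eq:compsomega}) that is needed later in proving that the higher-dimensional Ernst potential $J'$ arises from a B\"acklund transformation.
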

\begin{proof} \mbox{}
\begin{enumerate}
\item Suppressing the $I$-index for ${\theta}$, we note that 
\begin{equation*}
∇ _{[d}^{\vphantom{f}}{\theta}_{e]}^{\vphantom{f}}=∂ _{[d}^{\vphantom{f}}{\theta}_{e]}^{\vphantom{f}}-\tensor*{Γ}{_{[de]}^{f}}{\theta}_{f\vphantom{f]}}^{\vphantom{f}}=∂ _{[d}^{\vphantom{f}}{\theta}_{e]}^{\vphantom{f}}
\end{equation*}
as we use the Levi-Civita connection for which $\tensor*{Γ}{_{de \vphantom{[d]}}^{f}}=\tensor*{Γ}{_{(de)}^{f}}$. But $∂ _{[d}^{\vphantom{1}}{\theta}_{e]}^{\vphantom{1}}$ are the components of $\drm {\theta}$ so that from the definition of the Hodge dual we see that \eqref{eq:twisthodge} is an equivalent way of writing the twist 1-forms.
\item Analogously to the proof of \cite[Thm.~7.1.1]{Wald:1984rz} implied by the vacuum field equations.
\item We denote by $x_{K}$ the coordinate belonging to $X_{K}$, $K=0,\dotsc ,n-3$, and for $M/\Gcal$ we use again $r$ and $z$ as coordinates. It is sufficient to show that $ω=ω_{r}\,\drm r + ω_{z}\,\drm z$ with all other components vanishing. We will do the following calculations a bit more in detail as the results will proof useful later on. Using
\begin{equation*}
{\theta}_{Ik}=g_{ka}^{\vphantom{1}}X^{a}_{I}=J_{kI}^{\vphantom{1}}\quad \text{for } I=1,\dotsc ,n-3,
\end{equation*}
we first calculate
\begin{align*}
\drm {\theta}_{I} & = ∂ _{r}J_{0I} \, \drm r ∧  \drm x^{0} + \dotsc  + ∂ _{r}J_{n-3,I} \, \drm r ∧  \drm x^{n-3} \\
	& \hspace{0.4cm} + ∂ _{z}J_{0I} \, \drm z ∧  \drm x^{0} + \dotsc  + ∂ _{z}J_{n-3,I} \, \drm z ∧  \drm x^{n-3}\\
	& = - (∂ _{r}J_{0I} \, \drm x^{0} + \dotsc  + ∂ _{r}J_{n-3,I} \, \drm x^{n-3}) ∧  \drm r \\
	& \hspace{0.4cm} - (∂ _{z}J_{0I} \, \drm x^{0} + \dotsc  + ∂ _{z}J_{n-3,I} \, \drm x^{n-3}) ∧  \drm z.
\end{align*}
This gives for $I\in \{1,\dotsc ,n-3\}$ the following
\begin{align*}
{\theta}_{1}∧ \dotsc ∧ {\theta}_{n-3}∧ \drm {\theta}_{I} & = - (J_{K1}\, \drm x^{K})∧ \dotsc ∧ (J_{M,n-3}\, \drm x^{M})\\
	& \hspace{0.4cm}∧ \left((∂ _{r}J_{RI}\, \drm x^{R})∧ \drm r+(∂ _{z}J_{RI}\, \drm x^{R})∧ \drm z \right),
\end{align*}
where the summation over capital latin indices runs from 0 to $n-3$. At this point we can already see that only $ω_{r}$, $ω_{z}$ are non-zero, but let us determine them somewhat more precisely. Our volume form is
\begin{equation*}
\drm \mathrm{vol} = r \, \drm x^{0}∧ \dotsc ∧ \drm x^{n-3}∧ (ω \, \drm r)∧ (ω \, \drm z),
\end{equation*}
so that
\begin{equation*}
*\left(r \, \drm x^{0}∧ \dotsc ∧  \drm x^{n-3}∧ (ω \, \drm r)\right)= (-1)^{n-2} ω \, \drm z,
\end{equation*}
since the left-hand side contains $n-2$ spacelike and one timelike component. Thus
\begin{equation*}
*\left( \drm x^{0}∧ \dotsc ∧  \drm x^{n-3}∧  \drm r\right)= (-1)^{n-2}\frac{1}{r} \, \drm z;
\end{equation*}
similar for $\drm r\leftrightarrow  \drm z$ but with the opposite sign. Whence we obtain
\begin{align} \notag
ω_{Ir} & =(-1)^{n-1}\frac{1}{r} ∑ _{{\sigma}\in S_{n-2}} \sgn ({\sigma}) J_{1,{\sigma}(0)} \cdots J_{n-3,{\sigma}(n-4)} ∂ _{z}J_{I,{\sigma}(n-3)}\\
	& = (-1)^{n-1}\frac{1}{r} \det
\renewcommand{\arraystretch}{2}	
	\left(\begin{array}{cccc}
	J_{01} &  \cdots & J_{0,n-3} & ∂ _{z}J_{0I} \\ 
	\vdots &  \ddots &  \vdots &  \vdots \\
	J_{n-3,1} &  \cdots & J_{n-3,n-3} & ∂ _{z} J_{n-3,I}
	\end{array}\right), \label{eq:compsomega}
\end{align}
where $S_{n-2}$ are the permutations of $n-2$ elements. An analogous equation holds for $ω_{Iz}$.\qedhere
\end{enumerate}
\end{proof}
The last statement in the above proposition implies that $ω$ can also be regarded as 1-forms on the interior of $M / \Gcal$. Due to the form of the ${\sigma}$-model metric there should be no confusion if we denote both the form on $M$ and the one on $M / \Gcal$ by the same symbol. Being a form on $M / \Gcal$ means that $ω$ has only non-vanishing components for the $r$- and $z$-coordinate, and of course as a form on $M / \Gcal$ it will again be a closed. This gives rise to the following definition. 
\begin{Def}
Locally there exist functions on $M / \Gcal$ such that 
\begin{equation*}
∂ _{r} χ_{I}=ω_{Ir} \text{ and } ∂ _{z} χ_{I}=ω_{Iz}\quad  \text{for } I=1,\dotsc ,n-3,
\end{equation*}
or equivalently in vector notation
\begin{equation*}
\drm χ = ω.
\end{equation*}
These functions $χ_{I}$ are called \textit{twist potentials}. 
\end{Def}

The construction of the Ernst Potential in Example~\ref{ex:BTfourdim} is tailor-made for dimension four, and it is not immediately obvious how to generalize it to higher dimensions. Nevertheless, there is an ansatz in \citet{Maison:1979aa}, where it is noted that the full metric on space-time, that is essentially $J$, can be reconstructed from knowing the two twist potentials (in five dimensions) $χ$, the $2×2$-matrix $\skew{7}{\tilde}{A}=\left(X_{I}^{a}X_{K}^{b}g_{ab}\right)_{I,K=1,2}$ and its non-vanishing determinant $\det \skew{7}{\tilde}{A}$ on the factor space $M / \Gcal$. The matrix in \cite[Eq.~(16)]{Maison:1979aa} will then be our candidate for the higher-dimensional Ernst Potential. Note, however, that the condition $\det \skew{7}{\tilde}{A}≠0$ needs further investigation. 

Closely connected to the non-vanishing determinant is the concept of adaptations to certain parts of the axis $r=0$. Recall that the assumption of axis-regularity was an important one. It said that the region where the two spheres of the reduced twistor space are identified can be enlarged to a simply connected patch such that this identification also extends to the fibres of the bundle. The exact shape of $V'$ is not important, however, there is still an ambiguity if we have a nut on $r=0$.\footnote{Remember that we assume that there is only a finite number of isolated nuts.} Figure~\ref{fig:adaptation} shows how we can choose different extensions of $V$.
\begin{figure}[htbp]
\begin{center}
     \scalebox{0.6}{\input{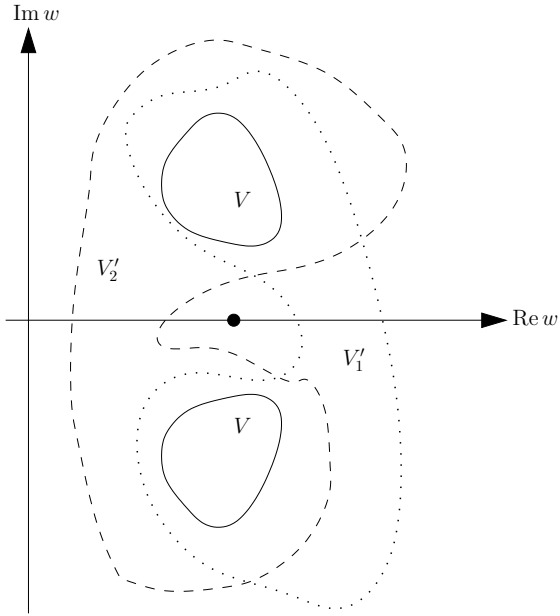}}
     \caption{Two different extensions $V'_{1}$ and $V'_{2}$ of $V$ around a pole of $J'$ (bullet on the real axis).} 
     \label{fig:adaptation}
\end{center}
\end{figure}
From Example~\ref{ex:schwarz} we learn that the choice of $V'$ matters, that is we obtain different Ernst potentials $J'$ for different extensions.
\begin{Def} 
In any dimension given an axis-regular $J'$ we shall call it \textit{adapted to the rod $(a_{i},a_{i+1})$} if $V'|_{r=0}\subseteq (a_{i},a_{i+1})$.
\end{Def}
For a rod corresponding to a rotational axis we know that along this rod a linear combination of the axial Killing vectors vanishes. By a change of basis we can always assume that without loss of generality this vector is already in the basis, say $X_{K}=0$, $K>0$. Then we make the following definition.
\begin{Def}
In dimension $n$ we call \textit{$\skew{7}{\tilde}{A}$ adapted to $(a_{i},a_{i+1})$} the $(n-3)×(n-3)$-matrix that is obtained from $J$ by 
\begin{enumerate}[(a)]
\item cancelling the $K^{\mathrm{th}}$ column and row, if $(a_{i},a_{i+1})$ is a rotational axis and $X_{K}$ the corresponding rod vector; 
\item cancelling the $0^{\mathrm{th}}$ column and row, if $(a_{i},a_{i+1})$ is the horizon.
\end{enumerate}
\end{Def}
The following lemma shows the reason for the latter definition.
\begin{lem}
$\skew{7}{\tilde}{A}$ adapted to $(a_{i},a_{i+1})$ is invertible on $(a_{i},a_{i+1})$ and becomes singular at the limiting nuts.
\end{lem}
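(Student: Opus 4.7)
The plan is to reduce everything to a single linear algebra observation: on the interior of any rod the matrix $J$ is symmetric and its kernel is one-dimensional, spanned by the rod vector, whereas at a nut the kernel jumps to dimension two, spanned by the two rod vectors of the adjacent rods (this is the content of the Proposition from \citet{Hollands:2008fp} quoted in Section~\ref{sec:rodstr}). Everything else is linear algebra applied to these two facts.

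First I would dispose of the rotational-axis case. After the change of basis that makes $X_{K}$ a coordinate vector on $(a_{i},a_{i+1})$, the rod vector is simply $e_{K}$, so $J e_{K} = 0$ on the rod. Since $J$ is symmetric this forces both the $K^{\mathrm{th}}$ row and the $K^{\mathrm{th}}$ column of $J$ to vanish identically on $(a_{i},a_{i+1})$. Consequently the full rank $n-3$ of $J$ is already carried by the complementary $(n-3)\times(n-3)$ block, which is by definition $\skew{7}{\tilde}{A}$; hence $\skew{7}{\tilde}{A}$ is invertible everywhere on the open rod.

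The horizon case requires one extra argument, because the rod vector $\tilde K = (1,\Omega_{1},\Omega_{2})^{\mathrm{t}}$ is not a basis element, so deleting the $0^{\mathrm{th}}$ row and column does not immediately leave a full-rank block. I would proceed by contradiction: assuming $(a_{1},a_{2})\neq 0$ lies in $\ker \skew{7}{\tilde}{A}$, form $v=(0,a_{1},a_{2})^{\mathrm{t}}$. Then $(Jv)_{i}=0$ for $i=1,2$ by construction, and the kernel relation $J_{0j}=-\Omega_{1}J_{1j}-\Omega_{2}J_{2j}$ (coming from $J\tilde K=0$ on the horizon, as computed in Section~\ref{sec:rodstr}) gives $(Jv)_{0}=0$ as well. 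Thus $v\in\ker J = \mathbb{R}\tilde K$, but $\tilde K$ has a nonzero $0^{\mathrm{th}}$ component while $v$ does not, forcing $v=0$, a contradiction.

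Finally, to see the degeneracy at each limiting nut $a_{j}\in\{a_{i},a_{i+1}\}$, I use that $\ker J$ becomes two-dimensional at the nut and contains both rod vectors of the two rods meeting there. For the rotational-axis adaptation the adapted rod vector $e_{K}$ accounts for one direction, while the rod vector $v$ of the adjacent rod is linearly independent of $e_{K}$; deleting the $K^{\mathrm{th}}$ component of $v$ yields a nonzero vector in $\ker \skew{7}{\tilde}{A}$, so $\det \skew{7}{\tilde}{A}=0$ at the nut. For the horizon adaptation, $\tilde K$ plays the role of the adapted kernel direction and an adjacent rod vector (or at the other end of the horizon, a second rotational combination) provides the required element in $\ker\skew{7}{\tilde}{A}$ after deleting the $0^{\mathrm{th}}$ component; again linear independence of the two kernel directions together with the fact that $\tilde K$ has nonzero $0^{\mathrm{th}}$ component guarantees that the projection is nontrivial.

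The only mildly subtle point, and hence the main obstacle, is the horizon case in paragraph three: the argument must use the specific structure of $\tilde K$ (namely that its $0^{\mathrm{th}}$ component is $1\neq 0$) rather than just the fact that $\ker J$ is one-dimensional. All the remaining cases follow cleanly from symmetry of $J$ and a dimension count on kernels.
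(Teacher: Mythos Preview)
Your argument is correct and, for the rotational-axis case, essentially identical to the paper's: since the deleted row and column of $J$ vanish on the rod, the remaining block $\skew{7}{\tilde}{A}$ carries the full rank $n-3$, and at a nut the extra kernel direction survives the deletion.

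For the horizon case you take a somewhat different route from the paper. The paper simply invokes the change of basis $\xi\mapsto K$, $X_{i}\mapsto X_{i}$ from Section~\ref{sec:rodstr}, after which the $0^{\mathrm{th}}$ row and column of $J$ vanish on the horizon while $\skew{7}{\tilde}{A}$ (being built only from $X_{1},X_{2}$) is unaffected; the argument then reduces to the rotational case. You instead work in the original basis and show directly, via the relation $J_{0j}=-\Omega_{1}J_{1j}-\Omega_{2}J_{2j}$, that any element of $\ker\skew{7}{\tilde}{A}$ lifts to an element of $\ker J$ with vanishing $0^{\mathrm{th}}$ component, forcing it to be zero since $\tilde K$ has $0^{\mathrm{th}}$ component $1$. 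Your version is more explicit and self-contained; the paper's is shorter but relies on the reader recalling that the change of basis leaves $\skew{7}{\tilde}{A}$ invariant. Both handle the nut degeneracy by the same dimension count.
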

\begin{proof}
This is a simple consequence of what we have seen in Section~\ref{sec:rodstr}. First, consider a rod corresponding to a rotational axis $(a_{i},a_{i+1})$. Here $J$ is an $(n-2)\times (n-2)$ matrix and has rank $n-3$. Since $X_{K}=0$ on $(a_{i},a_{i+1})$, we cancelled a zero column and row, thus it follows that $\skew{7}{\tilde}{A}$ has full rank and $\det \skew{7}{\tilde}{A}\neq 0$ on $(a_{i},a_{i+1})$. Of course, where the rank of $J$ drops further, that is where $\dim \ker J (0,z)\geq 2$, the matrix $\skew{7}{\tilde}{A}$ cannot have full rank anymore and $\det \skew{7}{\tilde}{A}=0$. So, this adaptation becomes singular as soon as we reach one of the nuts limiting this rod. 

Second, consider the horizon rod $(a_{h},a_{h+1})$. Here the first row and column of $J$, that is the one with the asymptotically timelike Killing vector in it, becomes zero. Then $\skew{7}{\tilde}{A}$ adapted to $(a_{h},a_{h+1})$ has full rank on $(a_{h},a_{h+1})$, and becomes singular at the nuts where the rotational axes intersect the horizon. 
\end{proof}

Note that at the beginning of this section we defined the twist 1-forms by singling out $X_{0}$. However, the definition works just as well with the set $X_{0},\dotsc ,\hat X_{K},\dotsc ,X_{n-2}$ of Killing vectors. For an adaptation to a certain rod we obtain in the same way as above $n-3$ twist potentials by omitting the rod vector for this rod. 

Now, we have collected all the components for the following definition, which is a modification of \cite[Eq.~(16)]{Maison:1979aa}.

\begin{Def} \label{def:ernstpot}
In $n$ dimensions and for a given rod $(a_{i},a_{i+1})$ we call the matrix
\begin{equation} \label{eq:highernst}
\renewcommand{\arraystretch}{1.5}
J'=\frac{1}{\det \skew{7}{\tilde}{A}} \left(\begin{array}{cc}\hphantom{-}1 & -χ^{\mathrm{t}} \\-χ & \det \skew{7}{\tilde}{A} \cdot \skew{7}{\tilde}{A} + χχ^{\mathrm{t}}\end{array}\right),
\end{equation}
where $\skew{7}{\tilde}{A}$ is adapted to $(a_{i},a_{i+1})$ and $χ$ is the vector of twist potentials for $(a_{i},a_{i+1})$, \textit{higher-dimensional Ernst potential adapted to $(a_{i},a_{i+1})$}.
\end{Def}
\begin{rem}\mbox{}
\begin{enumerate}
\item For $n=5$ this is becomes on the horizon rod
\begin{equation*}
\renewcommand{\arraystretch}{1.5}
J'= \frac{1}{\det \skew{7}{\tilde}{A}} \left(\begin{array}{ccc}
\hphantom{-}1 & -χ_{1} & -χ_{2} \\
-χ_{1}^{\vphantom{1}} & \det \skew{7}{\tilde}{A} \cdot J_{11}^{\vphantom{1}}+χ_{1}^{2} & \det \skew{7}{\tilde}{A} \cdot J_{12}^{\vphantom{1}}+χ_{1}^{\vphantom{1}}χ_{2}^{\vphantom{1}} \\
-χ_{2}^{\vphantom{1}} & \det \skew{7}{\tilde}{A} \cdot J_{21}^{\vphantom{1}}+χ_{2}^{\vphantom{1}}χ_{1}^{\vphantom{1}} & \det \skew{7}{\tilde}{A} \cdot J_{22}^{\vphantom{1}}+χ_{2}^{2}
\end{array}\right).
\end{equation*}
\item Note that $J=\big(g(X_{i},X_{j})\big)$ is a matrix of scalar quantities, hence $J$ is bounded for $r\to 0$. So, the domain of $J'$ is only restricted by $\det \skew{7}{\tilde}{A}$ and by the arguments above we see that for an adaptation to $(a_{i},a_{i+1})$ the limit $J'(0,z)$ is well-defined for $z∈(a_{i},a_{i+1})$.
\end{enumerate}
\end{rem}
\begin{lem}[\citet{Maison:1979aa}] \label{lem:highernstprop}
$J'=(J')^{\mathrm{t}}$, $\det J' =1$ and
\begin{equation*}
\renewcommand{\arraystretch}{1.5}
(J')^{-1}=\left(\begin{array}{cc}χ^{\mathrm{t}} \skew{7}{\tilde}{A}^{-1} χ + \det \skew{7}{\tilde}{A} & χ^{\mathrm{t}} \skew{7}{\tilde}{A}^{-1} \\\skew{7}{\tilde}{A}^{-1} χ & \skew{7}{\tilde}{A}^{-1}\end{array}\right). 
\end{equation*}
\end{lem}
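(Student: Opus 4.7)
The proof is essentially a direct block-matrix verification, so my plan is organised as three independent computations.

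First, symmetry. Note that $\skew{7}{\tilde}{A}$ is the principal submatrix of $J$ obtained by deleting a single row and column, and $J=(g(X_i,X_j))$ is manifestly symmetric, hence so is $\skew{7}{\tilde}{A}$. The outer-product block $\chi\chi^{\mathrm{t}}$ is symmetric, the scalar block is trivially symmetric, and the off-diagonal blocks $-\chi^{\mathrm{t}}$, $-\chi$ are transposes of each other. Since transposition commutes with scalar multiplication by $(\det\skew{7}{\tilde}{A})^{-1}$, this gives $J'=(J')^{\mathrm{t}}$ at once.

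Second, the inverse. I will verify the formula by computing $J'\cdot(J')^{-1}$ block by block, writing $d=\det\skew{7}{\tilde}{A}$ for brevity. The $(1,1)$-entry is $\tfrac{1}{d}\bigl(\chi^{\mathrm{t}}\skew{7}{\tilde}{A}^{-1}\chi+d-\chi^{\mathrm{t}}\skew{7}{\tilde}{A}^{-1}\chi\bigr)=1$, and the $(1,2)$ row vanishes because $\chi^{\mathrm{t}}\skew{7}{\tilde}{A}^{-1}-\chi^{\mathrm{t}}\skew{7}{\tilde}{A}^{-1}=0$. The $(2,1)$ column gives $\tfrac{1}{d}\bigl(-\chi\chi^{\mathrm{t}}\skew{7}{\tilde}{A}^{-1}\chi-d\chi+d\chi+\chi\chi^{\mathrm{t}}\skew{7}{\tilde}{A}^{-1}\chi\bigr)=0$, and the $(2,2)$-block becomes $\tfrac{1}{d}\bigl(-\chi\chi^{\mathrm{t}}\skew{7}{\tilde}{A}^{-1}+d\operatorname{Id}+\chi\chi^{\mathrm{t}}\skew{7}{\tilde}{A}^{-1}\bigr)=\operatorname{Id}$. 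Each step is a one-line cancellation; no serious obstacle arises here, the only thing to keep track of is the $(n-3)$-dimensional matrix algebra.

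Third, the determinant. Pulling the prefactor out gives $\det J'=d^{-(n-2)}\det M$, where $M$ is the bracketed $(n-2)\times(n-2)$ matrix. The $(1,1)$-entry of $M$ is the scalar $1$, so the Schur-complement block formula
\begin{equation*}
\det M = 1\cdot\det\!\Bigl(d\,\skew{7}{\tilde}{A}+\chi\chi^{\mathrm{t}}-(-\chi)\cdot 1^{-1}\cdot(-\chi^{\mathrm{t}})\Bigr)
\end{equation*}
applies, and the $\chi\chi^{\mathrm{t}}$ terms cancel, leaving $\det(d\skew{7}{\tilde}{A})=d^{n-3}\det\skew{7}{\tilde}{A}=d^{n-2}$. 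Therefore $\det J'=d^{-(n-2)}d^{n-2}=1$.

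The only conceptual point worth flagging is that the construction presupposes $d=\det\skew{7}{\tilde}{A}\neq 0$, which is exactly the content of the adaptation condition discussed just before Definition~\ref{def:ernstpot}; on the rod $(a_i,a_{i+1})$ to which $\skew{7}{\tilde}{A}$ is adapted this is guaranteed, so all three statements hold on the domain where $J'$ is defined. There is no genuine obstacle — the lemma is essentially a coordinate identity, and the proof will be little more than the verification sketched above.
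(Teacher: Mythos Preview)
Your proposal is correct. The symmetry and inverse parts match the paper's ``obvious'' and ``easy calculation'' exactly. For the determinant, however, the paper takes a different route: it invokes Proposition~\ref{prop:bdet}, which gives $\det J'=(-r)^{2(1-k)}$ for the reduced B\"acklund transform, and reads off the case $k=1$. Your Schur-complement argument is more elementary and self-contained --- it works directly from the block form in Definition~\ref{def:ernstpot} and needs neither the constraint $\det J=-r^{2}$ nor the B\"acklund apparatus (which, in the paper's ordering, is only tied to this $J'$ in the theorem \emph{after} the lemma). The paper's approach, by contrast, situates the result inside the general B\"acklund framework and explains structurally why the determinant is $1$ as the $k=1$ instance of a family.
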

\begin{proof}
The first part is obvious, the second part is an easy calculation and Proposition~\ref{prop:bdet} implies for $k=1$ that $\det J'=1$.
\end{proof}
Even though ingredients and the matrix $J'$ itself were known already before, the crucial new step for the twistor construction is to recognize the following.
\begin{thm}
$J'$ is obtained from $J$ by a B\"acklund transformation.
\end{thm}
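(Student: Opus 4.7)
The plan is to realise \eqref{eq:highernst} as the output of the modified B\"acklund transform applied to the block decomposition of $J$ with $k=1$, $\tilde k=n-3$. First I would reorder the Killing vectors so that the one singled out by the adaptation (namely $X_{0}$ on the horizon rod, or $X_{K}$ on a spacelike rod) occupies the first row and column of $J$, and write $\mathcal{A}\coloneqq \tilde A$ for the adapted $(n-3)\times(n-3)$ matrix and $v$ for the column of off-diagonal inner products. The symmetry of $J$ and \eqref{eq:bdecomp} then force $\tilde A=\mathcal{A}$, $B=\mathcal{A}^{-1}v$, $\tilde B=-v^{\mathrm{t}}\mathcal{A}^{-1}$, and $A=(J_{00}-v^{\mathrm{t}}\mathcal{A}^{-1}v)^{-1}$. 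The constraint $\det J=-r^{2}$, combined with the block product formula \eqref{eq:detj}, immediately gives the clean scalar identity $A=-\det\mathcal{A}/r^{2}$.

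With this in hand, the purely algebraic part of the transform is straightforward: $A'=\mathcal{A}^{-1}$, and the (corrected) rule $\tilde A'=-r^{-2}A^{-1}$ yields $\tilde A'=1/\det\mathcal{A}$. The only nontrivial step is to identify $\tilde B'$ with the vector of twist potentials $\chi$. I would substitute $B=\mathcal{A}^{-1}v$ and $A=-\det\mathcal{A}/r^{2}$ into the integrability equation $\partial_{r}\tilde B'=r\mathcal{A}B_{x}A$ from \eqref{eq:redbtrf}; using $\mathcal{A}B_{x}=v_{x}-\mathcal{A}_{x}\mathcal{A}^{-1}v$, the right-hand side collapses to $-(\det\mathcal{A}/r)\bigl(v_{x}-\mathcal{A}_{x}\mathcal{A}^{-1}v\bigr)$. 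Separately, a cofactor expansion of the determinant in \eqref{eq:compsomega} (swap the top row to the bottom, then apply the Schur-complement identity $\det\bigl(\begin{smallmatrix}\mathcal{A}&w\\u^{\mathrm{t}}&\alpha\end{smallmatrix}\bigr)=\det\mathcal{A}\,(\alpha-u^{\mathrm{t}}\mathcal{A}^{-1}w)$) rewrites $\omega_{Ir}$ as $(\det\mathcal{A}/r)\bigl(\partial_{z}J_{I0}-\sum_{L}B_{L}\,\partial_{z}J_{IL}\bigr)$. These two expressions are component-wise equal once the change of coordinate $z=-x$ made at the start of Section~\ref{sec:twistconstr} is taken into account; an identical computation handles the $x$-derivative, and by the relation $B'=-\tilde B'^{\mathrm{t}}$ forced by $\tilde B=-B^{\mathrm{t}}$ one gets $\tilde B'=\chi$ and $B'=-\chi^{\mathrm{t}}$ up to additive constants (which we absorb).

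Finally I would assemble $J'$ from the primed blocks via \eqref{eq:bdecomp}. Inserting $A'^{-1}=\mathcal{A}$, $\tilde A'=1/\det\mathcal{A}$, $\tilde B'=\chi$ and $B'=-\chi^{\mathrm{t}}$ produces
\[
\renewcommand{\arraystretch}{1.4}
\left(\begin{array}{cc}\mathcal{A}+\chi\chi^{\mathrm{t}}/\det\mathcal{A} & -\chi/\det\mathcal{A}\\ -\chi^{\mathrm{t}}/\det\mathcal{A} & 1/\det\mathcal{A}\end{array}\right),
\]
which is exactly \eqref{eq:highernst} after reordering the basis to return the singled-out Killing vector to the first slot. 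The determinant identity $\det J'=1$ from Lemma~\ref{lem:highernstprop} matches Proposition~\ref{prop:bdet} for $k=1$, so the check is internally consistent. The main obstacle is the cofactor/Schur-complement identification of the third step, where careful bookkeeping of the permutation sign and the $z=-x$ convention is needed to align the output of the integrability equation with $\omega$.
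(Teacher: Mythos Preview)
The proposal is correct and follows essentially the same route as the paper: decompose $J$ with $k=1$, identify the blocks, obtain $A=-\det\skew{7}{\tilde}{A}/r^{2}$ from $\det J=-r^{2}$, and then show that the integrability condition for $\skew{4}{\tilde}{B}'$ reproduces the twist one-forms by rewriting both sides as $(\det\skew{7}{\tilde}{A}/r)\bigl(\partial_{z}v-(\partial_{z}\skew{7}{\tilde}{A})\skew{7}{\tilde}{A}^{-1}v\bigr)$. Your Schur-complement step is a cosmetic streamlining of the paper's double Laplace expansion of the determinant in \eqref{eq:compsomega}, and the final block swap you flag is exactly what the paper also notes in passing.
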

\begin{proof}
Consider without loss of generality $J'$ adapted to the horizon. We decompose $J$ according to \eqref{eq:bdecomp}
\begin{equation*}
\renewcommand{\arraystretch}{1.5}
J=\left(\begin{array}{cc}A^{-1}-\skew{4}{\tilde}{B} \skew{7}{\tilde}{A} B & -\skew{4}{\tilde}{B} \skew{7}{\tilde}{A} \\ \skew{7}{\tilde}{A} B & \skew{7}{\tilde}{A}\end{array}\right)
\end{equation*}
with $\skew{7}{\tilde}{A}$ as above the spatial part of $J$, 
\begin{equation*}
B=-\skew{4}{\tilde}{B} ^{\mathrm{t}}=\skew{7}{\tilde}{A}^{-1} \left(\begin{array}{c}J_{01} \\\vdots \\J_{0,n-3}\end{array}\right),
\end{equation*}
and $k=1$, $\skew{2}{\tilde}{k}=n-3$. In order to calculate $A$ remember that the inverse of a general matrix was given
\begin{equation*}
\skew{7}{\tilde}{A}^{-1} = \frac{1}{\det \skew{7}{\tilde}{A}} (\tilde \Abb)^{\mathrm{t}},
\end{equation*}
where $\tilde \Abb = \left((-1)^{i+j} \det(\skew{7}{\tilde}{\Acal}_{ij})\right)$ with $\skew{7}{\tilde}{\Acal}_{ij}$ being the matrix obtained from $\skew{7}{\tilde}{A} = (\skew{7}{\tilde}{A}_{ij})$ by crossing out the $i^{\text{th}}$ row and $j^{\text{th}}$ column. 

As a first step towards $A$ we do a Laplace expansion of $\det J$ along the first row
\begin{equation*}
\det J = J_{00} \det \skew{7}{\tilde}{A} + ∑ _{k=1}^{n-3} (-1)^{k+2}J_{0k} \det {\Jcal}_{0k},
\end{equation*}
and by another Laplace expansion (also using that $J$ is symmetric)
\begin{equation*}
\det {\Jcal}_{0k} = ∑ _{l=1}^{n-3}(-1)^{l+1} J_{0l} \det \skew{7}{\tilde}{\Acal}_{kl}.
\end{equation*}
Substituting the latter one in the expression for $\det J$ yields
\begin{align*}
\det J & = J_{00} \det \skew{7}{\tilde}{A} - ∑ _{k,l=1}^{n-3} (-1)^{k+l}J_{0k} \det \skew{7}{\tilde}{\Acal}_{kl} J_{0l} \\
	& = J_{00} \det \skew{7}{\tilde}{A} - \det \skew{7}{\tilde}{A} 
	\left(\begin{array}{ccc}J_{01} & \cdots & J_{0,n-3}\end{array}\right)
	\skew{7}{\tilde}{A}^{-1}
	\left(\begin{array}{c}J_{01} \\ \vdots \\J_{0,n-3}\end{array}\right),
\end{align*}
which is equivalent to
\begin{equation*}
\frac{\det J}{\det \skew{7}{\tilde}{A}} = J_{00} +\skew{4}{\tilde}{B}^{\mathrm{t}}\skew{7}{\tilde}{A} B = A^{-1}.
\end{equation*}
Hence
\begin{equation*}
A = \frac{\det \skew{7}{\tilde}{A}}{\det J} = - r^{-2} \det \skew{7}{\tilde}{A},
\end{equation*}
and we can, in line with the B\"acklund transformation, define 
\begin{equation*}
A'= \skew{7}{\tilde}{A}^{-1},\quad  \skew{7}{\tilde}{A}' = -r^{-2} A^{-1} = \det \skew{7}{\tilde}{A}^{-1}. 
\end{equation*}

To complete our proof of showing that $J'$ is a B\"acklund transform we have to verify that the integrability conditions yield the right expression for the twist potentials. They take the form
\begin{align*}
∂ _{r} \skew{4}{\tilde}{B}' & = r^{-1} \det (\skew{7}{\tilde}{A}) \skew{7}{\tilde}{A}\, ∂ _{z}\left(\skew{7}{\tilde}{A}^{-1}\left(\begin{array}{c}J_{01} \\ \vdots \\J_{0,n-3}\end{array}\right) \right), \\
∂ _{z} \skew{4}{\tilde}{B}' & = -r^{-1} \det (\skew{7}{\tilde}{A}) \skew{7}{\tilde}{A}\, ∂ _{r}\left(\skew{7}{\tilde}{A}^{-1}\left(\begin{array}{c}J_{01} \\ \vdots \\J_{0,n-3}\end{array}\right) \right),
\end{align*}
and $B'=-(\skew{4}{\tilde}{B}')^{\mathrm{t}}$. We are done if we can show that $\skew{4}{\tilde}{B}'=χ$ because then our constructed $J'$ coincides with the B\"acklund transform (apart from swapping the first and last row, respectively column, which does not affect our considerations). 

So, it remains to show
\begin{equation} \label{eq:intvstwist}
\begin{split}
∂ _{r}χ = ω_{r} = r^{-1} \det (\skew{7}{\tilde}{A}) \skew{7}{\tilde}{A}\, ∂ _{z}\left(\skew{7}{\tilde}{A}^{-1}\left(\begin{array}{c}J_{01} \\ \vdots \\J_{0,n-3}\end{array}\right) \right), \\
∂ _{z}χ = ω_{z} = -r^{-1} \det (\skew{7}{\tilde}{A}) \skew{7}{\tilde}{A}\, ∂ _{r}\left(\skew{7}{\tilde}{A}^{-1}\left(\begin{array}{c}J_{01} \\ \vdots \\J_{0,n-3}\end{array}\right) \right).
\end{split}
\end{equation}

Starting from the left-hand side of \eqref{eq:intvstwist} we Laplace expand the determinant in \eqref{eq:compsomega} twice; first along the last column and then along the first row
\begin{align*}
r \cdot ω_{Ir} & = (-1)^{n-1}(-1)^{1+(n-2)} ∂ _{z}J_{0I} \cdot \det \skew{7}{\tilde}{A} \\
	& \hspace{0.4cm}+ (-1)^{n-1}∑ _{l=1}^{n-3}(-1)^{(l+1)+(n-2)} ∂ _{z}\skew{7}{\tilde}{A}_{l,I} \cdot  \det 
\renewcommand{\arraystretch}{1.5}	
	\left(\begin{array}{ccc}J_{01} &  \cdots & J_{0,n-3} \\ \vdots &  &  \vdots \\\widehat{J_{l,1}} &  \cdots & \widehat{J_{l,n-3}} \\ \vdots &  &  \vdots \\J_{n-3,1} &  \cdots & J_{n-3,n-3}\end{array}\right) \\
	& = ∂ _{z}J_{0I} \cdot \det \skew{7}{\tilde}{A}+∑ _{l=1}^{n-3}(-1)^{l} ∂ _{z}\skew{7}{\tilde}{A}_{lI} \left(∑ _{k=1}^{n-3}(-1)^{k+1}J_{0k}\cdot \det \skew{7}{\tilde}{A}_{kl}\right)\\
	& = \det \skew{7}{\tilde}{A} \left[∂ _{z} J_{0I}- 
	\left(\begin{array}{ccc}∂ _{z} \skew{7}{\tilde}{A}_{1I} &  \cdots & ∂ _{z} \skew{7}{\tilde}{A}_{n-3,I}\end{array}\right)
	\skew{7}{\tilde}{A}^{-1}
	\left(\begin{array}{c}J_{01} \\ \vdots \\J_{0,n-3}\end{array}\right)
	\right].
\end{align*}
However, the last line is the $I^{\text{th}}$ component of
\begin{align*}
r \cdot  ω_{r} & = \det \skew{7}{\tilde}{A} \left[\left(\begin{array}{c}∂ _{z} J_{01} \\ \vdots \\∂ _{z} J_{0,n-3}\end{array}\right)
	- (∂ _{z} \skew{7}{\tilde}{A})	\skew{7}{\tilde}{A}^{-1}
	\left(\begin{array}{c}J_{01} \\ \vdots \\J_{0,n-3}\end{array}\right) 
	\right]\\
	& = \det \skew{7}{\tilde}{A} \cdot \skew{7}{\tilde}{A} \, ∂ _{z} \left[\skew{7}{\tilde}{A}^{-1}
	\left(\begin{array}{c}J_{01} \\ \vdots \\J_{0,n-3}\end{array}\right) 
	\right], 
\end{align*}
which corresponds to the right-hand side of \eqref{eq:intvstwist}. An analogous computation verifies the second equation of \eqref{eq:intvstwist}.
\end{proof}
This completes the justification to call the $J'$ defined in \eqref{eq:highernst} Ernst potential --- by Proposition~\ref{prop:btsolyang} it satisfies Yang's equation and by Lemma~\ref{lem:highernstprop} it is symmetric and has unit determinant.

\section{Properties of $P$ and the Bundle} \label{sec:adapt}

Having different adaptations of $J'$ which are related to different extenstions of $V$ the question arises whether these yield equivalent bundles over the reduced twistor space. To see that is true is not very hard using results in \citet{Fletcher:1990aa}.
\begin{prop}[Proposition~3.1 in \citet{Fletcher:1990aa}] \label{prop:EquBdls}
Suppose $E\to \Rcal_{V}$, the bundle corresponding to a solution $J$ of Yang's equation on the set $V$, can be represented as the pullback of the bundles $E^{1}\to \Rcal_{V'_{1}}$ and $E^{2}\to \Rcal_{V'_{2}}$, where $V'_{1}$ and $V'_{2}$ are simply connected open sets which intersect the real axis in distinct intervals. Then $E^{1}$ and $E^{2}$ are the pullbacks of a bundle $\tilde E \to  \Rcal_{\tilde V}$ where $\tilde V=V_{1}\cup V_{2}$. Moreover, we can express $\tilde E$ in standard form in two different ways; and one of the collections of patching matrices is identical to the collection used to describe $E^{1}$ and the other collection is identical to that used to describe $E^{2}$.
\end{prop}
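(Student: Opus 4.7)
The plan is to construct $\tilde E\to\Rcal_{\tilde V}$ directly by gluing $E^{1}$ and $E^{2}$ along the part of $\Rcal_{V'_{1}}$ and $\Rcal_{V'_{2}}$ that gets identified with $\Rcal_{V}$, and then to verify that the standard patching data for $\tilde E$ can be arranged, on each of the two halves $\Rcal_{V'_{i}}$, to coincide with the patching data originally chosen for $E^{i}$. The main technical point (and the one that requires the hypothesis on how $V'_{1}$ and $V'_{2}$ meet the real axis) is that the two identifications do not clash with each other over the doubled region, so the cocycle condition survives the gluing.

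First I would choose standard open covers $\{U^{1}_{\alpha}\}_{\alpha\in A_{1}}$ of $\Rcal_{V'_{1}}$ and $\{U^{2}_{\beta}\}_{\beta\in A_{2}}$ of $\Rcal_{V'_{2}}$ with transition functions $g^{1}_{\alpha\alpha'}$ for $E^{1}$ and $g^{2}_{\beta\beta'}$ for $E^{2}$, arranged so that in each cover the pairs of patches containing $\infty_{0}$ and $\infty_{1}$ are clearly separated. Pulling both families back to $\Rcal_{V}$ produces two trivialisations of the same bundle $E$, and Proposition~\ref{prop:bundle}(ii) then supplies holomorphic maps $f_{\alpha\beta}:U^{1}_{\alpha}\cap U^{2}_{\beta}\cap\Rcal_{V}\to\GL(n-2,\Cbb)$ satisfying $f_{\alpha\beta}g^{2}_{\beta\beta'}=g^{1}_{\alpha\alpha'}f_{\alpha'\beta'}$ on a common refinement of the restricted covers.

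Now since $V'_{1}$ and $V'_{2}$ meet the real axis in disjoint intervals, the open set $V'_{1}\cap V'_{2}$ lies entirely inside $V$, that is, both identifications of the two Riemann spheres that take place on $V'_{1}\cap V'_{2}$ are the same identification. Consequently the family $\{U^{1}_{\alpha}\}_{\alpha}\cup\{U^{2}_{\beta}\}_{\beta}$ is an open cover of $\Rcal_{\tilde V}$, and the $f_{\alpha\beta}$ above are defined on each non-empty overlap $U^{1}_{\alpha}\cap U^{2}_{\beta}$ inside $\Rcal_{\tilde V}$ (not merely inside $\Rcal_{V}$). Declaring
\begin{equation*}
\tilde g_{\alpha\alpha'}=g^{1}_{\alpha\alpha'},\qquad
\tilde g_{\beta\beta'}=g^{2}_{\beta\beta'},\qquad
\tilde g_{\alpha\beta}=f_{\alpha\beta},
\end{equation*}
one checks the three remaining cocycle identities $\tilde g_{\alpha\beta}\tilde g_{\beta\alpha'}=\tilde g_{\alpha\alpha'}$, $\tilde g_{\alpha\beta}\tilde g_{\beta\beta'}=\tilde g_{\alpha\beta'}$, and $\tilde g_{\alpha\alpha'}\tilde g_{\alpha'\beta}=\tilde g_{\alpha\beta}$ directly from the defining relation for $f_{\alpha\beta}$; these are well-defined on $\Rcal_{\tilde V}$ precisely because the overlaps lie in $V'_{1}\cap V'_{2}\subset V$, where both identifications of spheres agree. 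Proposition~\ref{prop:bundle}(i) then yields a holomorphic rank-$(n-2)$ bundle $\tilde E\to\Rcal_{\tilde V}$.

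By construction, restricting the cover to $\{U^{1}_{\alpha}\}$ gives exactly the transition functions $g^{1}_{\alpha\alpha'}$, so $\tilde E|_{\Rcal_{V'_{1}}}=E^{1}$, and similarly $\tilde E|_{\Rcal_{V'_{2}}}=E^{2}$; the inclusions $\Rcal_{V'_{i}}\hookrightarrow\Rcal_{\tilde V}$ are the required pullback maps. For the last assertion, one reads off two ``standard form'' descriptions of $\tilde E$ by taking the patching matrix between the chart around $\infty_{0}$ and the chart around $\infty_{1}$ either within the $\{U^{1}_{\alpha}\}$-cover or within the $\{U^{2}_{\beta}\}$-cover; these are by construction the original patching matrices of $E^{1}$ and of $E^{2}$, respectively. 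The main obstacle is purely the bookkeeping in the paragraph above --- verifying that the gluing data are consistent on $V'_{1}\cap V'_{2}$ --- and this is where the assumption that the two extensions reach disjoint intervals of the real axis is essential, since it forces $V'_{1}\cap V'_{2}\subset V$ and so prevents any conflict between the sheet-labelling conventions of $E^{1}$ and $E^{2}$.
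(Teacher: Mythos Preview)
Your sketch is correct and follows essentially the same idea as the paper's own proof, which is itself only a sketch: the paper simply cites Fletcher, notes that the argument does not depend on the rank of the bundle, and records the key point that the pullbacks of $E^{1}$ and $E^{2}$ to $\Rcal_{V}$ must be equivalent because both yield the same solution $J$, so that the two bundles are analytic continuations of one another and hence represent a single bundle $\tilde E$ on $\Rcal_{\tilde V}$.

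Your version is more explicit than the paper's in that you actually build $\tilde E$ by writing down a cocycle on the union cover and invoking Proposition~\ref{prop:bundle}, whereas the paper leaves this implicit in the phrase ``analytic continuation''. The one place where you go slightly beyond what is stated is the claim that $V'_{1}\cap V'_{2}\subset V$; this is true in the intended situation (the $V'_{i}$ are obtained from $V$ by adjoining disjoint real intervals, as in Figure~\ref{fig:adaptation}), and you correctly identify it as the step where the hypothesis on the real-axis intersections is used, but in full generality it would deserve a word of justification. As a sketch, though, it captures exactly what the paper and Fletcher intend.
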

\begin{proof}[Sketch of Proof]
The proof in \citet{Fletcher:1990aa} only makes use of the construction for the reduced twistor space and not of the rank of the bundle. So, it carries over to higher dimensions.

Roughly speaking the argument is that for two different adaptations with reduced twistor spaces $\Rcal_{V_{1}}$ and $\Rcal_{V_{2}}$, as in Figure~\ref{fig:adaptation}, their pullbacks to $\Rcal_{V}$ have to be equivalent (in the sense of Proposition~\ref{prop:bundle}), since in both cases it yields the bundle defined by the solution $J$ for \eqref{eq:redyang}. But if they are equivalent on $\Rcal_{V}$ and are both analytical continuations, then they have to belong to the same bundle and are only two different representations of it. 
\end{proof}

The bridge between Ernst potential and patching matrix is built by analytic continuation. 
\begin{prop}[Proposition~7.2 in \citet{Woodhouse:1988ek}] \label{prop:AnalyCont}
In five as well as in four dimensions if $P$ is a patching matrix of an axis-regular Ernst potential $J'$ on $V$, then $J'$ is analytic on (a choice of) $V'$ and $J'(0,z)=P(z)$ for real $z$. 
\end{prop}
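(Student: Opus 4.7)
The plan is to evaluate the splitting procedure defining $J'$ in the limit $r\to 0$ and then use the uniqueness of the Birkhoff factorisation to read off $P(z)$.

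First I would note that axis-regularity together with the identity $\det J'=1$ forces the Grothendieck integers in the twistor data of $\hat E\to\Rcal'$ all to vanish, by exactly the argument recalled for the four-dimensional case in Chapter~\ref{ch:bundles}: the triviality of $\hat E|_{S_0}$ and $\hat E|_{S_1}$ combined with the determinant condition leaves no room for non-trivial twists, and in five dimensions the same counting produces three zero integers in place of the pair $(p,q)$. Hence the only non-trivial patching datum is a single holomorphic matrix $P(w)$ on $V'$ with $\det P=1$, and the reconstruction of $J'$ reduces to the splitting
\begin{equation*}
P\bigl(\tfrac{1}{2}r(\zeta^{-1}-\zeta)+z\bigr)=Q_0(\zeta)\,Q_1(\zeta)^{-1},\qquad J'(z,r)=Q_0(0)\,Q_1(\infty)^{-1},
\end{equation*}
with $Q_0$ holomorphic near $\zeta=0$ and $Q_1$ holomorphic near $\zeta=\infty$, unique up to the common right multiplication $Q_i\mapsto Q_i C$ by a constant matrix.

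Next I would fix $z_0\in V'\cap\{r=0\}$ and let $r\to 0$ with $z=z_0$. The key simplification is that $w(\zeta)=\tfrac{1}{2}r(\zeta^{-1}-\zeta)+z_0$ collapses to the constant value $z_0$, so the splitting equation degenerates to the trivial Birkhoff problem
\begin{equation*}
P(z_0)=Q_0(\zeta)\,Q_1(\zeta)^{-1},
\end{equation*}
for which $\widetilde Q_0(\zeta)\equiv P(z_0)$, $\widetilde Q_1(\zeta)\equiv I$ is an obviously holomorphic solution on the entire $\zeta$-sphere. By the uniqueness of the factorisation up to a constant matrix, any other splitting at $r=0$ must have the form $Q_0(\zeta)=P(z_0)\,C$, $Q_1(\zeta)=C$ for some constant $C$, and therefore
\begin{equation*}
J'(0,z_0)=Q_0(0)\,Q_1(\infty)^{-1}=P(z_0)\,C\,C^{-1}=P(z_0).
\end{equation*}

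The main obstacle will be justifying that $J'$ actually extends analytically across $r=0$ inside $V'$, so that the formal limit above is a genuine boundary value rather than a formal manipulation. I would settle this by invoking the parameter-dependent version of Birkhoff's theorem quoted in footnote~\ref{fn:trivass} (Proposition~9.3.4 of \cite{Mason:1996hl}): since $P$ is holomorphic on all of $V'$ by axis-regularity, and since the trivial factorisation exhibited above certifies that $\Delta=I$ at points of the rod, the splitting exists and depends holomorphically on $(z,r)$ throughout a full neighbourhood of that rod. This simultaneously proves the analyticity of $J'$ on (a choice of) $V'$ and legitimises passage to the limit $r\to 0$ in the splitting, completing the identification $J'(0,z)=P(z)$.
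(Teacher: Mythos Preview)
Your argument is correct and captures the essential mechanism: once the Grothendieck integers vanish, the pulled-back patching matrix at $r=0$ is the constant $P(z_0)$, the Birkhoff problem is trivially solvable with $\Delta=I$, and the parameter-dependent factorisation theorem then supplies both analyticity in a neighbourhood of the axis and the identification $J'(0,z)=P(z)$.

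The paper itself does not spell this out: its proof consists of a pointer to \citet{Woodhouse:1988ek} together with the observation that nothing in that argument depends on the rank of the bundle, so the four-dimensional proof carries over verbatim. What you have written is therefore a reconstruction rather than a comparison, but it is faithful to the machinery set up in Chapters~\ref{ch:bundles} and~\ref{ch:fivedim}. The paper's Section~\ref{sec:Pnearaxis} does contain a closely related computation, namely an explicit asymptotic splitting yielding $J'(r,z)=P(z)+\text{higher orders in }r$; your abstract route via uniqueness of the trivial factorisation reaches the same endpoint more cleanly, at the cost of not producing the higher-order information.

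One small tightening: you invoke $\det J'=1$ alongside axis-regularity to force the integers to vanish, but in fact the definition of axis-regularity already requires $\hat E|_{S_0}$ and $\hat E|_{S_1}$ to be trivial, which is precisely the statement that all the Grothendieck integers are zero. The determinant condition is not needed for that step.
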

\begin{proof}
This is literally the same as for \citet{Woodhouse:1988ek}, since it does not make use of the rank of the bundle. Note that even though in the statement there the assumption of $J$ being positive definite is made, it in fact is not necessary for the proof.
\end{proof}

\begin{cor} \label{cor:singofP}
A patching matrix $P$ has real singularities, that is points $z\in \mathbb{R}$ where an entry of $P$ has a singularity, at most at the nuts of the rod structure.
\end{cor}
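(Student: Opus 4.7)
The plan is to exploit Proposition~\ref{prop:AnalyCont} together with the explicit formula \eqref{eq:highernst} for $J'$ and the behaviour of $\det\skew{7}{\tilde}{A}$ established in the lemma preceding Definition~\ref{def:ernstpot}: namely that $\skew{7}{\tilde}{A}$ adapted to a rod $(a_i,a_{i+1})$ is invertible on the open interval and becomes singular precisely at the limiting nuts $a_i$, $a_{i+1}$.

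First I would fix a point $z_0\in\mathbb{R}$ that is not one of the nuts $a_1,\dotsc,a_N$. Then $z_0$ lies in the interior of some (possibly semi-infinite) rod $(a_i,a_{i+1})$, and I would work with the patching matrix $P$ adapted to this rod, together with the Ernst potential $J'$ adapted to the same rod. By the general assumption that the space-time is analytic up to and including $r=0$, every entry of $J$ (hence of $\skew{7}{\tilde}{A}$) is analytic in a two-sided neighbourhood of $(r,z)=(0,z_0)$. The twist potentials $\chi$ are obtained as integrals of the closed analytic $1$-forms $\omega_{I}$ on the simply connected region where the adaptation is defined, and so they too are analytic near $(0,z_0)$.

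Next I would combine these facts with \eqref{eq:highernst}. Since $\det\skew{7}{\tilde}{A}(0,z_0)\neq 0$ by the adaptation lemma, the prefactor $1/\det\skew{7}{\tilde}{A}$ is analytic near $(0,z_0)$, and the matrix entries on the right-hand side of \eqref{eq:highernst} are analytic as well. Therefore $J'$ is analytic in a neighbourhood of $(0,z_0)$ in the $(r,z)$-plane. Proposition~\ref{prop:AnalyCont} then gives $P(z_0)=J'(0,z_0)$, so $P$ is regular at $z_0$. As this argument applies to every non-nut real point, the only possible locations of real singularities of $P$ are the nuts, which is the claim.

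The step I expect to require the most care is the book-keeping when $z_0$ happens not to lie in the real slice $V'|_{r=0}$ of the particular extension $V'$ used to define a given $P$: different rods correspond to different adaptations and hence, a priori, to different patching matrices. The remedy is Proposition~\ref{prop:EquBdls}, which guarantees that the $P$'s coming from different adaptations represent the same bundle over $\Rcal$ and are related by holomorphic transition data on the overlap; no spurious real singularity can be introduced by switching adaptation away from a nut. So the statement is really one about \emph{any} representative patching matrix, and the argument above shows that each of them is analytic across every real point that is not a nut.
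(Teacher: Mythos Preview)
Your proposal is correct and follows essentially the same approach as the paper: both use the regularity of the adapted Ernst potential $J'$ on the interior of a rod (via the invertibility of $\skew{7}{\tilde}{A}$ there), invoke Proposition~\ref{prop:AnalyCont} to identify $P$ with $J'(0,z)$, and appeal to Proposition~\ref{prop:EquBdls} to handle the ambiguity of adaptation. The paper phrases it as a contradiction argument and routes the adaptation issue through a maximally extended bundle $\tilde E\to\Rcal_{\tilde V}$ (any other bundle being a pullback of it), while you argue directly and phrase the last step as ``no spurious singularity is introduced by switching adaptation''; these are organisational variants of the same proof.
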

\begin{proof}
Suppose $P$ corresponds to the bundle $\tilde E \to  \Rcal_{\tilde V}$, where $\tilde V$ is the maximally extended region over which the spheres can be identified. Suppose further that $P$ has a real singularity $\skew{2} \tilde a \in \mathbb{R}$ which is not one of the nuts, say without loss of generality $\skew{2} \tilde a\in (a_{i},a_{i+1})$. Then from Proposition~\ref{prop:EquBdls} we know that $\tilde E$ can on $V_{i}$ be expressed in standard form. But using Proposition~\ref{prop:AnalyCont} that means that $P(z)=J'_{i}(z,0)$ for $z\in (a_{i},a_{i+1})$. On the other hand we have seen earlier already that $J'_{i}$ is regular on $(a_{i},a_{i+1})$ and only becomes singular when approaching the nut. Contradiction!

For every other bundle $E→\Rcal_{V}$ it must be $V⊆\tilde V$, hence $E$ is the pullback of $\tilde E$ and as such the patching matrix of $E$ cannot have poles where the patching matrix of $\tilde E$ has not.
\end{proof}

As in \citet{Fletcher:1990aa} we sometimes also call the singularities of $P$ \textit{double points}, because the singularities of $P$ are the points where the two Riemann spheres of the reduced twistor space cannot be identified.

\begin{prop} \label{prop:simplepoles}
The real singularities of a patching matrix $P$ are simple poles.
\end{prop}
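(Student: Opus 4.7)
The plan is to bridge $P$ and $J'$ at $r=0$ via Proposition~\ref{prop:AnalyCont}, and then read off the order of the pole from the explicit block form of the Ernst potential in Definition~\ref{def:ernstpot}. By Corollary~\ref{cor:singofP} the only candidates for real singularities of $P$ are the nuts of the rod structure, so it suffices to work locally around a single nut $z=\alpha$ lying between two adjacent rods, with $P(z)=J'(0,z)$ for $z$ in either adjacent rod interval.

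The core of the proof will be a local analysis of the two ingredients of $J'$ at $z=\alpha$. First, since $J$ is a matrix of scalar inner products of Killing vectors on an analytic space-time which is regular up to and including the axis, every entry of $J$, and in particular every entry of $\skew{7}{\tilde}{A}$, is real-analytic in $z$ on a neighbourhood of $\alpha$ restricted to $r=0$. Consequently $\det\skew{7}{\tilde}{A}$ is real-analytic in $z$, and by the rod structure analysis in Section~\ref{sec:rodstr} it vanishes at $z=\alpha$: the rank of $J$ drops by two at the nut, and the rod vector of the neighbouring rod (which has not been crossed out in the adaptation) contributes a row and column of $\skew{7}{\tilde}{A}$ which both vanish at $\alpha$. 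Second, the twist potentials $\chi$ are integrals of the analytic 1-forms $\omega$ and hence analytic in $z$ on a neighbourhood of $\alpha$ on $r=0$; in particular they are bounded there, and by adding a constant (the gauge freedom in $\chi$) we may even arrange $\chi(\alpha)=0$.

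Next I would show that the zero of $\det\skew{7}{\tilde}{A}$ at $\alpha$ is simple. The adjacent rod on which this entry vanishes is itself a regular piece of the axis for the corresponding Killing vector $X_L$; on such an axis the inner product $g(X_L,X_L)$ vanishes to precisely second order in the distance to that axis, and the off-diagonal entries $g(X_L,X_J)$ for $J\neq L$ vanish at least to second order as well. Translating this transverse distance into the Weyl coordinate $z-\alpha$ near the nut contributes a single power (the standard $v^{2}\sim 2(z-\alpha)$ relation, which will be developed from the $(u,v)$-coordinates of Section~\ref{sec:Jaroundnut}), while the complementary diagonal block of $\skew{7}{\tilde}{A}$ remains invertible there. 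Expanding $\det\skew{7}{\tilde}{A}$ along the vanishing row/column then yields exactly one factor of $(z-\alpha)$ at leading order, so $\det\skew{7}{\tilde}{A}=(z-\alpha)h(z)$ with $h$ analytic and $h(\alpha)\neq 0$.

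Finally I would substitute these two facts into the explicit expression
\begin{equation*}
\renewcommand{\arraystretch}{1.5}
J'=\frac{1}{\det\skew{7}{\tilde}{A}}\left(\begin{array}{cc}1 & -\chi^{\mathrm{t}}\\ -\chi & \det\skew{7}{\tilde}{A}\cdot\skew{7}{\tilde}{A}+\chi\chi^{\mathrm{t}}\end{array}\right).
\end{equation*}
In each block, the numerator is analytic at $\alpha$ (the off-diagonal $\chi$, the scalar $1$, and the combination $\det\skew{7}{\tilde}{A}\cdot\skew{7}{\tilde}{A}+\chi\chi^{\mathrm{t}}$), while the denominator has a simple zero. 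Hence every entry of $J'(0,z)=P(z)$ has at worst a simple pole at $z=\alpha$, as claimed. The main obstacle will be the linear vanishing of $\det\skew{7}{\tilde}{A}$: establishing this cleanly in coordinate-free language requires the local $(u,v)$-normal form of $J$ around a nut (equation~\eqref{eq:Jaroundnutintro}), and a careful check that the leading coefficient $X_{0}W_{0}$ (respectively $X_{0}U_{0}$, or the horizon analogue) is genuinely non-zero under the standing regularity assumptions on the space-time.
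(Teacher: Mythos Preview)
Your proposal is correct and follows the same strategy as the paper: identify the singularities of $P(z)=J'(0,z)$ with those of the prefactor $1/\det\skew{7}{\tilde}{A}$ in Definition~\ref{def:ernstpot}, and then argue that $\det\skew{7}{\tilde}{A}$ vanishes to first order at the adjacent nut. The paper's proof of the latter is much terser than yours---it simply invokes the fact (from Section~\ref{sec:highErnst}) that the rank of $\skew{7}{\tilde}{A}$ drops by exactly one at the nut and asserts this forces a simple zero of the determinant---whereas your appeal to the local $(u,v)$-normal form of Section~\ref{sec:Jaroundnut} is a forward reference (though not a circular one) that makes the linear vanishing of the relevant eigenvalue explicit. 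Strictly speaking, the rank argument alone does not exclude an eigenvalue vanishing like $(z-\alpha)^{2}$, so your extra care is not wasted; conversely, the paper's route avoids importing later machinery into Chapter~\ref{ch:fivedim}.
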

\begin{proof}
We have seen above that on the real axis $r=0$ the singularities of $P$ are caused by the term $\det \skew{7}{\tilde}{A}$. So, consider the rod $(a_{i},a_{i+1})$ where $\skew{7}{\tilde}{A}$ has full rank. The determinant of a matrix equals the product of its eigenvalues. Furthermore, towards the nuts $a_{i}$, $a_{i+1}$ we know, also from above, that the rank of $\skew{7}{\tilde}{A}$ drops precisely by one which is the case if and only if $\det \skew{7}{\tilde}{A}$ contains the factors $z-a_{i}$ and $z-a_{i+1}$, respectively, with multiplicity one. 
\end{proof}
Note that if $P$ has a pole with $r≠0$, then it is obviously not splittable at this point, see \eqref{eq:patmat2}, hence we do not obtain $J$ at this point by the splitting procedure which means the metric might be singular at this point. We would like to exclude such situations. However, referring again to \eqref{eq:patmat2} it is evident that the splitting procedure in general does break down for $r→0$. So, one has to study this limit by means of other tools, see Section~\ref{sec:Pnearaxis}, and we have seen in Example~\ref{ex:schwarz} even for regular space-times it is possible that $P$ has real poles. But as in Chapter~\ref{ch:bundles} we assume that there are only finitely many of them.

\section[Patching Matrix near the Axis]{Patching Matrix near the Axis and Twistor Data Integers} \label{sec:Pnearaxis}

In this last part we will sketch a generalization of the asymptotic formula of $J$ for $r\to 0$. Originally, the result in four dimensions goes back to \citet{Ward:1983yg}, but here we will follow the lines of \citet[Sec.~2.4]{Fletcher:1990aa}. 

In Section~\ref{sec:highErnst} we defined
\begin{equation*}
\renewcommand{\arraystretch}{1.4}
J'=\left(\begin{array}{cc}\hphantom{-}g & -gχ^{\mathrm{t}} \\-gχ & \skew{7}{\tilde}{A} + g\cdot χχ^{\mathrm{t}}\end{array}\right),
\end{equation*}
where $g=(\det \skew{7}{\tilde}{A})^{-1}$. We have shown above that $P(w)$ is the analytic continuation of $J(r=0,z)$, hence we can assume that it also takes the form
\begin{equation*}
\renewcommand{\arraystretch}{1.4}
P(w)=\left(\begin{array}{cc}\hphantom{-}g & -gχ^{\mathrm{t}} \\-gχ & \skew{7}{\tilde}{A} + g\cdot χχ^{\mathrm{t}}\end{array}\right),
\end{equation*}
but the entries being functions of $w$. Analogously to Chapter~\ref{ch:bundles} we have
\begin{equation*}
P_{02}^{}=P_{13}^{-1}=\diag \big((2w)^{p_{0}},(2w)^{p_{1}} ,(2w)^{p_{2}}\big),\ P_{23}^{}=P.
\end{equation*}
Furthermore, we assume for the integers $p_{0}\geq p_{1}\geq p_{2}$, which is just a question of ordering. Following exactly the lines as in Chapter~\ref{ch:bundles} we obtain
\begin{equation*}
Q_{0}^{\vphantom{-1}}Q_{1}^{-1}=\left(\begin{array}{@{}c@{\hspace{-0.2mm}}c@{\hspace{-0.2mm}}c@{}}\left(\dfrac{r}{ζ}\right)^{p_{0}} &   &  \\ & \left(\dfrac{r}{ζ}\right)^{p_{1}} &  \\ &  & \left(\dfrac{r}{ζ}\right)^{p_{2}}\end{array}\right)P(w)\left(\begin{array}{@{}c@{\hspace{-0.2mm}}c@{\hspace{-0.2mm}}c@{}} \vphantom{\left(\dfrac{r}{ζ}\right)^{p_{n-3}}}(-rζ)^{p_{0}} &  &  \\ & \vphantom{\left(\dfrac{r}{ζ}\right)^{p_{2}}}\left(-rζ\right)^{p_{1}} &  \\ &  & \vphantom{\left(\dfrac{r}{ζ}\right)^{p_{2}}}(-rζ)^{p_{2}}\end{array}\right),
\end{equation*}
where the left-hand side is a function of $ζ$. With
\begin{equation*}
R_{01}=\left(\begin{array}{@{}c@{\hspace{-0.2mm}}c@{\hspace{-0.2mm}}c@{}}
ζ^{-p_{0}} &  &  \\ & ζ^{-p_{1}} &  \\ &  & ζ^{-p_{2}}\end{array}\right)
P(w)
\left(\begin{array}{@{}c@{\hspace{-0.2mm}}c@{\hspace{-0.2mm}}c@{}}(-ζ)^{p_{0}} &  &  \\ & (-ζ)^{p_{1}} &  \\ &  & (-ζ)^{p_{2}}\end{array}\right)
\end{equation*}
this can be rewritten as
\begin{equation*}
Q_{0}^{\vphantom{-1}}Q_{1}^{-1}=\left(\begin{array}{@{}c@{\hspace{-0.2mm}}c@{\hspace{-0.2mm}}c@{}}r^{p_{0}} &  &  \\ & r^{p_1} &  \\ &  & r^{p_{2}}\end{array}\right)R_{01}\left(\begin{array}{@{}c@{\hspace{-0.2mm}}c@{\hspace{-0.2mm}}c@{}}r^{p_{0}} &  &  \\ & r^{p_1}&  \\ &  & r^{p_{2}}\end{array}\right).
\end{equation*}
Suppose we can split $g=g_{0}^{\vphantom{-1}}g_{1}^{-1}$, where $g_{0}^{\vphantom{-1}}$ is holomorphic in $ζ$, that is on $S_{0}$, and $g_{1}^{-1}$ is holomorphic in $ζ^{-1}$, that is on $S_{1}$. This may be achieved by the use of a Laurent series. We claim that then $R_{01}$ can be split as follows
\begin{align*}
R_{01} & = \underbrace{\left(\begin{array}{@{}c@{\hspace{-0.2mm}}c@{\hspace{-0.2mm}}c@{}}ζ^{-p_{0}} &  &  \\ & ζ^{-p_{1}} &  \\ &  & ζ^{-p_{2}}\end{array}\right)
\left(\begin{array}{cc} \hphantom{-}g_{0}^{\vphantom{-1}} & 0 \\ \rule{0cm}{5mm} -g_{0}^{\vphantom{-1}} χ & \skew{7}{\tilde}{A}_{0}^{\vphantom{-1}}\end{array}\right)
\left(\begin{array}{@{}c@{\hspace{-0.2mm}}c@{\hspace{-0.2mm}}c@{}}ζ^{p_{0}} &  &  \\ & ζ^{p_{1}} &  \\ &  & ζ^{p_{2}}\end{array}\right)}_{\eqqcolon M_{0}^{}(ζ)} \times  \\
 & \hspace{0.4cm} \underbrace{\left(\begin{array}{@{}c@{\hspace{-0.2mm}}c@{\hspace{-0.2mm}}c@{}}ζ^{-p_{0}} &  &  \\ & ζ^{-p_1} &  \\ &  & ζ^{-p_{2}}\end{array}\right)
 \left(\begin{array}{cc}g_{1}^{-1} & -g_{1}^{-1} χ^{\mathrm{t}} \\ \rule{0cm}{5mm} 0 & \hphantom{-}\skew{7}{\tilde}{A}_{1}^{-1}\end{array}\right)
\left(\begin{array}{@{}c@{\hspace{-0.2mm}}c@{\hspace{-0.2mm}}c@{}}(-ζ)^{p_{0}} &  &  \\ & (-ζ)^{p_{1}} &  \\ &  & (-ζ)^{p_{2}}\end{array}\right)}_{\eqqcolon M_{1}^{-1}(ζ)},
\end{align*}
where $\skew{7}{\tilde}{A}_{0}^{\vphantom{-1}}$ is holomorphic in $ζ$, that is on $S_{0}^{\vphantom{-1}}$, and lower triangular, and $\skew{7}{\tilde}{A}_{1}^{-1}$ is holomorphic in $ζ^{-1}$, that is on $S_{1}^{\vphantom{-1}}$, and upper triangular. All these steps generalize straight-forwardly to more than five dimensions. In the case of $n$ dimensions the range of integers is $p_0≥…≥p_{n-3}$. In the following paragraphs it does not simplify the notation if we restrict to $n=5$, therefore we will consider the more general setting.

The matrices $M_{0}$, $M_{1}$ are not necessarily holomorphic and non-singular on $S_{0}$ or $S_{1}$, respectively. Yet, we can expand $χ(w)$ and the off-diagonal entries of $\skew{7}{\tilde}{A}_{0}^{\vphantom{-1}}(w)$ around $w=z$ which yields\footnote{Remember $w = \frac{1}{2}r(ζ^{-1}-ζ)+z$.}
\begin{equation*}
χ(w) = ∑ _{k=0}^{\infty } \frac{χ^{(k)}(z)r^{k}}{2^{k}k!}(ζ^{-1}-ζ)^{k},
\end{equation*}
and similarly for the entries of $\skew{7}{\tilde}{A}_{0}^{\vphantom{-1}}(w)$. Here $χ^{(k)}$ denotes the $k^{\mathrm{th}}$ derivative. 

Denote by $J_{kl,0}$ the $(kl)$-entry of $\skew{7}{\tilde}{A}_{0}^{\vphantom{-1}}$ ($k,l=1,\dotsc ,n-3$). Now consider the $(kl)$-entry of $M_{0}$ where $k,l=0,\dotsc ,n-3$ and without loss of generality $k>l$. This entry gets multiplied by $ζ^{p_{k}-p_{l}}$, where $p_{k}-p_{l}\geq 0$ by the chosen order of the $p_{i}$ and because $M_{0}$ is lower triagonal. Hence, all the terms up to $\Ocal(r^{p_{k}-p_{l}})$ in the Taylor series for the $(kl)$-entry are indeed holomorphic on $S_{0}$ and the terms up to $\Ocal(r^{p_{k}-p_{l}-1})$ vanish at $ζ=0$. Therefore the lowest order contribution for $r\to 0$ for this entry is
\begin{equation}\label{eq:asymptentry}
\begin{split}
- & \frac{r^{p_{0}-p_{l}}}{2^{p_{0}-p_{l}}(p_{0}-p_{l})!}χ_{l}^{(p_{0}-p_{l})}(z)\cdot g_{0}, \quad \text{or}\\
& \frac{r^{p_{k}-p_{l}}}{2^{p_{k}-p_{l}}(p_{k}-p_{l})!}J_{kl,0}^{(p_{k}-p_{l})}(z)\cdot g_{0}\hphantom{,} \quad \text{for } k\geq 1. 
\end{split}
\end{equation}
Analogous arguments work for $M_{1}^{-1}$. Thus the obtained asymptotic behaviour as $r\to 0$ is
\begin{equation*}
J→\left(\begin{array}{@{}c@{\hspace{-0.2mm}}c@{\hspace{-0.2mm}}c@{}}r^{p_0} &  &  \\ & \ddots &  \\ &  & r^{p_{n-3}}\end{array}\right)
\underbrace{\left(\begin{array}{@{}cc@{}}g & -g Υ^{\mathrm{t}} \\-g Υ \rule{0cm}{5mm} & g Υ Υ^{\mathrm{t}} - \skew{7}{\tilde}{\Acal}\end{array}\right)
\left(\begin{array}{@{}c@{\hspace{-2mm}}c@{\hspace{-2mm}}c@{}}(-1)^{p_0} &  &  \\ & \ddots &  \\ &  & (-1)^{p_{n-3}}\end{array}\right)}_{\eqqcolon B}
\left(\begin{array}{@{}c@{\hspace{-0.2mm}}c@{\hspace{-0.2mm}}c@{}}r^{p_0} &  &  \\ & \ddots &  \\ &  & r^{p_{n-3}}\end{array}\right),
\end{equation*}
where 
\begin{equation*}
Υ=\left(\begin{array}{c}Υ^{(1)} \\ \vdots \\Υ^{(n-3)}\end{array}\right) \quad  \text{with } Υ^{(l)}=\frac{r^{p_{0}-p_{l}}}{2^{p_{0}-p_{l}}(p_{0}-p_{l})!}χ^{(p_{0}-p_{l})}(z),
\end{equation*}
and $\skew{7}{\tilde}{\Acal}=\skew{7}{\tilde}{\Acal}_{0}^{\vphantom{-1}}\skew{7}{\tilde}{\Acal}_{1}^{-1}$ with the entries of $\skew{7}{\tilde}{\Acal}_{0}^{\vphantom{-1}}$ as in \eqref{eq:asymptentry} and analogously for $\skew{7}{\tilde}{\Acal}_{1}^{-1}$. 

Note that all exponents of $r$ in $B$ are greater than 0. Thus on a part of the axis $r=0$ where $A$ is not singular, that is everywhere apart from the nuts, we see that boundedness of $J$ implies $p_{i}\geq 0$. 

It remains to justify the claim about the splitting of $R_{01}$. As \cite[Sec.~2.4]{Fletcher:1990aa} shows it is not hard to find a splitting in the four-dimensional case under the assumption that $g$ splits appropriately
\begin{align*}
&\underbrace{\left(\begin{array}{@{}cc@{}}
g(-1)^{p_{0}} & -gχ_{1}ζ^{p_{1}-p_{0}}(-1)^{p_{1}} \\ \rule{0cm}{7mm}
-gχ_{1}^{\vphantom{1}}ζ^{p_{0}-p_{1}}(-1)^{p_{0}} & (J_{00}^{\vphantom{1}}+gχ_{1}^{2})(-1)^{p_{1}}
\end{array}\right)}_{D} \\[0.1cm]
& \hspace{0.2cm}=
\underbrace{\left(\begin{array}{@{}cc@{}}
g_{0}^{\vphantom{1}}(-1)^{p_{0}} & 0 \\ \rule{0cm}{7mm}
-g_{0}χ_{1}ζ^{p_{0}-p_{1}}(-1)^{p_{0}} & J_{00}(-1)^{p_{1}}
\end{array}\right)}_{D_{0}^{\vphantom{-1}}}
\underbrace{\left(\begin{array}{@{}cc@{}}
g_{1}^{\vphantom{1}} & -g_{1}^{-1}χ_{1}^{\vphantom{1}}ζ^{p_{1}-p_{0}}(-1)^{p_{1}-p_{0}} \\ \rule{0cm}{7mm}
0 & 1
\end{array}\right)}_{D_{1}^{-1}}.
\end{align*}
To obtain a splitting in the five-dimensional case we make the following inductive ansatz
\begin{align*}
\left(\begin{array}{cc} D & B \\ \rule{0cm}{5mm} B^{\mathrm{t}} & C \end{array}\right) & =
\left(\begin{array}{cc} D_{0}^{\vphantom{-1}} & 0 \\ \rule{0cm}{5mm} B_{0}^{\mathrm{t}\vphantom{-1}} & C_{0}^{\vphantom{-1}} \end{array}\right)
\left(\begin{array}{cc} D_{1}^{-1} & B_{1}^{-1} \\\rule{0cm}{5mm}  0 & C_{1}^{-1} \end{array}\right)\\
& = \left(\begin{array}{cc} \hphantom{-}g_{0}^{\vphantom{-1}} & 0 \\ \rule{0cm}{5mm} -g_{0}^{\vphantom{-1}} χ & \skew{7}{\tilde}{A}_{0}^{\vphantom{-1}}\end{array}\right)
\left(\begin{array}{cc}g_{1}^{-1} & -g_{1}^{-1} χ^{\mathrm{t}} \\ \rule{0cm}{5mm} 0 & \hphantom{-}\skew{7}{\tilde}{A}_{1}^{-1}\end{array}\right)
\end{align*}
where $D$ is the upper ($2×2$)-block whose splitting $D=D_{0}^{\vphantom{-1}}D_{1}^{-1}$ we know. The indices nought and one bear the same meaning as above. Then $B_{0}^{\vphantom{-1}}$ and $B_{1}^{-1}$ are fixed by $D$, $D_{0}^{\vphantom{-1}}$, $D_{1}^{-1}$ and $B$. It is not hard to see that $B_{0}^{\vphantom{-1}}$, $B_{1}^{-1}$, $C_{0}^{\vphantom{-1}}$, $B_{1}^{-1}$ can be found such that they are holomorphic on the appropriate sphere. One has to ensure that certain entries of $\skew{7}{\tilde}{A}$ are non-zero, which can without loss of generality be assumed if we choose the adaptation to the right part of the axis. This inductive argument can then be continued to obtain splittings higher dimensions if needed. 

For $n=5$ we get for example
\begin{equation}
\begin{array}{ll}
B_{0}^{\vphantom{-1}} = \left(\begin{array}{c}-g_{0}^{\vphantom{1}}χ_{2}^{\vphantom{1}}ζ^{p_{1}-p-{2}}(-1)^{p_{0}} \\ \rule{0cm}{7mm} J_{01}ζ^{p_{1}-p_{2}}(-1)^{p_{1}}\end{array}\right), &
B_{1}^{-1} = \left(\begin{array}{c}-g_{1}^{-1}χ_{2}^{\vphantom{1}}ζ^{p_{2}-p_{0}}(-1)^{p_{2}-p_{0}} \\ \rule{0cm}{7mm} \dfrac{J_{01}}{J_{00}} ζ^{p_{2}-p_{1}}(-1)^{p_{2}-p_{1}}\end{array}\right) \\ \rule{0cm}{9mm}
C_{0}^{\vphantom{-1}} = -\dfrac{J_{01}^{2}}{J_{00}}(-1)^{p_{2}}+J_{11}(-1)^{p_{2}}, & C_{1}^{-1} = 1.
\end{array}
\end{equation}
Note that we assumed the metric to be analytic throughout the space-time (up to and including the axis), so that it is not necessary to split $J_{kl}$ or the twist potentials.

Using the freedom in choosing the splitting matrices as in Chapter~\ref{ch:bundles} so that $Q_{1}^{-1}(\infty )= \id$ and assuming that $\det P=1$ we obtain from the above splitting formula
\begin{equation*}
\det \left(\left(\begin{array}{@{}c@{\hspace{-0.2mm}}c@{\hspace{-0.2mm}}c@{}}ζ^{-p_{0}} &  &  \\ & \ddots &  \\ &  & ζ^{-p_{n-3}}\end{array}\right)P(w)\left(\begin{array}{@{}c@{\hspace{-0.2mm}}c@{\hspace{-0.2mm}}c@{}}(-ζ)^{p_{0}} &  &  \\ & \ddots &  \\ &  & (-ζ)^{p_{n-3}}\end{array}\right)\right) = (-1)^{p_{0}+\dotsc +p_{n-3}},
\end{equation*}
and thus
\begin{equation*}
-r^{2} = \det J = \det Q_{0}(0) = \left(-r^{2}\right)^{p_{0}+\dotsc +p_{n-3}}. 
\end{equation*}
So, if all $p_{i}$ are non-negative this forces
\begin{equation*}
p_{0}=1,\ p_{1}=0, \dotsc ,\ p_{n-3}=0. 
\end{equation*}
In this case the asymptotic formula simplifies to
\begin{equation*}
J=\left(\begin{array}{@{}c@{\hspace{0.2cm}}c@{\hspace{0.2cm}}c@{\hspace{0.2cm}}c@{}}r &  &  &  \\ & 1 &  &  \\ &  &  \ddots &  \\ &  &  & 1\end{array}\right)
\left(\begin{array}{@{}cc@{}}-g & -g Υ^{\mathrm{t}} \\ \rule{0cm}{5mm} \hphantom{-}g Υ & g Υ Υ^{\mathrm{t}} - \skew{7}{\tilde}{A}\end{array}\right)
\left(\begin{array}{@{}c@{\hspace{0.2cm}}c@{\hspace{0.2cm}}c@{\hspace{0.2cm}}c@{}}r &  &  &  \\ & 1 &  &  \\ &  &  \ddots &  \\ &  &  & 1\end{array}\right)
+ \text{ higher orders in } r,
\end{equation*}
where 
\begin{equation*}
Υ=\left(\begin{array}{c}Υ^{(1)} \\ \vdots \\Υ^{(1)}\end{array}\right) \quad  \text{with } Υ^{(1)}=\frac{r}{2}\left.\frac{\drm}{\drm w}\right|_{w=z}χ(w)
\end{equation*}
and $\skew{7}{\tilde}{A}$ is the same as in the definition of $J'$. 

If $\det J =1$ as for the Ernst potential we get back our analytic continuation formula
\begin{equation*}
J(r,z) = P(z) + \text{ higher orders in } r.
\end{equation*}

\chapter{Patching Matrix for Relevant Examples} \label{ch:Pexamples}

In the previous chapter we have discussed how to extend the twistor construction to five dimensions. Let us now look at some examples. For all the space-times we first need to know the metric in the ${\sigma}$-model form, that is we have to calculate $J(r,z)$. After that the Ernst potential, respectively the patching matrix, can be computed which mainly means determining the twist potentials on the axis. The easiest to start with is flat space in five dimensions. 

\section{Five-Dimensional Minkowski Space} \label{sec:exMinkrodstr}
The real five-dimensional Minkowski space is the manifold $\mathbb{R}^{5}$ with metric given in the standard basis as
\begin{equation*}
\drm s^{2} = -\drm x_{0}^{2}+\drm x_{1}^{2}+\drm x_{2}^{2}+\drm x_{3}^{2}+\drm x_{4}^{2}.
\end{equation*}
Along the lines of \cite[Sec.~4]{Hollands:2008fp} we define the Killing vector fields $X_{1}=∂_{\varphi}$, $X_{2}=∂_{\psi}$ as generating the rotations in the (12)-plane and the (34)-plane, that is 
\begin{equation*}
{\varphi} = \arctan \left(\frac{x_{1}}{x_{2}}\right), \quad {\psi} = \arctan \left(\frac{x_{3}}{x_{4}}\right).
\end{equation*}
To obtain the ${\sigma}$-model coordinates we define
\begin{equation*}
r=R_{1}^{\vphantom{1}}R_{2}^{\vphantom{1}}, \quad z=\frac{1}{2}(R_{1}^{2}-R_{2}^{2}),
\end{equation*}
where $R_{1}^{\vphantom{1}}=\sqrt{x_{1}^{2}+x_{2}^{2}}$ and $R_{2}^{\vphantom{1}}=\sqrt{x_{3}^{2}+x_{4}^{2}}$. This yields 
\begin{equation*}
\sqrt{r^{2}+z^{2}}=\frac{1}{2}(R_{1}^{2}+R_{2}^{2}),
\end{equation*}
thus we get the relations
\begin{equation*}
R_{1}^{2}= \sqrt{r^{2}+z^{2}}+z, \quad R_{2}^{2}= \sqrt{r^{2}+z^{2}}-z. 
\end{equation*}
Now by the definition of the symmetries we have $x_{1}=R_{1}\cos {\varphi}$, $x_{2}=R_{1}\sin {\varphi}$ so that
\begin{equation*}
\drm x_{1}^{2}+\drm x_{2}^{2}= \drm R_{1}^{2} + R_{1}^{2} \,\drm {\varphi}^{2} = \drm R_{1}^{2} + \left(\sqrt{r^{2}+z^{2}}+z\right) \drm {\varphi}^{2}, 
\end{equation*}
and similarly
\begin{equation*}
\drm x_{3}^{2}+\drm x_{4}^{2}= \drm R_{2}^{2} + R_{2}^{2} \,\drm {\psi}^{2} = \drm R_{2}^{2} + \left(\sqrt{r^{2}+z^{2}}-z\right) \drm {\psi}^{2},
\end{equation*}
whence
\begin{equation*}
\drm r^{2}+\drm z^{2} = 2\left(R_{1}^{2}+R_{2}^{2}\right)\left( \drm R_{1}^{2} + \drm R_{2}^{2}\right).
\end{equation*}
Hence,
\begin{equation*}
{\erm}^{2ν} = \frac{1}{2(R_{1}^{2}+R_{2}^{2})} = \frac{1}{2\sqrt{r^{2}+z^{2}}}.
\end{equation*}
Summing up, the metric in ${\sigma}$-model form is given by
\begin{align} \label{eq:JMink}
J(r,z) & =
\renewcommand\arraystretch{1.5}
\left(\begin{array}{ccc}-1 & 0 & 0 \\
\hphantom{-}0 & z+\sqrt{r^{2}+z^{2}} & 0 \\
\hphantom{-}0 & 0 & -z+\sqrt{r^{2}+z^{2}}
\end{array}\right)\\
\text{and} \quad {\erm}^{2ν} & = \frac{1}{2\sqrt{r^{2}+z^{2}}}. 
\end{align}
Since $\dim(\ker J(0,z))>1$ only for $z=0$, we can read off that the metric admits two semi-infinite rods, namely $\opint{-\infty }{0}$ and $\opint{0}{\infty }$. Because $J$ is diagonal ${\theta}_{I}\wedge \drm {\theta}_{I}=0$ as it contains an overall factor of the form $\drm y_{i}\wedge \drm y_{i}$. It follows that the twist potentials are constant on all of Minkowski space and thereby we obtain the patching matrix as 
\begin{equation} \label{eq:PMink}
P_{\pm}(z)=\diag \left(\mp \frac{1}{2z},-1,\pm  2z\right),
\end{equation}
where the upper sign combination is for $P$ adapted to $z>0$, and the lower one for $z<0$.

\section{Twist Potentials on the Axis}

As part of the algorithm for obtaining $P(z)$ from the metric we need to calculate the twist potentials just on the axis. Explicit expressions for twist potentials have been obtained for example in \cite{Tomizawa:2004aa, Tomizawa:2009aa}, but these are given only for the black ring and not in Weyl  coordinates which we need here. Therefore it is simpler to rederive some results, not only for completeness but also for providing a way of calculating the twist potentials on the axis for other space-times where they are not yet in the literature.

First we derive general formulae. Assume that the metric takes the form
\begin{align*}
\drm s^{2} & = J_{00} \,\drm t^{2} + 2 J_{01}\, \drm t \drm {\varphi} + 2 J_{02}\, \drm t \drm {\psi} + J_{11}\, \drm {\varphi} ^{2} + 2 J_{12}\,\drm {\varphi} \drm {\psi} \\
& \hspace{0.4cm} + J_{22}\, \drm {\psi}^{2} + \erm^{2{\nu}} \left(\drm r^{2}+\drm z^{2}\right),
\end{align*}
and rewrite it as 
\begin{align*}
\drm s^{2} & = -F^{2} \left(\drm t + {\omega}_{1}\, \drm {\varphi} + {\omega}_{2}\,\drm {\psi} \right)^{2} + G^{2} \left(\drm {\psi}+ {\Omega} \, \drm {\varphi}\right)^{2} \\ 
& \hspace{0.4cm} + H^{2}\, \drm {\varphi}^{2} + \erm^{2{\nu}} \left(\drm r^{2}+\drm z^{2}\right),
\end{align*}
with
\begin{align*}
& F^{2} = - J_{00}, \quad  -F^{2} {\omega}_{1} = J_{01}, \quad -F^{2} {\omega}_{2} = J_{02}, \\
- & F^{2}{\omega}_{2}^{2}+G^{2} = J_{22}^{\vphantom{1}}, \quad  -F^{2} {\omega}_{1} {\omega}_{2}+G^{2}{\Omega} = J_{12}, \\
- & F^{2}{\omega}_{1}^{2}+G^{2}{\Omega}^{2}+H^{2} = J_{11}^{\vphantom{1}}.
\end{align*}
The latter form has been chosen to facilitate calculating $P$ adapted to part of the axis where $z\to \infty $ and $\partial_{\varphi}=0$. In terms of the orthonormal frame
\begin{align*}
{\theta}^{0} & = F \left(\drm t + {\omega}_{1}\, \drm {\varphi} + {\omega}_{2}\,\drm {\psi} \right), \quad {\theta}^{1} = G \left(\drm {\psi}+ {\Omega} \, \drm {\varphi}\right), \\
{\theta}^{2} & = H \, \drm {\varphi}, \quad {\theta}^{3} = \erm ^{{\nu}} \, \drm r, \quad {\theta}^{4} = \erm ^{{\nu}} \, \drm z,
\end{align*}
the Killing 1-forms take the form
\begin{align*}
\frac{∂}{\partial t} & \to  T = - F\,{\theta}^{0} = - F^{2} \left(\drm t + {\omega}_{1}\, \drm {\varphi} + {\omega}_{2}\,\drm {\psi} \right),\\
\frac{∂}{\partial \psi} & \to  Ψ = G\,{\theta}^{1}-F{\omega}_{2}\,{\theta}^{0}.
\end{align*}
Using $\drm {\varphi} = H^{-1}\, {\theta}^{2}$, $\drm {\psi} = G^{-1}\,{\theta}^{1}-{\Omega}H^{-1}\,{\theta}^{2}$ this yields for the first twist potential
\begin{align*}
\drm {\chi}_{1} & = * \left(T\wedge Ψ\wedge \drm T\right) \\
 & = *\left(F^{3}G\,{\theta}^{0}\wedge {\theta}^{1}\wedge (\drm {\omega}_{1}\wedge \drm {\varphi}+\drm {\omega}_{2} \wedge  \drm {\psi})\right)\\
 & = *\left(F^{3}G\,{\theta}^{0}\wedge {\theta}^{1}\wedge \left(\drm {\omega}_{1}\wedge H^{-1}\,{\theta}^{2}+\drm {\omega}_{2} \wedge  (-{\Omega}H^{-1}\,{\theta}^{2})\right)\right) \hspace{1.4cm}\\
 & = - \frac{F^{3}G}{H} *\left({\theta}^{0}\wedge {\theta}^{1}\wedge {\theta}^{2}\wedge (\drm {\omega}_{1}-{\Omega} \,\drm {\omega}_{2} )\right),
\end{align*}
and for the second
\begin{align*}
 \drm {\chi}_{2} & = * \left(T\wedge Ψ\wedge \drm Ψ\right)\\
 & = *\left(-FG\,{\theta}^{0}\wedge {\theta}^{1}\wedge \drm Ψ\right)\\
 & = -*\left(FG\,{\theta}^{0}\wedge {\theta}^{1}\wedge \left(G^{2}\, \drm {\Omega} \wedge  \drm {\varphi} - F^{2}{\omega}_{2}(\drm {\omega}_{1}\wedge \drm {\varphi}+ \drm {\omega}_{2}\wedge \drm {\psi})\right)\right)\\
 & = \frac{FG}{H} *\left({\theta}^{0}\wedge {\theta}^{1}\wedge {\theta}^{2}\wedge (G^{2}\,\drm {\Omega}-F^{2}{\omega}_{2}\, \drm {\omega}_{1}-F^{2}{\omega}_{2}{\Omega} \,\drm {\omega}_{2} )\right).
\end{align*}
Since $J=J(r,z)$ all the functions depend only on $r$, $z$, hence so do ${\chi}_{i}$ and ${\omega}_{i}$. Then the total derivatives are $\drm {\chi}_{i} = \partial _{r} {\chi}_{i}\, \drm r+ \partial _{z} {\chi}_{i}\, \drm z$ and analogous for ${\omega}_{i}$. Furthermore, noting that $\drm r = \erm ^{-{\nu}} \, {\theta}^{3}$, $\drm z = \erm ^{-{\nu}} \, {\theta}^{4}$ the above equations read
\begin{align*}
\drm {\chi}_{1} & = \partial _{r} {\chi}_{1}\, \drm r+ \partial _{z} {\chi}_{2}\, \drm z \\
& = - \frac{F^{3}G}{H} *\Bigg({\theta}^{0}\wedge {\theta}^{1}\wedge {\theta}^{2}\wedge \Big((\partial _{r} {\omega}_{1}-{\Omega} \, \partial _{r} {\omega}_{2}) \erm^{-{\nu}}\, {\theta}^{3}\\
& \hspace{0.4cm} + (\partial _{z} {\omega}_{1}-{\Omega} \, \partial _{z} {\omega}_{2}) \erm^{-{\nu}}\, {\theta}^{4} \Big)\Bigg)\\
& = - {\epsilon} \, \frac{F^{3}G}{H} \Big((\partial _{r} {\omega}_{1}-{\Omega} \, \partial _{r} {\omega}_{2})\drm z-(\partial _{z} {\omega}_{1}-{\Omega} \, \partial _{z} {\omega}_{2}) \drm r\Big)\\
\Rightarrow  \partial _{z}{\chi}_{1} & = -{\epsilon} \, \frac{F^{3}G}{H}\Big(\partial _{r} {\omega}_{1}-{\Omega} \, \partial _{r} {\omega}_{2}\Big),
\end{align*}
and
\begin{align*}
\drm {\chi}_{2} & = \partial _{r} {\chi}_{2}\, \drm r+ \partial _{z} {\chi}_{2}\, \drm z \\
& = \frac{FG}{H} *\Bigg({\theta}^{0}\wedge {\theta}^{1}\wedge {\theta}^{2}\wedge \Big(G^{2} \partial _{r} {\Omega}-F^{2}{\omega}_{2} \partial _{r} {\omega}_{1}-F^{2}{\omega}_{2}{\Omega} \partial _{r} {\omega}_{2} ) \erm^{-{\nu}}\,{\theta}^{3}\\
& \hphantom{=} + (G^{2} \partial _{z} {\Omega}-F^{2}{\omega}_{2} \partial _{z} {\omega}_{1}-F^{2}{\omega}_{2}{\Omega} \partial _{z} {\omega}_{2} \Big) \erm^{-{\nu}}\,{\theta}^{4}\Bigg)\\
& = {\epsilon}\, \frac{FG}{H} \Bigg(\Big(G^{2} \partial _{r} {\Omega}-F^{2}{\omega}_{2} \partial _{r} {\omega}_{1}-F^{2}{\omega}_{2}{\Omega} \partial _{r} {\omega}_{2}\Big) \drm z \\
& \hphantom{=} - \Big(G^{2} \partial _{r} {\Omega}-F^{2}{\omega}_{2} \partial _{r} {\omega}_{1}-F^{2}{\omega}_{2}{\Omega} \partial _{r} {\omega}_{2}\Big) \drm r \Bigg) \\
\Rightarrow  \partial _{z}{\chi}_{2} & = {\epsilon}\, \frac{FG}{H} \left(G^{2} \partial _{r} {\Omega}-F^{2}{\omega}_{2} \partial _{r} {\omega}_{1}-F^{2}{\omega}_{2}{\Omega} \partial _{r} {\omega}_{2}\right),
\end{align*}
with ${\epsilon}\in \{\pm 1\}$ only depending on the chosen orientation of our orthonormal tetrad. Here comes the point where we need to specify our metric functions in order to calculate the twist potentials. First we are going to look at the asymptotics since they will give us important information later. 

\section{Asymptotic Minkowski Space-Times} \label{sec:exasympt}
For a stationary and axisymmetric space-time in five dimensions we learn from \cite[Sec. IV.C]{Harmark:2004rm} not only that it asymptotes Minkowski space but also how. In ${\sigma}$-model form $\big($for $\sqrt{r^{2}+z^{2}}\to \infty $ and $\frac{z}{\sqrt{r^{2}+z^{2}}}$ finite$\big)$ the metric coefficients converge as follows 
\begin{equation} \label{eq:asymptmetr}
\begin{split}
J_{00} & = -1 + \frac{4M}{3{\pi}} \frac{1}{\sqrt{r^{2}+z^{2}}}+\Ocal\left((r^{2}+z^{2})^{-1}\right),\\
J_{01} & = - \frac{L_{1}}{{\pi}} \frac{\sqrt{r^{2}+z^{2}}-z}{r^{2}+z^{2}}+\Ocal\left((r^{2}+z^{2})^{-1}\right),\\
J_{02} & = - \frac{L_{2}}{{\pi}} \frac{\sqrt{r^{2}+z^{2}}+z}{r^{2}+z^{2}}+\Ocal\left((r^{2}+z^{2})^{-1}\right), \\ 
J_{11} & =  \left(\sqrt{r^{2}+z^{2}}-z\right)\left[1+\frac{2}{3{\pi}}\frac{M+{\eta}}{\sqrt{r^{2}+z^{2}}}+\Ocal\left((r^{2}+z^{2})^{-1}\right)\right],\\
J_{12} & = {\zeta} \frac{r^{2}}{\left(r^{2}+z^{2}\right)^{\frac{3}{2}}}+\Ocal\left((r^{2}+z^{2})^{-1}\right),\\
J_{22} & = \left(\sqrt{r^{2}+z^{2}}+z\right)\left[1+\frac{2}{3{\pi}}\frac{M-{\eta}}{\sqrt{r^{2}+z^{2}}}+\Ocal\left((r^{2}+z^{2})^{-1}\right)\right],\\
{\erm}^{2ν} & = \frac{1}{2\sqrt{r^{2}+z^{2}}}+\Ocal\left((r^{2}+z^{2})^{-1}\right).
\end{split}
\end{equation}
Furthermore, $M$ is the mass of the space-time configuration and $L_{1}$, $L_{2}$ are the angular momenta; ${\zeta}$ and ${\eta}$ are constant where ${\eta}$ is not gauge-invariant, that is it changes under $z\to z+\text{const.}$, unlike ${\zeta}$; the periodicity of ${\varphi}$ and ${\psi}$ is assumed to be $2{\pi}$ (the case when it is $2{\pi}{\varepsilon}$ is given in \cite[Sec. IV.C]{Harmark:2004rm} as well).

Our aim is to integrate $\partial _{z}{\chi}_{i}$ along the $z$-axis, therefore we first need the metric coefficients in the limit $z\to +\infty $ for fixed $r$. In this case $\sqrt{r^{2}+z^{2}}\to z\left(1+\frac{r^{2}}{2z^{2}}\right)$ which yields for the leading order terms in $z$
\begin{align*}
\renewcommand\arraystretch{2.5}
\begin{array}{lll}
J_{00} \sim  -1+ \dfrac{4M}{3{\pi}}\dfrac{1}{z}, & J_{01} \sim  -\dfrac{L_{1}}{2{\pi}}\dfrac{r^{2}}{z^{3}}, & J_{02} \sim  - \dfrac{2L_{2}}{{\pi}} \dfrac{1}{z}, \\
J_{11} \sim  \dfrac{r^{2}}{2z}\left(1+\dfrac{2}{3{\pi}}\dfrac{M+{\eta}}{z}\right), & J_{12} \sim  {\zeta} \dfrac{r^{2}}{z^{3}}, & J_{22} \sim  2z \left(1+\dfrac{2}{3{\pi}}\dfrac{M-{\eta}}{z}\right);
\end{array}
\end{align*}
therefore
\begin{align*}
F^{2} & \sim  1- \frac{4M}{3{\pi}}\frac{1}{z}, \quad F^{2}{\omega}_{1} \sim  -\frac{L_{1}}{2{\pi}}\frac{r^{2}}{z^{3}},\quad F^{2}{\omega}_{2} \sim  - \frac{2L_{2}}{{\pi}} \frac{1}{z}, \\
G^{2} & = J_{22}+F^{2}{\omega}_{2} \sim  2z +\frac{4}{3{\pi}}(M-{\eta}), \\ 
G^{2}{\Omega} & = J_{12}+F^{2}{\omega}_{1}{\omega}_{2} \sim  \frac{r^{2}}{z^{3}}\left({\zeta}+\frac{3L_{1}L_{2}}{{\pi}(3{\pi}z-4M)}\right), \\
H^{2} & = J_{11}^{\vphantom{1}}+F^{2}{\omega}_{1}^{2}-G^{2}{\Omega}^{2} \sim  \frac{r^{2}}{2z}\left(1+\frac{2}{3{\pi}}\frac{M+{\eta}}{z}\right).
\end{align*}
We observe that the leading order terms in $z$ of $F$, ${\omega}_{2}$, $G$ are $\Ocal(1)$ and of ${\omega}_{1}$, ${\Omega}$, $H^{2}$ are $\Ocal(r^{2})$. So, the terms containing ${\Omega}$ can for $r\to 0$ be neglected.\footnote{Taking two limits successively is not ambiguous in this case, since we remember that 
\begin{equation*}
\lim_{y\to y_{0}}\lim_{x\to x_{0}} f(x,y) = \lim_{x\to x_{0}}\lim_{y\to y_{0}} f(x,y)
\end{equation*}
if all the limits exist and if at least one of the limits converges uniformly keeping the second variable fixed. The above functions meet these conditions.} Furthermore we have
\begin{equation*}
\frac{H^{2}}{r^{2}} \xrightarrow[]{r\to 0\,} \frac{1}{2z} \left(1+\frac{2}{3{\pi}}\frac{M+{\eta}}{z}\right),\quad \frac{1}{r} \partial _{r}{\omega}_{1} = \frac{3L_{1}}{z^{2}}\frac{1}{3{\pi}z-4M}.
\end{equation*}
Then
\begin{align*}
\partial _{z}{\chi}_{1} & = -{\epsilon} F^{3}G \frac{r}{H} \frac{\partial _{r}{\omega}_{1}}{r} \\
& \xrightarrow[]{r\to 0} - {\epsilon} \left(\frac{3{\pi}z-4M}{3{\pi}z}\right)^{\frac{3}{2}} \sqrt{2z}\sqrt{2z}\,\frac{3L_{1}}{z^{2}}\frac{1}{3{\pi}z-4M} \\
& = - \frac{2{\epsilon}L_{1}}{{\pi}} \left(\frac{z-\frac{4M}{3{\pi}}}{z^{5}}\right)^{\frac{1}{2}}\\
& \sim  - \frac{2{\epsilon}L_{1}}{{\pi}} \frac{1}{z^{2}},
\end{align*}
where in the last step again only the leading order term in $z$ was kept. Integration is easy and gives
\begin{equation*}
{\chi}_{1} \sim \frac{2{\epsilon}L_{1}}{{\pi}z}.
\end{equation*}
Using
\begin{equation*}
\frac{G^{2}\partial _{r}{\Omega}}{r} \xrightarrow[]{r\to 0} \frac{2}{z^{3}}\left({\zeta}+\frac{3L_{1}L_{2}}{{\pi}(3{\pi}z-4M)}\right),
\end{equation*}
we can play the same game for ${\chi}_{2}$ and get
\begin{align*}
\partial _{z}{\chi}_{2} & = {\epsilon}FG \frac{r}{H} \frac{1}{r}\left(G^{2} \partial _{r} {\Omega}-F^{2}{\omega}_{2} \partial _{r} {\omega}_{1}-F^{2}{\omega}_{2}{\Omega} \partial _{r} {\omega}_{2}\right) \\
& \xrightarrow[]{r\to 0} {\epsilon} \left(\frac{3{\pi}z-4M}{3{\pi}z}\right)^{\frac{1}{2}}\sqrt{2z}\sqrt{2z}\Bigg(\frac{2}{z^{3}}\left({\zeta}+\frac{3L_{1}L_{2}}{{\pi}(3{\pi}z-4M)}\right) \\
& \hphantom{\xrightarrow[]{r\to 0}}-\frac{2L_{2}}{{\pi}}\frac{1}{z}\cdot \frac{3L_{1}}{z^{2}}\frac{1}{3{\pi}z-4M}\Bigg) \\
& = 4{\epsilon} {\zeta} \left(\frac{z-\frac{4M}{3{\pi}}}{z^{5}}\right)^{\frac{1}{2}} \sim  \frac{4{\epsilon}{\zeta}}{z}, 
\end{align*}
which can to leading order in $z$ be integrated to
\begin{equation*}
{\chi}_{2} \sim - \frac{4{\epsilon}{\zeta}}{z}.
\end{equation*}
Now we can go straight on to calculate $P$ and obtain again by only considering the leading order in $z$ the following
\begin{equation*}
\frac{1}{g} = \det \skew{7}{\tilde}{A} \xrightarrow[]{r\to 0} \left(-1+\frac{4M}{3{\pi}z}\right)2z\left(1+\frac{2(M-{\eta})}{3{\pi}z}\right) ~ = - \frac{6{\pi}z-4(M+{\eta})}{3{\pi}}
\end{equation*}
\begin{align*}
\Rightarrow \  & g \sim -\frac{1}{2z} - \frac{M+{\eta}}{3{\pi}z^{2}} \quad \text{for large } z \\
\Rightarrow \  & {\chi}_{1} g = \left(-\frac{1}{2z} - \frac{M+{\eta}}{3{\pi}z^{2}}\right) \frac{2{\epsilon}L_{1}}{{\pi}z} \sim -\frac{{\epsilon}L_{1}}{{\pi}z^{2}},\\
& {\chi}_{2} g = \left(-\frac{1}{2z} - \frac{M+{\eta}}{3{\pi}z^{2}}\right)\left(- \frac{4{\epsilon}{\zeta}}{z}\right) \sim \frac{2{\epsilon}{\zeta}}{z^{2}},\\
& J_{00}^{\vphantom{1}} + g {\chi}_{1}^{2} \sim J_{00}^{\vphantom{1}} = -1+ \frac{4M}{3{\pi}z}, \quad J_{02}^{\vphantom{1}}+g{\chi}_{1}^{\vphantom{1}}{\chi}_{2}^{\vphantom{1}} \sim J_{02}^{\vphantom{1}} = -\frac{2L_{2}}{{\pi}z}, \\
& J_{22}^{\vphantom{1}} + g {\chi}_{2}^{2} \sim J_{22}^{\vphantom{1}} = 2z + \frac{4(M-{\eta})}{3{\pi}}
\end{align*}
So, finally we obtain the patching matrix to leading order in $z$ beyond \eqref{eq:PMink} as
\begin{equation} \label{eq:asymptPtop}
\renewcommand\arraystretch{2.5}
P_{+} = \left(\begin{array}{ccc}-\dfrac{1}{2z} - \dfrac{M+{\eta}}{3{\pi}z^{2}} & \dfrac{{\epsilon}L_{1}}{{\pi}z^{2}} & -\dfrac{2{\epsilon}{\zeta}}{z^{2}} \\\dfrac{{\epsilon}L_{1}}{{\pi}z^{2}} & -1+ \dfrac{4M}{3{\pi}z} & -\dfrac{2L_{2}}{{\pi}z} \\-\dfrac{2{\epsilon}{\zeta}}{z^{2}} & -\dfrac{2L_{2}}{{\pi}z} & 2z + \dfrac{4(M-{\eta})}{3{\pi}}\end{array}\right).
\end{equation}
The index indicates that the patching matrix is adapted to the top asymptotic end. The adaptation $P_{-}$ to the bottom asymptotic end, that is the one which extends to $z\to -\infty $, is obtained by swapping ${\varphi}$ and ${\psi}$ in their roles. This leads to $z\mapsto -z$, $L_{1}\leftrightarrow L_{2}$. Furthermore, one has to check what happens with ${\zeta}$ and ${\eta}$ in this case. From \cite[Eq.~(5.18)]{Harmark:2004rm} we see that ${\zeta}\mapsto {\zeta}$ and ${\eta}\mapsto -{\eta}$ for the Myers-Perry solution. But all asymptotically flat space-times have the same fall off up to the order \eqref{eq:asymptmetr}, so this behaviour must be generic. For the ease of reference later on we will include $P_{-}$ again explicitly
\begin{equation}\label{eq:asymptPbot}
\renewcommand\arraystretch{2.5}
P_{-} = \left(\begin{array}{ccc}\dfrac{1}{2z} - \dfrac{M-{\eta}}{3{\pi}z^{2}} & -\dfrac{{\epsilon}L_{2}}{{\pi}z^{2}} & -\dfrac{2{\epsilon}{\zeta}}{z^{2}} \\-\dfrac{{\epsilon}L_{2}}{{\pi}z^{2}} & -1- \dfrac{4M}{3{\pi}z} & \dfrac{2L_{1}}{{\pi}z} \\-\dfrac{2{\epsilon}{\zeta}}{z^{2}} & \dfrac{2L_{1}}{{\pi}z} & -2z + \dfrac{4(M+{\eta})}{3{\pi}}\end{array}\right).
\end{equation}

The Myers-Perry solution, which we will study next, is the five-dimensional pendant of the Kerr solutions, that is it describes a five-dimensional spinning black hole. 

\section{Five-Dimensional Myers-Perry Solution} \label{sec:PMP}

It was in 1986 that the Schwarzschild solution in dimension greater than four was extended from the static to the stationary case by Myers and Perry \citet{Myers:1986aa}. The calculations in the first part of this example up to the expression for $J(r,z)$ is based on \citet{Harmark:2004rm}. Myers-Perry space-time is asymptotically flat and has horizon topology $S^{3}$. Its metric is given by
\begin{equation} \label{eq:MPmetric}
\begin{split}
\drm s^{2} & = -\drm t^{2} + \frac{{\rho}_{0}^{2}}{{\Sigma}}\left[\drm t - a_{1} \sin^{2}{\theta} \,\drm {\varphi} - a_{2} \cos^{2} {\theta} \,\drm {\psi}\right]^{2} \\
& \hspace{0.4cm} + ({\rho}^{2}+a_{1}^{2})\sin^{2}{\theta} \,\drm {\varphi}^{2} + ({\rho}^{2}+a_{2}^{2})\cos^{2}{\theta} \,\drm {\psi}^{2} \\
& \hspace{0.4cm} + \frac{{\Sigma}}{{\Delta}} \,\drm{\rho}^{2} + {\Sigma} \,\drm {\theta}^{2},
\end{split}
\end{equation}
where
\begin{equation} \label{eq:MPnot}
\begin{split}
{\Delta} & = {\rho}^{2} \left(1+\frac{a_{1}^{2}}{{\rho}^{2}}\right)\left(1+\frac{a_{2}^{2}}{{\rho}^{2}}\right) - {\rho}_{0}^{2}, \\
{\Sigma} & = {\rho}^{2} + a_{1}^{2} \cos^{2}{\theta} + a_{2}^{2} \sin^{2}{\theta},
\end{split}
\end{equation}
and the coordinate ranges are
\begin{equation*}
t\in \mathbb{R},\quad {\varphi},{\psi} \in  \left[0,2{\pi}\right), \quad {\theta} \in  [0,{\pi}].
\end{equation*}
A straight forward computation shows
\begin{equation*}
\det J = - \frac{1}{4} {\rho}^{2} {\Delta} \sin^{2} 2{\theta}, 
\end{equation*}
hence
\begin{equation*}
r = \frac{1}{2} {\rho} \sqrt{{\Delta}} \sin 2{\theta}.
\end{equation*}
The second coordinate $z$ is obtained via the relation
\begin{equation} \label{eq:zMP}
{\erm}^{2ν} \left(\drm r^{2}+ \drm z^{2}\right) = \frac{{\Sigma}}{{\Delta}} \,\drm{\rho}^{2} + {\Sigma} \,\drm {\theta}^{2}
\end{equation}
as
\begin{equation*}
z = \frac{1}{2} {\rho}^{2} \left(1+\frac{a_{1}^{2}+a_{2}^{2}-{\rho}_{0}^{2}}{2{\rho}^{2}}\right) \cos 2{\theta},
\end{equation*}
which can be verified either by direct substitution into \eqref{eq:zMP} or by a more general approach as sketched in \cite[App.~H]{Harmark:2004rm}. This determines the metric in ${\sigma}$-model coordinates. However, expressing it in prolate spherical coordinates $x$, $y$ is more convenient. They are defined by
\begin{equation*}
r={\alpha} \sqrt{(x^{2}-1)(1-y^{2})}, \quad z = {\alpha} xy,
\end{equation*}
where ${\alpha}$ is a constant and the ranges are taken to be $x\geq 1$, $-1\leq y\leq 1$. With the ansatz $x=x({\rho})$, $y=y({\theta})$ this is solved by 
\begin{equation} \label{eq:MPnutpar}
\begin{split}
x & = \frac{2 {\rho}^{2}+a_{1}^{2}+a_{2}^{2}-{\rho}_{0}^{2}}{\sqrt{\left({\rho}_{0}^{2}-a_{1}^{2}-a_{2}^{2}\right)^{2}-4a_{1}^{2}a_{2}^{2}}}, \quad y = \cos 2{\theta},\\ 
{\alpha} & = \frac{1}{4}\sqrt{\left({\rho}_{0}^{2}-a_{1}^{2}-a_{2}^{2}\right)^{2}-4a_{1}^{2}a_{2}^{2}}.
\end{split}
\end{equation}
In terms of $r$, $z$ it can be shown (see \cite[App.~G]{Harmark:2004rm}) that this is
\begin{equation} \label{eq:prolcoord}
x=\frac{R_{+}+R_{-}}{2{\alpha}}, \quad y=\frac{R_{+}-R_{-}}{2{\alpha}},
\end{equation}
where $R_{\pm}=\sqrt{r^{2}+(z \pm {\alpha})^{2}}$. In terms of prolate spherical coordinates the $J$-matrix and ${\erm}^{2ν}$ are given by
\begin{equation} \label{eq:MPjmatrix}
\renewcommand{\arraystretch}{2.5}
\begin{array}{l}
J_{00}  = -\displaystyle \frac{4{\alpha}x+(a_{1}^{2}-a_{2}^{2})y-{\rho}_{0}^{2}}{4{\alpha}x+(a_{1}^{2}-a_{2}^{2})y+{\rho}_{0}^{2}}, \quad 
J_{01}  = -\displaystyle \frac{a_{1}^{\vphantom{1}}{\rho}_{0}^{2}(1-y)}{4{\alpha}x+(a_{1}^{2}-a_{2}^{2})y+{\rho}_{0}^{2}}, \\
J_{02}  = -\displaystyle \frac{a_{2}^{\vphantom{1}}{\rho}_{0}^{2}(1+y)}{4{\alpha}x+(a_{1}^{2}-a_{2}^{2})y+{\rho}_{0}^{2}}, \quad 
J_{12}  = -\displaystyle \frac{1}{2}\frac{a_{1}^{\vphantom{1}}a_{2}^{\vphantom{1}}{\rho}_{0}^{2}(1-y)^{2}}{4{\alpha}x+(a_{1}^{2}-a_{2}^{2})y+{\rho}_{0}^{2}}, \\
J_{11}  = \displaystyle \frac{1-y}{4} \left[4{\alpha}x+{\rho}_{0}^{2}+a_{1}^{2}-a_{2}^{2}+\frac{2a_{1}^{2}{\rho}_{0}^{2}(1-y)}{4{\alpha}x+(a_{1}^{2}-a_{2}^{2})y+{\rho}_{0}^{2}}\right], \\
J_{22} = \displaystyle \frac{1+y}{4} \left[4{\alpha}x+{\rho}_{0}^{2}-a_{1}^{2}+a_{2}^{2}+\frac{2a_{2}^{2}{\rho}_{0}^{2}(1+y)}{4{\alpha}x+(a_{1}^{2}-a_{2}^{2})y+{\rho}_{0}^{2}}\right], \\
{\erm}^{2ν} = -\displaystyle \frac{4{\alpha}x+(a_{1}^{2}-a_{2}^{2})y+{\rho}_{0}^{2}}{8{\alpha}^{2}(x^{2}-y^{2})}.
\end{array}
\end{equation}
Substituting \eqref{eq:prolcoord} yields $J(r,z)$. 

The rod structure consists of three components $(-\infty ,{\alpha})$, $(-{\alpha},{\alpha})$, $({\alpha},\infty )$ as we will see from the explicit expressions for the rod vectors below. First note that $r=0$ implies $R_{\pm} = |z\pm{\alpha}|$. 
\begin{enumerate}[(1)]
\item If $z$ lies in the semi-infinite spacelike rod $({\alpha},\infty )$, we have $x=\frac{z}{{\alpha}}$, $y=1$, thus $J_{k1}=0$ for $k=0,1,2$. Therefore the rod vector is $∂_{\varphi}$.
\item If $z$ lies in the finite timelike rod $(-{\alpha},{\alpha})$, we have $x=1$, $y=\frac{z}{{\alpha}}$. Thus the kernel of $J$ is spanned by the vector
\begin{equation*}
\left(\begin{array}{ccc}1 & {\Gamma}_{1} & {\Gamma}_{2}\end{array}\right)^{\mathrm{t}},
\end{equation*}
where ${\Gamma}_{1,2}$ are the angular velocities
\begin{equation*}
{\Gamma}_{1}= \frac{{\rho}_{0}^{2}+a_{1}^{2}-a_{2}^{2}-4{\alpha}}{2a_{1}^{\vphantom{1}}{\rho}_{0}^{2}}, \quad {\Gamma}_{2}= \frac{{\rho}_{0}^{2}-a_{1}^{2}+a_{2}^{2}-4{\alpha}}{2a_{2}^{\vphantom{1}}{\rho}_{0}^{2}}.
\end{equation*}
It can be shown that this rod corresponds to an event horizon with topology $S^{3}$ (a brief reasoning can be found in \cite[proof of Prop.~2 in Sec.~3]{Hollands:2008fp}).
\item If $z$ lies in the semi-infinite spacelike rod $(-\infty ,-{\alpha})$, we have $x=-\frac{z}{{\alpha}}$, $y=-1$, thus $J_{k2}=0$ for $k=0,1,2$. Therefore the rod vector is $∂_{\psi}$.
\end{enumerate}
From the given eigenvectors it becomes clear that $\dim \ker J (0,\pm {\alpha})>1$. The conserved Komar quantities are
\begin{equation} \label{eq:MPmassmom}
M = \frac{3{\pi}}{8} {\rho}_{0}^{2}, \quad L_{1}^{\vphantom{1}} = \frac{3{\pi}}{8} a_{1}^{\vphantom{1}} {\rho}_{0}^{2}, \quad L_{2}^{\vphantom{1}} = \frac{3{\pi}}{8} a_{2}^{\vphantom{1}} {\rho}_{0}^{2}.
\end{equation}

Having done this preparation the patching matrix can be obtained in the same way as before for the asymptotic metric. The limit $r\to 0$ can here be regarded as $\sin {\theta} \to  0$, then with ${\Sigma}\to {\rho}^{2}+a_{1}^{2}$ we obtain
\begin{align*}
F^{2} & \to  \frac{{\rho}^{2}+a_{1}^{2}-{\rho}_{0}^{2}}{{\rho}^{2}+a_{1}^{2}}, \quad F^{2}{\omega}_{1} \to  -\frac{a_{1}^{\vphantom{1}}{\rho}_{0}^{2}}{{\Sigma}}\sin^{2}{\theta},\quad F^{2}{\omega}_{2} \to  - \frac{a_{2}^{\vphantom{1}}{\rho}_{0}^{2}}{{\Sigma}} \cos^{2}{\theta}, \\
G^{2} & \to  \left({\rho}^{2}+a_{2}^{2}\right)\cos^{2}{\theta}+\frac{{\rho}_{0}^{2}a_{2}^{2}}{{\Sigma}}\cos^{4}{\theta}+F^{2}{\omega}_{2}^{2}, \\ 
G^{2}{\Omega} & \to  \frac{a_{1}a_{2}}{{\Sigma}} {\rho}_{0}^{2}\sin^{2}{\theta} \cos^{2}{\theta}+F^{2}{\omega}_{1}^{\vphantom{1}}{\omega}_{2}^{\vphantom{1}}\\
H^{2} & \to  ({\rho}^{2}+a_{1}^{2})\sin^{2}{\theta}+\frac{a_{2}^{\vphantom{1}}{\rho}_{0}^{2}}{{\Sigma}}\sin^{4}{\theta}-G^{2}{\Omega}^{2}+F^{2}{\omega}_{1}^{2}.
\end{align*}
We observe that $F$, ${\omega}_{2}$, $G$ are $\Ocal(1)$ and ${\omega}_{1}$, ${\Omega}$, $H^{2}$ are $\Ocal(\sin^{2}{\theta})$. So, the terms containing ${\Omega}$ can for ${\theta}\to 0$ be neglected. Furthermore we have
\begin{equation*}
\frac{H^{2}}{\sin^{2}{\theta}} \xrightarrow[]{{\theta}\to 0\,} {\rho}^{2}+a_{1}^{2},
\end{equation*}
and, using ${\omega}_{1}=\dfrac{a_{1}^{\vphantom{1}}{\rho}_{0}^{2}\sin^{2}{\theta}}{{\Sigma}-{\rho}_{0}^{2}}$, also
\begin{equation*}
\frac{\partial _{{\theta}}\,{\omega}_{1}}{\sin {\theta}} = 2a_{1}^{\vphantom{1}} {\rho}_{0}^{2} \cos {\theta} \cdot  \frac{({\Sigma}-{\rho}_{0}^{2})-\sin^{2}{\theta}(a_{2}^{2}-a_{1}^{2})}{\left({\Sigma}-{\rho}_{0}^{2}\right)^{2}} \to  \frac{2 a_{1}^{\vphantom{1}}{\rho}_{0}^{2}}{{\rho}^{2}+a_{1}^{2}-{\rho}_{0}^{2}}.
\end{equation*}
As a last preparation we rewrite 
\begin{align*}
G^{2} & \to  {\rho}^{2}+a_{2}^{2}+\frac{{\rho}_{0}^{2}a_{2}^{2}}{{\rho}^{2}+a_{1}^{2}}+\left(\frac{{\rho}^{2}+a_{1}^{2}-{\rho}_{0}^{2}}{{\rho}^{2}+a_{1}^{2}}\right)^{-1}\cdot \frac{a_{2}^{2}{\rho}_{0}^{4}}{({\rho}^{2}+a_{1}^{2})} \\
& = {\rho}^{2}+a_{2}^{2}+\frac{{\rho}_{0}^{2}a_{2}^{2}}{{\rho}^{2}+a_{1}^{2}} + \frac{a_{2}^{2}{\rho}_{0}^{4}}{({\rho}^{2}+a_{1}^{2})({\rho}^{2}+a_{1}^{2}-{\rho}_{0}^{2})}\\
& = \frac{({\rho}^{2}+a_{2}^{2})({\rho}^{2}+a_{1}^{2}-{\rho}_{0}^{2})+{\rho}_{0}^{2}a_{2}^{2}}{{\rho}^{2}+a_{1}^{2}-{\rho}_{0}^{2}} = \frac{{\rho}^{2}{\Delta}}{{\rho}^{2}+a_{1}^{2}-{\rho}_{0}^{2}}
\end{align*}
Finally, this enables us to calculate
\begin{align*}
\partial _{{\rho}}{\chi}_{1} & \xrightarrow[]{{\theta}\to 0\,} \frac{{\epsilon} F^{3}G}{\sqrt{{\Delta}}}\cdot \frac{\sin {\theta}}{H}\cdot \frac{\partial _{{\theta}}{\omega}_{1}}{\sin {\theta}} \\
& = \frac{{\epsilon}}{\sqrt{{\Delta}}} \left(\frac{{\rho}^{2}+a_{1}^{2}-{\rho}_{0}^{2}}{{\rho}^{2}-a_{1}^{2}}\right)^{\frac{3}{2}} \left(\frac{{\rho}^{2}{\Delta}}{{\rho}^{2}+a_{1}^{2}-{\rho}_{0}^{2}}\right)^{\frac{1}{2}} ({\rho}^{2}+a_{1}^{2})^{-\frac{1}{2}} \cdot \frac{2a_{1}^{\vphantom{1}}{\rho}_{0}^{2}}{{\rho}^{2}+a_{1}^{2}-{\rho}_{0}^{2}}\\
& = \frac{2{\epsilon}{\rho}a_{1}^{\vphantom{1}}{\rho}_{0}^{2}}{({\rho}^{2}+a_{1}^{2})^{2}},
\end{align*}
which can easily be integrated to
\begin{equation*}
\left.{\chi}_{1}\right|_{{\theta}=0} = -\frac{{\epsilon}{\rho}_{0}^{2}a_{1}^{\vphantom{1}}}{{\rho}^{2}+a_{1}^{2}}.
\end{equation*}
For the second twist potential we need
\begin{equation*}
{\Omega} = \sin^{2} {\theta} \left(\frac{a_{1}^{\vphantom{1}}a_{2}^{\vphantom{1}}{\rho}_{0}^{2}\cos^{2}{\theta}}{G^{2}{\Sigma}}+\frac{a_{1}^{\vphantom{1}}{\rho}_{0}^{2}{\omega}_{2}^{\vphantom{1}}}{G^{2}{\Sigma}}\right),
\end{equation*}
which implies
\begin{equation*}
\frac{\partial _{{\theta}}{\Omega}}{\sin {\theta}} \to  2 \left(\frac{a_{1}^{\vphantom{1}}a_{2}^{\vphantom{1}}{\rho}_{0}^{2}}{G^{2}{\Sigma}}+\frac{a_{1}^{\vphantom{1}}{\rho}_{0}^{2}}{G^{2}{\Sigma}}\cdot \frac{a_{2}^{\vphantom{1}}{\rho}_{0}^{2}}{F^{2}{\Sigma}}\right),
\end{equation*}
so that
\begin{align*}
- \partial _{{\rho}} {\chi}_{2} & = {\epsilon} \frac{FG}{\sqrt{{\Delta}}} \left(\frac{G^{2}}{H}\partial _{{\theta}}{\Omega}-\frac{F^{2}{\omega}_{2}}{H}\partial _{{\theta}}{\omega}_{1}+\frac{F^{2}{\omega}_{2}}{H}{\Omega}\partial _{{\theta}}{\omega}_{2}\right) \\
&  \sim \frac{{\epsilon} FG}{\sqrt{{\Delta}}}\cdot \frac{\sin {\theta}}{H} \left(\frac{G^{2}\partial _{{\theta}}{\Omega}}{\sin {\theta}}- \frac{F^{2}{\omega}_{2}\partial _{{\theta}}{\omega}_{1}}{\sin {\theta}}\right) \\
& \xrightarrow[]{{\theta}\to 0\,} \frac{{\epsilon}}{\sqrt{{\Delta}}} \left(\frac{{\rho}^{2}+a_{1}^{2}-{\rho}_{0}^{2}}{{\rho}^{2}+a_{1}^{2}}\right)^{\frac{1}{2}} \left(\frac{{\rho}^{2}{\Delta}}{{\rho}^{2}+a_{1}^{2}-{\rho}_{0}^{2}}\right)^{\frac{1}{2}}\frac{1}{({\rho}^{2}+a_{1}^{2})^{\frac{1}{2}}} \\
& \hphantom{=} \times  \left[2G^{2}\frac{a_{1}^{\vphantom{1}}a_{2}^{\vphantom{1}}{\rho}_{0}^{2}}{{\Sigma}}\left(\frac{1}{G^{2}}+\frac{{\rho}_{0}^{2}}{G^{2}F^{2}{\Sigma}}\right)-\frac{a_{2}^{\vphantom{1}}{\rho}_{0}^{2}}{{\Sigma}}\cdot \frac{2a_{1}^{\vphantom{1}}{\rho}_{0}^{2}}{{\Sigma}-{\rho}_{0}^{2}}\right] \\
& = \frac{{\epsilon}{\rho}}{{\rho}^{2}+a_{1}^{2}}\cdot \frac{a_{1}^{\vphantom{1}}a_{2}^{\vphantom{1}}{\rho}_{0}^{2}}{{\Sigma}} \left(2+ \frac{2{\rho}_{0}^{2}}{F^{2}{\Sigma}}-\frac{2{\rho}_{0}^{2}}{{\Sigma}-{\rho}_{0}^{2}}\right) \\
& = \frac{2{\epsilon}a_{1}^{\vphantom{1}}a_{2}^{\vphantom{1}}{\rho}_{0}^{2}{\rho}}{\left({\rho}^{2}+a_{1}^{2}\right)^{2}};
\end{align*}
integration
\begin{equation*}
\left.{\chi}_{2}\right|_{{\theta}=0} = \frac{{\epsilon}a_{1}^{\vphantom{1}}a_{2}^{\vphantom{1}}{\rho}_{0}^{2}}{{\rho}^{2}+a_{1}^{2}}.
\end{equation*}
On ${\theta}=0$ it is
\begin{equation*}
{\rho}^{2} = 2z + \frac{1}{2} \left({\rho}_{0}^{2}-a_{1}^{2}-a_{2}^{2}\right),
\end{equation*}
so the notation in the following calculation of $P$ can be somewhat streamlined by introducing
\begin{equation*}
{\beta} = \frac{1}{4} \left(-{\rho}_{0}^{2}+a_{1}^{2}-a_{2}^{2}\right),\ {\gamma} = \frac{1}{4} \left({\rho}_{0}^{2}+a_{1}^{2}-a_{2}^{2}\right),
\end{equation*}
hence
\begin{align*}
F^{2} & = \frac{z+{\beta}}{z+{\gamma}} = - g_{tt}, \quad F^{2}{\omega}_{2} = \frac{{\epsilon}a_{2}^{\vphantom{1}}{\rho}_{0}^{2}}{2(z+{\gamma})} = - g_{t{\psi}} \\
{\chi}_{1} & = \frac{{\epsilon}{\rho}_{0}^{2}a_{1}^{\vphantom{1}}}{2(z+{\gamma})}, \hspace{1.2cm} {\chi}_{2} = - \frac{{\epsilon}{\rho}_{0}^{2}a_{1}^{\vphantom{1}}a_{2}^{\vphantom{1}}}{2(z+{\gamma})} \\
g_{{\psi}{\psi}} & = {\rho}^{2}+a_{2}^{2}+\frac{{\rho}_{0}^{2}a_{2}^{2}}{{\Sigma}} = 2z + \frac{1}{2}\left({\rho}_{0}^{2}-a_{1}^{2}+a_{2}^{2}\right) + \frac{{\rho}_{0}^{2}a_{2}^{2}}{2z + \frac{1}{2}\left({\rho}_{0}^{2}+a_{1}^{2}-a_{2}^{2}\right)}\\
& = 2z -2{\beta} + \frac{{\rho}_{0}^{2}a_{2}^{2}}{2(z+{\gamma})} = \frac{2}{z+{\gamma}} \left((z-{\beta})(z+{\gamma})+\frac{a_{2}^{2}{\rho}_{0}^{2}}{4}\right).
\end{align*}
This implies
\begin{align*}
\det A & = g^{-1} = g_{tt}g_{{\psi}{\psi}}-g_{t{\psi}}^{2} \\
& = - \frac{z+{\beta}}{z+{\gamma}}\cdot \frac{2}{z+{\gamma}} \left((z-{\beta})(z+{\gamma})+\frac{a_{2}^{\vphantom{1}}{\rho}_{0}^{2}}{4}\right)-\frac{a_{2}^{2}{\rho}_{0}^{4}}{4(z+{\gamma})^{2}} \\
& = - \frac{1}{4(z+{\gamma})^{2}} \left[8(z^{2}-{\beta}^{2})(z+{\gamma})+2a_{2}^{2}{\rho}_{0}^{2}\left((z+{\beta}+\frac{{\rho}_{0}^{2}}{2}\right)\right]\\
& = - \frac{1}{4(z+{\gamma})^{2}} \left[8(z^{2}-{\beta}^{2})+2a_{2}^{2}{\rho}_{0}^{2}\right]\\
& = - \frac{2(z^{2}-{\alpha}^{2})}{z+{\gamma}},
\end{align*}
where in the last step
\begin{equation*}
8{\beta}^{2}-2a_{2}^{2}{\rho}_{0}^{2} = \frac{1}{2} ({\rho}_{0}^{2}-a_{1}^{2}+a_{2}^{2})^{2}-2a_{2}^{2}{\rho}_{0}^{2} = 8 {\alpha}^{2}
\end{equation*}
with ${\alpha}$ as in \cite[Eq.\,(5.15)]{Harmark:2004rm}. We then obtain for $z\in ({\alpha},\infty )$, $r=0$ and with ${\epsilon}=1$ the following (the index refers to the numbering as above for the rod structure)
\begin{equation} \label{eq:MPpatmat}
\renewcommand\arraystretch{2.5}
P_{1} = \left(\begin{array}{ccc}-\dfrac{z+{\gamma}}{2(z^{2}-{\alpha}^{2})} & -\dfrac{{\rho}_{0}^{2}a_{1}^{\vphantom{1}}}{4(z^{2}-{\alpha}^{2})} & \dfrac{{\rho}_{0}^{2}a_{1}^{\vphantom{1}}a_{2}^{\vphantom{1}}}{4(z^{2}-{\alpha}^{2})} \\
\cdot  & -\dfrac{z^{2}+z({\beta}-{\gamma})+{\gamma}^{2}-{\beta}{\gamma}-{\alpha}^{2}}{z^{2}-{\alpha}^{2}} & -\dfrac{a_{2}^{\vphantom{1}}{\rho}_{0}^{2}(z-{\gamma})}{2(z^{2}-{\alpha}^{2})} \\
\cdot  & \cdot  & \hspace{-1.2cm} 2(z-{\beta})+\dfrac{a_{2}^{2}{\rho}_{0}^{2}(z-{\gamma})}{2(z^{2}-{\alpha}^{2})}\end{array}\right)_{\vphantom{\frac{1}{2}}},
\end{equation}
using $P_{1}=(P_{ij})$ with\footnote{Here $P_{ij}$ are only the entries of the patching matrix and not the transition matrices of the bundle.}
\begin{align*}
P_{12} & = g_{t{\psi}} + g {\chi}_{1}{\chi}_{2} = - \frac{a_{2}^{\vphantom{1}}{\rho}_{0}^{2}}{2(z+{\gamma})}+\frac{{\rho}_{0}^{4}a_{1}^{2}a_{2}^{\vphantom{1}}}{8(z+{\gamma})(z^{2}-{\alpha}^{2})} \\
& = \frac{a_{2}^{\vphantom{1}}{\rho}_{0}^{2}}{8(z+{\gamma})(z^{2}-{\alpha}^{2})} \left(-4(z^{2}-{\alpha}^{2})+a_{1}^{2}{\rho}_{0}^{2}\right) = \frac{4a_{2}^{\vphantom{1}}{\rho}_{0}^{2}(z^{2}-{\gamma}^{2})}{8(z+{\gamma})(z^{2}-{\alpha}^{2})}\\
& = - \frac{a_{2}^{\vphantom{1}}{\rho}_{0}^{2}(z-{\gamma})}{2(z^{2}-{\alpha}^{2})},\\ 
P_{22}^{\vphantom{1}} & = g_{{\psi}{\psi}}^{\vphantom{1}} + g {\chi}_{2}^{2} \\
& = \frac{2}{z+{\gamma}} \left((z-{\beta})(z+{\gamma})+\frac{a_{2}^{2}{\rho}_{0}^{2}}{4}\right)-\frac{z+{\gamma}}{2(z^{2}-{\alpha}^{2})} \cdot  \frac{{\rho}_{0}^{4}a_{1}^{2}a_{2}^{2}}{4(z+{\gamma})^{2}} \\
& = 2 (z-{\beta}) + \frac{1}{z+{\gamma}} \cdot \frac{a_{2}^{2}{\rho}_{0}^{2}}{2} \Big(\underbrace{z^{2}-{\alpha}^{2}-\frac{1}{4}a_{1}^{2}{\rho}_{0}^{2}}_{z^{2}-{\gamma}^{2}}\Big)\cdot \frac{1}{z^{2}-{\alpha}^{2}} \\
& = 2(z-{\beta})+\dfrac{a_{2}^{2}{\rho}_{0}^{2}(z-{\gamma})}{2(z^{2}-{\alpha}^{2})},
\end{align*}
and
\begin{align*}
P_{11}^{\vphantom{1}} & = g_{tt}^{\vphantom{1}}+g{\chi}_{1}^{2} = -\frac{z+{\beta}}{z+{\gamma}} - \frac{z+{\gamma}}{2(z^{2}-{\alpha}^{2})} \cdot \frac{{\rho}_{0}^{4}a_{1}^{2}}{4(z+{\gamma})^{2}}\\
& = -\frac{1}{8(z+{\gamma})(z^{2}-{\alpha}^{2})}\left(8(z+{\beta})(z^{2}-{\alpha}^{2})+{\rho}_{0}^{4}a_{1}^{2}\right) \\
& = -\frac{1}{z^{2}-{\alpha}^{2}} \left(z^{2}+z({\beta}-{\gamma})+{\gamma}^{2}-{\beta}{\gamma}-{\alpha}^{2}\right),
\end{align*}
where the last step comes from
\begin{align*}
8({\beta}-{\gamma})({\gamma}^{2}-{\alpha}^{2}) & = -\frac{8}{16} \cdot \frac{{\rho}_{0}^{2}}{2} \left(({\rho}_{0}^{2}+a_{1}^{2}-a_{2}^{2})^{2}-({\rho}_{0}^{2}-a_{1}^{2}-a_{2}^{2})^{2}+4a_{1}^{2}a_{2}^{2}\right)\\
& = -\frac{1}{4} {\rho}_{0}^{2}\left(4a_{1}^{2}({\rho}_{0}^{2}-a_{2}^{2})+4a_{1}^{2}a_{2}^{2}\right) = -a_{1}^{2}{\rho}_{0}^{4}
\end{align*}
thus
\begin{align*}
8(z+{\beta})(z^{2}-{\alpha}^{2})+{\rho}_{0}^{4}a_{1}^{2} & = 8(z+{\beta})(z^{2}-{\alpha}^{2})-8({\beta}-{\gamma})({\gamma}^{2}-{\alpha}^{2})\\
& = 8(z+{\gamma})(z^{2}+z({\beta}-{\gamma})+{\gamma}^{2}-{\beta}{\gamma}-{\alpha}^{2}).
\end{align*}

The Myers-Perry solution with vanishing angular momenta is as in four dimensions the Schwarzschild solution, sometimes also called Schwarzschild-Tangherlini solution.

\subsection*{Five-Dimensional Schwarzschild Space-Time} \mbox{}\\

In the case of $a_{1}=a_{2}=0$ the Myers-Perry metric becomes
\begin{align*}
\drm s^{2} & = \left(-1+\frac{{\rho}_{0}^{2}}{{\rho}^{2}}\right)\drm t^{2} + {\rho}^{2}\sin^{2}{\theta} \,\drm {\varphi}^{2} + {\rho}^{2}\cos^{2}{\theta} \,\drm {\psi}^{2} \\
& \hspace{0.4cm} + \left(1-\frac{{\rho}_{0}^{2}}{{\rho}^{2}}\right)^{-1} \,\drm {\rho}^{2} + {\rho}^{2} \,\drm {\theta}^{2},
\end{align*}
and from \eqref{eq:MPjmatrix} we can read off the $J$-matrix as
\begin{equation*}
J=\diag\left(-\frac{x-1}{x+1}, {\alpha}(1-y)(1+x), {\alpha}(1+y)(1+x)\right).
\end{equation*}
The twist potentials are globally constant and we set them without loss of generality to zero. The adaptations to the three different parts of the axis, then take the following form.
\begin{enumerate}[(1)]
\item Spacelike rod $z\in ({\alpha},\infty )$: Here we get with $y=1$ that
\begin{equation*}
\skew{7}{\tilde}{A} =\diag\left(-\frac{x-1}{x+1},2{\alpha}(x+1)\right), 
\end{equation*}
hence with $x=\frac{z}{{\alpha}}$ we obtain
\begin{equation*}
P_{1}(z)=\diag\left(-\frac{1}{2(z-\alpha)}, -\frac{z-\alpha}{z+\alpha},2(z+{\alpha})\right).
\end{equation*}
\item Horizon rod $z\in (-{\alpha},{\alpha})$: Using that $x=1$ on this part of the axis the above definition gives
\begin{equation*}
\skew{7}{\tilde}{A} = 2{\alpha} \cdot  \diag\left(1-y,1+y\right), 
\end{equation*}
hence with $y=\frac{z}{{\alpha}}$ we obtain
\begin{equation*}
P_{2}(z)=\diag\left(-\frac{1}{4(z^{2}-\alpha^{2})}, -2(z-{\alpha}),2(z+{\alpha})\right).
\end{equation*}
\item Spacelike rod $z\in (-\infty ,-{\alpha})$: Here we get with $y=-1$ that
\begin{equation*}
\skew{7}{\tilde}{A} =\diag\left(-\frac{x-1}{x+1},2{\alpha}(x+1)\right), 
\end{equation*}
hence $x=-\frac{z}{{\alpha}}$ yields
\begin{equation*}
P_{3}(z)=\diag\left(\frac{1}{2({z+\alpha})}, -\frac{z+\alpha}{z-{\alpha}},-2(z-{\alpha})\right).
\end{equation*}
\end{enumerate}

Last comes the example that features a major novelty of higher-dimensional relativity in comparison with four dimensions.

\section{Black Ring Solutions}

A five-dimensional black ring is a space-time with a black hole whose horizon has topology $S^{1}\times S^{2}$. This solution was originally obtained from the Kaluza-Klein $C$-metric solutions via a double Wick rotation of coordinates and analytic continuation of parameters \citet{Emparan:2002aa}. Because the ranges of mass and angular momenta of the black ring overlap with the Myers-Perry solution, it is an example for the problem in higher-dimensional general relativity that a black hole space-time is not uniquely determined only by  mass and angular momenta. This is because their different horizon topologies do not allow a smooth transition of both solutions into one another by changing the parameters smoothly. A detailed analysis of the black ring solution and its properties can be found in \cite{Elvang:2006aa,Emparan:2006aa}. For the first part of this section we take the results and notation from \cite[Sec.~VI]{Harmark:2004rm}.

One way of defining the black ring solution is the following
\begin{equation} \label{eq:BRmetric}
\begin{split}
\drm s^{2} & = -\frac{F(v)}{F(u)} \left(\drm t - C {\kappa} \frac{1+v}{F(v)}\,\drm {\varphi}\right)^{2} \\
		& \hspace{0.4cm} +\frac{2{\kappa}^{2}F(u)}{(u-v)^{2}}\left[-\frac{G(v)}{F(v)}\,\drm {\varphi}^{2} + \frac{G(u)}{F(u)}\,\drm {\psi}^{2} + \frac{1}{G(u)} \,\drm u^{2}-\frac{1}{G(v)} \,\drm v^{2} \right],
\end{split}
\end{equation}
where $F({\xi})$ and $G({\xi})$ are 
\begin{equation*}
F({\xi})=1+b{\xi},\quad G({\xi})= (1-{\xi}^{2})(1+c{\xi}),
\end{equation*}
and the parameters vary in the ranges
\begin{equation*}
0<c\leq b<1.
\end{equation*}
The parameter ${\kappa}$ has the dimension of length and for thin rings it is roughly the radius of the ring circle. The constant $C$ is given in terms of $b$ and $c$ by 
\begin{equation*}
C=\sqrt{2b(b-c)\frac{1+b}{1-b}},
\end{equation*}
and the coordinate ranges for $u$ and $v$ are
\begin{equation*}
-1\leq u\leq 1, \quad -\infty \leq v\leq -1
\end{equation*}
with asymptotic infinity recovered as $u\to v\to -1$. For the ${\varphi}$-coordinate the axis of rotation is $v=-1$, and for the ${\psi}$-direction the axis is divided in two components. First $u=1$ which is the disc bounded by the ring, and second $u=-1$ which is the outside of the ring, that is up to infinity. The horizon is located at $v=-\frac{1}{c}$ and outside of it at $v=-\frac{1}{b}$ lies an ergosurface. As argued in \cite[Sec.~5.1.1]{Emparan:2008aa} three independent parameters $b$, $c$, ${\kappa}$ is one too many, since for a ring with a certain mass and angular momentum we expect its radius to be dynamically fixed by the balance between centrifugal and tensional forces. This is here the case as well, because in general there are conical singularities\footnote{A brief note on the nature of those singularities and why they are called conical can be found in Appendix~\ref{app:consing}.} on the plane containing the ring, $u=\pm 1$. In order to cure them ${\varphi}$ and ${\psi}$ have to be identified with periodicity 
\begin{equation*}
{\Delta}{\varphi}={\Delta}{\psi}=4{\pi} \frac{\sqrt{F(-1)}}{|G'(-1)|}=2{\pi}\frac{\sqrt{1-b}}{1-c},
\end{equation*}
and the two parameters have to satisfy
\begin{equation} \label{eq:BRb}
b=\frac{2c}{1+c^{2}}. 
\end{equation}
This leaves effectively a two-parameter family of solutions as expected with the Killing vector fields $X_{0}=∂_{t}$, $X_{1}=∂_{\varphi}$ and $X_{2}=∂_{\psi}$. Henceforth, however, we will keep the conical singularity in and regard the parameter $b$ as free. By \eqref{eq:BRb} it can be replaced at any time.

A straight forward calculation shows
\begin{equation*}
\det J=\frac{4{\kappa}^{4}}{(u-v)^{4}}G(u)G(v), 
\end{equation*}
hence we define
\begin{equation*}
r=\frac{2{\kappa}^{2}}{(u-v)^{2}}\sqrt{-G(u)G(v)}. 
\end{equation*}
The harmonic conjugate can be calculated in the same way as for the Myers-Perry solution (for details see \cite[App.~H]{Harmark:2004rm}) and one obtains
\begin{equation*}
z=\frac{{\kappa}^{2}(1-uv)(2+cu+cv)}{(u-v)^{2}}.
\end{equation*}
Using expressions for $u$, $v$ in terms of $r$, $z$ (see \cite[App.~H]{Harmark:2004rm})
\begin{align*}
u & = \frac{(1-c)R_{1}-(1+c)R_{2}-2R_{3}+2(1-c^{2}){\kappa}^{2}}{(1-c)R_{1}+(1+c)R_{2}+2cR_{3}} \\
v & = \frac{(1-c)R_{1}-(1+c)R_{2}-2R_{3}-2(1-c^{2}){\kappa}^{2}}{(1-c)R_{1}+(1+c)R_{2}+2cR_{3}},
\end{align*}
where
\begin{equation*}
R_{1}=\sqrt{r^{2}+(z+c{\kappa}^{2})^{2}},\ R_{2}=\sqrt{r^{2}+(z-c{\kappa}^{2})^{2}},\ R_{3}=\sqrt{r^{2}+(z-{\kappa}^{2})^{2}},
\end{equation*}
the $J$-matrix can be computed as 
\begin{align*}
J_{00} & = -\frac{(1+b)(1-c)R_{1}+(1-b)(1+c)R_{2}-2(b-c)R_{3}-2b(1-c^{2}){\kappa}^{2}}{(1+b)(1-c)R_{1}+(1-b)(1+c)R_{2}-2(b-c)R_{3}+2b(1-c^{2}){\kappa}^{2}}, \\
J_{01} & = -\frac{2C{\kappa}(1-c)[R_{3}-R_{1}+(1+c){\kappa}^{2}}{(1+b)(1-c)R_{1}+(1-b)(1+c)R_{2}-2(b-c)R_{3}+2b(1-c^{2}){\kappa}^{2}}, \\
J_{22} & = \frac{(R_{3}+z-{\kappa}^{2})(R_{2}-z+c{\kappa}^{2})}{R_{1}-z-c{\kappa}^{2}}, \\
J_{11} & = -\frac{r^{2}}{J_{00}J_{22}}+\frac{J_{01}^{2}}{J_{00}},
\end{align*}
with the remaining components vanishing, and
\begin{align*}
{\erm}^{2ν} & = \left[(1+b)(1-c)R_{1}+(1-b)(1+c)R_{2}+2(c-b)R_{3}+2b(1-c^{2}){\kappa}^{2}\right]\\
& \hspace{0.4cm} ×\frac{(1-c)R_{1}+(1+c)R_{2}+2cR_{3}}{8(1-c^{2})^{2}R_{1}R_{2}R_{3}}.
\end{align*}

The rod structure consists of four components $(-\infty ,-c{\kappa}^{2})$, $(-c{\kappa}^{2},c{\kappa}^{2})$, $(c{\kappa}^{2},{\kappa}^{2})$, $({\kappa}^{2},\infty )$. 
\begin{enumerate}[(1)]
\item For $r=0$ and $z\in ({\kappa}^{2},\infty )$ we have $R_{3}-R_{1}+(1+c){\kappa}^{2}=0$ which implies $J_{01}=J_{11}=0$. Hence, the interval $({\kappa}^{2},\infty )$ is a semi-infinite spacelike rod in direction $∂_{\varphi}$.
\item For $r=0$ and $z\in (c{\kappa}^{2},{\kappa}^{2})$ we have $R_{2}+R_{3}-(1-c){\kappa}^{2}=0$ which implies $J_{22}=0$. Hence, the interval $(c{\kappa}^{2},{\kappa}^{2})$ is a finite spacelike rod in direction $∂_{\psi}$.
\item For $r=0$ and $z\in (-c{\kappa}^{2},c{\kappa}^{2})$ we have $R_{1}+R_{2}-2c{\kappa}^{2}=0$ which implies that the kernel of $J$ in this range is spanned by the vector
\begin{equation*}
\left(\begin{array}{ccc}1 & {\Gamma} & 0\end{array}\right)^{\mathrm{t}}, \quad \text{ where } {\Gamma}=\frac{b-c}{(1-c)Cκ}
\end{equation*}
is again the angular velocity. Thus, $(-c{\kappa}^{2},c{\kappa}^{2})$ is a finite timelike rod and it can be shown that it corresponds to an event horizon with topology $S^{2}\times S^{1}$ (a brief reasoning can be found in \cite[proof of Prop.~2 in Sec.~3]{Hollands:2008fp}).
\item For $r=0$ and $z\in (-\infty ,-c{\kappa}^{2})$ we have $R_{1}-R_{3}+(1+c){\kappa}^{2}=0$ which implies $J_{22}=0$. Hence, the interval $(-\infty ,-c{\kappa}^{2})$ is a semi-infinite spacelike rod in direction $∂_{\psi}$.
\end{enumerate}

We see that on the top rod one twist 1-form vanishes and for the other one we obtain as before
\begin{equation*}
\partial _{r} {\chi} = {\epsilon} \frac{F^{3}H}{G} \partial _{z} {\omega},\quad \partial _{z} {\chi} = - {\epsilon} \frac{F^{3}H}{G} \partial _{r} {\omega},
\end{equation*}
where
\begin{equation*}
\drm {\chi} =  \partial _{r}{\chi}\, \drm r + \partial _{z} {\chi}\, \drm z = * (T\wedge Ψ\wedge \drm T).
\end{equation*}
Note that
\begin{equation*}
{\omega} = \frac{J_{01}}{J_{00}}, \quad G^{2} = - \frac{r^{2}}{J_{00}J_{22}},
\end{equation*}
as $-r^{2} = \det J = (J_{00}^{\vphantom{1}}J_{11}^{\vphantom{1}}-J_{01}^{2})J_{22}^{\vphantom{1}}$. On $r=0$ we also see that
\begin{equation*}
R_{1} = |z+c{\kappa}^{2}|, \quad R_{2} = |z-c{\kappa}^{2}|, \quad R_{3} = |z-{\kappa}^{2}|,
\end{equation*}
and for ${\kappa}^{2}<z<\infty $ the moduli signs can be dropped. Then the metric coefficients behave as 
\begin{equation*}
J_{00} = \Ocal(1),\quad J_{01} = \Ocal(r^{2}),\quad J_{22}=\Ocal(1), \quad {\omega}^{2}= \Ocal(r^{2}),
\end{equation*}
so that we obtain
\begin{align*}
\partial _{z} {\chi} & = - {\epsilon} \frac{(-J_{00})^{\frac{3}{2}}(J_{22})^{\frac{1}{2}}}{r} (-J_{00})^{\frac{1}{2}} (J_{22})^{\frac{1}{2}}\, \partial _{r} \left(\frac{J_{01}}{J_{00}}\right) \\
& = -{\epsilon} \frac{J_{00}^{2}J_{22}^{\vphantom{1}}}{r}\, \partial _{r}\left(\frac{J_{01}^{\vphantom{1}}}{J_{00}^{\vphantom{1}}}\right).
\end{align*}
Now, if $J_{01}=r^{2}B(z)+\Ocal(r^{4})$, then
\begin{equation} \label{eq:BRtwistpot1}
\lim_{r\to 0} \partial _{z}{\chi} = -{\epsilon} \lim_{r\to 0} 2J_{00}J_{22}B(z).
\end{equation}
In order to determine $B(z)$ we do some auxiliary calculations. Denote $α=cκ^{2}$, $β=κ^{2}$. Then with $z>β$ and to leading order in $r$ it is
\begin{align*}
R_{1} & = (z+α)\left(1+\frac{r^{2}}{2(z+α)^{2}}\right),\ R_{2} = (z-α)\left(1+\frac{r^{2}}{2(z-α)^{2}}\right) \\
R_{3} & = (z-β)\left(1+\frac{r^{2}}{2(z-β)^{2}}\right),
\end{align*}
whence
\begin{equation*}
J_{22} = 2 (z-β)\frac{2(z+α)}{r^{2}}\frac{r^{2}}{2(z-α)} = \frac{2(z-β)(z+α)}{z-α}. 
\end{equation*}
Second we compute
\begin{equation} \label{eq:BRlambda}
J_{00} = -\frac{z-α}{z+λ}, \quad \text{where } λ= κ^{2}⋅\frac{2b-bc-c}{1-b}.
\end{equation}
Last, we obtain
\begin{equation*}
J_{01} = -\frac{C(1-c)\kappa^3}{2(1-b)}\frac{1}{(z-β)(z+α)(z+\lambda)}⋅r^{2}. 
\end{equation*}
Using these results \eqref{eq:BRtwistpot1} can be integrated to
\begin{equation*}
\left.{\chi}\right|_{r=0} = \frac{2\nu}{z+\lambda},\quad \nu=\frac{\epsilon C(1-c)\kappa^3}{1-b}.
\end{equation*}
Note that this agrees up to a constant with \cite[Eq.~(25)]{Tomizawa:2004aa}. Now we can compute the quantities which go in the patching matrix. The restriction $r=0$ is not explicitly mentioned, but still assumed in the following.
\begin{align*}
g{\chi} & = \frac{{\chi}}{J_{00}J_{22}} = -\frac{ν}{(z-β)(z+α)},\\
g & = \frac{1}{J_{00}J_{22}} = -\frac{z+λ}{2(z+α)(z-β)},
\end{align*}
For the last matrix entry we first calculate some auxiliaries. From \eqref{eq:BRlambda} we obtain
\begin{equation*}
b = \frac{λ+α}{λ+2β-α}, 
\end{equation*}
hence
\begin{equation*}
b-c=\frac{(β-α)(λ-α)}{β(λ+2β-α)}, \quad 1+b = \frac{2(λ+β)}{λ+2β-α}, \quad 1-b = \frac{2(β-α)}{λ+2β-α}.
\end{equation*}
This yields
\begin{equation*}
2ν^{2} = \frac{4b(b-c)(1+b)(1-c)^{2}κ^{6}}{(1-b)^{3}} = (λ+α)(λ-α)(λ+β),
\end{equation*}
which in turn justifies the following factorization
\begin{equation*}
(z-α)(z+α)(z-β)+2ν^{2} = \big(z+λ\big)\big(z^2-(β+λ)z -α^2+β λ+λ^2\big).
\end{equation*}
and eventually
\begin{align*}
J_{00}+g{\chi}^{2} & = -\frac{z-α}{z+λ} - \frac{2ν^{2}}{(z+λ)(z+α)(z-β)} \\
& = -\frac{z^2-(β+λ)z -α^2+β λ+λ^2}{(z+α)(z-β)}.
\end{align*}
The patching matrix for $z\in (β,\infty )$ and $r=0$ now is 
\begin{equation} \label{eq:BRpatmat}
\renewcommand\arraystretch{2.5}
P_{1}=\left(\begin{array}{ccc}-\dfrac{z+{\lambda}}{2(z+α)(z-β)} & \dfrac{{\nu}}{(z+α)(z-β)} & 0 \\
\cdot  & -\dfrac{z^{2}-{\gamma}z+{\delta}}{(z+α)(z-β)} & 0 \\
0 & 0 & \dfrac{2(z+α)(z-β)}{z-α}\end{array}\right),
\end{equation}
where the index again only indicates that it is adapted to the part of the axis which extends to $+\infty $ and where
\begin{equation} \label{eq:BRpar}
\renewcommand\arraystretch{2.5}
\begin{array}{ccc} α=c{\kappa}^2, & β={\kappa}^2, & {\lambda}=κ^{2}⋅\dfrac{2b-bc-c}{1-b}, \\
{\nu}=\dfrac{\epsilon C(1-c)\kappa^3}{1-b}, & {\gamma} = κ^{2}+λ, & {\delta}=-c^2κ^{4}+κ^{2} λ+λ^2 \end{array}.
\end{equation}
Note that this is based on the assumption that the periodicity of ${\varphi}$, ${\psi}$ is $2{\pi}$, otherwise it has to be modified according to \cite[Eq.~(4.17)]{Harmark:2004rm}.

From \eqref{eq:asymptPbot} we read off the conserved Komar quantities as
\begin{equation*} 
M = \frac{3{\pi}}{4}(λ+κ^{2}), \quad L_{1} = \dfrac{π C(1-c)\kappa^3}{1-b}, \quad L_{2}=0.
\end{equation*}
\chapter{The Converse} \label{ch:converse}

First let us recall the twistor construction for five-dimensional space-times as obtained in Chapter~\ref{ch:bundles} and \ref{ch:fivedim}. It can be summarized as follows.

\begin{summ}
There exists a one-to-one correspondence between five-dimensional stationary and axisymmetric space-times and rank-3 bundles $E→\Rcal$ over the reduced twistor space $\Rcal$, which consists of two Riemann spheres identified over a certain region. 

For $J$ being the matrix of inner products of Killing vectors, we define the Ernst potential as (see Definition~\ref{def:ernstpot})
\begin{equation*} 
\renewcommand{\arraystretch}{1.4}
J'=\frac{1}{\det \skew{7}{\tilde}{A}} \left(\begin{array}{cc}\hphantom{-}1 & -χ^{\mathrm{t}} \\-χ & \det \skew{7}{\tilde}{A} \cdot \skew{7}{\tilde}{A} + χχ^{\mathrm{t}}\end{array}\right),
\end{equation*}
where $\skew{7}{\tilde}{A}$ is obtained from $J$ by cancelling an appropriate row and column, and $χ=(χ_{1},χ_{2})$ are the twist potentials.

If $P$ is the patching matrix of $E→\Rcal$, then $J'(z)=P(z)$ where both are non-singular for $r→0$. 
\end{summ}

Remember that the bundle $E→\Rcal$ was characterized by the so-called twistor data consisting of the patching matrix and three integers. For the bundle corresponding to $J$ itself these integers are $p_{0}=1$, $p_{1}=p_{2}=0$ and for the bundle corresponding to the Ernst potential $J'$ these are $p_{0}=p_{1}=p_{2}=0$. Hence it comes down to determining $P$ when parameterizing the bundle.

\begin{cor}
The patching matrix $P$ (adapted to any portion of the axis $r=0$) determines the metric and conversely. 
\end{cor}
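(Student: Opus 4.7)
The strategy is to trace through the correspondence summarized just before the statement. Given $P$ adapted to some rod $(a_{i},a_{i+1})$, the first task is to show how to reconstruct the metric. By Proposition~\ref{prop:AnalyCont} and the surrounding discussion, $P$ together with the three integers $p_{0}, p_{1}, p_{2}$ specifies the holomorphic bundle $E'\to \Rcal$ corresponding to the Ernst potential $J'$. Crucially, as observed in Section~\ref{sec:Pnearaxis}, the determinant constraint $\det J'=1$ together with axis-regularity forces $p_{0}=p_{1}=p_{2}=0$ (as opposed to the $(1,0,0)$ triple attached to $J$ itself). So $P$ alone, with no further integer data, determines $E'$, and by the splitting procedure of Chapter~\ref{ch:bundles} and Section~\ref{sec:adapt} we recover $J'(r,z)$ off the axis on the open set where the Birkhoff factorization is trivial.

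From $J'$ I would then recover $J$ by inverting the B\"acklund transformation of Section~\ref{sec:highErnst}: reading off $\skew{7}{\tilde}{A}$ and $χ$ from the block decomposition of $J'$ given in Definition~\ref{def:ernstpot} via Lemma~\ref{lem:highernstprop}, together with the integrability relations \eqref{eq:intvstwist}, reconstructs the remaining row and column of $J$ adapted to $(a_{i},a_{i+1})$. This fixes the matrix of Killing inner products, hence all metric coefficients along the Killing directions. To complete the metric in $σ$-model form \eqref{eq:sigmametric} one still needs the conformal factor $\erm^{2ν}$; but once $J$ is in hand, the system~\eqref{eq:redeinst3} determines $\erm^{2ν}$ up to a single multiplicative constant (recall that the integrability of this system is guaranteed by Yang's equation together with the determinant constraint $\det J=-r^{2}$), and this constant is pinned down by matching the asymptotic form \eqref{eq:asymptmetr}. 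Since the adaptation of $P$ to the chosen rod eliminates the residual gauge freedom $J\mapsto A^{\mathrm{t}}JA$ (the rod vector associated to $(a_{i},a_{i+1})$ picks out the row and column that was removed to form $\skew{7}{\tilde}{A}$), the recovered metric is unique.

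For the converse direction, given a stationary axisymmetric vacuum metric in the class under consideration, one computes $J$ and the twist potentials $χ_{I}$ from the definitions in Section~\ref{sec:highErnst}, assembles $J'$ adapted to $(a_{i},a_{i+1})$ via \eqref{eq:highernst}, and defines $P(w)$ to be the analytic continuation of the boundary values $J'(0,z)$ for $z\in (a_{i},a_{i+1})$. By Proposition~\ref{prop:AnalyCont} this is precisely the patching matrix of the associated bundle, and by Proposition~\ref{prop:EquBdls} different adaptations are related by equivalent patching data, so the construction is well-defined.

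The step that requires the most care is making sure the reconstruction of $\erm^{2ν}$ yields exactly the right conformal factor rather than some constant multiple of it; this is the only piece of information in the metric that is not visible directly in $P$, and the argument relies on having enough asymptotic data (asymptotic flatness, or the Kaluza-Klein analogue in higher dimensions) to normalize the integration constant. A secondary subtlety is keeping track of the gauge freedom in the splitting procedure: different splittings give $J'$ modulo constant left and right multiplication, but since $P$ is fixed by its value on a specified rod and the adaptation fixes which row and column of $J$ are being removed, the physical metric is unambiguously recovered.
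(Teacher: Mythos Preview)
Your approach is essentially the same as the paper's---splitting in one direction, analytic continuation in the other---but you have expanded what the paper leaves as a two-line sketch into a fuller argument. The paper's proof simply says that $J'(r,z)$ is obtained from $P(w)$ by the splitting procedure and conversely $P(w)$ is the analytic continuation of $J'(r=0,z)$; it does not spell out the recovery of the conformal factor $\erm^{2\nu}$, the determination of the integers $(p_{0},p_{1},p_{2})$, or the inversion of the B\"acklund transformation to pass from $J'$ back to $J$. Your added detail on these points is correct and fills in what the paper tacitly assumes. One small remark: the gauge freedom from the splitting is $J\mapsto AJB$ with constant $A,B$ rather than $A^{\mathrm{t}}JA$, though the symmetry and reality conditions on $J$ then constrain this further, so your conclusion stands.
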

\begin{proof}[Sketch of Proof]
$J'(r,z)$ is obtained from $P(w)$ by the splitting procedure (see Section~\ref{sec:twistconstr}) and conversely $P(w)$ is the analytic continuation of $J'(r=0,z)$.
\end{proof}

As seen in Chapter~\ref{ch:bh} the classification of black holes in four dimensions (see Theorem~\ref{thm:carter}) does not straight-forwardly generalize to five dimensions. The Myers-Perry solution and the black ring are space-times whose range of parameters (mass and angular momenta) do have a non-empty intersection, but their horizon topology is different, which means they cannot be isometric. In order to address this issue the rod structure (see Definition~\ref{def:rodstr}) is introduced to supplement the set of parameters. 

Using this extended set of parameters the following theorem is a first step towards a classification.

\begin{thm}[\citet{Hollands:2008fp}] \label{thm:holluniqueness}
Two five-dimensional, asymptotically flat vacuum space-times with connected horizon where each of the space-times admits three commuting Killing vector fields, one time translation and two axial Killing vector fields, are isometric if they have the same mass and two angular momenta, and their rod structures coincide.
\end{thm}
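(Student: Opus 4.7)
The plan is to establish the theorem via the twistor correspondence developed in the preceding chapters, rather than via a Mazur-type divergence identity. By the summary at the beginning of this chapter, an axis-regular stationary and axisymmetric vacuum space-time is determined, up to the natural gauge freedom, by its patching matrix $P$ on the reduced twistor space. It therefore suffices to show that $M$, $L_{1}$, $L_{2}$, together with the rod structure, determine $P$ uniquely (and hence, via Birkhoff splitting, determine $J$ and the space-time up to isometry).

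First, I would fix a reference rod, say $(a_{N},\infty)$, and consider $P=P_{+}$ adapted to it. By Proposition~\ref{prop:AnalyCont}, $P(z)=J'(0,z)$ on this rod, and by Corollary~\ref{cor:singofP} combined with Proposition~\ref{prop:simplepoles}, the only real singularities of $P$ are simple poles located at the nuts $a_{1},\dotsc,a_{N}$. The asymptotic analysis of Section~\ref{sec:exasympt} shows moreover that $P$ grows at most linearly at infinity, with leading and subleading coefficients fixed by the mass and angular momenta through the explicit formula \eqref{eq:asymptPtop}. Consequently, $P$ admits a partial-fraction decomposition
\begin{equation*}
P(w)=L(w)+\sum_{i=1}^{N}\frac{R_{i}}{w-a_{i}},
\end{equation*}
where $L$ is a matrix polynomial of degree at most one, entirely fixed by the asymptotic data, and each residue $R_{i}$ is a $3\times 3$ matrix constrained by $\det P=1$ and $P=P^{\mathrm{t}}$.

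Next, I would argue that each residue is forced to be rank one: the scalar $\det\skew{7}{\tilde}{A}$ vanishes linearly at each nut while the cofactor entries of $J'$ stay bounded, so the pole contributes only a rank-one piece. The image line and kernel hyperplane of $R_{i}$ are then pinned down by the rod vectors of the two rods meeting at $a_{i}$, because the adapted $\skew{7}{\tilde}{A}$ is obtained by striking precisely the row and column indexed by the vanishing rod vector. To move from one rod to the next one would invoke a transition rule of the form stated in the third theorem of the excerpt, relating $P_{+}$ on one side of a nut to $P_{-}$ on the other side through a conjugation involving only the nut position $\alpha$ and the two adjacent rod vectors. Iterating this rule starting from $(a_{N},\infty)$ expresses every residue, and hence the full $P$, in terms of the rod data and the already-fixed $L$.

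The hard part will be twofold. First, one must verify that the transition rules across successive nuts are globally consistent: walking down the axis from $(a_{N},\infty)$ through all $N$ nuts must reproduce the correct adaptation $P_{-}$ on $(-\infty,a_{1})$, matching the asymptotic form \eqref{eq:asymptPbot}; this is where the second theorem of the excerpt, $P_{-}=MP_{+}^{-1}M$, plays an essential role as a closure condition. Second, the rank-one ansatz for each $R_{i}$ still leaves one scalar free per nut, and these scalars must be eliminated by the regularity conditions at the nuts---the continuity of $(u^{2}+v^{2})\erm^{2\nu}$ and the absence-of-conical-singularities identity $\lim_{v\to 0}U_{0}=\lim_{u\to 0}W_{0}$ from the proposition quoted in the excerpt. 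Once these conditions are shown to pin down the remaining scalars uniquely, $P$ is determined by the prescribed data, and $J$ is then recovered via the splitting procedure of Chapter~\ref{ch:pwtrf}, completing the argument.
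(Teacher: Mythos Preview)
The paper does not prove this theorem at all: it is quoted from \citet{Hollands:2008fp} and used as motivation. Immediately after the statement the paper remarks that the theorem settles uniqueness but not existence, and then formulates the assertion ``rod structure and angular momenta determine $P$'' as a \emph{conjecture}, not a theorem. The subsequent sections develop an ansatz for this reconstruction and carry it out in the one-, two-, and three-nut examples, but no general argument is given. So there is no ``paper's own proof'' to compare against; the original Hollands--Yazadjiev proof proceeds by a Mazur-type divergence identity, which you explicitly set aside.

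Your proposal is essentially a sketch of the very programme that the thesis lays out in Chapter~\ref{ch:converse} but does not complete. The specific gaps are the same ones the paper leaves open: you assert that each residue $R_{i}$ is rank one with image and kernel pinned by the rod vectors, but the thesis never establishes this in general---it derives constraints case by case from $\det P=1$ and from Corollary~\ref{cor:invpmatrix}. You then defer the elimination of the remaining free scalars to ``the hard part'', invoking the switching rule of Theorem~\ref{thm:switching} and the non-conicality condition of Corollary~\ref{cor:conic}; but the paper's own examples show that even with these tools the count of equations versus unknowns is delicate (cf.\ the spurious branch in the two-nut Myers--Perry reconstruction), and the general closure is precisely what is left unresolved. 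In short, your outline is a reasonable plan of attack on the conjecture, not a proof of the theorem.
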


Note, however, that \cite[Prop.~3.1]{Chrusciel:2011eu} suggests that by adding the rod structure to the list of parameters the mass becomes redundant, at least for connected horizon.

Theorem~\ref{thm:holluniqueness} answers the question about uniqueness of five-dimensional black holes, but not existence. In other words, we do not yet know whether rod structure and angular momenta determine the twistor data, that is essentially $P$, and thereby the metric.

\begin{conj}
Rod structure and angular momenta determine $P$ (even for a disconnected horizon). 
\end{conj}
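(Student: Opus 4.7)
The strategy is to exploit the very restrictive analytic structure of $P$ in order to reduce its determination to a finite-dimensional algebraic problem that is then closed off by the boundary data. By Proposition~\ref{prop:AnalyCont} and Corollary~\ref{cor:singofP}, for each adaptation to a rod $(a_i,a_{i+1})$ the patching matrix $P$ is meromorphic on $\mathbb{C}\mathbb{P}^1$ with its only singularities at (a subset of) the nuts $\{a_1,\dotsc,a_N\}$, and by Proposition~\ref{prop:simplepoles} these singularities are simple poles. Asymptotic flatness together with \eqref{eq:asymptPtop} fixes the value $P(\infty)$ and the coefficient of $1/z$ in terms of $(M,L_1,L_2,\zeta,\eta)$. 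Consequently $P$ admits the partial-fraction ansatz
\begin{equation*}
P(w) = P_{\infty} + \sum_{i=1}^{N}\frac{R_i}{w-a_i},
\end{equation*}
with $P_\infty$ the identity and the $R_i$ symmetric $3\times 3$ matrices.

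First, I would constrain each residue $R_i$ individually. The analysis of Section~\ref{sec:highErnst} shows that a real pole of $P$ arises through the factor $(\det\tilde A)^{-1}$, whose rank drops precisely by one on a spacelike rod (the rod vector spanning $\ker J(0,z)$) and precisely by two at a nut. Together with the requirement $\det P=1$ and the adjacent rod vectors this forces each $R_i$ to be of a prescribed algebraic type (a rank-one matrix aligned with an explicit vector built from the two adjoining rod vectors) and leaves only a scalar strength $s_i$ at each nut. Second, I would impose the matching relations between the adaptations. The outer-rod theorem $P_- = M P_+^{-1} M$ quoted in the introduction, together with the analogous nut-crossing transformation (also quoted) for the interior nuts, propagates the asymptotic normalisation $P_\infty=\id$ and the known first subleading data through the entire rod structure; each crossing relation is algebraic in the $s_i$ and the pole locations $a_i$.

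Third, I would use the regularity results around nuts and the conformal-factor proposition from the introduction. The structural statement for $J$ in $(u,v)$-coordinates and the requirement $\lim_{v\to 0} U_0 = \lim_{u\to 0} W_0$ translate into one real condition at each interior nut, namely the absence of a conical singularity; this converts the still-free strengths $s_i$ into functions of the nut positions and the asymptotic charges. Taken together, the $N$ pole positions $a_i$ (read off the rod structure, up to the overall translation in $z$ which is gauged away by choice of origin as discussed after Definition~\ref{def:rodstr}), the rod vectors (which dictate the algebraic form of each $R_i$), the asymptotic charges $M,L_1,L_2$ (which fix $P_\infty$ and the $1/z$ expansion), and the conicality conditions (which fix the remaining scalar freedom) determine $P$ uniquely.

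The main obstacle will be the parameter count at the last step: one must verify that the conicality conditions and the crossing relations are independent and cut out exactly the correct codimension in the space of admissible ansätze, so that the system is neither overdetermined (which would rule out regular solutions for generic rod data) nor underdetermined (which would leave genuine moduli beyond $(M,L_1,L_2)$ and the rod structure). Establishing this count in a disconnected-horizon setting requires a careful induction along the rod structure, crossing one nut at a time, using the nut-crossing theorem to propagate the constraints, and showing that at each step the secondary asymptotic quantities $\zeta,\eta$ in \eqref{eq:asymptmetr} are themselves determined by the data already fixed rather than being free inputs; this is where the present thesis stops short of a full proof and leaves the statement as a conjecture.
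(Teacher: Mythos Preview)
The statement is a \emph{conjecture} in the paper, and the paper does not prove it; it develops an ansatz and verifies it in examples with up to three nuts. So your proposal should be judged as a strategy rather than a proof, and on that level you correctly identify the main ingredients (pole structure, asymptotics, nut-crossing, conicality) and, to your credit, you are explicit that the parameter count is where the argument is incomplete.

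That said, your ansatz is concretely wrong in a way that would derail the algebra from the first step. You write
\[
P(w)=P_\infty+\sum_{i=1}^N\frac{R_i}{w-a_i}
\]
with $P_\infty=\id$. But $P$ adapted to an outer rod does not tend to the identity, nor even to a finite matrix, at infinity: by \eqref{eq:asymptPtop} the $(3,3)$ entry behaves like $2z$, the $(1,1)$ entry like $-\tfrac{1}{2z}$, and the $(2,2)$ entry like $-1$. So $P$ is not ``constant plus principal part''; the correct global form is the one the paper actually uses,
\[
P(z)=\frac{1}{\Delta(z)}\,P'(z),\qquad \Delta(z)=\prod_{i=1}^N(z-a_i),
\]
with the entries of $P'$ polynomials of degrees $N-1,\,N-2,\,N-2,\,N,\,N-1,\,N+1$ along the diagonal and superdiagonal, and leading coefficients read off from \eqref{eq:asymptPtop}. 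This matters because the free parameters you are trying to fix are the subleading polynomial coefficients, and the constraints the paper exploits (Corollary~\ref{cor:invpmatrix} that $\Delta$ divides all $2\times 2$ minors of $P'$, and $\det P=1$) are polynomial identities in those coefficients---they have no clean formulation in terms of your residues $R_i$.

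Your further claim that each $R_i$ is rank one and aligned with an explicit rod-vector combination is not something the paper establishes, and it is not obviously true: the drop in rank you cite concerns $J$ and $\tilde A$ on the axis, not the residue of $P$ at a nut, and the relation between $P$ and $J$ there is mediated by the B\"acklund transform. The paper instead extracts constraints on $P$ near a nut via the switching formula of Theorem~\ref{thm:switching} applied to the polynomial ansatz, which is algebraically much more tractable.

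In short: your overall outline matches the paper's programme, but the specific analytic form you start from is incorrect and the rank-one residue picture is unsubstantiated; replacing these by the $P'/\Delta$ ansatz and the minor-divisibility corollary is what makes the examples in the paper actually go through.
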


\section[From Rod Structure to Patching Matrix]{From Rod Structure to Patching Matrix --- an Ansatz}

In the following we will present an ansatz for this reconstruction of the patching matrix from the given data, exemplified in cases where the rod structure has up to three nuts. 

Given a rod structure with nuts at $\{a_{i} | a_{i}∈ℝ\}_{1≤i≤N}$ we know that $P$ can at most have single poles at these nuts (see Corollary~\ref{cor:singofP} and Proposition~\ref{prop:simplepoles}). We shall see that this can also be derived from the switching procedure (Theorem~\ref{thm:switching}) and thus we make the ansatz
\begin{equation*}
P(z) = \frac{1}{{\Delta}} P'(z), 
\end{equation*}
where ${\Delta}=\prod _{i=1}^{N}(z-a_{i})$ and the entries of $P'(z)$ are holomorphic in $z$. If we now, moreover, choose $P$ to be adapted to the top outermost rod $(a_{\scriptscriptstyle N},\infty )$, then Section~\ref{sec:exasympt} tells us its asymptotic behaviour as $z → ∞$, that is $P$ asymptotes $P_{+}$ given in \eqref{eq:asymptPtop}. This implies that the entries of $P'(z)$ are in fact polynomials,
\begin{equation*}
\renewcommand{\arraystretch}{1.5}
P'(z) = \left(\begin{array}{ccc} q_{\scriptscriptstyle N-1}(z) & q_{\scriptscriptstyle N-2}(z) & q_{\scriptscriptstyle N-2}(z) \\ \cdot  & q_{\scriptscriptstyle N}(z) & q_{\scriptscriptstyle N-1}(z) \\ \cdot  & \cdot  & q_{\scriptscriptstyle N+1}(z)\end{array}\right),
\end{equation*}
where $q_{k}$ is a polynomial of degree $k$.\footnote{The notation shall just indicate the degree of the polynomials, that is $q_{\scriptscriptstyle N-1}$ and $q_{\scriptscriptstyle N-2}$ in different entries of the matrix can still be different polynomials, and if $N-2<0$ then it shall be the zero-polynomial.} In fact, from \eqref{eq:asymptPtop} we can not only deduce the degree of the polynomials but also their leading coefficients. The diagonal entries will have leading coefficient $-\frac{1}{2}$, $-1$, and 2, respectively, and the leading coefficients on the superdiagonal will be proportional to the angular momenta. Similarly, one can use \eqref{eq:asymptPbot} for $P$ adapted to the bottom outermost rod $(-∞, a_{1})$. Note that this does not impose any further restrictions on the coefficients of the space-time metric apart from being analytic.

The number of free parameters in $P$ equals the number of independent coefficients in the polynomials. Our aim must be to tie our space-time metric by expressing all those parameters in terms of not more than the $a_{i}$ and the angular momenta $L_{1}$, $L_{2}$. Any free parameter left in $P$ means another free parameter in our (family of) solutions. 

\begin{ex}[One-Nut Rod Structure]\mbox{}\\
Consider the case where the rod structure has one nut, which is without loss of generality at the origin (remember that a shifted rod structure corresponds to a diffeomorphic space-time), see Figure~\ref{fig:onenutrodstr}. We do not make assumptions about the angular momenta $L_{1}$, $L_{2}$. 

\begin{figure}[htbp]
\begin{center}
     \scalebox{0.8}{\input{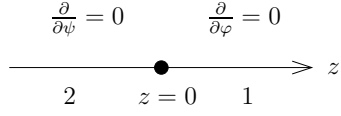}}
     \caption{Rod Structure with one nut at the origin. The numbers are just labelling the parts of the axis.} 
     \label{fig:onenutrodstr}
\end{center}
\end{figure}
According to our ansatz we have for the patching matrix on the top part of the axis
\begin{equation*}
\renewcommand{\arraystretch}{1.5}
P_{1} = \frac{1}{z} 
\left(\begin{array}{ccc}
-\dfrac{1}{2} + c_{1} z & 0 & 0 \\
0 & -z + c_{2} & c_{3} \\
0 & c_{3} & 2z^{2} + c_{4}z+ c_{5}
\end{array}\right),
\end{equation*}
which implies $L_{1}=ζ=0$. On the other hand for the bottom part it is
\begin{equation*}
\renewcommand{\arraystretch}{1.5}
P_{2} = \frac{1}{z} 
\left(\begin{array}{ccc}
-\dfrac{1}{2} + \tilde c_{1} z & 0 & 0 \\
0 & -z + \tilde c_{2} & \tilde c_{3} \\
0 & \tilde c_{3} & 2z^{2} + \tilde c_{4}z+ \tilde c_{5}
\end{array}\right),
\end{equation*}
and therefore necessarily $L_{2}=ζ=0$, thus $c_{3}=\tilde c_{3}=0$. This forces the patching matrix to be diagonal and since it has to have unit determinant,
\begin{equation*}
\det P_{1} = \frac{1}{z^{3}} \left(-\frac{1}{2} + c_{1} z\right)\left(-z + c_{2}\right)\left(2z^{2} + c_{4}z+ c_{5}\right) = 1,
\end{equation*}
we obtain $c_{1}=c_{2}=c_{4}=c_{5}=0$. But this is the patching matrix for flat space (see Section~\ref{sec:exMinkrodstr}). 

Hence we have shown that for a rod structure with one nut not all values for the conserved quantities are allowed, in fact they all (including mass) have to vanish, which in turn uniquely determines the space-time as Minkowski space.\\ \mbox{} \hfill $\blacksquare$
\end{ex}

Attempting the same for a rod structure with two nuts one will quickly notice that more tools are necessary in order to fix all the parameters. 

\begin{thm} \label{thm:invpmatrix}
If $P_{+}$ is the patching matrix adapted to $(a_{\scriptscriptstyle N}, ∞)$, then $P_{-}^{\vphantom{1}}=MP_{+}^{-1}M$ with $M=\left(\begin{smallmatrix}0&0&1\\0&1&0\\1&0&0\end{smallmatrix}\right)$ is the patching matrix adapted to $(-∞,a_{1})$.
\end{thm}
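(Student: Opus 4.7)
My plan is to show that $\tilde P(z) \coloneqq M P_+(z)^{-1} M$ satisfies the characterising properties of the patching matrix adapted to $(-\infty, a_1)$ and conclude by uniqueness.

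I would first verify the formal properties: since $\det P_+ = 1$ by Lemma~\ref{lem:highernstprop}, the inverse $P_+^{-1}$ equals the cofactor transpose, so its entries are polynomial in those of $P_+$ and inherit the pole structure dictated by Corollary~\ref{cor:singofP} and Proposition~\ref{prop:simplepoles}; hence $\tilde P$ is meromorphic on the $w$-Riemann sphere with simple poles at most at the nuts. Symmetry $\tilde P = \tilde P^{\mathrm{t}}$ follows from $M^{\mathrm{t}} = M$ together with the symmetry of $P_+$, and $\det \tilde P = (\det M)^{2} \det P_+^{-1} = 1$.

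Next I would match the asymptotic behaviour at infinity. Substituting the expansion \eqref{eq:asymptPtop} for $P_+$ into the cofactor formula for $P_+^{-1}$ and computing order by order in $1/z$, the leading term of $M P_+^{-1} M$ is $\operatorname{diag}(1/(2z),\, -1,\, -2z)$, which agrees with the leading part of \eqref{eq:asymptPbot}. At the next subleading order, a careful entry-by-entry computation shows that the asymptotic coefficients coincide with those of \eqref{eq:asymptPbot} under the identifications $M \mapsto M$, $\eta \mapsto -\eta$, $\zeta \mapsto \zeta$, $L_1 \leftrightarrow L_2$, which are precisely the substitutions already identified in Section~\ref{sec:exasympt} as the geometric effect of exchanging the top and bottom asymptotic ends.

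The final step is to promote the asymptotic matching to a full identity. Conceptually, by Proposition~\ref{prop:EquBdls} both $\tilde P$ and $P_-$ are patching matrices for the same underlying bundle on the common refinement of the two adaptations, so they represent the same holomorphic object and must coincide as meromorphic functions. More concretely, matching the leading asymptotic forces the polynomial (growing) part of the difference $\tilde P - P_-$ to vanish, and the remaining rational part, having simple poles only at the nuts, is determined by its residues; these are in turn fixed by the bulk geometry near each nut via Proposition~\ref{prop:AnalyCont} and the local axis expansion carried out in Section~\ref{sec:Pnearaxis}, forcing $\tilde P = P_-$. The main obstacle I foresee is the subleading asymptotic computation: because $(P_+)_{00}$ and $(P_+)_{22}$ scale as $1/z$ and $z$ respectively, the leading inversion is degenerate and each entry of $P_+^{-1}$ receives contributions from several orders of $P_+$; a reassuring sanity check is the diagonal case of Minkowski \eqref{eq:PMink} and the Schwarzschild limit of Section~\ref{sec:PMP}, where $P_+$ is diagonal and the identity reduces to inspection.
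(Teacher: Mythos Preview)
Your approach is fundamentally different from the paper's, and the final ``uniqueness'' step contains a genuine gap.

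The paper's argument is twistor-theoretic: it recalls that adaptation to a rod amounts to a choice of how the roots $\zeta_i^{\pm}$ of the double points are assigned to the two spheres $S_0,S_1$ of the reduced twistor space. Passing from the adaptation to $(a_N,\infty)$ to the adaptation to $(-\infty,a_1)$ means swapping the assignment at \emph{every} finite double point, which is the same operation as swapping the spheres $S_0\leftrightarrow S_1$ (equivalently, swapping the double point at infinity). By the general splitting formalism (property (3) in \cite{Woodhouse:1988ek}), swapping the spheres sends the solution $J$ to $J^{-1}$, hence sends the patching matrix to its inverse. The conjugation by $M$ is then a pure gauge normalisation, read off from the asymptotic forms \eqref{eq:asymptPtop} and \eqref{eq:asymptPbot}. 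The crucial content is the sentence linking sphere-swapping to inversion; everything else is bookkeeping.

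Your proposal never makes this link. You define $\tilde P=MP_+^{-1}M$ and check that it is symmetric, of unit determinant, has at most simple poles at the nuts, and has the right leading asymptotics. But these properties do not \emph{characterise} $P_-$: many rational matrices with denominator $\Delta$ share them. Your appeal to Proposition~\ref{prop:EquBdls} is circular --- that proposition compares two adaptations of the \emph{same} bundle, whereas you have not shown that $\tilde P$ is a patching matrix for any bundle, let alone the one determined by $J$. The alternative ``residue matching'' route is equally incomplete: to carry it out you would need an independent computation of the residues of $P_-$ at each interior nut and a proof that they coincide with those of $\tilde P$, but nothing in Section~\ref{sec:Pnearaxis} or Proposition~\ref{prop:AnalyCont} supplies that comparison. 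What is missing is precisely a reason, rooted in the construction of the bundle, why inversion corresponds to switching the outermost adaptations --- and that is what the paper's sphere-swapping argument provides.
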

\begin{proof}
The proof is based on the twistor construction as we have seen it in Chapter~\ref{ch:bundles}. For ease of reading we will review some of the elements involved. 

The considerations in Section~\ref{sec:twistconstr} started with a map from the correspondence space to the reduced twistor space $p:\Fcal_{r} = Σ×\Xcal→\Rcal$, where $Σ$ is a two-dimensional conformal manifold, $r$ a solution of the Laplace equation on $Σ$, $z$ its harmonic conjugate and $\Xcal$ a Riemann sphere. For a fixed $σ∈Σ$ this map can be restricted to
\begin{equation*}
π: \Xcal → \Rcal, \quad ζ \mapsto w= \frac{1}{2} r(σ) \left(ζ^{-1}-ζ\right)+z(σ). 
\end{equation*}
Given a bundle $E→\Rcal$ one had to assume that $π^{*}(E)$, the pullback of $E$ to $\Xcal$, is trivial in order to construct $J$. Then the splitting procedure will provide a $J(r,z)$, which depends smoothly on $r$ and $z$. 

$\Rcal$ consists of two Riemann spheres that are identified over a certain set. There will always be points, like $w=∞$, where the spheres cannot be glued. These points are for obvious reasons called double points and the case that is interesting for us is when these points are real and there is only a finite number of them. In other words the two spheres are glued together up to the set $\{∞,a_{1},…,a_{\scriptscriptstyle N}\}$. 

Therefore, $π$ is only well-defined if one specifies the assignment of these poles to the spheres. The roots of the double points $a_{i}$ satisfy
\begin{equation*}
rζ_{i}^{2} + 2(a_{i}^{\vphantom{2}}-z)ζ_{i}^{\vphantom{2}} -r =0,
\end{equation*}
so they are
\begin{equation} \label{eq:rootsdoublepoints}
ζ_{i}^{±} = \frac{1}{r} \left((a_{i}^{\vphantom{2}}-z)±\sqrt{(z-a_{i}^{\vphantom{2}})^{2}+r^{2}}\right).
\end{equation}
We note two things. First, the spheres are labelled by saying that the roots of $w=∞$, namely $ζ=0$ and $ζ=∞$, are mapped to $π(0)=∞_{0}∈S_{0}$ and $π(∞)=∞_{1}∈S_{1}$. Second, $r$ and $z$ are chosen as parameters in the very beginning, but the obtained expressions depend smoothly on $r$ and $z$ so that we can vary them and follow the consequences. One observation of this kind is that for $r→0$ one of the roots in \eqref{eq:rootsdoublepoints} tends to zero and one to infinity. 

Hence, given a solution $J$ of Yang's equation~\eqref{eq:redyang} and the corresponding bundle $E→\Rcal_{U}$, $U=\Cbb\Pbb^{1}\backslash \{∞,a_{1},…,a_{\scriptscriptstyle N}\}$ the region where the spheres are identified and $U$ not simply connected, then the description of the twistor space as $S_{0}^{\vphantom{2}}∪S_{1^{\vphantom{2}}}$ and the patching matrix $P$ are adapted to the component $\Ccal$ of the real axis if those $ζ_{i}^{±}$ that tend to zero for $r→0$ on $\Ccal$ are assigned to $S_{0}^{\vphantom{2}}$ and those that tend to infinity are assigned to $S_{1}^{\vphantom{2}}$; see also \cite[Prop.~3.2]{Fletcher:1990aa}. This is merely a requirement of consistent behaviour under the variation of $r$ and $z$, since $π(0)∈S_{0}^{\vphantom{2}}$ and $π(∞)∈S_{1}^{\vphantom{2}}$. Note that in this case on $\Ccal$ it is $P(z)=J'(z)$. 

More explicitly this can be stated as
\begin{equation*}
\renewcommand{\arraystretch}{1.3}
ζ_{i}^{+} → \left\{ \begin{array}{ll}
0, & i≤k \\
∞, & i>k
\end{array}\right. \quad \text{and} \quad
ζ_{i}^{-} → \left\{ \begin{array}{ll}
∞, & i≤k \\
0, & i>k
\end{array}\right. ,
\end{equation*}
for an adaptation to $\Ccal=(a_{k},a_{k+1})$ and for $r→0$ on $z∈(a_{k},a_{k+1})$.

This allows to draw the conclusion that for a given a bundle a change of adaptation from $(a_{k},a_{k+1})$ to $(a_{k-1},a_{k})$ is achieved by swapping the assignment of $π(ζ_{k}^{±})$ to the spheres; see \cite[Sec.~3.2]{Fletcher:1990aa}. Following this idea, one then obtains the adaptation to $(-∞,a_{1})$  from an adaptation to $(a_{\scriptscriptstyle N}, ∞)$ by swapping all double points $π(ζ_{k}^{±})$, $1≤k≤N$. However, the latter is the same as swapping the double point at infinity which means relabelling the spheres by saying $π(0)∈S_{1}^{\vphantom{2}}$ and $π(∞)∈S_{0}^{\vphantom{2}}$. \\

Now let us step back from this line of ideas and have a look from another side. Note that if $J$ is a solution of Yang's equation~\eqref{eq:redyang} with $\det J =1$ and $J$ is symmetric, $J=J^{\mathrm{t}}$, then $J^{-1}$ is a solution of Yang's equation as well with $\det J^{-1}=1$. On the other hand, just by inspection of the splitting procedure one notices that $J$ is defined as a linear map $J:E_{∞_{0}}→E_{∞_{1}}$, where $E_{w}$ is the fiber of $E→\Rcal$ over $w∈\Rcal$, and that $J^{-1}$ is the solution of Yang's equation generated by the bundle where the spheres are swapped, $S_{0}↔S_{1}$ (see also property (3) in \cite[Sec.~4]{Woodhouse:1988ek}). But this is precisely what we have done above.\\

The last point to note is that even though we have shown that $P^{-1}$ is adapted to $(-∞,a_{1})$ it does not necessarily have to be in our standard form due to the gauge freedom in the splitting procedure. Using the asymptotic form of $P$ in \eqref{eq:asymptPtop} and comparing the fall-off of $P^{-1}$ with \eqref{eq:asymptPbot} one sees that a flip of the first and third row and column brings $P^{-1}$ into the desired standard form. This is implemented by the conjugation with $M$, which completes the proof.
\end{proof}

\begin{cor} \label{cor:invpmatrix}
If $P_{+}$ is the patching matrix adapted to $(a_{\scriptscriptstyle N}, ∞)$, then $Δ$ divides all $2×2$-minors of $Δ⋅P_{+}^{\vphantom{-1}}=P'_{+}$.
\end{cor}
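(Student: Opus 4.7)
The plan is to exploit Theorem~\ref{thm:invpmatrix} together with the unit-determinant property of the Ernst potential. First I would observe that by Lemma~\ref{lem:highernstprop} the patching matrix $P_{+}$ satisfies $\det P_{+} = 1$ on its adapted rod, and this identity persists under analytic continuation, so $\det P_{+} \equiv 1$. Consequently, the matrix inverse reduces to the adjugate:
\begin{equation*}
P_{+}^{-1} = \operatorname{adj}(P_{+}),
\end{equation*}
whose entries are (signed) $2\times 2$ minors of $P_{+}$.

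Next I would translate this into a statement about $P'_{+}$. Since every $2\times 2$ minor of $P_{+} = \Delta^{-1} P'_{+}$ carries a factor $\Delta^{-2}$ times the corresponding $2\times 2$ minor $m_{ij}$ of $P'_{+}$, we obtain
\begin{equation*}
P_{+}^{-1} = \frac{1}{\Delta^{2}}\, N, \qquad N_{ij} = (-1)^{i+j} m_{ji},
\end{equation*}
with $N$ a matrix of polynomials in $z$.

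Now I invoke Theorem~\ref{thm:invpmatrix}: the matrix $P_{-} = M P_{+}^{-1} M$ is the patching matrix adapted to the outermost bottom rod $(-\infty, a_{1})$. Applying the ansatz outlined above to this adaptation — justified by Corollary~\ref{cor:singofP} and Proposition~\ref{prop:simplepoles}, which assert that the only real singularities of a patching matrix are simple poles at the nuts — we conclude that $P_{-}$ admits a presentation
\begin{equation*}
P_{-} = \frac{1}{\Delta}\, P'_{-},
\end{equation*}
with $P'_{-}$ a matrix of polynomials. Since $M$ only permutes rows and columns, the entries of $\Delta^{-2} N$ must coincide (up to this permutation) with the entries of $\Delta^{-1} P'_{-}$. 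Equating the two expressions for each entry yields $m_{ji} = \Delta \cdot (\text{polynomial})$ for every $i, j$, which is the claim.

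The calculation is essentially bookkeeping once Theorem~\ref{thm:invpmatrix} and the identity $\det P_{+} = 1$ are in hand, so I do not anticipate a serious obstacle. The only point requiring care is ensuring that the \emph{same} polynomial $\Delta$ (with the same simple roots at $a_{1}, \dots, a_{N}$) appears in the denominators for both adaptations; this is where the ansatz, supported by Corollary~\ref{cor:singofP} and Proposition~\ref{prop:simplepoles}, is essential — without it one could not rule out that $P_{-}$ acquires extra poles or higher multiplicities that would spoil the divisibility conclusion.
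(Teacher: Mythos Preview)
Your proposal is correct and follows essentially the same route as the paper: combine Theorem~\ref{thm:invpmatrix} with Proposition~\ref{prop:simplepoles} (and implicitly Corollary~\ref{cor:singofP}) to see that $P_{+}^{-1}$ has at most simple poles at the nuts, then use the adjugate formula together with $\det P_{+}=1$ to write the entries of $P_{+}^{-1}$ as $2\times 2$ minors of $P'_{+}$ divided by $\Delta^{2}$, forcing one factor of $\Delta$ to cancel. The only difference is that you make the use of $\det P_{+}=1$ (via Lemma~\ref{lem:highernstprop}) explicit, whereas the paper leaves it implicit in the phrase ``general formula for the inverse''.
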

\begin{proof}
From Theorem~\ref{thm:invpmatrix} and Proposition~\ref{prop:simplepoles} it follows that $P_{+}^{-1}$ has at most simple poles at the nuts. But by the general formula for the inverse of a matrix the entries of $P_{+}^{-1}$ are (up to a sign) $P_{+}^{i,j}/Δ^{2}$, where $P_{+}^{i,j}$ is the $2×2$-minor of $Δ⋅P_{+}$ obtained by cancelling the $i^{\mathrm{th}}$ row and $j^{\mathrm{th}}$ column. Hence one factor of $Δ$ has to cancel.
\end{proof}

This turns out to be a powerful tool as seen in the next example. But before we would like to point out an implication of the last corollary. 
\begin{rem}
Taking the Ernst potential \eqref{eq:highernst} in five dimensions and writing it in the following way
\begin{equation*}
\renewcommand{\arraystretch}{1.4}
J'_{+}(z)=\left(\begin{array}{cc} \hphantom{-}g & -gχ^{\mathrm{t}} \\-gχ & \skew{7}{\tilde}{A} + gχχ^{\mathrm{t}}\end{array}\right)
 = \frac{1}{Δ}\left(\begin{array}{cc}p_{0} & \vec{p}^{\,\mathrm{t}} \\ \vec{p} & \Pbb \end{array}\right),
\end{equation*}
the matrix of metric coefficients $\skew{7}{\tilde}{A}$ as a function of $z$ is obtained as
\begin{equation}\label{eq:metricfromP}
\skew{7}{\tilde}{A} = \frac{1}{Δ} \Pbb - \frac{1}{Δp_{0}}\vec{p}\cdot \vec{p}^{\,\mathrm{t}} = \frac{1}{Δp_{0}}\left(p_{0}\Pbb-\vec{p}\cdot \vec{p}^{\,\mathrm{t}}\right).
\end{equation}
All entries of $p_{0}\Pbb-\vec{p}\cdot \vec{p}^{\,\mathrm{t}}$ are $2×2$-minors of $Δ⋅J'$, hence $Δ$ divides them. Thus $\skew{7}{\tilde}{A}= \tilde \Pbb / p_{0}$ where the entries of $\tilde \Pbb$ are polynomials in $z$. We remember from Section~\ref{sec:adapt} that $p_{0}/Δ$ blows up when we approach $a_{N}$, that means $p_{0}$ cannot have a factor $(z-a_{N})$. So, the entries of $\skew{7}{\tilde}{A}$ are bounded as $z↓a_{N}$, a feature which is consistent with our picture of space-time.

Note, however, that we cannot extend that to other nuts without changing the adaptation, that is to say, the expression for a metric coefficient $J_{ij}(z,r=0)$, $z>a_{N}$, might contain poles for $z<a_{N}$ as the $J_{22}$ for the black ring shows
\begin{align*}
\renewcommand{\arraystretch}{2.5}
J_{22}(z) = \left\{\begin{array}{cl}\dfrac{2(z-{\kappa}^{2})(z+c{\kappa}^{2})}{z-c{\kappa}^{2}} & ,\ z>{\kappa}^{2}, \\0 & ,\ c{\kappa}^{2}<z<{\kappa}^{2}, \\ \dfrac{2(z-c{\kappa}^{2})(z+c{\kappa}^{2})}{z-{\kappa}^{2}} & ,\ -c{\kappa}^{2}<z<c{\kappa}^{2}, \\0 & ,\ z<-c{\kappa}^{2}.\end{array}\right.
\end{align*}
The terms in the denominators vanish for certain values of $z$, but these are not singularities of the metric since they are not in the domain of the respective expression. The reason for pointing out this is that when we try to fix the free parameters in our ansatz one cannot take \eqref{eq:metricfromP} and say because the metric is regular the denominator has to divide the numerator up to a constant. 

Note further that even though Theorem~\ref{thm:invpmatrix} and Corollary~\ref{cor:invpmatrix} generalize directly to $n$ dimensions, the conclusion for the metric coefficients in $\skew{7}{\tilde}{A}$ does not. This is because in higher dimensions the entries of $\skew{7}{\tilde}{A}$ will still consist of certain $2×2$-minors of $P_{+}^{\vphantom{1}}$ as in \eqref{eq:metricfromP}, whereas $P_{+}^{-1}$ being a patching matrix requires $Δ^{n-4}$ to divide the $(n-3)×(n-3)$-minors of $P_{+}^{\vphantom{1}}$. This coincides in five dimensions, but is not implied automatically for dimensions greater than five. Yet, the boundedness of the metric coefficients ought to hold always, so that it at most gives extra conditions on the free parameters.\\ \mbox{} \hfill $\blacksquare$
\end{rem}
Now lets turn to the example promised earlier.
\begin{ex}[Two-Nut Rod Structure] \mbox{}\\
Consider the rod structure as in Figure~\ref{fig:twonutrodstr}. 
\begin{figure}[htbp]
\begin{center}
     \scalebox{0.8}{\input{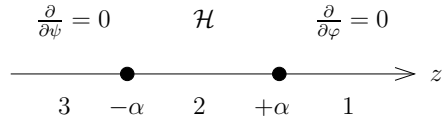}}
     \caption{Rod structure with two nuts.} 
     \label{fig:twonutrodstr}
\end{center}
\end{figure}

In line with the above ansatz we start off from
\begin{equation} \label{eq:MPpatmatansatz}
\renewcommand{\arraystretch}{2.5}
P=\frac{1}{z^{2}-α^{2}} \left(\begin{array}{ccc}-\dfrac{1}{2}z+c_{1} & \dfrac{L_{1}}{π} & c_{2}\\ \cdot  & -z^{2}+c_{3}z+c_{4} & -\dfrac{2L_{2}}{π}z+c_{5} \\ \cdot  & \cdot  & 2z^{3}+c_{6}z^{2}+c_{7}z+c_{8}\end{array}\right),
\end{equation}
which we assume to be adapted to $(a_{N},∞)$ and where the orientation of the basis is without loss of generality chosen such that $\epsilon=1$ in \eqref{eq:asymptPtop}.

We now make use of Corollary~\ref{cor:invpmatrix} which for the minor obtained by cancelling the third row and first column yields
\begin{equation*}
c_{2}z^{2} + \left(-c_{2}c_{3} - \frac{2L_{1}L_{2}}{π^{2}}\right)z-c_{2}c_{4}+\frac{L_{1}c_{5}}{π} \sim z^{2}-α^{2}. 
\end{equation*}
Comparing the (ratio of) coefficients returns
\begin{align}
c_{2}c_{3} & = -\frac{2L_{1}L_{2}}{π^{2}}, \label{eq:MPcoeff1}\\
c_{4} & = α^{2} +\frac{L_{1}c_{5}}{πc_{2}}.\label{eq:MPcoeff2}
\end{align}

Choosing the minor obtained from cancelling second row and third column we repeat this and get
\begin{equation*}
\frac{L_{2}}{π} z^{2} - \left(\frac{1}{2}c_{5}+\frac{2L_{2}c_{1}}{π}\right)z + c_{1}c_{5} - \frac{L_{1}c_{2}}{π} \sim z^{2}-α^{2},
\end{equation*}
thus
\begin{align}
c_{5} & = - \frac{4L_{2}}{π} c_{1}, \label{eq:MPcoeff3}\\
4 c_{1}^{2} & = α^{2}-\frac{L_{1}}{L_{2}}c_{2}^{\vphantom{1}}. \label{eq:MPcoeff4}
\end{align}
These four equations allow us to express $c_{1}$, $c_{2}$, $c_{4}$ and $c_{5}$ in terms of $c_{3}$. 

The coefficients $c_{7}$ and $c_{8}$ can be fixed by the minor which results from cancelling the second row and the first column
\begin{equation*}
-\frac{2L_{1}}{π} z^{3} + \frac{c_{6}L_{1}}{π}z^{2} + \left(\frac{c_{7}L_{1}}{π}+\frac{2c_{2}L_{2}}{π}\right) z + \frac{L_{1}c_{8}}{π}-c_{2}c_{5}  \sim z^{3} + bz^{2} - α^{2} z - b α^{2},
\end{equation*}
where $b$ is some constant. Again the ratios of the coefficients for the linear over the cubic and the constant over the quadratic term give
\begin{align}
c_{7} & = - 2 α^{2} - \frac{2L_{2}}{L_{1}}c_{2}, \label{eq:MPcoeff5} \\
c_{8} & = -α^{2} c_{6} + \frac{π}{L_{1}} c_{2}c_{5}. \label{eq:MPcoeff6}
\end{align}
The last coefficient that remains undetermined is $c_{6}$, but the determinant is going to help us for this. The requirement $\det P = 1$ implies
\begin{align*}
\left(z^{2} -α^{2}\right)^{3} & = z^{6} + \left(\frac{1}{2}c_{6} - 2c_{1}-c_{3}\right) z^{5} \\
							& \hspace{0.4cm} + \left(2c_{1}c_{3}-c_{1}c_{6}-c_{4}-\frac{1}{2}c_{3}c_{6}+\frac{1}{2}c_{7}\right)z^{4} + …
\end{align*}
The quintic term immediately gives the desired expression
\begin{equation}\label{eq:MPcoeff7}
c_{6} = 4c_{1}+2c_{3}. 
\end{equation}
Exploiting furthermore the quartic term we get 
\begin{equation*}
-3α^{2} = 2 c_{1}c_{3}- c_{1}c_{6} - c_{4} - \frac{1}{2}c_{3}c_{6}+\frac{1}{2}c_{7},
\end{equation*}
which, by using the above obtained relations, is equivalent to
\begin{equation} \label{eq:MPcoeff9}
α^{2} = 4c_{1}^{2} + 4 c_{1}^{\vphantom{1}}c_{3}^{\vphantom{1}} + c_{3}^{2} + \frac{L_{1}}{L_{2}}c_{2}^{\vphantom{1}}.
\end{equation}
Let us relabel the parameters in accordance with \cite{Harmark:2004rm} as follows
\begin{equation*}
c_{3}^{\vphantom{1}} = \frac{1}{2} ρ_{0}^{2}, \quad L_{1}^{\vphantom{1}} = \frac{π}{4} a_{1}^{\vphantom{2}}ρ_{0}^{2},  \quad L_{2}^{\vphantom{1}} = \frac{π}{4} a_{2}^{\vphantom{2}}ρ_{0}^{2}.
\end{equation*}
Note that from the asymptotic patching matrix we see that $c_{3}$ is proportional to the mass which justifies the implicit assumption about its positiveness in the above definition. However, the parameters $(ρ_{0},a_{1},a_{2})$ are not unconstrained as we will see soon.

By \eqref{eq:MPcoeff1} it is
\begin{equation} \label{eq:MPcoeff10}
c_{2}^{\vphantom{1}} = - \frac{1}{4} a_{1}^{\vphantom{1}}a_{2}^{\vphantom{1}}ρ_{0}^{2}.
\end{equation}
Equations~\eqref{eq:MPcoeff4}, \eqref{eq:MPcoeff9}, \eqref{eq:MPcoeff10} imply
\begin{equation*}
\frac{L_{1}}{L_{2}} c_{2}^{\vphantom{1}} = 4 c_{1}^{\vphantom{1}} c_{3}^{\vphantom{1}} + c_{3}^{2} + \frac{L_{2}}{L_{1}} c_{2}^{\vphantom{1}} \quad ⇒ \quad c_{1}^{\vphantom{1}} = - \frac{1}{8} \left(ρ_{0}^{2}+a_{1}^{2}-a_{2}^{2}\right).
\end{equation*}
Moreover, from \eqref{eq:MPcoeff3} and \eqref{eq:MPcoeff7} we obtain
\begin{equation*}
c_{5}^{\vphantom{1}} = \frac{1}{8} a_{2}^{\vphantom{1}} ρ_{0}^{2} \left(ρ_{0}^{2}+a_{1}^{2}-a_{2}^{2}\right) \quad \text{and} \quad c_{6}^{\vphantom{1}} = \frac{1}{2} \left(ρ_{0}^{2}-a_{1}^{2}+a_{2}^{2}\right).
\end{equation*}
Continuing with \eqref{eq:MPcoeff2} yields
\begin{equation*}
c_{4}^{\vphantom{1}} = α^{2}-\frac{1}{8} ρ_{0}^{2}\left(ρ_{0}^{2}+a_{1}^{2}-a_{2}^{2}\right),
\end{equation*}
so do \eqref{eq:MPcoeff5} and \eqref{eq:MPcoeff6} give
\begin{align*}
c_{7}^{\vphantom{1}} & = -2α^{2} + \frac{1}{2} a_{2}^{2} ρ_{0}^{2}, \\
c_{8}^{\vphantom{1}} & = \frac{1}{2} α^{2} \left(-ρ_{0}^{2}+a_{1}^{2}-a_{2}^{2}\right) - \frac{1}{8} a_{2}^{2}ρ_{0}^{2}\left(ρ_{0}^{2}+a_{1}^{2}-a_{2}^{2}\right).
\end{align*}
With these parameters being determined and with the help of \eqref{eq:MPcoeff9} we can write $α$ explicitly as 
\begin{equation} \label{eq:MPalpha}
α^{2} = \frac{1}{16}\left(ρ_{0}^{2}-a_{1}^{2}-a_{2}^{2}\right)^{2}-\frac{1}{4}a_{1}^{2}a_{2}^{2}.
\end{equation}
Comparing those expressions with \eqref{eq:MPpatmat} one will find that they coincide. However, note that
\begin{equation*}
16α^{2} = ρ_{0}^{4}-2ρ_{0}^{2}\left(a_{1}^{2}+a_{2}^{2}\right)+\left(a_{1}^{2}-a_{2}^{2}\right)^{2},
\end{equation*}
which implies that for real non-zero $α$ we need the left hand side to be positive and therefore $ρ_{0}^{2}>\left(|a_{1}^{\vphantom{1}}|+|a_{2^{\vphantom{1}}}|\right)^{2}$, a condition on the asymptotic quantities familiar from the discussion of the Myers-Perry solution in \cite{Emparan:2008aa} and \cite{Myers:2011yc}, or we need $0<ρ_{0}^{2}<\left(|a_{1}^{\vphantom{1}}|-|a_{2}^{\vphantom{1}}|\right)^{2}$. This latter possibility is ruled out in \cite{Emparan:2008aa} and \cite{Myers:2011yc} by the additional requirement that $ρ_{0}^{2}>|a_{1}^{\vphantom{1}}|^{2}+|a_{2}^{\vphantom{1}}|^{2}$, and it is also forbidden by \cite[Prop.~3.1]{Chrusciel:2011eu}. Yet, it is allowed by our analysis. So, characterizing our solution by the mass and the two angular momenta we obtain next to the Myers-Perry space-time another branch, which is unphysical.

Mass and angular momenta form a set of three parameters and the position of the nuts can be expressed in terms of these three parameters. This is more than one would have expected by only regarding Theorem~\ref{thm:holluniqueness}. However, we stated already that by \cite[Prop.~3.1]{Chrusciel:2011eu} the mass is redundant in the set of parameters. Here we did not actually eliminate the mass, but the rod length. If one tries vice versa, that is one tries to replace $M$ by $α$ in the set of parameters, then by rearranging \eqref{eq:MPalpha} one is looking for all positive $c_{3}$ which satisfy a $6^{\mathrm{th}}$ order polynomial. As shown in Appendix~\ref{app:MPpar}, with no further conditions on $(α>0, L_{1}^{\vphantom{1}},L_{2}^{\vphantom{1}})$ there are also two positive solutions for $c_{3}$ (unless $L_{1}^{2}=L_{2}^{2}$ when there is only one). Thus, we are facing the same problem as before, and this suggests that further conditions need to be imposed on the parameters in order to rule out the unphysical solutions.

Some of the steps above, when we determined all the parameters in the patching matrix, only work for $L_{1},L_{2}≠0$. Assuming that one of the angular momenta vanishes leads to dichotomies at certain steps when tracing back the parameters. Some of the branches in this tree of possibilities are dead ends others lead to valid solutions such as the Myers-Perry solution with one vanishing angular momentum or an ultrastatic solution, that is where $g_{tt}=1$, $g_{ti}=0$ (but this is not physical as the mass is zero). On the other hand at no point we used the fact that the middle rod is a horizon.

Note also that issues of conicality cannot arise here as the periodicities of $\phi,\psi$ are chosen to be $2\pi$ on the outer parts of 
the axis and no further spatial rods are left. When we turn to higher numbers of nuts more will be required.\\ \mbox{} \hfill $\blacksquare$
\end{ex}
Moving on to the next level, that is to a rod structure with three nuts, we will consider the simpler case where one of the Killing vectors is hypersurface-orthogonal. This simplifies the computation in comparison with the general case as we will. 
\begin{ex}[Three-Nut Rod Structure with one Hypersurface-Orthogonal Killing Vector]\mbox{}\\
We consider the rod structure as in Figure~\ref{fig:threenutrodstr}. 
\begin{figure}[htbp]
\begin{center}
     \scalebox{0.8}{\input{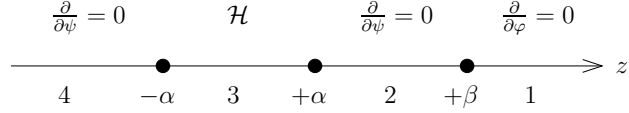}}
     \caption{Rod structure with three nuts, where $α,β>0$, and $S^{2}×S^{1}$ horizon.} 
     \label{fig:threenutrodstr}
\end{center}
\end{figure}
Together with $L_{1}=L\neq 0$, $L_{2}=0$ this comprises our twistor data. In order to simplify the calculations we would like to make assumptions such that the two non-diagonal entries in the third row and column of the patching matrix vanish (when adapted to $(β,∞)$). One therefore needs $g_{t{\psi}}=g_{{\varphi}{\psi}}=0$, however, this cannot be concluded from $L_{2}=0$, as the Black Saturn shows \cite{Elvang:2007rd}. The Black Saturn is non-static, but allows vanishing total angular momentum. We thus make the assumption that $\partial_{\psi}$ is hypersurface-orthogonal, that is $Ψ∧\drm Ψ=0$, and in Appendix~\ref{app:hso} it is shown that this implies $χ_{2}=g_{t{\psi}}=g_{{\varphi}{\psi}}=0$. 

These assumptions turn our ansatz into
\begin{equation*}
\renewcommand{\arraystretch}{1.5}
P = \frac{1}{Δ}
\left( \begin{array}{ccc}
q(z) & l(z) & 0 \\
l(z) & c(z) & 0 \\
0 & 0 & Q(z)
\end{array}\right),
\end{equation*}
where
\begin{equation*}
\renewcommand{\arraystretch}{2}
\begin{array}{rcl}
Δ (z)& = & (z+α)(z-α)(z-β), \\
q(z) & = & \dfrac{1}{2} z^{2} + c_{1}z+c_{2}, \\
l(z) & = &\dfrac{L}{π} z + c_{3}, \\
c(z) & = &-z^{3} + c_{4} z^{2} + c_{5} z +c_{6}, \\
Q(z) & = & 2 z^{4} + c_{7} z^{3} + c_{8} z^{2} + c_{9} z + c_{10}.
\end{array}
\end{equation*}
Theorem~\ref{thm:invpmatrix} gives the following conditions
\begin{equation} \label{eq:BRcoeff1}
\renewcommand{\arraystretch}{1.4}
\begin{array}{rcrrl}
qc-l^{2} & = & \tilde q_{1}Δ,\qquad & \tilde q_{1} & \hspace{-0.2cm} \text{ quadratic},\\
Qq^{\hphantom{2}} & = & \tilde c_{1}Δ,\qquad & \tilde c_{1} & \hspace{-0.2cm} \text{ cubic},\\
Ql^{\hphantom{2}} & = & \tilde q_{2}Δ,\qquad & \tilde q_{2} & \hspace{-0.2cm} \text{ quadratic},\\
Qc^{\hphantom{2}} & = & \tilde Q_{1}Δ,\qquad & \tilde Q_{1} & \hspace{-0.2cm} \text{ quartic}.
\end{array}
\end{equation}
The condition for the patching matrix to have unit determinant then implies
\begin{equation}\label{eq:BRcoeff2}
Δ^{3} = Q (qc-l^{2}) = Q \tilde q_{1} Δ \quad ⇔ \quad Δ^{2} = Q \tilde q_{1}. 
\end{equation}
Now, as $\tilde q_{1}$ is a quadratic, there are six possibilities for it to be a product of $(z+α)$, $(z-α)$ and $(z-β)$. But $∂_{ψ}=0$ on $(α,β)$, thus $\frac{Q}{Δ}→0$ for $z \downarrow β$. To guarantee this $(z-β)^{2}$ has to divide $Q$, which rules out three of those six possibilities. Furthermore, by Theorem~\ref{thm:invpmatrix} we have 
\begin{equation*}
\frac{\tilde q_{1}}{Δ \vphantom{\skew{7}{\tilde}{A}_{4}}}=\frac{1}{\det \skew{7}{\tilde}{A}_{4}} \quad \text{on } (-∞,-α),
\end{equation*}
where $\skew{7}{\tilde}{A}_{4}$ is obtained from $J$ by cancelling the rows and columns containing inner products with $∂_{ψ}$. But from the general theory we know that the entry of $P$ with the inverse determinant contains a simple pole when approaching the nut, that is $z \uparrow -α$, so that $\tilde q_{1}(-α) ≠ 0$. This immediately yields
\begin{equation*}
\tilde q_{1} = \frac{1}{2} (z-α)^{2} \quad \text{and by \eqref{eq:BRcoeff2} also} \quad Q = 2 (z+α)^{2}(z-β)^{2}.
\end{equation*}
Now observe that there is a factor of $(z-α)$ in $Δ$ but not in $Q$, so that by \eqref{eq:BRcoeff1} the monic $(z-α)$ has to divide $l$, $q$ and $c$. We write this as
\begin{equation*}
l = \frac{L}{π} (z-α), \quad q = -\frac{1}{2} (z-α)\, \tilde l_{1}, \quad c = - (z-α)\, \tilde q_{3},
\end{equation*}
where
\begin{equation*}
\tilde l_{1} = z+A, \quad \tilde q_{3} = z^{2} + Bz + C \qquad \text{for } A, B, C = \text{const.}
\end{equation*}
The first equation in \eqref{eq:BRcoeff1} then turns into
\begin{equation*}
\renewcommand{\arraystretch}{2}
\begin{array}{crcl}
& \tilde l_{1} \tilde q_{3} - \dfrac{2L^{2}}{π^{2}} & = & Δ \\
⇔ & z^{3} + (A+B) z^{2} + (C+AB) z + AC-\dfrac{2L^{2}}{π^{2}} & = & z^{3}-βz^{2}-α^{2}z+α^{2}β.
\end{array}
\end{equation*}
Comparing the coefficients one sees
\begin{equation*}
B = -A-β, \qquad C+AB=-α^{2}, \qquad AC-\frac{2L^{2}}{π^{2}} = α^{2}β,
\end{equation*}
and therefore $A$ satisfies
\begin{equation*}
\renewcommand{\arraystretch}{2}
\begin{array}{crcl}
& \dfrac{1}{A} \left(α^{2}β+\dfrac{2L^{2}}{π^{2}}\right) - A(A+β) & = & - α^{2} \\
⇔ & A^{3} + βA^{2}-α^{2}A-α^{2}β - \dfrac{2L^{2}}{π^{2}} & = & 0.
\end{array}
\end{equation*}
Writing $F(a) \coloneqq a^{3} + βa^{2}-α^{2}a-α^{2}β - \dfrac{2L^{2}}{π^{2}}$, we see that since $F(0)<0$, this last polynomial has to have at least one (positive) real root (see Figure~\ref{fig:poly}). 
\begin{figure}[htbp]
\begin{center}
     \scalebox{0.9}{\input{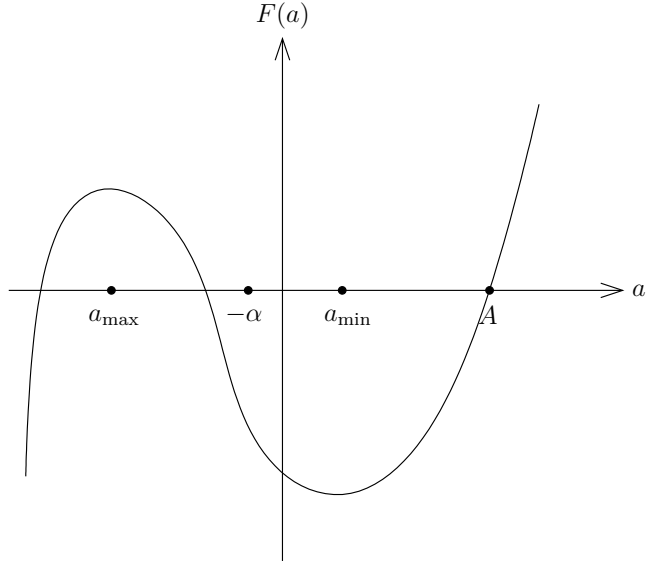}}
     \caption{The cubic $F(a)$.} 
     \label{fig:poly}
\end{center}
\end{figure}
Now from $F'(A)=3A^{2}+2βA-α^{2}$ one concludes that the local maximum of $F$ is at
\begin{equation*}
a_{\mathrm{max}} = -\frac{1}{3} \left(β+\sqrt{β^{2}+3α^{2}}\right). 
\end{equation*}
Furthermore, note that since $α≤β$, it is
\begin{align*}
a_{\mathrm{max}} & ≤ -\frac{1}{3} \left(α+\sqrt{α^{2}+3α^{2}}\right) = -α \quad \text{and} \\
F(-α) & = -\dfrac{2L^{2}}{π^{2}}<0,
\end{align*}
which implies that if $F$ has two more real roots, they will both be smaller than $-α$. On the other hand there is a constraint on $A$ obtained from the asymptotics. In our patching matrix the central entry is
\begin{equation*}
\frac{c}{Δ} = - \frac{(z-α)(z^{2}+Bz+C)}{Δ} = - 1 + (α-β-B) z^{-1} + …
\end{equation*}
Using \eqref{eq:asymptPtop} and the relation between $A$ and $B,$ this gives
\begin{equation*}
A+α = \frac{4M}{3π}.
\end{equation*}
Positivity of $M$ thus implies $A>-α$ and we therefore have shown that there is a unique positive $A∈ℝ$ which satisfies all the constraints.

Consequently, by our ansatz we are able to fix all the parameters in terms of $α$, $β$, $L$, that is in terms of the given data, and the patching matrix is 
\begin{equation*} 
\renewcommand\arraystretch{2.5}
P_{1}=\left(\begin{array}{ccc}-\dfrac{z+A}{2(z+α)(z-β)} & \dfrac{L}{π(z+α)(z-β)} & 0 \\
\cdot  & -\dfrac{z^{2}-\skew{2}{\tilde}{γ} z+\skew{3}{\tilde}{δ}}{(z+α)(z-β)} & 0 \\
0 & 0 & \dfrac{2(z+α)(z-β)}{z-α}\end{array}\right),
\end{equation*}
where
\begin{equation*}
\skew{2}{\tilde}{γ} = β+A, \quad \skew{3}{\tilde}{δ}=-α^{2}+βA +A^{2}.
\end{equation*}
Note that $λ$ and $A$ are zeros of the same polynomial and are restricted by the same inequality involving the mass, so this is the patching matrix for the black ring with the conical singularity not fixed, see \eqref{eq:BRpatmat}.\footnote{I am grateful to Harvey Reall for suggesting this possibility, which turns out to be correct.} For the regular black ring the angular momentum can also be expressed in terms of $α$ and $β$. We will see in Section~\ref{sec:applbr} how this can be done in the twistor picture. \\ \mbox{}\hfill $\blacksquare$
\end{ex}

\section{Local Behaviour of $J$ around a Nut} \label{sec:Jaroundnut}
For the general case of a rod structure with three nuts and especially as the number of nuts gets higher, one will find it increasingly difficult to reduce the number of free parameters to a minimum and would therefore like to obtain more constraints from the inner rods. With this desire in mind it would be useful to have an understanding of how the patching matrices with adaptations to adjacent rods are related to each other. We have seen an example in Theorem~\ref{thm:invpmatrix}, which can be considered as such a switch at the nut at infinity. The proof gives an idea of what is happening when changing the adaptation, yet it will be more difficult for interior nuts, that is nuts for which $|a_{i}|$ is finite. 

A strategy of how to achieve this is describred in \citet[Ch.~3]{Fletcher:1990aa}. There the essence is that ``... redefining the sphere $S_{0}$ and $S_{1}$ by interchanging double points alters the part of the real axis to which the bundle is adapted.'' \cite[Sec.~3.2]{Fletcher:1990aa}. However, as the example in \cite[Sec.~5.1]{Fletcher:1990aa} shows, this comes down to a Riemann-Hilbert problem which will be rather hard and impractical to solve in five or even higher dimensions. Thus we will approach this task in a different way. The idea is that we start off as above on the outermost rods where $|z| \to \infty $, determine as many free parameters as possible by the constraints which we have got on these rods, then take the resulting $P$-matrix (still having free parameters in it which we would like to pin down), calculate its adaptation to the next neighbouring rod and apply analogous constraints there. But before looking at the patching matrix itself let us first study how $J$ behaves locally around a nut.

Consider first a nut where two spatial rods meet, that is like in Figure~\ref{fig:sprods}.\vspace{0.3cm}
\begin{figure}[htbp]
\begin{center}
     \scalebox{0.8}{\input{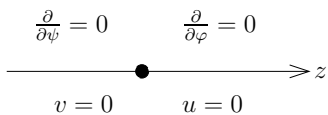}}
     \caption{Two spatial rods with their rod vectors meeting at a nut.} 
     \label{fig:sprods}
\end{center}
\end{figure}

Without loss of generality assume that the nut is at $z=0$. In this case a suitable choice of coordinates are the $(u,v)$-coordinates defined as 
\begin{equation*}
r=uv, \  z=\frac{1}{2}(v^{2}-u^{2}) \quad \Leftrightarrow  \quad u^{2} = -z\pm \sqrt{r^{2}+z^{2}},\ v^{2} = z\pm \sqrt{r^{2}+z^{2}},
\end{equation*}
where the signs on the right-hand side are either both plus or both minus. If we choose both signs to be plus, then the rod $∂_{\varphi}=0$ corresponds to $u=0$ and $∂_{\psi}$ to $v=0$. The metric in the most general case has the form
\begin{equation} \label{eq:GCmetric}
\begin{split}
\drm s^{2} & = X \,\drm t^{2} + 2Y \,\drm t \drm {\varphi} + 2Z \,\drm t \drm {\psi} + U \,\drm {\varphi}^{2} + 2V \,\drm {\varphi} \drm {\psi} + W \,\drm {\psi}^{2}\\
& \hphantom{=} + {\erm}^{2ν} (u^{2}+v^{2})(\drm u^{2}+\,\drm v^{2}),
\end{split}
\end{equation}
or equivalently
\begin{equation*}
\renewcommand{\arraystretch}{1.4}
J (u,v)=\left(\begin{array}{ccc}X & Y & Z \\\cdot  & U & V \\\cdot  & \cdot  & W\end{array}\right).
\end{equation*}
We assume that $\phi,\psi$ have period $2\pi$.

\begin{thm}\label{thm:Jaroundnut}
For a space-time regular on the axis the generic form of $J$ in $(u,v)$-coordinates around a nut, where two spacelike rods meet, is
\begin{equation} \label{eq:Jaroundnut}
\renewcommand{\arraystretch}{1.4}
J= \left(\begin{array}{ccr}
X_{0} & u^{2}Y_{0} & v^{2}Z_{0} \\
\cdot  & u^{2}U_{0} & u^{2}v^{2} V_{0} \\
\cdot  & \cdot  & v^{2}W_{0}\end{array}\right),
\end{equation}
and, furthermore, one needs
\begin{itemize}
\item $\dfrac{U_{0}^{\hphantom{1}}}{v^{2}_{\hphantom{1}}\erm^{2\nu}}=1$ as a function of $v$ on $u=0$,
\item $\dfrac{W_{0}^{\hphantom{1}}}{u^{2}_{\hphantom{1}}\erm^{2\nu}}=1$ as a function of $u$ on $v=0$.
\end{itemize}

If one of the rods is the horizon instead of a spacelike rod corresponding statements hold.
\end{thm}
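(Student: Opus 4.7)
The plan is to exploit analyticity together with the fact that $\partial_{\varphi}$ vanishes on $u=0$ and $\partial_{\psi}$ on $v=0$; the orders of vanishing then fix the entries of $J$, and a proper-distance calculation extracts the regularity conditions.

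For the block structure, fix a point on the rod $u=0$ with $v>0$. Since the space-time is regular on the axis and $\partial_{\varphi}$ vanishes exactly on $u=0$, one obtains local analytic Cartesian-type coordinates $(x_1,x_2)$ transverse to the rod with $x_1=u\cos\varphi$, $x_2=u\sin\varphi$, $u^2=x_1^2+x_2^2$, and $\partial_{\varphi}=-x_2\partial_{x_1}+x_1\partial_{x_2}$. By the classical invariant theory of $SO(2)$ acting on $(x_1,x_2)$, every analytic $\partial_{\varphi}$-invariant 1-form on a $\varphi$-invariant neighbourhood is a combination of $\drm v,\drm t,\drm\psi$ and of $x_1\drm x_1+x_2\drm x_2=u\,\drm u$ and $x_1\drm x_2-x_2\drm x_1=u^2\,\drm\varphi$, with coefficients analytic in $(u^2,v,t,\psi)$; the analogous statement holds for symmetric 2-tensors. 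Reading off the $\drm\varphi$-coefficients of $g(\partial_t,\,\cdot\,)$ and $g(\partial_{\psi},\,\cdot\,)$ and the $\drm\varphi^2$-coefficient of $g$ then forces
\begin{equation*}
J_{01}=u^2 Y_0,\qquad J_{11}=u^2 U_0,\qquad J_{12}=u^2\tilde V_0,
\end{equation*}
with $Y_0,U_0,\tilde V_0$ analytic in $(u^2,v)$. The same argument near $v=0$ using $\partial_{\psi}$ produces $J_{02}=v^2 Z_0$, $J_{22}=v^2 W_0$, and a second factorization $J_{12}=v^2\hat V_0$. Combining both factorizations of $J_{12}$ and using analyticity gives $J_{12}=u^2v^2 V_0$, while $J_{00}=X_0$ is unconstrained; this is the form \eqref{eq:Jaroundnut}.

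For the regularity conditions, restrict \eqref{eq:GCmetric} to constant $t,\psi$ and to the $(u,\varphi)$ 2-plane transverse to $u=0$ at fixed $v>0$:
\begin{equation*}
\drm s^2_\perp=(u^2+v^2)\erm^{2\nu}\,\drm u^2+u^2 U_0\,\drm\varphi^2.
\end{equation*}
Define the proper radial coordinate $\tilde r$ by $\tilde r = uv\erm^{\nu}|_{u=0}+O(u^3)$, so that $\drm\tilde r=v\erm^{\nu}|_{u=0}\,\drm u+O(u)$ and $u^2=\tilde r^{\,2}/(v^2\erm^{2\nu})|_{u=0}+O(\tilde r^{\,4})$. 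The transverse metric becomes
\begin{equation*}
\drm s^2_\perp=\drm\tilde r^{\,2}+\frac{U_0}{v^2\erm^{2\nu}}\bigg|_{u=0}\tilde r^{\,2}\,\drm\varphi^2+\text{higher order in }\tilde r.
\end{equation*}
Since $\varphi$ has period $2\pi$, absence of a conical singularity along $u=0$ is precisely $U_0/(v^2\erm^{2\nu})=1$ on $u=0$; the symmetric argument on $v=0$ yields $W_0/(u^2\erm^{2\nu})=1$ there.

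The main obstacle is the parenthetical ``corresponding statements hold'' when one of the rods is a horizon. There the Killing field vanishing along the rod is the horizon generator $K=\partial_t+\Omega_1\partial_{\varphi}+\Omega_2\partial_{\psi}$ of Theorem~\ref{thm:rig} rather than an axial vector, and the regularity condition is smoothness on the Euclideanized horizon, where the relevant angular variable has a period fixed by the surface gravity rather than by $2\pi$. I would run the block-structure step after a constant change of basis that replaces the relevant axial vector by $K$; the bulk of the bookkeeping will be in controlling the $\Omega_i$-dependent cross-terms and in producing the correct surface-gravity normalization in the analogue of the regularity condition.
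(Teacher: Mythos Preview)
Your argument is correct and follows the same underlying idea as the paper, but the paper packages it more compactly. Rather than separating an invariant-theory step from a proper-distance step, the paper introduces all four Cartesian coordinates at once, $x=u\cos\varphi$, $y=u\sin\varphi$, $z=v\cos\psi$, $w=v\sin\psi$, writes out the full metric \eqref{eq:GCmetric} in these coordinates, and then simply reads off both the block structure of $J$ and the conditions $U_{0}/(v^{2}\erm^{2\nu})=1$, $W_{0}/(u^{2}\erm^{2\nu})=1$ from the requirement that every component be bounded as $u\to 0$ or $v\to 0$. Your $SO(2)$-invariant-theory phrasing and the proper-radius computation are of course equivalent to this, just slightly more abstract.

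Where you anticipate trouble in the horizon case, the paper shows there is none: one replaces the trigonometric substitution for the relevant pair by the hyperbolic one $z=v\cosh(\omega t)$, $w=v\sinh(\omega t)$ and runs the identical boundedness argument. No $\Omega_i$ bookkeeping is needed (one may assume a basis in which the degenerating combination is already a coordinate vector), and the only change in the output is that the corresponding regularity condition becomes $X_{0}/(v^{2}\erm^{2\nu})=-\omega^{2}$ with $\omega$ unconstrained, reflecting that surface gravity is not fixed a priori.
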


The second part of the theorem is closely tied to the problem of conicality, which we will investigate shortly.

\begin{proof}
Introduce Cartesian coordinates
\begin{equation} \label{eq:cartaroundnut}
x=u\cos\phi,\quad y=u\sin\phi,\quad z=v\cos\psi,\quad w=v\sin\psi,
\end{equation}
then the metric becomes in these coordinates
\begin{equation} \label{eq:cartesmet}
\begin{split}
\drm s^{2} & = X\, \drm t^2+2\frac{Y}{u^2}\, \drm t(x\, \drm y-y\, \drm x)+2\frac{Z}{v^2}\, \drm t(z\, \drm w-w\, \drm z) +\frac{U}{u^4}(x\, \drm y-y\, \drm x)^2\\
& \hspace{0.4cm}+2\frac{V}{u^2v^2}(x\, \drm y-y\, \drm x)(z\, \drm w-w\, \drm z)+\frac{W}{v^4}(z\, \drm w-w\, \drm z)^2 \\
& \hspace{0.4cm} + \erm^{2\nu}(u^2+v^2)\left(\frac{1}{u^2}(x\,\drm x+y\, \drm y)^2+\frac{1}{v^2}(z\, \drm z+w\, \drm w)^2\right).
\end{split}
\end{equation}
The $x$, $y$, $z$, $w$ are not to be confused with the earlier use of the same symbols. Set $X_{0}=X$. Now as $u→0$ for constant $v$ we immediately see that in order for $g_{ty}$ and $g_{xw}$ to be bounded we need $Y=u^{2}Y_{0}$ and $V=u^{2}V_{1}$ for bounded $Y_{0}$, $V_{1}$.\footnote{Remember that for polar coordinates $(u,\phi)$ the angle $\phi$ is bounded in the limit to but not continuous (or even differentiable) at the origin $u=0$. Thus $\frac{x}{u}$ is not bounded towards the origin.} The remaining singular terms are
\begin{equation*}
\frac{U}{u^4}(x\, \drm y-y\, \drm x)^2+ \erm^{2\nu}(u^2+v^2)\frac{1}{u^2}(x\, \drm x+y\, \drm y)^2.
\end{equation*}
For the fourth-order pole not to be dominant we need $U=u^{2}U_{0}$ for bounded $U_0$; then it is required
\begin{equation} \label{eq:conf1}
\frac{U_{0}}{v^{2}\erm^{2\nu}}=1 \quad \text{as a function of } v \text{ on } u=0
\end{equation}
to remove the remaining second-order pole.

Repeating this for $v→0$ with fixed $u$ yields $Z=v^{2}Z_{0}$, $V_{1}=v^{2}V_{0}$, $W=v^{2}W_{0}$ and 
\begin{equation*}
\frac{W_{0}}{u^{2}\erm^{2\nu}}=1 \quad \text{as a function of } u \text{ on } v=0.
\end{equation*}
This is the minimum that we can demand in terms of regularity of $J$ on the axis and near the nuts.

Assuming now without loss of generality that in Figure~\ref{fig:sprods} the axis segment where $v=0$ is the horizon, we have seen in Chapter~\ref{ch:fivedim} that then the first row and first column degenerate. So, we substitute
\begin{equation*}
z=v\cosh(ωt),\quad w=v\sinh (ωt),
\end{equation*}
where $ω$ is a constant with no further restriction. The coordinates $x$ and $y$ choose as in \eqref{eq:cartaroundnut}. Now the above argument works analogously with all results equivalent, but
\begin{equation*}
\frac{X_{0}}{v^{2}\erm^{2\nu}}=-ω^{2} \quad \text{as a function of } v \text{ on } u=0.
\end{equation*}
\end{proof}

\section{Conicality and the Conformal Factor}
Returning to the case as depicted in Figure~\ref{fig:sprods}, we saw in \eqref{eq:conf1} that regularity at an axis seqment where $\partial_{\phi}$ vanishes forces a relation between $g_{\phi\phi}$ and the conformal factor $\erm^{2\nu}$ of the $(r,z)$-metric. In this section we first establish the following.

\begin{prop} \label{prop:conf1}
On a segment of the axis where $u=0$ we have $\frac{U_0}{v^2\erm^{2\nu}}=\text{constant}$.  
\end{prop}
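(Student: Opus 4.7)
The first and main claim, that $U_0/(v^2\erm^{2\nu})$ is constant on the rod $u=0$, is a statement about the conformal factor $\erm^{2\nu}$, so the natural starting point is the PDE that determines $\nu$ from $J$ beyond Yang's equation. My plan is to use equation~\eqref{eq:redeinst3}, decompose it into its real form $\partial_z\nu=\tfrac{r}{4}\tr(J^{-1}\partial_rJ\cdot J^{-1}\partial_zJ)$ (together with the analogous expression for $\partial_r\nu$), and then change variables from $(r,z)$ to $(u,v)$ via $r=uv$, $z=\tfrac12(v^2-u^2)$. Using the Jacobian computed from these definitions, one finds $\partial_v=v\partial_z+u\partial_r$, so on the rod $u=0$ one has simply
\begin{equation*}
\partial_v\nu\Big|_{u=0}=v\,\partial_z\nu\Big|_{u=0}.
\end{equation*}

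The next step is to evaluate the right-hand side of the $\partial_z\nu$ equation on $u=0$ using the form of $J$ dictated by Theorem~\ref{thm:Jaroundnut}. This is where the main technical obstacle lies: because $J$ drops rank on the axis (the $(2,2)$ entry is $u^2U_0$), the logarithmic derivatives $J^{-1}\partial_rJ$ and $J^{-1}\partial_zJ$ are individually singular in $u$, with leading $u^{-1}$ terms coming from $\partial_{r,z}\log(u^2U_0)$. These divergences, however, enter the trace multiplied by the explicit factor $r=uv$, and a direct expansion using $\partial_ru=v/(u^2+v^2)$, $\partial_zu=-u/(u^2+v^2)$ (and the corresponding expressions for $v$) shows that the divergent pieces rearrange into a finite limit. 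After bookkeeping of all leading orders, the identity
\begin{equation*}
2v\,\partial_z\nu\Big|_{u=0}=\partial_v\log U_0\Big|_{u=0}-\frac{2}{v}
\end{equation*}
emerges, which is exactly the statement that $\partial_v\log\!\bigl(U_0/(v^2\erm^{2\nu})\bigr)=0$ along $u=0$. Integrating gives the first claim; the case $v=0$ follows by the symmetric argument. The case where one rod is a horizon is handled in the same way, with the substitution $z=v\cosh(\omega t)$, $w=v\sinh(\omega t)$ from the proof of Theorem~\ref{thm:Jaroundnut} replacing the spacelike Cartesianisation.

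For the continuity of $(u^2+v^2)\erm^{2\nu}$ at the nut, the first claim gives $(u^2+v^2)\erm^{2\nu}\big|_{u=0}=U_0(0,v)/c_1$ on the first rod and $(u^2+v^2)\erm^{2\nu}\big|_{v=0}=W_0(u,0)/c_2$ on the second, where $c_1,c_2$ are the two constants produced above. Since $U_0$ and $W_0$ are analytic in $(u,v)$ by our blanket assumption on the metric, each of these expressions extends continuously to $u=v=0$; the two extensions then agree because $\erm^{2\nu}$ is globally defined on the interior of the orbit space and the two one-sided limits are restrictions of the same function, forcing continuity at the nut.

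Finally, for the conical-singularity statement, the proof of Theorem~\ref{thm:Jaroundnut} shows that the transverse $(\drm x,\drm y)$-metric along the rod $u=0$ takes the regular Euclidean form $v^2\erm^{2\nu}(\drm x^2+\drm y^2)$ precisely when $c_1=1$, and likewise $c_2=1$ is needed on $v=0$. Under these regularity conditions one has $U_0\big|_{u=0}=(u^2+v^2)\erm^{2\nu}\big|_{u=0}$ and $W_0\big|_{v=0}=(u^2+v^2)\erm^{2\nu}\big|_{v=0}$; taking the appropriate limits and invoking the continuity just established gives $\lim_{v\to0}U_0=\lim_{u\to0}W_0$, as required.
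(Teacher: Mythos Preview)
Your approach to the actual statement (Proposition~\ref{prop:conf1}) is correct and essentially identical to the paper's: both start from the conformal-factor equation~\eqref{eq:redeinst3}, pass to $(u,v)$-coordinates, substitute the axis-regular form of $J$ from Theorem~\ref{thm:Jaroundnut}, and track the cancellation of the $u^{-1}$ divergences against the overall factor of $r=uv$ to obtain $\partial_v\log(U_0/v^2\erm^{2\nu})=0$ on $u=0$. The only cosmetic difference is that the paper works with the complex combination $\chi=u+\irm v$ (so that $\xi=\tfrac12\chi^2$) and computes $\partial_\chi\log\!\bigl((u^2+v^2)\erm^{2\nu}\bigr)$ directly, while you split into the real $\partial_z\nu$ equation; the bookkeeping is equivalent.

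One remark on the extra material you included: the continuity of $(u^2+v^2)\erm^{2\nu}$ at the nut and the conical-singularity criterion are separate results in the paper (Proposition~\ref{prop:conf2} and Corollary~\ref{cor:conic}), not part of the present statement. Your argument for continuity (``the two one-sided limits are restrictions of the same function'') is not quite a proof: $\erm^{2\nu}$ is only defined on the open interior $r>0$, and nothing you have said excludes it from blowing up like $(u^2+v^2)^{-1}$ with direction-dependent coefficients as you approach the corner. The paper handles this by an explicit angular computation, introducing polar coordinates $u=R\cos\Theta$, $v=R\sin\Theta$, showing $\partial_\Theta\log\!\bigl((u^2+v^2)\erm^{2\nu}\bigr)=\Ocal(R)$, and then integrating from $\Theta=0$ to $\Theta=\pi/2$ in the limit $R\to 0$ to conclude the jump vanishes.
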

\begin{proof}
To prove this we need to consider how the conformal factor varies on the axis and this is obtained from \eqref{eq:redeinst3} which we may write as
\begin{equation} \label{eq:conf2}
\partial_{\xi}\left(\log \left(r\erm^{2\nu}\right)\right)=\frac{\irm r}{2} \tr\left(J^{-1}J_{\xi} J^{-1}J_{\xi}\right).
\end{equation}
It will be convenient to work with $\chi=u+\irm v$ where $\xi=z+\irm r=\frac{1}{2}\chi^{2}$ and concentrate on the conformal factor of the $(u,v)$-metric which is $(u^{2}+v^{2})\erm^{2\nu}$ by \eqref{eq:cartesmet}. Then
\begin{equation*} 
\begin{split}
\partial_{\chi}\left(\log\left(\left(u^{2}+v^{2}\right)\erm^{2\nu}\right)\right) & =\partial_{\chi}\left(\log\left(\left(u^{2}+v^{2}\right)(uv)^{-1}r\erm^{2\nu}\right)\right) \\
& =\frac{1}{\chi}-\frac{1}{2u}+\frac{\irm}{2v}+\frac{\irm uv}{2(u+\irm v)}\tr \left(J^{-1}J_{\chi} J^{-1}J_{\chi}\right).
\end{split}
\end{equation*}

Close to the axis segment $u=0$ we substitute from \eqref{eq:Jaroundnut} and expand in powers of $u$ to find
\begin{equation} \label{eq:conf3}
\partial_{\chi}\left(\log\left(\left(u^{2}+v^{2})\erm^{2\nu}\right)\right)\right)=\frac{1}{\chi}-\frac{1}{2u}+\frac{\irm}{2v}+\frac{\irm uv}{2(u+\irm v)}\left(\frac{K_{1}}{u^{2}}+\frac{K_{2}}{u}+\Ocal(1)\right),
\end{equation}
where
\begin{align*}
K_{1} & =\left(U_{0}\left(X_{0}W_{0}-v^{2}Z_{0}^{2}\right)\right)^{2}=1+\frac{1}{v^2}\Ocal\left(u^{2}\right), \\
K_{2} & =\left(U_{0}\left(X_{0}W_{0}-v^{2}Z_{0}^{2}\right)\right)^{2}\frac{\partial_{\chi} U_{0}}{U_{0}}.
\end{align*}
The right hand side of the first equation follows from the determinant
\begin{equation*}
u^{2}v^{2}=\det J = u^{2}v^{2} X_{0}U_{0}W_{0} - u^{2}v^{4}U_{0}Z_{0}^{2} + \Ocal\left(u^{4}\right), 
\end{equation*}
hence
\begin{equation*}
1= U_{0}\left(X_{0}W_{0}-v^{2}Z_{0}^{2}\right)+\frac{1}{v^{2}} \Ocal\left(u^{2}\right).
\end{equation*}
Taking in \eqref{eq:conf3} the limit on to $u=0$ we obtain just
\begin{equation*}
\partial_{v}\left(\log\left(v^{2}\erm^{2\nu}\right)\right)=\partial_{v}\log\left(U_{0}\right),
\end{equation*}
so that
\begin{equation*}
\frac{U_{0}}{v^{2}\erm^{2\nu}}=\text{constant} \quad  \text{on } u=0.
\end{equation*}
\end{proof}

Thus \eqref{eq:conf1} will hold at all points of the axis segment if it holds at one. The following proposition is an analysis similar to \cite[App.~H]{Harmark:2004rm}, but it is simpler and more self-contained to rederive it than translate it.
\begin{prop} \label{prop:conf2}
As a function on the axis $\{u=0\}∪\{v=0\}$, that is as a function of one variable, the factor $\left(u^{2}+v^{2}\right)\erm^{2\nu}$ is continuous at the nut $u=v=0$.
\end{prop}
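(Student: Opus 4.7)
The plan is to combine Proposition~\ref{prop:conf1} with its mirror image under the swap $u\leftrightarrow v$, and then match the resulting two one-sided limits at the nut by exploiting the structure of equation~\eqref{eq:conf3}.

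As a first step I would repeat the derivation of Proposition~\ref{prop:conf1} with the roles of $u$ and $v$ interchanged. Theorem~\ref{thm:Jaroundnut} is symmetric under $u\leftrightarrow v$, $U_{0}\leftrightarrow W_{0}$, $Y_{0}\leftrightarrow Z_{0}$, so substituting the explicit form of $J$ into \eqref{eq:conf2} and expanding in powers of $v$ near $v=0$ (rather than in $u$ near $u=0$) produces a constant $c_{-}$ with
\begin{equation*}
\frac{W_{0}}{u^{2}\erm^{2\nu}}=c_{-}\qquad\text{on }v=0,
\end{equation*}
in parallel to the constant $c_{+}$ appearing in Proposition~\ref{prop:conf1}. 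Along the two axis segments we therefore have
\begin{equation*}
(u^{2}+v^{2})\erm^{2\nu}\big|_{u=0}=c_{+}^{-1}U_{0}(0,v),\qquad (u^{2}+v^{2})\erm^{2\nu}\big|_{v=0}=c_{-}^{-1}W_{0}(u,0),
\end{equation*}
and both sides possess finite one-sided limits as $(u,v)\to(0,0)$ because $U_{0}$ and $W_{0}$ are bounded at the nut by Theorem~\ref{thm:Jaroundnut}.

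The main step is then to show that these two limits coincide. The natural strategy is to prove that $\log((u^{2}+v^{2})\erm^{2\nu})$ actually extends analytically (or at least continuously) to a full neighbourhood of the nut, which forces both axial limits to equal the common value at $\chi=0$. Equation~\eqref{eq:conf3} is well suited to this: the apparent pole $-1/(2u)$ is cancelled against $K_{1}/u^{2}$ because the constraint $\det J=-u^{2}v^{2}$ gives $K_{1}=1+v^{-2}\Ocal(u^{2})$, and a parallel expansion of the trace in $v$ with $W_{0}$ in place of $U_{0}$ cancels the $\irm/(2v)$ pole in exactly the same way. After these cancellations the right hand side of \eqref{eq:conf3} is holomorphic in $\chi$ on a punctured neighbourhood of the origin, its integral along any arc joining a point of $u=0$ near the nut to a point of $v=0$ near the nut is controlled uniformly as the arc shrinks, and so the two one-sided limits must agree.

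The main technical obstacle is performing this regularity check jointly in $u$ and $v$ rather than iteratively along one segment after the other as in Proposition~\ref{prop:conf1}. The trace $\tr(J^{-1}J_{\chi}J^{-1}J_{\chi})$ generically carries a $u^{-2}$-contribution from the $U$-block of $J$ and a $v^{-2}$-contribution from the $W$-block, and one needs to verify that, after multiplication by the prefactor $\irm uv/(2(u+\irm v))$ in \eqref{eq:conf3}, these combine with the explicit poles $-1/(2u)+\irm/(2v)$ into a function analytic at $\chi=0$. This is a routine but somewhat tedious computation using the generic form of $J$ supplied by Theorem~\ref{thm:Jaroundnut}, after which the continuity of $(u^{2}+v^{2})\erm^{2\nu}$ at the nut is immediate.
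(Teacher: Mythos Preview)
Your approach and the paper's rest on the same cancellation mechanism—the $u^{-1}$ and $v^{-1}$ poles are killed by the leading part of the trace via the determinant constraint $\det J=-u^{2}v^{2}$—but the paper packages it more directly. Rather than controlling $\partial_{\chi}$ of the logarithm and then integrating along arcs, the paper passes to polar coordinates $u=R\cos\Theta$, $v=R\sin\Theta$ and computes the real angular derivative
\[
\partial_{\Theta}\log\bigl((u^{2}+v^{2})\erm^{2\nu}\bigr)=-\frac{u}{v}+\frac{v}{u}-\frac{uv}{4}\tr\bigl(J^{-1}J_{u}J^{-1}J_{u}-J^{-1}J_{v}J^{-1}J_{v}\bigr),
\]
then expands with the form of $J$ from Theorem~\ref{thm:Jaroundnut} to obtain $\Ocal(R)$ on the right. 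The jump across the nut is then $\lim_{R\to 0}\int_{0}^{\pi/2}\partial_{\Theta}(\cdots)\,\drm\Theta=0$, which is precisely the matching of your two one-sided limits without ever invoking complex structure.

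One point to correct in your sketch: the right-hand side of \eqref{eq:conf3} is \emph{not} holomorphic in $\chi$ after the cancellations, because $J$ (and hence the trace) depends on both $\chi$ and $\bar\chi$. What the cancellation actually buys you is that $\partial_{\chi}f$ is bounded near the origin; since $f=\log\bigl((u^{2}+v^{2})\erm^{2\nu}\bigr)$ is real, $\partial_{\bar\chi}f=\overline{\partial_{\chi}f}$ is bounded as well, so $\nabla f$ is bounded and your arc-integral argument goes through. The conclusion survives, but phrasing it as holomorphy is inaccurate and would mislead if taken literally.
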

\begin{proof}
Near the nut introduce polar coordinates
\begin{equation*}
u=R\cos\Theta, \quad v=R\sin\Theta,
\end{equation*}
so that from \eqref{eq:conf2} and with the help of
\begin{equation*}
uv∂_{z}=v∂_{u}-u∂_{v}, \quad uv∂_{r}=u∂_{u}+v∂_{v}, \quad ∂_{ξ}=\frac{1}{2}\left(∂_{z}-\irm ∂_{r}\right)
\end{equation*}
we get
\begin{align*}
\partial_{\Theta}\left(\log\left(\left(u^{2}+v^{2}\right)\erm^{2\nu}\right)\right) & =\left(u\partial_{v}-v\partial_{u}\right)\left(\log\left(\left(u^{2}+v^{2}\right)\erm^{2\nu}\right)\right) \\
& = \left(u\partial_{v}-v\partial_{u}\right) \left(\log \left(\frac{u^{2}+v^{2}}{uv}\right)\right)-r∂_{z} \left(\log\left(r\erm^{2ν}\right)\right) \\
& = -\frac{u}{v}+\frac{v}{u}-\frac{(uv)^{2}}{4}\tr\left(J^{-1}J_{z}J^{-1}J_{r}\right) \\
& =-\frac{u}{v}+\frac{v}{u}-\frac{uv}{4}\tr\left(J^{-1}J_{u} J^{-1}J_{u}-J^{-1}J_{v} J^{-1}J_{v}\right).
\end{align*}
Again we expand this using \eqref{eq:Jaroundnut} to find
\begin{align*}
\partial_{\Theta}\left(\log\left(\left(u^{2}+v^{2}\right)\erm^{2\nu}\right)\right) & =-\frac{u}{v}+\frac{v}{u}-\frac{v}{u}\left(U_{0}\left(X_{0}W_{0}-v^{2}Z_{0}^{2}\right)\right)^{2}\\
& \hspace{0.4cm}+\frac{u}{v}\left(W_{0}\left(X_{0}U_{0}-u^{2}Y_{0}^{2}\right)\right)^2+\Ocal(u)+\Ocal(v) \\
& = \Ocal(u)+\Ocal(v)=\Ocal(R).
\end{align*}
Now the jump in $\log\left(u^{2}+v^{2}\right)\erm^{2\nu}$ round the nut is 
\begin{equation*}
\Delta\left(\log\left(\left(u^{2}+v^{2}\right)\erm^{2\nu}\right)\right)=\lim_{R\rightarrow 0}\int_{0}^{\frac{π}{2}}\partial_{\Theta}\left(\log\left(\left(u^{2}+v^{2}\right)\erm^{2\nu}\right)\right)\drm \Theta=0,
\end{equation*}
and $\left(u^{2}+v^{2}\right)\erm^{2\nu}$ does not jump either.

On $u=0$, $U_{0}$ is continuous and by Proposition~\ref{prop:conf1} $\frac{U_0}{v^2\erm^{2\nu}}=\text{constant}$, so $v^2\erm^{2\nu}$ must be bounded there. Similarly, on $v=0$ for $W_{0}$ and $\frac{W_{0}}{u^{2}\erm^{2\nu}}$. Thus, $\left(u^{2}+v^{2}\right)\erm^{2\nu}$ is continuous on the  two rods and has no jump across the nut, so it is continuous on the axis.
\end{proof}

The strategy for removing conical sigularities is now clear: We start by assuming that $\phi$ and $\psi$ both have period $2\pi$. On the part of the axis extending to $z=+\infty$, where the Killing vector $\partial_{\phi}$ vanishes, we have $ \frac{U_{0}}{v^{2}\erm^{2\nu}}=\text{constant}$ by Proposition~\ref{prop:conf1} and the asymptotic conditions we are imposing make this constant one. The corresponding statement holds on the part of the axis extending to $z=-\infty$ for the same reason. When passing by a nut between two spacelike rods we may suppose, by choosing the basis of Killing vectors appropriately, that $\partial_{\phi}$ vanishes above the nut and $\partial_{\psi}$ below and we know by Proposition~\ref{prop:conf2} that $\left(u^2+v^2\right)e^{2\nu}$ is continuous at the nut. If there is no conical singularity above the nut we have $ \frac{U_{0}}{v^{2}\erm^{2\nu}}=1$ there and we want $ \frac{W_{0}}{u^{2}\erm^{2\nu}}=1$ below the nut. Therefore we require the limits of $U_{0}$ from above and $W_{0}$ from below to be equal.
\begin{cor} \label{cor:conic}
With the conventions leading to \eqref{eq:Jaroundnut}, the absence of conical singularities requires
\begin{equation*}
\lim_{v\rightarrow 0}U_{0}=\lim_{u\rightarrow 0}W_0.
\end{equation*}
\end{cor}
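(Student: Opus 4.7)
The plan is to combine Propositions~\ref{prop:conf1} and \ref{prop:conf2} with the asymptotic normalisation of the periods. First I would recall that by the convention ``$\phi$ and $\psi$ both have period $2\pi$'' and the asymptotic flatness argument preceding the corollary, the constant in Proposition~\ref{prop:conf1} is forced to be $1$ on the semi-infinite outer rods: $\frac{U_0}{v^2 \erm^{2\nu}} = 1$ on the rod extending to $z = +\infty$ and $\frac{W_0}{u^2 \erm^{2\nu}} = 1$ on the rod extending to $z = -\infty$. For an \emph{interior} nut between two spacelike rods, absence of a conical singularity on the rod above (where $u=0$, $v > 0$) means exactly that this same constant equals $1$ there, i.e.
\begin{equation*}
U_{0}\big|_{u=0} \;=\; v^{2}\erm^{2\nu}\big|_{u=0},
\end{equation*}
and similarly below the nut (where $v=0$, $u>0$) absence of a conical singularity is the condition $W_0\big|_{v=0} = u^{2}\erm^{2\nu}\big|_{v=0}$.

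Next I would take the limit $v\to 0$ of the first identity and the limit $u\to 0$ of the second. This gives
\begin{equation*}
\lim_{v\to 0} U_0 \;=\; \lim_{v\to 0}\left(v^{2}\erm^{2\nu}\right)\Big|_{u=0},
\qquad
\lim_{u\to 0} W_0 \;=\; \lim_{u\to 0}\left(u^{2}\erm^{2\nu}\right)\Big|_{v=0}.
\end{equation*}
Here is the point where Proposition~\ref{prop:conf2} enters: on the axis $\{u=0\}\cup\{v=0\}$, $v^{2}\erm^{2\nu}$ restricted to $u=0$ and $u^{2}\erm^{2\nu}$ restricted to $v=0$ are precisely the restrictions of the common function $(u^{2}+v^{2})\erm^{2\nu}$, which is continuous at the nut $u=v=0$. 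Therefore the two one-sided limits on the right-hand sides coincide, and the corollary follows:
\begin{equation*}
\lim_{v\to 0} U_0 \;=\; \lim_{(u,v)\to 0}\left(u^{2}+v^{2}\right)\erm^{2\nu} \;=\; \lim_{u\to 0} W_0.
\end{equation*}

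There is essentially no hard step here: the substantive work has been done in Theorem~\ref{thm:Jaroundnut} (regularity of $J$ on the axis) and Propositions~\ref{prop:conf1}--\ref{prop:conf2} (constancy of the ratios along each rod and continuity of the conformal factor at the nut). The only thing to be careful about is the bookkeeping of \emph{which} Killing vector vanishes on \emph{which} rod, so that the ``$U_{0}$ on $u=0$'' and ``$W_{0}$ on $v=0$'' roles are correctly matched to the respective no-conical-singularity conditions; this is exactly what the ``conventions leading to \eqref{eq:Jaroundnut}'' refer to, and with that choice the argument is the short concatenation above.
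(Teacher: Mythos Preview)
Your proposal is correct and matches the paper's argument essentially line for line: the paper also fixes the constant in Proposition~\ref{prop:conf1} to one on the outer rods via the $2\pi$ periods and asymptotic flatness, then uses the continuity of $(u^{2}+v^{2})\erm^{2\nu}$ at the nut from Proposition~\ref{prop:conf2} to equate the two one-sided limits. There is nothing to add.
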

This is what we have just shown. At a nut where one rod is the horizon we do not obtain further conditions as we have no reason to favour a particular value of $ω$. To see how this is applied to the case of the black ring we need a better understanding of going past a nut.

\section{Local Behaviour of $P$ around a Nut: Switching}

In this section we establish a prescription for obtaining the matrix $P_{-}$ adapted to the segment of the axis below a nut from the matrix $P_{+}$ adapted to the segment above. We call this process `switching'. Once we have the prescription we can impose the condition of non-conicality found in Corollary~\ref{cor:conic}. We then apply this to the black ring, but it is clear that with this prescription we have an algorithm for working systematically down the axis given any rod structure so that we obtain all the matrices $P_{i}$ adapted to the different rods labelled by $i$. The result is the following.
\begin{thm} \label{thm:switching}
Let at $z=a$ be a nut where two spacelike rods meet, as in Figure~\ref{fig:sprods}, and assume that we have chosen a gauge where the twist potentials vanish when approaching the nut. Then 
\begin{equation*}
\renewcommand{\arraystretch}{1.5}
P_{-}^{\vphantom{\frac{1}{2}}}=\left(\begin{array}{ccc}0 & 0 & \dfrac{1}{2(z-a)} \\0 & 1 & 0 \\ 2(z-a) & 0 & 0\end{array}\right)P_{+}^{\vphantom{\frac{1}{2}}}\left(\begin{array}{ccc}0 & 0 & 2(z-a) \\0 & 1 & 0 \\\dfrac{1}{2(z-a)} & 0 & 0\end{array}\right),
\end{equation*}
where $P_{+}$ is adapted to $u=0$ and $P_{-}$ is adapted to $v=0$.
\end{thm}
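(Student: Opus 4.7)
The strategy is direct verification using the local form of $J$ near the nut (Theorem~\ref{thm:Jaroundnut}) together with the Ernst-potential formula (Definition~\ref{def:ernstpot}). Writing $N$ for the left matrix and $N^{\mathrm{t}}$ for the right matrix in the statement, the claim reads $P_{-}(z) = N(z)\, P_{+}(z)\, N(z)^{\mathrm{t}}$; this presentation makes manifest that the conjugation preserves symmetry, and since $\det N = -1$ the unit-determinant condition $\det J' = 1$ from Lemma~\ref{lem:highernstprop} is preserved as well.

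First I would restrict everything to $r=0$ in a neighbourhood of the nut: on the upper rod $u=0$ we have $v^{2} = 2(z-\alpha)$, while on the lower rod $v=0$ the analytic continuation gives $u^{2} = -2(z-\alpha)$. Substituting the form \eqref{eq:Jaroundnutintro} into Definition~\ref{def:ernstpot} yields
\begin{equation*}
\det \tilde{A}_{+} = 2(z-\alpha)\bigl(X_{0} W_{0} - 2(z-\alpha) Z_{0}^{2}\bigr), \quad \det \tilde{A}_{-} = -2(z-\alpha)\bigl(X_{0} U_{0} + 2(z-\alpha) Y_{0}^{2}\bigr),
\end{equation*}
each of which has a simple zero at $z=\alpha$, producing the expected pole in the ``deleted-slot'' diagonal entry of $P_{\pm}$. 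The remaining entries are regular at the nut since the twist potentials vanish linearly there by the gauge hypothesis. The key algebraic identity driving the proof is $X_{0} U_{0} W_{0}\big|_{u=v=0} = -1$, obtained by equating the leading $u^{2}v^{2}$-coefficients on the two sides of $\det J = -r^{2} = -u^{2} v^{2}$ after substituting \eqref{eq:Jaroundnutintro}. A direct computation of the conjugation then gives $(N P_{+} N^{\mathrm{t}})_{11} = (2(z-\alpha))^{-2} (P_{+})_{33}$ and $(N P_{+} N^{\mathrm{t}})_{33} = (2(z-\alpha))^{2} (P_{+})_{11}$; combining these with the identity above and the leading expansions $(P_{+})_{11} \sim (2(z-\alpha) X_{0} W_{0})^{-1}$, $(P_{+})_{33} \sim 2(z-\alpha) W_{0}$ reproduces exactly the leading poles and regular parts of $(P_{-})_{11}$ and $(P_{-})_{33}$ read off from $\tilde{A}_{-}$.

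The main obstacle is to verify the off-diagonal entries, where the twist potentials of the two adaptations are distinct objects built from different subsets of Killing vectors; with both sets gauge-fixed to vanish at $z=\alpha$, each is $O(z-\alpha)$ and the required matching reduces to linear identities between their first derivatives at the nut. These derivatives are computed by integrating the radial components of the twist $1$-forms along the axis using the formulas in Section~\ref{sec:highErnst}, and the higher-order expansion of $\det J + r^{2} \equiv 0$ in $u, v$ supplies the accompanying algebraic relations among $X_{0}, Y_{0}, Z_{0}, U_{0}, V_{0}, W_{0}$ that produce the precise powers of $2(z-\alpha)$ demanded by $N$ and $N^{\mathrm{t}}$. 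Once this has been checked on the axis, the full statement follows by analytic continuation, since by Proposition~\ref{prop:AnalyCont} the patching matrix is determined by its axis values. The horizon–spacelike case follows by the same argument with $\partial_{t}$ replaced by the horizon-generating combination $K = \partial_{t} + \Omega_{1} \partial_{\varphi} + \Omega_{2} \partial_{\psi}$ and $X_{0}$ analysed via the second part of Theorem~\ref{thm:Jaroundnut}.
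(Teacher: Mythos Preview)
Your approach is essentially the ``motivation'' that the paper itself gives \emph{before} its proof: one computes $P_{+}$ and $P_{-}$ from the local form \eqref{eq:Jaroundnut} of $J$ together with the twist potentials, and checks that the conjugation formula matches to leading order in $z-\alpha$ using $X_{0}U_{0}W_{0}=-1$ at the nut. The paper carries out exactly this calculation (your $(P_{+})_{11},(P_{+})_{33}$ expansions are its equations \eqref{eq:Pplus}--\eqref{eq:Pminus}), and then explicitly remarks that this only establishes the identity \emph{to leading order}. The sentence in your proposal where the real work would happen --- ``the higher-order expansion of $\det J + r^{2}\equiv 0$ \ldots supplies the accompanying algebraic relations'' --- does not close the gap: the functions $X_{0}(0,v),W_{0}(0,v),Z_{0}(0,v),\ldots$ on the upper rod and $X_{0}(u,0),U_{0}(u,0),Y_{0}(u,0),\ldots$ on the lower rod are independent analytic data related only through the full Einstein equations, so no finite list of algebraic identities at the nut will pin down the identity of two meromorphic functions of $z$ away from $z=\alpha$. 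Proposition~\ref{prop:AnalyCont} tells you that $P$ is determined by its axis values, but it does not relate $P_{+}$ on one rod to $P_{-}$ on a different rod.

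The paper's actual proof takes a completely different route, and this is the missing idea: rather than compare the two Ernst potentials directly, it goes back to the \emph{splitting procedure} that defines $J$ from $P$. After suitable permutations the claimed relation becomes $\widetilde{P_{-}}=D\widetilde{P_{+}}D$ with $D=\mathrm{diag}(1,2w,(2w)^{-1})$, and one then compares the Birkhoff factorisations of $\widehat{P_{+}}$ and $\widehat{P_{-}}$ built with the integers $(p_{0},p_{1},p_{2})$ appropriate to each rod. The crucial computation is that, after substituting $w=\tfrac{1}{2\zeta}(u\zeta+v)(u-v\zeta)$, the intertwining factors $A=\mathrm{diag}(1,2w\zeta/r,r/(2w\zeta))$ and $B=\mathrm{diag}(1,-2w/(r\zeta),-r\zeta/(2w))$ are holomorphic and equal to the identity at $\zeta=0$ and $\zeta=\infty$ respectively; hence a splitting $K_{+}^{0}(K_{+}^{\infty})^{-1}$ of $\widehat{P_{+}}$ yields a splitting $AK_{+}^{0}\,(B^{-1}K_{+}^{\infty})^{-1}$ of $\widehat{P_{-}}$ producing the \emph{same} $J$. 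This is an exact argument valid for all $(r,z)$, not a Taylor expansion, and it is what you are missing.
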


We begin by motivating this prescription from a consideration of \eqref{eq:GCmetric}. First calculate the twist potentials in the same way as in Chapter~\ref{ch:Pexamples}. The metric \eqref{eq:GCmetric} can be rearranged in orthonormal form
\begin{equation*}
\begin{split}
\drm s^{2} & = X (\drm t +{\omega}_{1}\, \drm {\varphi} +{\omega}_{2}\,\drm {\psi})^{2} + \skew{2}{\tilde} U (\drm {\varphi}+ {\Omega} \, \drm {\psi})^{2} \\
& \hphantom{=} + \tilde W\, \drm ψ^{2} - \erm^{2{\nu}} (\drm r^{2}+\drm z^{2}).
\end{split}
\end{equation*}
The orthonormal frame is again
\begin{equation*}
\renewcommand{\arraystretch}{2}
\begin{array}{ll}{\theta}^{0} = X^{\frac{1}{2}}(\drm t + {\omega}_{1}\, \drm {\varphi} + {\omega}_{2} \, \drm {\psi}), & {\theta}^{1} =\skew{2}{\tilde} U ^{\frac{1}{2}} (\drm {\varphi}+{\Omega} \, \drm {\psi}), \\{\theta}^{2} = \tilde W^{\frac{1}{2}} \, \drm {\psi}, \hspace{0.6cm} {\theta}^{3} = \erm^{{\nu}} \, \drm r, & {\theta}^{4} = \erm^{{\nu}} \, \drm z,\end{array}
\end{equation*}
so
\begin{equation*}
\renewcommand{\arraystretch}{2}
\begin{array}{lr}X{\omega}_{1} = Y, \qquad X{\omega}_{2} = Z &  X{\omega}_{1}{\omega}_{2}+\skew{2}{\tilde}U {\Omega} = V, \\ \skew{2}{\tilde}U +X {\omega}_{1}^{2} = U, & \tilde W + \skew{2}{\tilde}U {\Omega}^{2}+X{\omega}_{2}^{2}= W.\end{array}
\end{equation*}
Adapted to $\partial_{\psi}=0$, then for small $r$ it is $Z,V,W\in \Ocal(r^{2})$, hence ${\omega}_{2}, {\Omega}, \tilde W \in \Ocal(r^{2})$,\footnote{In order to see that ${\Omega}\in \Ocal(r^{2})$, derive from $X\in \Ocal(1)$ and $\skew{2}{\tilde}U X = UX-Y^{2}$ that $\skew{2}{\tilde}U\in \Ocal(1)$.} and the other terms $\Ocal(1)$. This implies $\frac{\tilde W}{W} \to  1$ as $r→0$. 

Now the 1-forms are
\begin{equation*}
\partial _{t} \to  T = X^{\frac{1}{2}}{\theta}^{0}, \quad \partial _{{\varphi}} \to  {\Phi} = {\omega}_{1}X^{\frac{1}{2}}{\theta}^{0}+\skew{2}{\tilde}U^{\frac{1}{2}}{\theta}^{1},
\end{equation*}
hence
\begin{align*}
\drm {\chi}_{1} & = * (T\wedge {\Phi}\wedge \drm T) = *(X^{\frac{1}{2}}{\theta}^{0}\wedge \skew{2}{\tilde}U^{\frac{1}{2}}{\theta}^{1}\wedge X\, \drm {\omega}_{2}\wedge \drm {\psi}) \\
\drm {\chi}_{2} & = *(T\wedge {\Phi}\wedge  \drm {\Phi}) = *(X^{\frac{1}{2}}{\theta}^{0}\wedge \skew{2}{\tilde}U^{\frac{1}{2}}{\theta}^{1}\wedge ({\omega}_{1}X\, \drm {\omega}_{2}+\skew{2}{\tilde}U\, \drm {\Omega})\wedge \drm {\psi}),
\end{align*}
which with $*({\theta}^{0}\wedge {\theta}^{1}\wedge {\theta}^{2}\wedge {\theta}^{3})={\epsilon}{\theta}^{4}$ leads to
\begin{align*}
\partial _{z} {\chi}_{1} & = {\epsilon} (X\skew{2}{\tilde}U)^{\frac{1}{2}}X \lim_{r\to 0}\left(\frac{\partial _{r}{\omega}_{2}}{\tilde W ^{\scriptscriptstyle{1/2}}_{\vphantom{0}}}\right) \\
\partial _{z} {\chi}_{2} & = {\epsilon} (X\skew{2}{\tilde}U)^{\frac{1}{2}} \lim_{r\to 0}\left(\frac{X{\omega}_{1}\partial _{r}{\omega}_{2}+\skew{2}{\tilde}U\partial _{r}{\Omega}}{W^{\scriptscriptstyle{1/2}}_{\vphantom{0}}}\right).
\end{align*}
We switch again to $(u,v)$-coordinates by using
\begin{equation*}
\frac{\partial }{\partial u} = u\frac{\partial }{\partial z} + v \frac{\partial }{\partial r}, \quad \frac{\partial }{\partial v} = -v\frac{\partial }{\partial z} + u \frac{\partial }{\partial r}.
\end{equation*}
On $v=0$ this yields
\begin{equation*}
\frac{\partial {\chi}_{1}}{\partial u} = u \frac{\partial {\psi}_{1}}{\partial z} = u {\epsilon} (X\skew{2}{\tilde}U)^{\frac{1}{2}} X \lim_{v\to 0} \left(\frac{1}{u\tilde W^{\scriptscriptstyle{1/2}}_{\vphantom{0}}}\frac{\partial {\omega}_{2}}{\partial v}\right).
\end{equation*}
Now use\footnote{NB: If we did not have $Z=v^{2}Z_{0}$ already, then we could impose it here without loss of generality, since a smooth metric invokes smooth twist 1-forms, but ${\chi}_{1}$ becomes singular for $v\to 0$ if only $Z=vZ_{0}$.}
\begin{equation*}
{\omega}_{2} = \frac{Z}{X} = \frac{v^{2}Z_{0}}{X_{0}} \quad \text{and} \quad \tilde W = v^{2}W_{0} + \Ocal(v^{4}), 
\end{equation*}
to obtain
\begin{align*}
\frac{\partial {\chi}_{1}}{\partial u} & = {\epsilon} \left(X_{0} \skew{2}{\tilde}U\right)^{\frac{1}{2}} \frac{2vZ_{0}}{vW_{0}^{\scriptscriptstyle{1/2}}} = 2{\epsilon} \frac{\left(X_{0}\skew{2}{\tilde}U\right)^{\frac{1}{2}}Z_{0}}{W_{0}^{\scriptscriptstyle{1/2}}}  = 2{\epsilon} \frac{Z_{0}^{\hphantom{\scriptscriptstyle{1/2}}}}{W_{0}^{\scriptscriptstyle{1/2}}} \left(u^{2}U_{0}^{\vphantom{2}}X_{0}^{\vphantom{2}}-u^{4}Y_{0}^{2}\right)^{\frac{1}{2}}\\
& = 2{\epsilon}Z_{0}\left(\frac{U_{0}X_{0}}{W_{0}}\right)^{\frac{1}{2}}u+\Ocal(u^{2})\\
\Rightarrow  {\chi}_{1}^{\vphantom{1}} & = {\chi}_{1}^{0} + {\chi}_{1}^{1} u^{2} + \text{h.o.} \vphantom{\left(\frac{U_{0}X_{0}}{W_{0}}\right)^{\frac{1}{2}}}
\end{align*}
Analogous steps lead to 
\begin{align*}
\frac{\partial {\chi}_{2}}{\partial u} & = u \frac{\partial {\chi}_{2}}{\partial z} = u{\epsilon} (X\skew{2}{\tilde}U)^{\frac{1}{2}} \lim_{v\to 0} \left(\frac{u^{2}Y_{0}\frac{1}{u}\frac{\partial {\omega}_{2}}{\partial v}+\left(U-\frac{Y^{2}}{X}\right)\frac{1}{u}\partial _{v}{\Omega}}{vW_{0}^{\scriptscriptstyle{1/2}}}\right) \\
& = 2u^{2}(X\skew{2}{\tilde}U)^{\frac{1}{2}} \left(\frac{Y_{0}Z_{0}}{X_{0}W_{0}^{\scriptscriptstyle{1/2}}}+\frac{\left(U_{0}-\frac{Y_{0}^{2}}{X_{0}}\right)\frac{1}{u}{\Omega}_{0,v}}{vW_{0}^{\scriptscriptstyle{1/2}}}\right) \\
& = 2{\epsilon} V_{0} \left(\frac{U_{0}X_{0}}{W_{0}}\right)^{\frac{1}{2}}u^{3}+ \text{h.o.} \\
\Rightarrow  {\chi}_{2}^{\vphantom{1}} & = {\chi}_{2}^{0} + {\chi}_{2}^{1} u^{4} + \text{h.o.} \vphantom{\left(\frac{U_{0}X_{0}}{W_{0}}\right)^{\frac{1}{2}}}
\end{align*}
For $u=0$ we only have to swap $Y\leftrightarrow Z$, $U\leftrightarrow W$. With $u^{2} \sim 2z$ the above can be summarized as 
\begin{equation*}
\renewcommand{\arraystretch}{2}
P_{-}^{\vphantom{\frac{1}{2}}}(r=0,z) =\left(\begin{array}{ccr}\dfrac{g_{0}}{2z} + \Ocal(1) & -g_{0}^{\vphantom{1}} {\chi}^{1}_{1}+\Ocal(z) & -g_{0}^{\vphantom{1}} {\chi}^{1}_{2}z+\Ocal(z^{2}) \\
\cdot  & \hspace{0.5cm}X_{0}+\Ocal(z) & 2zY_{0}+\Ocal(z^{3}) \\
\cdot  & \cdot  & 2zU_{0}+\Ocal(z^{4})\end{array}\right),
\end{equation*}
where $g_{0}^{\vphantom{2}}=(X_{0}^{\vphantom{2}}U_{0}^{\vphantom{2}}-2zY_{0}^{2})^{-1}$. Note that here we dropped without loss of generality the constant terms of the twist potentials ${\chi}_{i}^{0}$. This can be done just by a gauge transformation to $P$ of the form $P\to APB$ with constant matrices $A$ and $B$, namely
\begin{equation*}
\renewcommand{\arraystretch}{1.4}
P \to  \left(\begin{array}{rrr}1 & \hphantom{-}0 & \hphantom{-}0 \\-c_{1} & 1 & 0 \\-c_{2} & 0 & 1\end{array}\right)\, P\, \left(\begin{array}{ccc}1 & -c_{1} & -c_{2} \\ 0 & \hphantom{-}1 & \hphantom{-}0 \\0 & \hphantom{-}0 & \hphantom{-}1\end{array}\right).
\end{equation*}
For $P$ in standard form this results in ${\chi}_{i}\to {\chi}_{i}+c_{i}$. Removing the constant term in the twist potentials allows us to assume that without loss of generality the entries which become zero or blow up towards a nut are only on the diagonal. The off-diagonal entries are bounded towards the nut.

Without loss of generality assume that the nut is at $a=0$. Then the calculations above show that to leading order in $z$ the patching matrices below and above the nut are\footnote{Chosen the right orientation for the basis such that the signs which are recorded by $\epsilon$ work out.}
\begin{equation} \label{eq:Pplus}
\renewcommand{\arraystretch}{1.5}
P_{-}^{\vphantom{\frac{1}{2}}}=
\left(\begin{array}{ccc}
\hphantom{-}\dfrac{1}{2zX_{0}U_{0}} & -\dfrac{Z_{0}}{(U_{0}X_{0}W_{0})^{\scriptscriptstyle{1/2}}_{\vphantom{0}}} & -\dfrac{V_{0} z}{(U_{0}X_{0}W_{0})^{\scriptscriptstyle{1/2}}_{\vphantom{0}}} \\
\cdot & X_{0} & 2zY_{0} \\
\cdot & \cdot & 2zU_{0}
\end{array} \right),
\end{equation}
\begin{equation} \label{eq:Pminus}
\renewcommand{\arraystretch}{1.5}
P_{+}^{\vphantom{\frac{1}{2}}}=
\left(\begin{array}{ccc}
-\dfrac{1}{2zX_{0}W_{0}} & \hphantom{-}\dfrac{Y_{0}}{(U_{0}X_{0}W_{0})^{\scriptscriptstyle{1/2}}_{\vphantom{0}}} & -\dfrac{V_{0} z}{(U_{0}X_{0}W_{0})^{\scriptscriptstyle{1/2}}_{\vphantom{0}}} \\
\cdot & X_{0} & -2zZ_{0} \\
\cdot & \cdot & -2zW_{0}
\end{array} \right).
\end{equation}
Using that $\det J = -u^{2}v^{2}$ in \eqref{eq:Jaroundnut} and thus $X_{0}U_{0}W_{0}=-1$ to leading order in $z$ we see that the switching is correct to leading order in $z$. (This is consistent with the different adaptations we calculated for example for the Schwarzschild space-time or flat space, see Section~\ref{sec:PMP}.)

\begin{proof}[Proof of Theorem~\ref{thm:switching}]
To prove Theorem~\ref{thm:switching} the strategy is to follow the splitting procedure outlined in Section~\ref{sec:Pnearaxis}.

We first observe that splitting $P_{+}$ as in \eqref{eq:Pplus} will lead not to $J(r,z)$ as desired, but to $J(r,z)$ with its rows and columns permuted. This can be seen by looking at the diagonal case. To obtain $J(r,z)$ with the rows and columns in the order $(t,\phi,\psi)$ we need to permute
\begin{equation*}
P_{+}\rightarrow \widetilde{P_{+}}=E_{1}P_{+}E_{1} \quad \text{with }
E_{1} = \left(\begin{array}{ccc}
0 & 1 & 0 \\
1 & 0 & 0 \\
0 & 0 & 1
\end{array} \right).
\end{equation*}
Similarly for $P_{-}$ by \eqref{eq:Pminus}, we permute
\begin{equation*}
 P_{-}\rightarrow\widetilde{P_{-}}=E_{2}P_{-}E_{2}^{\mathrm{t}} \quad \text{with }
E_{2} = \left(\begin{array}{ccc}
0 & 1 & 0 \\
0 & 0 & 1 \\
1 & 0 & 0
\end{array} \right).
\end{equation*}
 Note that now the prescription in Theorem~\ref{thm:switching} translates to
\begin{equation} \label{eq:tPplus}
\renewcommand{\arraystretch}{1.5}
\widetilde{P_{-}}=D\widetilde{P_{+}}D \quad \text{with }
D = \left(\begin{array}{ccc}
1 & 0 & 0 \\
0 & 2z & 0 \\
0 & 0 & \dfrac{1}{2z}
\end{array} \right).
\end{equation}
Recall that we have set $a=0$. Following Chapter~\ref{ch:fivedim}, to obtain $J$ we split the matrices 
\begin{equation} \label{eq:Pswitch1}
\renewcommand{\arraystretch}{1.5}
\begin{split}
\widehat{P_{+}} = 
\left(\begin{array}{ccc}
1 & 0 & 0 \\
0 & \dfrac{r}{ζ} & 0 \\
0 & 0 & 1
\end{array} \right)
\widetilde{P_{+}}
\left(\begin{array}{ccc}
1 & 0 & 0 \\
0 & -rζ & 0 \\
0 & 0 & 1
\end{array} \right),\\
\renewcommand{\arraystretch}{1.5}
\widehat{P_{-}} = 
\left(\begin{array}{ccc}
1 & 0 & 0 \\
0 & 1 & 0 \\
0 & 0 & \dfrac{r}{ζ}
\end{array} \right)
\widetilde{P_{-}}
\left(\begin{array}{ccl}
1 & 0 & \hphantom{l}0 \\
0 & 1 & \hphantom{l}0 \\
0 & 0 & -rζ
\end{array} \right).
\end{split}
\end{equation}
The location of the diagonal entries which are not one is dictated by the position of the Killing vector which vanishes on the section of axis under consideration within the basis of Killing vectors $(\partial_{t},\partial_{\phi},\partial_{\psi})$ . In the language of Section~\ref{sec:Pnearaxis}, the integers $(p_{0},p_{1},p_{2})$ are, as we know, a permutation of $(0,0,1)$ and the location of the 1 is determined by the prescription just given.

Assembling \eqref{eq:tPplus} and \eqref{eq:Pswitch1} to
\begin{equation*}
\widehat{P_{-}}=A\widehat{P_{+}}B,
\end{equation*}
where 
\begin{equation*}
\renewcommand{\arraystretch}{1.5}
A = 
\left(\begin{array}{ccc}
1 & 0 & 0 \\
0 & \dfrac{2zζ}{r} & 0 \\
0 & 0 & \dfrac{r}{2zζ}
\end{array} \right), 
\quad
B = 
\left(\begin{array}{ccc}
1 & 0 & 0 \\
0 & -\dfrac{2z}{rζ} & 0 \\
0 & 0 & -\dfrac{rζ}{2z}
\end{array} \right),
\end{equation*}
all that is needed for completing the proof is to show that splitting the left and right hand side of this last equation yield the same $J$-matrix. To perform the splitting, we replace all appearances of $z$ by $w$ and make the substitution \eqref{eq:quadw2}. Note that
\begin{equation*}
w=z+\frac{r}{2}\left(\zeta^{-1}-\zeta\right)=\frac{1}{2}\left(u^{2}-v^{2}+uv\left(\zeta^{-1}-\zeta\right)\right)=\frac{1}{2\zeta}\left(u\zeta+v\right)\left(u-v\zeta\right),
\end{equation*}
so that
\begin{align*}
\frac{2w\zeta}{r} & = \frac{(u\zeta+v)(u-v\zeta)}{uv}=1+O\left(\zeta\right), \\
-\frac{2w}{r\zeta} & = -\frac{1}{\zeta^2}\frac{(u\zeta+v)(u-v\zeta)}{uv}=1+O\left(\zeta^{-1}\right).
\end{align*}
Thus $A(z,r,\zeta)$ is holomorphic and nonsingular in the neighbourhood of $\zeta=0$ with $A(z,r,0)=\id$, and $B(z,r,\zeta^{-1})$ is holomorphic and nonsingular in the neighbourhood of $\zeta^{-1}=0$ with $B(z,r,0)=\id$. Consequently, if $\widehat{P_{+}^{\vphantom{\infty}}}$ splits as
\begin{equation*}
\widehat{P_{+}^{\vphantom{\infty}}}=K_{+}^{0}\left(r,z,\zeta\right)\left(K_{+}^{\infty}\left(r,z,\zeta^{-1}\right)\right)^{-1},
\end{equation*}
with $K_{+}^{0}$ holomorphic and nonsingular in the neighbourhood of $\zeta=0$ and $K_{+}^{\infty}$ holomorphic and nonsingular in the neighbourhood of $\zeta^{-1}=0$, then a splitting of $\widehat{P_{-}^{\vphantom{\infty}}}$ 
is given by taking
\begin{equation*}
\widehat{P_{-}^{\vphantom{\infty}}}=K_{-}^{0}\left(K_{-}^{\infty}\right)^{-1} \quad \text{with } K_{-}^{0}=A K_{+}^{0},\ K_{-}^{\infty}=B^{-1}K_{+}^{\infty}.
\end{equation*}
The corresponding expressions for $J$ are
\begin{equation*}
J=J_{+}^{\vphantom{\infty}}(r,z)=K_{+}^{0}(0)\left(K_+^{\infty}(0)\right)^{-1}
\end{equation*}
and
\begin{equation*}
J=J_{-}^{\vphantom{\infty}}(r,z)=K_{-}^{0}(0)\left(K_{-}^{\infty}(0)\right)^{-1}=A(r,z,0) J_{+}^{\vphantom{\infty}}(r,z) B(r,z,0)=J_{+}^{\vphantom{\infty}}(r,z).
\end{equation*}
These are the same.
\end{proof}

\section{Application to the Black Ring} \label{sec:applbr}
Now we see how to apply the prescription for switching and the discussion of conicality to $P(z)$ for the black ring as in \eqref{eq:BRpatmat}. We are interested in the nut with largest $z$-value, which is the one at $z=\beta$. The first step is to make an additive shift to the twist potential $\chi$ to ensure that the term $P_{12}$ in \eqref{eq:BRpatmat} is finite at $z=\beta$. This needs
\begin{equation*}
\chi\rightarrow\chi+C, \quad C=-\frac{2\nu}{\beta+\lambda},
\end{equation*}
when
\begin{equation*}
P_{12}\rightarrow P_{12}-CP_{11}=P_{12}-\frac{\nu(z+\lambda)}{(\beta+\lambda)(z+\alpha)(z-\beta)}=\frac{\nu}{(z+\alpha)(\beta+\lambda)},
\end{equation*}
which is indeed finite at $z=\beta$, and
\begin{equation*}
P_{22}\rightarrow P_{22}-2CP_{12}+C^2P_{11}=-\frac{(z+\mu)}{(z+\alpha)}, \quad \text{where }\mu= \frac{\kappa^2(2b-c+bc)}{(1+b)},
\end{equation*}
which is also finite at $z=\beta$. We are in position to make the switch as in Theorem~\ref{thm:switching} with $\beta$ in place of $a$ and the result is
\begin{equation*}
\renewcommand{\arraystretch}{2.5}
P_{2}=\left(\begin{array}{ccc}
\dfrac{(z+\alpha)}{2(z-\alpha)(z-\beta)} & 0 & 0 \\
\cdot & -\dfrac{(z+\mu)}{(z+\alpha)} & \dfrac{2\nu(z-\beta)}{\gamma(z+\alpha)} \\
\cdot & \cdot & -2\dfrac{(z+\lambda)(z-\beta)}{(z+\alpha)}
\end{array} \right).
\end{equation*}
We have completed the switching and obtained $P_2(z)$, the transition matrix adapted to the section of axis $\alpha<z<\beta$. We could continue to find the transition matrix adapted to the other segments but that is straightforward and we do not need it. Instead we shall return to the question of conicality addressed in Corollary~\ref{cor:conic}. Compare with Theorem~\ref{thm:Jaroundnut} to find from $P_1$ that
\begin{equation*}
v^{2}W_{0}= \frac{2(z+\alpha)(z-\beta)}{(z-\alpha)}
\end{equation*}
where now $v^{2}=-2(z-\beta)$ and from $P_2$ that
\begin{equation*}
u^{2}U_{0}=-\frac{2(z+\lambda)(z-\beta)}{(z+\alpha)}
\end{equation*}
where now $u^{2}=2(z-\beta)$. Corollary~\ref{cor:conic} implies that there is no conical singularity on the axis section $\alpha<z<\beta$ provided
\begin{equation*}
\lim_{u\rightarrow 0}W_{0}=\lim_{v\rightarrow 0}U_{0},
\end{equation*}
which here requires
\begin{equation*}
\frac{\beta+\lambda}{\beta+\alpha}=\frac{\beta+\alpha}{\beta-\alpha}.
\end{equation*}
Using \eqref{eq:BRpar} this condition can be solved for $b$ as
\begin{equation*}
b=\frac{2c}{1+c^2}
\end{equation*}
which is known \cite{Emparan:2002aa} (or \cite[Eq.~(6.20)]{Harmark:2004rm}) to be the right condition. 
\chapter{Summary and Outlook}

In this work we have presented a possible way for the reconstruction of five- or higher-dimensional black hole space-times from what are at the moment believed to be the classifying parameters, rod structure and angular momenta. The method is based on a twistor construction which in turn relies on the Penrose-Ward transform. 

Our idea assigns a patching matrix to every rod structure where, apart from the possible poles at the nuts, the entries of the patching matrix have to be rational functions with the same denominator $Δ$ --- Chapter~\ref{ch:converse}. By imposing boundary conditions the aim is to determine all the coefficients of the polynomials in the numerator of these rational functions in terms of the nuts and angular momenta.

However, with an increasing number of nuts one needs increasingly sophisticated tools and it is of particular importance to gain a detailed understanding of how the patching matrices, adapted to two neighbouring rods, are related. In Theorem~\ref{thm:switching} we show how to do this and Theorem~\ref{thm:invpmatrix} provides this statement for the nut at infinity, that is, it relates the patching matrices which are adapted to the outer rods. By means of that we are able to reconstruct the patching matrix for a general two-nut rod structure (up to one restriction on the parameters for the physical Myers-Perry solution) and we can show that a three-nut rod structure with one Killing vector hypersurface-orthogonal fixes, together with a given angular momentum, the space-time to be the black ring. 

Also in Chapter~\ref{ch:converse} we discuss conical singularities on the axis and show how to obtain necessary and sufficient conditions for their removal. Applying this to the black ring we obtain the known relation between the parameters. In particular, this implies a relation between the rod structure and the asymptotic quantities for a non-singular solution known to exist.\\

\noindent Further questions which are interesting to pursue in this context are for example: 

Which rod structures are admissible? In other words, are there any restrictions on the rod structures which one allows in the set of parameters?

Can we construct a Lens space-time this way, that is a space-time whose horizon is connected and has the topology of a Lens space \cite[Prop.~2]{Hollands:2008fp}? We know what the corresponding rod structure looks like, but are we able to fix enough parameters and can we see whether the resulting patching matrix does give rise to a space-time without singularities? The latter question seems to be difficult to address as by the analytic continuation one can guarantee the existence of the solution with all its nice regularity properties only in a neighbourhood of the axis, but further away from the axis there might be so-called ``jumping lines'', where the mentioned triviality assumption of the bundle does not hold. 

How many dimensions does the moduli space for an $n$\,-nut rod structure have? Can we find upper and lower bounds on that depending on the imposed boundary conditions? This also does not seem to be an easy questions as most of the conditions we impose on the patching matrix are highly non-linear, for example the determinant condition. 

Which parts of the theory extend to higher dimensions? We have already pointed out along the way that some statements straight-forwardly generalize to more than five dimensions as well, but some others do not. A closer look at those points is certainly interesting. 

Also stepping a dimension down leads to a question for which this set of tools might be appropriate. Are we able to disprove the existence of a double-Kerr solution in four dimensions? It is conceivable that for example the imposed compatibility requirements as one switches at the nuts lead finally to an overdetermined system of conditions and thereby provoke a contradiction.
\hypersetup{bookmarksdepth=0}
\setcounter{section}{0}
\renewcommand{\thesection}{\Alph{section}}
\addtocontents{toc}{\protect\setcounter{tocdepth}{0}}
\chapter*{Appendix}
\section{Conformal Metrics and Null Separated Points} \label{app:confmetric}

We show that two symmetric (0,2)-tensors are conformally equivalent if and only if their sets of null vectors are identical (assuming that we have a bijection or a diffeomorphism between the underlying manifolds). If the (pseudo-) metrics are conformally equivalent, then the null vectors are obviously identical. Conversely, if we know that the (pseudo-) metrics $g_{1}$, $g_{2}$ leave the null cones invariant, then timelike or spacelike vectors, respectively, in $g_{1}$ correspond to timelike or spacelike vectors, respectively, in $g_{2}$. We are interested in the Lorentzian case, and choose an orthonormal tetrad $e_{0}$, $e_{1}$, $e_{2}$, $e_{3}$ for $g_{1}$ such that $e_{0}$ is timelike and $e_{1}$, $e_{2}$, $e_{3}$ are spacelike. By the assumption that null cones are preserved this is also orthogonal tetrad for $g_{2}$, and
\begin{align*}
g_{1}(e_{0},e_{0})=1 \quad & \Rightarrow  \quad g_{2}(e_{0},e_{0})={\lambda}>0,\\
g_{1}(e_{i},e_{i})=1 \quad & \Rightarrow  \quad g_{2}(e_{i},e_{i})={\mu}_{i}<0,\ i=1,2,3.
\end{align*}
Moreover, $e_{0}+e_{i}$ is null for $g_{1}$ hence for $g_{2}$
\begin{align*}
0 & = g_{2}(e_{0}+e_{i},e_{0}+e_{i})=g_{2}(e_{0},e_{0})+2g_{2}(e_{0},e_{i})+g_{2}(e_{i},e_{i})\\
& = g_{2}(e_{0},e_{0})+g_{2}(e_{i},e_{i})= {\lambda}-{\mu}_{i}.
\end{align*}
This shows that all nonzero coefficients for $g_{2}$ are the same, that is $g_{1}$ and $g_{2}$ are conformally equivalent.

\section{Characterization of Simple Bivectors} \label{app:simplebivectors}

A bivector $x^{{\alpha}{\beta}}$ is simple $\Leftrightarrow $ ${\varepsilon}_{{\alpha}{\beta}{\gamma}{\delta}}x^{{\alpha}{\beta}}x^{{\gamma}{\delta}}=0$ $\Leftrightarrow $ $*x_{{\alpha}{\beta}}x^{{\alpha}{\beta}}=0$.

\begin{proof}

$x^{{\alpha}{\beta}}$ is simple $\Rightarrow $ ${\varepsilon}_{{\alpha}{\beta}{\gamma}{\delta}}x^{{\alpha}{\beta}}x^{{\gamma}{\delta}}=0$:
\begin{align*}
{\varepsilon}_{{\alpha}{\beta}{\gamma}{\delta}}x^{{\alpha}{\beta}}x^{{\gamma}{\delta}} & = {\varepsilon}_{{\alpha}{\beta}{\gamma}{\delta}}(Z^{\alpha}{\tilde Z}^{\beta}-Z^{\beta}{\tilde Z}^{\alpha})(Z^{\gamma}{\tilde Z}^{\delta}-Z^{\delta}{\tilde Z}^{\gamma})\\
					& = {\varepsilon}_{{\alpha}{\beta}{\gamma}{\delta}}(Z^{\alpha}Z^{\gamma}{\tilde Z}^{\beta}{\tilde Z}^{\delta}-Z^{\alpha}Z^{\delta}{\tilde Z}^{\beta}{\tilde Z}^{\gamma}\\
					& \hspace{1.5cm} -Z^{\beta}Z^{\gamma}{\tilde Z}^{\alpha}{\tilde Z}^{\delta}+Z^{\beta}Z^{\delta}{\tilde Z}^{\alpha}{\tilde Z}^{\gamma})\\
												& = 4 {\epsilon}_{{\alpha}{\beta}{\gamma}{\delta}}Z^{\alpha}Z^{\gamma}{\tilde Z}^{\beta}{\tilde Z}^{\delta}\\
												& = 0.
\end{align*}
\begin{equation*}
{\varepsilon}^{\vphantom{1}}_{{\alpha}{\beta}{\gamma}{\delta}}x^{{\alpha}{\beta}}x^{{\gamma}{\delta}}=0 \quad  \Rightarrow  \quad  *x_{{\alpha}{\beta}}x^{{\alpha}{\beta}}=0:\  *{x}_{{\alpha}{\beta}}x^{{\alpha}{\beta}}=\frac{1}{2}{\Delta}{\epsilon}_{{\alpha}{\beta}^{\vphantom{1}}{\gamma}{\delta}}x^{{\gamma}{\delta}}x^{{\alpha}{\beta}}=0
\end{equation*}
$*x_{{\alpha}{\beta}}x^{{\alpha}{\beta}}=0$ $\Rightarrow $ $x^{{\alpha}{\beta}}$ is simple:  First note that $x^{{\alpha}{\beta}}$ has to have rank 0, 2 or 4, because it is skew-symmetric (its non-degenerate part defines a symplectic bilinear form and symplectic vector spaces are even-dimensional). This implies that if $x$ has an eigenvector with eigenvalue 0, then $x$ has to be of rank 2 or 0. Supposing that $x^{{\alpha}{\beta}}$ is nonzero we can set 2 rows and 2 columns zero. But then $x$ has to be of the form $x^{{\alpha}{\beta}}=Z^{[{\alpha}}\tilde Z^{{\beta}]}$. It remains to show that such an eigenvector exists. Using relations for the alternating symbol one can verify that
\begin{equation*}
y^{{\alpha}{\gamma}}x_{{\beta}{\gamma}}^{\vphantom{1}}-*x^{{\alpha}{\gamma}}*y^{\vphantom{1}}_{{\beta}{\gamma}}=\frac{1}{2}x^{{\delta}{\lambda}}y_{{\delta}{\lambda}}^{\vphantom{1}}\tensor*{{\delta}}{^{\alpha}_{\beta}}
\end{equation*}
for bivectors $x^{{\alpha}{\beta}}$, $y^{{\alpha}{\beta}}$. Setting $y^{{\alpha}{\beta}}=*x^{{\alpha}{\beta}}$ and using $*^{2}=1$, ${\Delta}^{2}=-1$ yields immediately
\begin{equation*}
*x_{{\alpha}{\beta}}x^{{\beta}{\gamma}}=0 \quad \iff  \quad *x_{{\alpha}{\beta}}x^{{\alpha}{\beta}}=0.
\end{equation*}
So, our assumption implies that $x^{{\beta}{\gamma}}q_{{\gamma}}$ is an eigenvector with eigenvalue 0 for an arbitrary $q_{\gamma}$.
\end{proof}

\section{Computation for Reduced Einstein Equations} \label{app:redeinst}

Using \eqref{eq:cov1} and the Ricci identity we obtain
\begin{align*}
R_{ab}^{\vphantom{1}} X_{i}^{a}X_{j}^{b} & = X_{j}^{b}\tensor{R}{^{c}_{acb}}X^{a}_{i} = - X_{j}^{b} {\nabla}^{c}{\nabla}_{c}^{\vphantom{1}}X_{ib}^{\vphantom{1}}\\
					& = -\frac{1}{2} X_{j}^{b}{\nabla}_{c}^{\vphantom{1}}\left(J^{lk}((\partial ^{c}J_{ik}^{\vphantom{1}})X_{lb}^{\vphantom{1}}-(\partial _{b}^{\vphantom{1}}J_{ki}^{\vphantom{1}})X_{l}^{c})\right).
\end{align*}
Expanding by Leibniz rule
\begin{align*}
R_{ab}^{\vphantom{1}} X_{i}^{a}X_{j}^{b}	& = -\frac{1}{2}({\nabla}_{c}^{\vphantom{1}}J^{lk})\cdot (\underbrace{X_{j}^{b}X_{lb}^{\vphantom{1}}}_{=J_{jl}}(\partial ^{c}J_{ik}^{\vphantom{1}})-\underbrace{X_{j}^{b}({\partial}_{b}^{\vphantom{1}}J_{ki}^{\vphantom{1}})}_{=0}X_{l}^{c}) \\
					& \hspace{0.4 cm} -\frac{1}{2}\underbrace{X_{j}^{b}X_{lb}^{\vphantom{1}}}_{=J_{jl}}J^{lk}{\nabla}_{c}^{\vphantom{1}}\partial ^{c}J_{ik}^{\vphantom{1}}-\frac{1}{2} J^{lk}(\partial ^{c}J_{ik}^{\vphantom{1}})\underbrace{X_{j}^{b}{\nabla}_{c}^{\vphantom{1}}X_{lb}^{\vphantom{1}}}_{\frac{1}{2}\partial _{c}J_{jl}}\\
					& \hspace{0.4cm} +\frac{1}{2} J^{lk}X^{c}_{l}X^{b}_{j}{\nabla}_{c}^{\vphantom{1}}{\partial}_{b}^{\vphantom{1}}J_{ki}^{\vphantom{1}}+\frac{1}{2}J^{kl}\underbrace{X^{b}_{j}({\partial}_{b}^{\vphantom{1}}J_{ki}^{\vphantom{1}})}_{=0}{\nabla}_{c}^{\vphantom{1}}X^{c}_{l},
\end{align*}
and using \eqref{eq:delJredeinst} together with the expression for the Laplace-Beltrami operator
\begin{equation*}
\boxvoid \, u=\nabla _{a}\nabla ^{a}u= \frac{1}{\sqrt{|g|}}\partial _{a}\left(\sqrt{|g|}g^{ab}\partial _{b}u\right),
\end{equation*}
this yields
\begin{align*}
R_{ab}^{\vphantom{1}} X_{i}^{a}X_{j}^{b}	& = -\frac{1}{2}J_{jl}^{\vphantom{1}}(\partial ^{c}J_{ik}^{\vphantom{1}})({\partial}_{c}^{\vphantom{1}}J^{lk})-\frac{1}{2}\boxvoid J_{ij}-\frac{1}{4} J^{lk}(\partial ^{c}J_{ik})(\partial _{c}J_{jl}^{\vphantom{1}})\\
					& \hspace{0.4cm} +\frac{1}{2} J^{lk}X^{c}_{l}\nabla _{c}(\underbrace{X^{b}_{j}\partial _{b}J_{ki}}_{=0})-\frac{1}{2} J^{lk}\underbrace{X^{c}_{l}\nabla _{c}(X^{b}_{j})}_{-\frac{1}{2}\partial _{b}J_{jl}}\partial ^{b}J_{ki}.
\end{align*}
Some further substitutions and cancellations lead to
\begin{align*}
R_{ab}^{\vphantom{1}} X_{i}^{a}X_{j}^{b}	& = -\frac{1}{2}J_{jl}^{\vphantom{1}}({\partial}^{c}J_{ik}^{\vphantom{1}})({\partial}_{c}^{\vphantom{1}}J^{lk})-\frac{1}{2}\boxvoid J_{ij}^{\vphantom{1}}\\
					& \hspace{0.4cm}-\frac{1}{4} J^{lk}(\partial ^{c}J_{ik})(\partial _{c}J_{jl})+\frac{1}{4} J^{lk}(\partial ^{c}J_{ik})(\partial _{c}J_{jl}),
\end{align*}
and eventually
\begin{align*}
R_{ab}^{\vphantom{1}} X_{i}^{a}X_{j}^{b}	& = -\frac{1}{2}J_{jl}^{\vphantom{1}}(\partial ^{c}J_{ik}^{\vphantom{1}})({\partial}_{c}^{\vphantom{1}}J^{lk})-\frac{1}{2}\boxvoid J_{ij}^{\vphantom{1}}\\
					& = -\frac{1}{2}J_{ik}J^{kl}\boxvoid J_{lj}-\frac{1}{2}J_{jk}g^{-\frac{1}{2}}(\partial _{a}J^{kl})g^{\frac{1}{2}}g^{ab}(\partial _{b}J_{li})\\
					& = -\frac{1}{2}J_{ik} g^{-\frac{1}{2}}\partial _{a}(g^{\frac{1}{2}}g^{ab}J^{kl}\partial _{b}J_{lj})
\end{align*}
as claimed. Note that the covariant derivative for the functions $J_{ik}$ is the same as the partial derivative.

\section{Isothermal Coordinates} \label{app:isothermal}

\textit{Isothermal coordinates} on a manifold are local coordinates where the metric has the form 
\begin{equation*}
\drm s^{2} = {\erm}^{2ν} (\drm x_{1}^{2}+\dotsc \drm x_{n}^{2})
\end{equation*}
with ${\erm}^{ν}$ a smooth function. In the real case the manifold should be Riemannian so that isothermal coordinates are conformal to the Euclidean metric. 

For a 2-surface isothermal coordinates always exist locally, which can be seen as follows. The coordinates $r$ and $x$ are isothermal if they satisfy
\begin{equation*}
*\drm r = \drm x
\end{equation*}
where $*$ is again the Hodge star operator. Suppose ${\Delta} = \drm {\delta}+ {\delta} \drm={\delta} \drm$ is the Laplace-Beltrami operator on functions,\footnote{Here ${\delta}$ denotes the codifferential according to the differential $\drm$. As the codifferential lowers the grade of a form, we get on functions ${\delta}r=0$. One can proof that $\nabla _{a}\nabla ^{a}$ and $\drm {\delta}+ {\delta} \drm$ are equivalent when acting on scalar functions.} then by standard elliptic theory, we can choose $r$ to be harmonic near a given point, that is ${\Delta}r=0$, with non-vanishing gradient $\drm r$. In our case we were given such an $r$. By the Poincar\'e lemma $*\drm r = \drm x$ has a local solution $x$ if and only if $\drm * \drm r=0$. Because of ${\delta}=* \drm *$, this is equivalent to ${\Delta}r=0$, thus local solutions exists. From the facts that $\drm r$ is non-zero and  $*^{2}=-1$ on 1-forms, we conclude that $\drm r$ and $\drm x$ are necessarily linearly independent, and therefore give local isothermal coordinates.

In our case the overall sign of the metric is not fixed, because we only asked the orthogonal 2-surfaces to be non-null. But the exclusion of null vectors still leaves two possibilities for the signature, $(+,+)$ or $(-,-)$. 

\section{Integrability of Einstein Equations} \label{app:intefe}

The second part of the reduced vacuum Einstein equations~\eqref{eq:redeinst3} are in full
\begin{align*}
2\irm \partial _{{\xi} \vphantom{\bar {\xi}}}(\log r {\erm}^{2ν}) & = r \tr (\partial _{{\xi} \vphantom{\bar {\xi}}}J^{-1}\partial _{{\xi} \vphantom{\bar {\xi}}}J), \\
-2\irm \partial _{\bar {\xi}}(\log r {\erm}^{2ν}) & = r \tr (\partial _{\bar {\xi}}J^{-1}\partial _{\bar {\xi}}J).
\end{align*}
By use of \eqref{eq:redyang} and the constraint $\det J =-r^{2}$, we have to show that they are integrable, that is
\begin{equation*}
\partial _{\bar {\xi}}\partial _{{\xi} \vphantom{\bar {\xi}}}(\log r {\erm}^{2ν}) = \partial _{{\xi} \vphantom{\bar {\xi}}} \partial _{\bar {\xi}}(\log r {\erm}^{2ν}),
\end{equation*}
which is equivalent to
\begin{equation} \label{eq:efeint} \tag{App1}
\partial _{\bar {\xi}}\left(r \tr (\partial _{{\xi} \vphantom{\bar {\xi}}}J^{-1}\partial _{{\xi} \vphantom{\bar {\xi}}}J)\right) + \partial _{{\xi} \vphantom{\bar {\xi}}}\left( r \tr (\partial _{\bar {\xi}}J^{-1}\partial _{\bar {\xi}}J)\right)=0.
\end{equation}
With 
\begin{align*}
\partial _{{\xi} \vphantom{\bar {\xi}}}=\frac{\partial x}{\partial {\xi}}\partial _{x}+\frac{\partial r}{\partial {\xi}}\partial _{r}=\frac{1}{2}\partial _{x}+\frac{1}{2\irm}\partial _{r},\\
\partial _{\bar {\xi}}=\frac{\partial x}{\partial \bar {\xi}}\partial _{x}+\frac{\partial r}{\partial  \bar {\xi}}\partial _{r}=\frac{1}{2}\partial _{x}-\frac{1}{2\irm}\partial _{r},
\end{align*}
we first rewrite \eqref{eq:redyang} in the following way
\begin{align*}
0 & = \partial _{{\xi} \vphantom{\bar {\xi}}}(r J^{-1}\partial _{\bar {\xi}}J)+\partial _{\bar {\xi}}(r J^{-1}\partial _{{\xi} \vphantom{\bar {\xi}}}J) \\
& = \frac{1}{4}(\partial _{x}+\frac{1}{\irm}\partial _{r})(rJ^{-1}(\partial _{x}-i\frac{1}{\irm}\partial _{r})J)+\frac{1}{4}(\partial _{x}-\frac{1}{\irm}\partial _{r})(rJ^{-1}(\partial _{x}+i\frac{1}{\irm}\partial _{r})J)\\
& = \partial _{x}(r J^{-1}\partial _{x}J)+\partial _{r}(r J^{-1}\partial _{r}J)
\end{align*}
This is applied in the following form
\begin{equation} \label{eq:redyang2} \tag{App2}
2rJ^{-1}\partial _{{\xi} \vphantom{\bar {\xi}}}\partial _{\bar {\xi}}J = -r \partial _{{\xi} \vphantom{\bar {\xi}}}J^{-1}\partial _{\bar {\xi}}J-r\partial _{\bar {\xi}}J^{-1}\partial _{{\xi} \vphantom{\bar {\xi}}}J-\frac{1}{2\irm}(J^{-1}\partial _{\bar {\xi}}J-J^{-1}\partial _{{\xi} \vphantom{\bar {\xi}}}J), 
\end{equation}
where the derivatives act only on the variable immediately to its right if no further parenthesis indicate it differently. Note that by $\partial A^{-1}=-A^{-1}(\partial A)A^{-1}$ for a matrix $A$, equation~\eqref{eq:efeint} is equivalent to
\begin{equation*}
\partial _{\bar {\xi}}\left(r \tr ((J^{-1}\partial _{{\xi} \vphantom{\bar {\xi}}}J)^{2})\right) + \partial _{{\xi} \vphantom{\bar {\xi}}}\left( r \tr ((J^{-1}\partial _{\bar {\xi}}J)^{2})\right)=0,
\end{equation*}
which can now be proved with \eqref{eq:redyang2} by expanding the derivatives
\begin{align*}
& \partial _{\bar {\xi}}\left(r \tr ((J^{-1}\partial _{{\xi} \vphantom{\bar {\xi}}}J)^{2})\right) + \partial _{{\xi} \vphantom{\bar {\xi}}}\left( r \tr ((J^{-1}\partial _{\bar {\xi}}J)^{2})\right)\\
& = -\frac{1}{2\irm} \tr \left((J^{-1}\partial _{{\xi} \vphantom{\bar {\xi}}}J)^{2}\right)+2r\tr\left((\partial _{\bar {\xi}}J^{-1}\partial _{{\xi} \vphantom{\bar {\xi}}}J)(J^{-1}\partial _{{\xi} \vphantom{\bar {\xi}}}J)\right) \\
& \hspace{0.4cm} + 2r \tr \left((J^{-1}\partial _{\bar {\xi}}\partial _{{\xi} \vphantom{\bar {\xi}}}J)(J^{-1}\partial _{{\xi} \vphantom{\bar {\xi}}}J)\right)\\
& \hspace{0.4cm} +\frac{1}{2\irm} \tr \left((J^{-1}\partial _{\bar {\xi}}J)^{2}\right)+2r\tr\left((\partial _{{\xi} \vphantom{\bar {\xi}}}J^{-1}\partial _{\bar {\xi}}J)(J^{-1}\partial _{\bar {\xi}}J)\right) \\
& \hspace{0.4cm} + 2r \tr \left((J^{-1}\partial _{{\xi} \vphantom{\bar {\xi}}}\partial _{\bar {\xi}}J)(J^{-1}\partial _{{\xi} \vphantom{\bar {\xi}}}J)\right)\\
& = -\frac{1}{2\irm} \tr \left((J^{-1}\partial _{{\xi} \vphantom{\bar {\xi}}}J)^{2}\right)+2r\tr\left((\partial _{\bar {\xi}}J^{-1}\partial _{{\xi} \vphantom{\bar {\xi}}}J)(J^{-1}\partial _{{\xi} \vphantom{\bar {\xi}}}J)\right) \\
& \hspace{0.4cm} + \tr \left((-r (\partial _{{\xi} \vphantom{\bar {\xi}}}J^{-1})\partial _{\bar {\xi}}J-r(\partial _{\bar {\xi}}J^{-1})\partial _{{\xi} \vphantom{\bar {\xi}}}J-\frac{1}{2\irm}(J^{-1}\partial _{\bar {\xi}}J-J^{-1}\partial _{{\xi} \vphantom{\bar {\xi}}}J))(J^{-1}\partial _{{\xi} \vphantom{\bar {\xi}}}J)\right)\\
& \hspace{0.4cm} +\frac{1}{2\irm} \tr \left((J^{-1}\partial _{\bar {\xi}}J)^{2}\right)+2r\tr\left((\partial _{{\xi} \vphantom{\bar {\xi}}}J^{-1}\partial _{\bar {\xi}}J)(J^{-1}\partial _{\bar {\xi}}J)\right) \\
& \hspace{0.4cm} + \tr \left((-r (\partial _{{\xi} \vphantom{\bar {\xi}}}J^{-1})\partial _{\bar {\xi}}J-r(\partial _{\bar {\xi}}J^{-1})\partial _{{\xi} \vphantom{\bar {\xi}}}J-\frac{1}{2\irm}(J^{-1}\partial _{\bar {\xi}}J-J^{-1}\partial _{{\xi} \vphantom{\bar {\xi}}}J))(J^{-1}\partial _{{\xi} \vphantom{\bar {\xi}}}J)\right)\\
& = 0
\end{align*}
All terms cancel so that \eqref{eq:efeint} is integrable.

\section{Criterion for Orthogonal Transitivity} \label{app:orthtrans}

We use the following theorem.
\begin{thm*}
Let $X_{i}$, $i=1,\dotsc ,n-2$, be $n-2$ commuting Killing vectors in an $n$-dimensional real or complex manifold such that
\begin{enumerate}
\item the tensor $X_{1}^{[a_{1}}X_{2}^{a_{2}\vphantom{]}}\cdots X_{n-2}^{a_{n-2}\vphantom{]}}\nabla ^{b\vphantom{]}}X_{i}^{c]}$ vanishes at least at one point for every $i=1,\dotsc ,n-2$ and \label{item:orthtranscond1}
\item the tensor $X_{i}^{c\vphantom{]}}\tensor{R}{_{c\vphantom{i}}^{[b}}X_{1}^{a_{1}\vphantom{]}}X_{2}^{a_{2}\vphantom{]}}\cdots X_{n-2}^{a_{n-2}]}=0$ for all $i=1,\dotsc ,n-2$, \label{item:orthtranscond2}
\end{enumerate}
then the 2-planes orthogonal to the Killing vectors $X_{i}$, $i=1,\dotsc ,n-2$, are integrable.
\end{thm*}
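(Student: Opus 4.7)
The plan is to recast orthogonal integrability via the Frobenius theorem as the pointwise vanishing of
\begin{equation*}
{\Omega}_{i}^{a_{1}\dotsc a_{n-2}bc} \coloneqq X_{1}^{[a_{1}}X_{2}^{a_{2}}\cdots X_{n-2}^{a_{n-2}}{\nabla}^{b}X_{i}^{c]}
\end{equation*}
for every $i$. Since ${\Omega}_{i}$ is a totally antisymmetric $n$-tensor on an $n$-manifold, it is proportional to the volume element: ${\Omega}_{i} = f_{i}\,{\varepsilon}$ for a scalar $f_{i}$. Hypothesis \eqref{item:orthtranscond1} then asserts that $f_{i}$ vanishes at at least one point $p_{i}$ of the (connected) manifold, so the whole problem reduces to showing that $f_{i}$ is constant on $M$.

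First I would show $\nabla_{d} {\Omega}_{i} = 0$ by direct computation. Leibniz gives two types of contributions: (a) terms in which the covariant derivative hits one of the Killing factors $X_{j}^{a_{j}}$, and (b) terms in which it hits $\nabla^{b}X_{i}^{c}$, producing a second derivative. For type (a), Killing's equation $\nabla_{d}X_{j,a_{j}} = -\nabla_{a_{j}}X_{j,d}$ says that $\nabla X_{j}$ is antisymmetric in its two indices; since the index $d$ is free while $a_{j}$ is antisymmetrized with $b,c$ and the other $a_{k}$'s, these contributions can be rewritten so that two copies of $\nabla X$ appear antisymmetrized over all their four indices. Using the commuting condition $[X_{j},X_{k}]=0$ together with the Killing equation (as in \eqref{eq:delJredeinst}) forces pairwise cancellation, and the type (a) contribution vanishes after antisymmetrization. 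For type (b), the Ricci identity for Killing vectors,
\begin{equation*}
\nabla_{d}\nabla^{b}X_{i}^{c} = \tensor{R}{^{cb}_{de}}X_{i}^{e},
\end{equation*}
converts the second derivative into a curvature term. Substituting this back and again antisymmetrizing over $(a_{1},\dotsc ,a_{n-2},b,c)$ leaves an expression of the schematic form
\begin{equation*}
{\nabla}_{d}{\Omega}_{i}^{a_{1}\dotsc a_{n-2}bc} \sim X_{i}^{e}\tensor{R}{_{e}^{[b}}X_{1}^{a_{1}}\cdots X_{n-2}^{a_{n-2}}\tensor*{{\delta}}{^{c]}_{d}}\,,
\end{equation*}
up to an overall combinatorial factor, after one uses the symmetries of the Riemann tensor, the Killing equation one more time, and the contraction of two indices of Riemann with the free $c$-index that forces the remaining curvature to appear as $R_{ab}$. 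Hypothesis \eqref{item:orthtranscond2} then makes this right-hand side vanish identically, so ${\nabla}_{d}{\Omega}_{i}\equiv 0$ on all of $M$. Combined with ${\nabla}{\varepsilon}=0$, this gives ${\nabla}_{d}f_{i}\equiv 0$, hence $f_{i}$ is constant, and by \eqref{item:orthtranscond1} that constant is zero.

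The main obstacle will be in the tensorial bookkeeping of the second step: one has to carry out the antisymmetrization over $n$ upper indices after applying the Ricci identity, keeping careful track of where the free derivative index $d$ sits and how the curvature factors contract with the Killing vectors, in order to show rigorously that all non-Ricci terms cancel and that the surviving term is precisely the one that is annihilated by hypothesis \eqref{item:orthtranscond2}. This is the standard Carter--Papapetrou type calculation generalised from two to $n-2$ Killing vectors; the rest of the argument (Frobenius reformulation, and the passage from ``$f_{i}$ constant and vanishes once'' to ``$f_{i}$ identically zero'') is routine, provided $M$ is connected, which is assumed throughout.
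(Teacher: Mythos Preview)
The paper does not actually prove this theorem: Appendix~\ref{app:orthtrans} merely states it and then says ``For dimension four this is proven in [Wald, Thm.~7.1.1] using Frobenius' theorem on integrable submanifolds; the generalization is due to Emparan and Reall.'' So there is no proof in the paper to compare your proposal against; your sketch is in fact \emph{more} detailed than what the paper provides.

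That said, your strategy is the standard Carter--Papapetrou argument that underlies both of the references the paper cites, so in that sense you are aligned with the intended proof. One point to tighten: your treatment of the ``type (a)'' terms is not quite right as written. When $\nabla_{d}$ hits $X_{j}^{a_{j}}$, the free index $d$ is \emph{not} part of the antisymmetrization, so you do not automatically get ``two copies of $\nabla X$ antisymmetrized over all four indices''. The clean way to organise the computation is to work with the scalar $f_{i}=\Omega_{i}^{a_{1}\dotsc c}\varepsilon_{a_{1}\dotsc c}$ directly (equivalently, the Hodge dual), so that all upper indices of $\Omega_{i}$ are already saturated by $\varepsilon$; then the type-(a) contributions become contractions of $\nabla_{d}X_{j}^{a}$ with $\varepsilon$ and the remaining Killing vectors, and the cancellations follow from the Killing equation and $[X_{j},X_{k}]=0$ exactly as in \eqref{eq:delJredeinst}. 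Similarly, the reduction of the type-(b) curvature term to Ricci relies on the first Bianchi identity after this contraction; you correctly flag this as the delicate step. Once those bookkeeping issues are handled, the remainder of your argument (constancy of $f_{i}$ plus hypothesis~\eqref{item:orthtranscond1} on a connected manifold) is correct.
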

For dimension four this is proven in \cite[Thm.~7.1.1]{Wald:1984rz} using Frobenius' theorem on integrable submanifolds; the generalization is due to Emparan and Reall \cite{Emparan:2002dn}. In the Einstein vacuum case condition \eqref{item:orthtranscond2} is automatically satisfied. Condition \eqref{item:orthtranscond1} is moreover satisfied if for example one of the Killing vectors corresponds to an axisymmetry, hence it vanishes on its ``rotation axis''. If we suppose both holds, then the obtain the metric in the ${\sigma}$-model form. In the real case we want our space-time also stationary (to get isothermal coordinates on space-time modulo symmetries). 

\section{Conical Singularities} \label{app:consing}

Suppose we parameterize the flat, real, three-dimensional space in conical coordinates. Then the metric takes the form
\begin{equation*}
\drm x^{2}+\drm y^{2}+\drm z^{2}=\drm r^{2} + k^{2} r^{2} \drm {\varphi}^{2} +\drm z^{2},
\end{equation*}
which is singular at $r=0$. Apparently, this is only a coordinate effect. So, conversely given a metric
\begin{equation*}
\drm r^{2} + k^{2} r^{2} \drm {\varphi}^{2} +\drm z^{2}
\end{equation*}
we can define $x=r \cos {\theta}$, $y=r \sin {\theta}$, ${\theta}=k{\varphi}$ so that
\begin{equation*}
\drm r^{2} + k^{2} r^{2} \drm {\varphi}^{2} +\drm z^{2}=\drm x^{2}+\drm y^{2}+\drm z^{2}.
\end{equation*}
However, this requires that the periodicities of ${\varphi}$ and ${\theta}$ are in a ratio of $\frac{2{\pi}}{k}$. If they do not have the correct periodicity, then a singularity at the apex $\{z=0,\, r=0\}$ appears which is of the form ``$\mathrm{Riem}={\delta}(r)$''. That is, there are diffeomorphism invariant quantities which become singular at the origin but their limit towards it is finite. Lastly, a short picture how the above form of the metric relates to the picture of a cone. In cylinder coordinates
\begin{equation*}
\drm R^{2} + R^{2} \drm {\varphi}^{2}+\drm Z^{2}
\end{equation*}
a cone with aperture $2{\alpha}$ is given by $Z=R \cot {\alpha}$, see Figure~\ref{fig:cone}. With $\drm Z^{2}=\drm R^{2} \cot {\alpha}$ the metric on the cone becomes
\begin{equation*}
\sin^{-2}{\alpha}\, \drm R^{2}+R^{2}\drm {\varphi}^{2}=\drm r^{2}+\sin^{2}{\alpha}\, r^{2}\drm {\varphi}^{2}
\end{equation*}
where $R=r\sin {\alpha} $. This is exactly our conical parameterization of the $(x,y)$-plane above with the relation $k^{2}=\sin^{2}{\alpha}$. 
\begin{figure}[htbp]
\begin{center}
     \scalebox{0.8}{\input{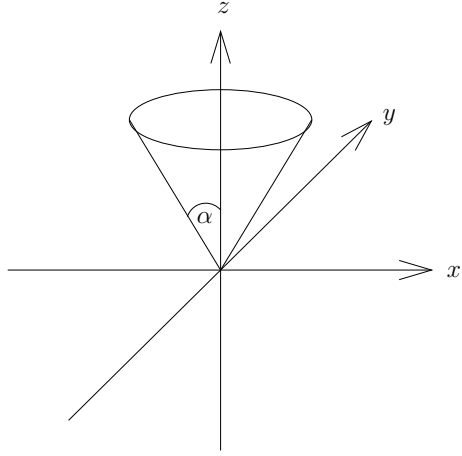}}
     \caption{Cone with aperture $2{\alpha}$.} 
     \label{fig:cone}
\end{center}
\end{figure}

\section{Rod Structure of Schwarzschild Solution} \label{app:schwarz}
With the substitution
\begin{equation*}
z=(R-m)\cos {\Theta}, \quad r=(R^{2}-2mR)^{\frac{1}{2}}\sin {\Theta},\quad t=T, \quad {\theta}= {\Phi}
\end{equation*}
for the Schwarzschild metric
\begin{equation*}
\drm s^{2} = \left(1-\frac{2m}{R}\right) \drm T^{2}-\left(1-\frac{2m}{R}\right)^{-1} \drm R^{2} - R^{2}(\drm {\Theta}^{2}+ \sin^{2}{\Theta}\, \drm {\Phi}^{2})
\end{equation*}
we calculate
\begin{align*}
(z+m)^{2} & = \left(R \cos {\Theta} + m(1-\cos {\Theta})\right)^{2}\\
	& = R^{2}\cos^{2}{\Theta}+2mR\cos{\Theta}(1-\cos{\Theta})+m^{2}(1-\cos{\Theta})^{2},\\
(z-m)^{2} & = \left(R \cos {\Theta} - m(1+\cos {\Theta})\right)^{2}\\
	& = R^{2}\cos^{2}{\Theta}-2mR\cos{\Theta}(1+\cos{\Theta})+m^{2}(1+\cos{\Theta})^{2},\\
r^{2} & = R^{2} \sin^{2}{\Theta} - 2m R \sin^{2}{\Theta}.
\end{align*}
This yields
\begin{align*}
r^{2}+(z+m)^{2} & = R^{2}+2mR(\cos{\Theta}-1)+m^{2}(1-\cos{\Theta})^{2}\\
		& = \left(R+m(\cos{\Theta}-1)\right)^{2},\\
r^{2}+(z-m)^{2} & = R^{2}-2mR(\cos{\Theta}+1)+m^{2}(1+\cos{\Theta})^{2}\\
		& = \left(R-m(\cos{\Theta}+1)\right)^{2},
\end{align*}
and therefore
\begin{equation*}
r_{+}+r_{-}=2R-2m\quad \text{and}\quad f=\frac{2R-4m}{2R}=1-\frac{2m}{R}. 
\end{equation*}
But from the metric we see 
\begin{equation*}
g(Y,Y)=1-\frac{2m}{R},
\end{equation*}
and 
\begin{align*}
g(X,X) & =-R^{2}\sin^{2}{\Theta}=-\frac{(R^{2}-2mR)R^{2}\sin^{2}{\Theta}}{R^{2}-2mR}=-\frac{(R^{2}-2mR)\sin^{2}{\Theta}}{1-\frac{2m}{R}}\\
	& =-\frac{r^{2}}{f}.
\end{align*}
Then $J$ takes the desired form.

\section{General B\"acklund Transformation} \label{app:btrfcalc}
Starting from the given decomposition we have\footnote{Using $\left(\begin{array}{cc}A^{-1} & 0 \\B^{\hphantom{-1}} & 1\end{array}\right)^{-1}
=\left(\begin{array}{cc}A & 0 \\-BA & 1\end{array}\right)$.}
\begin{equation*}
\renewcommand{\arraystretch}{1.5}
J^{-1}=\left(\begin{array}{cc}A^{-1} & 0 \\B^{\hphantom{-1}} & 1\end{array}\right)^{-1}
\left(\begin{array}{cc}1 & \tilde B^{\hphantom{-1}} \\0 & \skew{7}{\tilde}{A}^{-1}\end{array}\right),
\end{equation*}
and
\begin{equation*}
\renewcommand{\arraystretch}{1.5}
\begin{split}
\partial _{{\alpha}}J & = -\left(\begin{array}{cc}1 & \tilde B^{\hphantom{-1}} \\0 & \skew{7}{\tilde}{A}^{-1}\end{array}\right)^{-1}
\left(\begin{array}{cc} 0& \tilde B_{{\alpha}} \\0 & -\skew{7}{\tilde}{A}^{-1}\skew{7}{\tilde}{A}_{{\alpha}} \skew{7}{\tilde}{A}^{-1}\end{array}\right)
\left(\begin{array}{cc}1 & \tilde B^{\hphantom{-1}} \\0 & \skew{7}{\tilde}{A}^{-1}\end{array}\right)^{-1}
\left(\begin{array}{cc}A^{-1} & 0 \\B^{\hphantom{-1}} & 1\end{array}\right)\\
& \hspace{0.4cm} + \left(\begin{array}{cc}1 & \tilde B^{\hphantom{-1}} \\0 & \skew{7}{\tilde}{A}^{-1}\end{array}\right)^{-1}
\left(\begin{array}{cc} -A^{-1}A_{{\alpha}}A^{-1} & 0 \\B_{{\alpha}} & 0\end{array}\right),
\end{split}
\end{equation*}
where ${\alpha}\in \{\tilde w,\tilde z\}$. This yields
\begin{align*}
J^{-1}&\partial _{{\alpha}}J\\
& = \renewcommand{\arraystretch}{1.5}
-\left(\begin{array}{cc}A^{-1} & 0 \\B^{\hphantom{-1}} & 1\end{array}\right)^{-1}
\left(\begin{array}{cc} 0& \tilde B_{{\alpha}} \\0 & -\skew{7}{\tilde}{A}^{-1}\skew{7}{\tilde}{A}_{{\alpha}} \skew{7}{\tilde}{A}^{-1}\end{array}\right)
\left(\begin{array}{cc}1 & \tilde B^{\hphantom{-1}} \\0 & \skew{7}{\tilde}{A}^{-1}\end{array}\right)^{-1}
\left(\begin{array}{cc}A^{-1} & 0 \\B^{\hphantom{-1}} & 1\end{array}\right)\\
& \hspace{0.4cm} + \renewcommand{\arraystretch}{1.5}
\left(\begin{array}{cc}A^{-1} & 0 \\B^{\hphantom{-1}} & 1\end{array}\right)^{-1}
\left(\begin{array}{cc} -A^{-1}A_{{\alpha}}A^{-1} & 0 \\B_{{\alpha}} & 0\end{array}\right)\\
& = \renewcommand{\arraystretch}{1.5}
-\left(\begin{array}{cc}A & 0 \\-BA & 1\end{array}\right)
\left(\begin{array}{cc} 0& \tilde B_{{\alpha}} \\0 & -\skew{7}{\tilde}{A}^{-1}\skew{7}{\tilde}{A}_{{\alpha}} \skew{7}{\tilde}{A}^{-1}\end{array}\right)
\left(\begin{array}{cc}1 & -\tilde B \skew{7}{\tilde}{A} \\0 & \skew{7}{\tilde}{A}\end{array}\right)
\left(\begin{array}{cc}A^{-1} & 0 \\B^{\hphantom{-1}} & 1\end{array}\right)\\
& \hspace{0.4cm} + \renewcommand{\arraystretch}{1.5}
\left(\begin{array}{cc}A & 0 \\-BA & 1\end{array}\right)
\left(\begin{array}{cc} -A^{-1}A_{{\alpha}}A^{-1} & 0 \\B_{{\alpha}} & 0\end{array}\right)\\
& = \renewcommand{\arraystretch}{2}
\left(\begin{array}{cc}-A\tilde B_{{\alpha}}\skew{7}{\tilde}{A} B-A_{{\alpha}}A^{-1} & -A\tilde B_{{\alpha}}\skew{7}{\tilde}{A} \\BA\tilde B_{{\alpha}}\skew{7}{\tilde}{A}B+\skew{7}{\tilde}{A}^{-1}\skew{7}{\tilde}{A}_{{\alpha}}B+BA_{{\alpha}}A^{-1}+B_{{\alpha}} & BA\tilde B_{{\alpha}}\skew{7}{\tilde}{A}+\skew{7}{\tilde}{A}^{-1}\skew{7}{\tilde}{A}_{{\alpha}}\end{array}\right).
\end{align*}
From this we can read off the form of Yang's equation~\eqref{eq:yang2}. The (12)-entry of the matrix yields
\begin{equation*}
\partial _{z \vphantom{\tilde z}}(A \tilde B_{\tilde z}\skew{7}{\tilde}{A})-\partial _{w \vphantom{\tilde z} }(A \tilde B_{\tilde w}\skew{7}{\tilde}{A})=0,
\end{equation*}
which is the second equation in \eqref{eq:btrf}. Using this we get for the (11)-entry
\begin{equation*}
A\tilde B_{\tilde z}\skew{7}{\tilde}{A} B_{z \vphantom{\tilde z} } + \partial _{z \vphantom{\tilde z} }(A_{\tilde z} A^{-1})-A\tilde B_{\tilde w}\skew{7}{\tilde}{A} B_{w \vphantom{\tilde z} }-\partial _{w \vphantom{\tilde z} }(A_{\tilde w} A^{-1})=0,
\end{equation*}
which corresponds to the last equation in \eqref{eq:btrf}. Again using $A\tilde B_{{\alpha}} \skew{7}{\tilde}{A}=0$ we get for the (22)-entry
\begin{equation*}
B_{z}A\tilde B_{\tilde z}\skew{7}{\tilde}{A}+\partial _{z \vphantom{\tilde z} }(\skew{7}{\tilde}{A}^{-1}\skew{7}{\tilde}{A}_{\tilde z})-B_{w \vphantom{\tilde z} }A\tilde B_{\tilde w}\skew{7}{\tilde}{A}-\partial _{w \vphantom{\tilde z} }(\skew{7}{\tilde}{A}^{-1}\skew{7}{\tilde}{A}_{\tilde w})=0
\end{equation*}
corresponding to the third equation in \eqref{eq:btrf}. Lastly, with the vanishing of the (11)-entry we obtain for the (21)-entry
\begin{align*}
0 & = (BA\tilde B_{\tilde z}\skew{7}{\tilde}{A}+\skew{7}{\tilde}{A}^{-1}\skew{7}{\tilde}{A}_{\tilde z})B_{z \vphantom{\tilde z} }+(BA_{\tilde z}A^{-1})_{z \vphantom{\tilde z} }+(\skew{7}{\tilde}{A}^{-1})_{z \vphantom{\tilde z} }\skew{7}{\tilde}{A} B_{\tilde z}+\skew{7}{\tilde}{A}^{-1}(\skew{7}{\tilde}{A} B_{\tilde z}A)_{z \vphantom{\tilde z} }A^{-1}\\
& \hspace{0.4cm}+B_{\tilde z}A(A^{-1})_{z \vphantom{\tilde z} } - (z,\tilde z \leftrightarrow  w, \tilde w)\\
& = \skew{7}{\tilde}{A}^{-1} \skew{7}{\tilde}{A}_{\tilde z}B_{z \vphantom{\tilde z} }+B_{z \vphantom{\tilde z} }A_{\tilde z}A^{-1}+(\skew{7}{\tilde}{A}^{-1})_{z \vphantom{\tilde z} }\skew{7}{\tilde}{A} B_{\tilde z}+B_{\tilde z}A(A^{-1})_{z \vphantom{\tilde z} }\\
& \hspace{0.4cm} +\skew{7}{\tilde}{A}^{-1}(\skew{7}{\tilde}{A} B_{\tilde z}A)_{z \vphantom{\tilde z} }A^{-1} - (z,\tilde z \leftrightarrow  w, \tilde w)\\
& = \skew{7}{\tilde}{A}^{-1}(\skew{7}{\tilde}{A}_{\tilde z}B_{z \vphantom{\tilde z} }-\skew{7}{\tilde}{A}_{z \vphantom{\tilde z} }B_{\tilde z}) + (B_{z \vphantom{\tilde z} }A_{\tilde z}-B_{\tilde z}A_{z \vphantom{\tilde z} })A^{-1}\\
& \hspace{0.4cm} + \skew{7}{\tilde}{A}^{-1}(\skew{7}{\tilde}{A} B_{\tilde z}A)_{z \vphantom{\tilde z} }A^{-1} - (z,\tilde z \leftrightarrow  w, \tilde w).  
\end{align*}
Multiplying with $\skew{7}{\tilde}{A}$ from left and $A$ from right gives
\begin{align*}
0 & = (\skew{7}{\tilde}{A}_{\tilde z}B_{z \vphantom{\tilde z} }-\skew{7}{\tilde}{A}_{z \vphantom{\tilde z} }B_{\tilde z}) A + \skew{7}{\tilde}{A} (B_{z \vphantom{\tilde z} }A_{\tilde z}-B_{\tilde z}A_{z \vphantom{\tilde z} }) + (\skew{7}{\tilde}{A} B_{\tilde z}A)_{z \vphantom{\tilde z} } - (z,\tilde z \leftrightarrow  w, \tilde w)\\
& = (\skew{7}{\tilde}{A} B_{z \vphantom{\tilde z} }A)_{\tilde z} - \skew{7}{\tilde}{A} B_{z\tilde z}A-(\skew{7}{\tilde}{A} B_{\tilde z} A)_{z \vphantom{\tilde z} }+\skew{7}{\tilde}{A} B_{z\tilde z}A+(\skew{7}{\tilde}{A} B_{\tilde z}A)_{z \vphantom{\tilde z} }- (z,\tilde z \leftrightarrow  w, \tilde w)\\
& = (\skew{7}{\tilde}{A} B_{z \vphantom{\tilde z} }A)_{\tilde z}-(\skew{7}{\tilde}{A} B_{w \vphantom{\tilde z} }A)_{\tilde w}, 
\end{align*}
that is the first equation in \eqref{eq:btrf}.

\section{Reduced B\"acklund Transformation} \label{app:redbtrfcalc}

To obtain the reduced form of the Yang's equation in terms of the B\"acklund decomposition we proceed as above. By taking ${\alpha}\in \{r,x\}$ we immediately get the last three equations and the part with the $x$-derivatives in the first equation. Only the $r$-derivatives in the first equation need a closer look. But there we have in the same way as above
\begin{align*}
& r(BA\tilde B_{r}\skew{7}{\tilde}{A}+\skew{7}{\tilde}{A}^{-1}\skew{7}{\tilde}{A}_{r})B_{r}+(rBA_{r}A^{-1})_{r} + (r \skew{7}{\tilde}{A}^{-1} \skew{7}{\tilde}{A} B_{r}AA^{-1})_{r}\\
& = r\skew{7}{\tilde}{A}^{-1}\skew{7}{\tilde}{A}_{r}B_{r}+rB_{r}A_{r}A^{-1}+r(\skew{7}{\tilde}{A}^{-1})_{r}\skew{7}{\tilde}{A} B_{r}+rB_{r}A (A^{-1})_{r}\\
& \hspace{0.4cm}+\skew{7}{\tilde}{A}^{-1}(r\skew{7}{\tilde}{A} B_{r}A)_{r} A^{-1}\\
& = r \skew{7}{\tilde}{A}^{-1}(\skew{7}{\tilde}{A}_{r}B_{r}-\skew{7}{\tilde}{A}_{r} B_{r})+r(B_{r}A_{r}-B_{r}A_{r})A^{-1}+\skew{7}{\tilde}{A}^{-1}(r\skew{7}{\tilde}{A} B_{r}A)_{r} A^{-1}\\
& = \skew{7}{\tilde}{A}^{-1}(r\skew{7}{\tilde}{A} B_{r}A)_{r} A^{-1},
\end{align*}
which yields the claimed form of the equations.

\subsection{Two-Nut Spacetime: Parameter Configurations} \label{app:MPpar}

Given the set of parameters $(α>0, L_{1},L_{2})$, we would like to see whether the mass, that is $c_{3}$, is uniquely determined. If we set $x=c_{3}$, then from \eqref{eq:MPalpha} we get
\begin{equation*}
4α^{2}x^{4}=\left(x^{3}-\frac{\tilde L^{2}}{2}\right)^{2}-\tilde L_{1}^{2}\tilde L_{2}^{2}
\end{equation*}
with $2L_{i}=π\tilde L_{i}$ and $\tilde L^{2} = \tilde L_{1}^{2}+\tilde L_{2}^{2}$. This is equivalent to finding the positive zeros of
\begin{equation*}
F(x)\coloneqq x^{6}-4α^{2}x^{4}-\tilde L^{2} x^{3} + \frac{1}{4}\left(\tilde L_{1}^{2}-\tilde L_{2}^{2}\right)^{2}.
\end{equation*}
Rewrite this as
\begin{equation*}
F(x) = \underbrace{x^{3} \left(x^{3}-4α^{2}x-\tilde L^{2}\right)}_{G(x)}+K, \quad \text{where } K =\frac{1}{4}\left(\tilde L_{1}^{2}-\tilde L_{2}^{2}\right)^{2} ≥ 0.
\end{equation*}
Now
\begin{equation*}
G'(x) = x^{2}\underbrace{(6x^{3}-16α^{2}x-3\tilde L^{2})}_{g(x)}
\end{equation*}
Since $g(x)$ is a third order polynomial with positive leading coefficient, and negative value and slope for $x=0$, it has to have precisely one zero for $x>0$. Therefore, $G(x)$ has precisely two extremal points for $x<0$, a saddle point at the origin and precisely one minimum for $x>0$. At this minimum it is 
\begin{equation*}
\left. x^{3}=\frac{8}{3}α^{2}x+\frac{1}{2}\tilde L^{2}\right|_{x_{\mathrm{min}}>0},
\end{equation*}
hence
\begin{align*}
\left. G(x)\right|_{x_{\mathrm{min}}>0} = \left(\frac{8}{3}α^{2}x+\frac{1}{2}\tilde L^{2}\right)\left(-\frac{4}{3}α^{2}x-\frac{1}{2}\tilde L^{2}\right) = -\frac{32}{9}α^{4}x^{2}-2α^{2}\tilde L ^{2} x -\frac{1}{4} \tilde L^{4}.
\end{align*}
Finally, we see that
\begin{equation*}
\left. F(x) \right|_{x_{\mathrm{min}}>0}=\left. G(x) \right|_{x_{\mathrm{min}}>0} +K = -\frac{32}{9}α^{4}x^{2}-2α^{2}\tilde L ^{2} x - \tilde L_{1}^{2}\tilde L_{2}^{2}<0,
\end{equation*}
and because $G(0)=0$, we conclude that with no conditions on $(α>0, L_{1},L_{2})$ there are two positive solutions for $x$ (unless $\tilde L_{1}^{2}=\tilde L_{2}^{2}$ when there is only one).

\section{Implications of Hypersurface-Orthogonality} \label{app:hso}

A Killing vector $K$ is called \textit{hypersurface-orthogonal} if $K$ is the normal of a hypersurface. By \cite[Thm.~B.3.2]{Wald:1984rz} this is equivalent to the vanishing $K∧\drm K$, where we denote the Killing 1-form by $K$ as well. To sketch how this implies the vanishing of certain metric coefficients we write in our case
\begin{equation*}
Ψ = g_{tψ}\, \drm t + g_{φψ}\, \drm φ + g_{ψψ}\, \drm ψ, 
\end{equation*}
then
\begin{equation*}
\drm Ψ = g_{tψ,r}\, \drm r ∧ \drm t + g_{φψ,r}\, \drm r ∧ \drm φ + g_{ψψ,r}\, \drm r ∧ \drm ψ + r ↔ z
\end{equation*}
and $Ψ∧\drm Ψ=0$ is equivalent to
\begin{align*}
g_{φψ}\, \drm g_{tψ} & = g_{tψ}\, \drm g_{φψ}, \\
g_{ψψ}\, \drm g_{tψ} & = g_{tψ}\, \drm g_{ψψ}, \\
g_{ψψ}\, \drm g_{φψ} & = g_{φψ}\, \drm g_{ψψ}, 
\end{align*}
where $\drm = ∂_{r}+∂_{z}$. If all three of the metric coefficients are non-zero these equations can be integrated easily, showing that all three metric coefficients have to be proportional. However, this is a contradiction with respect to the asymptotic form of the metric if we regard this set of equations for a region that extends to $\sqrt{r^{2}+z^{2}}→∞$. Thus, some of the metric coefficients have to vanish, and it is not hard to see that the only possibility, which is compatible with the asymptotic form of the metric, is when $g_{t{\psi}}=g_{{\varphi}{\psi}}=0$.

\bibliographystyle{NormanPlainnat}						
\bibliography{/Users/norman/mathematics/Papers/library} 							

\begin{thebibliography}{50}
\providecommand{\natexlab}[1]{#1}
\providecommand{\url}[1]{\texttt{#1}}
\expandafter\ifx\csname urlstyle\endcsname\relax
  \providecommand{\doi}[1]{doi: #1}\else
  \providecommand{\doi}{doi: \begingroup \urlstyle{rm}\Url}\fi

\bibitem[Aharony et~al.(2000)Aharony, Gubser, Maldacena, Ooguri, \&
  Oz]{Aharony:1999ti}
O.~Aharony, S.~S. Gubser, J.~M. Maldacena, H.~Ooguri, \& Y.~Oz.
\newblock {Large N Field Theories, String Theory and Gravity}.
\newblock \emph{Phys. Rept.}, 323:\penalty0 183--386, 2000, hep-th/9905111.

\bibitem[Bardeen et~al.(1973)Bardeen, Carter, \& Hawking]{Bardeen:1973aa}
J.~M. Bardeen, B.~Carter, \& S.~W. Hawking.
\newblock {The Four Laws of Black Hole Mechanics}.
\newblock \emph{Comm. Math. Phys.}, 31:\penalty0 161--170, 1973.

\bibitem[Belinski{\u\i} \& Zakharov(1978)]{Belinskiui:aa}
V.~A. Belinski{\u\i} \& V.~E. Zakharov.
\newblock {Integration of the {E}instein Equations by Means of the Inverse
  Scattering Problem Technique and Construction of Exact Soliton Solutions}.
\newblock \emph{Sov. Phys. JETP}, 48\penalty0 (6):\penalty0 985, 1978.

\bibitem[Carter(1971)]{Carter:1971aa}
B.~Carter.
\newblock {Axisymmetric Black Hole Has Only Two Degrees of Freedom}.
\newblock \emph{Phys. Rev. Lett.}, 26:\penalty0 331--333, Feb 1971.

\bibitem[Chen \& Teo(2008)]{Chen:2008fa}
Y.~Chen \& E.~Teo.
\newblock {A Rotating Black Lens Solution in Five Dimensions}.
\newblock \emph{Phys. Rev.}, D78:\penalty0 064062, 2008, gr-qc/0808.0587.

\bibitem[Chru\'{s}ciel \& Costa(2008)]{Chrusciel:2008aa}
P.~T. Chru\'{s}ciel \& J.~L. Costa.
\newblock {On Uniqueness of Stationary Vacuum Black Holes}.
\newblock \emph{Ast\'erisque}, 321:\penalty0 195--265, 2008, gr-qc/0806.0016.
\newblock G{\'e}om{\'e}trie diff{\'e}rentielle, physique math{\'e}matique,
  math{\'e}matiques et soci{\'e}t{\'e}. I.

\bibitem[Chru\'{s}ciel \& Nguyen(2011)]{Chrusciel:2011eu}
P.~T. Chru\'{s}ciel \& L.~Nguyen.
\newblock {A Lower Bound for the Mass of Axisymmetric Connected Black Hole Data
  Sets}.
\newblock \emph{Class. Quant. Grav.}, 28:\penalty0 125001, 2011,
  gr-qc/1102.1175.

\bibitem[Elvang \& Figueras(2007)]{Elvang:2007rd}
H.~Elvang \& P.~Figueras.
\newblock {Black Saturn}.
\newblock \emph{JHEP}, 05:\penalty0 050, 2007, hep-th/0701035.

\bibitem[Elvang et~al.(2006)Elvang, Emparan, \& Virmani]{Elvang:2006aa}
H.~Elvang, R.~Emparan, \& A.~Virmani.
\newblock {Dynamics and Stability of Black Rings}.
\newblock \emph{J. High Energy Phys.}, 12:\penalty0 074, 35 pp. (electronic),
  2006, hep-th/0608076.

\bibitem[Emparan \& Reall(2002{\natexlab{a}})]{Emparan:2002aa}
R.~Emparan \& H.~S. Reall.
\newblock {A Rotating Black Ring Solution in Five Dimensions}.
\newblock \emph{Phys. Rev. Lett.}, 88\penalty0 (10):\penalty0 101101,
  2002{\natexlab{a}}, hep-th/0110260.

\bibitem[Emparan \& Reall(2002{\natexlab{b}})]{Emparan:2002dn}
R.~Emparan \& H.~S. Reall.
\newblock {Generalized Weyl Solutions}.
\newblock \emph{Phys. Rev. D (3)}, 65\penalty0 (8):\penalty0 084025, 26,
  2002{\natexlab{b}}, hep-th/0110258.

\bibitem[Emparan \& Reall(2006)]{Emparan:2006aa}
R.~Emparan \& H.~S. Reall.
\newblock {Black Rings}.
\newblock \emph{Classical Quantum Gravity}, 23\penalty0 (20):\penalty0
  R169--R197, 2006, hep-th/0608012.

\bibitem[Emparan \& Reall(2008)]{Emparan:2008aa}
R.~Emparan \& H.~S. Reall.
\newblock {Black Holes in Higher Dimensions}.
\newblock \emph{Living Reviews in Relativity}, 11\penalty0 (6), 2008.
\newblock URL \url{http://www.livingreviews.org/lrr-2008-6}.

\bibitem[Ernst(1968)]{Ernst:1968aa}
F.~J. Ernst.
\newblock {New Formulation of the Axially Symmetric Gravitational Field
  Problem}.
\newblock \emph{Phys. Rev.}, 167:\penalty0 1175--1178, Mar 1968.

\bibitem[Field(1982)]{Field:1982aa}
M.~Field.
\newblock \emph{{Several Complex Variables and Complex Manifolds}}.
\newblock Number pt. 1 in {London Mathematical Society Lecture Note Series}.
  Cambridge University Press, 1982.

\bibitem[{Fletcher}(1990)]{Fletcher:1990aa}
J.~{Fletcher}.
\newblock \emph{{Non-Hausdorff Twistor Spaces and the Global Structure of
  Space-Time}}.
\newblock PhD thesis, Oxford University~(England), 1990.

\bibitem[Fletcher \& Woodhouse(1990)]{Fletcher:1990db}
J.~Fletcher \& N.~M.~J. Woodhouse.
\newblock {Twistor Characterization of Stationary Axisymmetric Solutions of
  Einstein's Equations}.
\newblock In \emph{Twistors in Mathematics and Physics}, Volume 156 of
  \emph{London Math. Soc. Lecture Note Ser.}, pages 260--282. Cambridge
  University Press, Cambridge, 1990.

\bibitem[Fritzsche \& Grauert(2002)]{Fritzsche:2002aa}
K.~Fritzsche \& H.~Grauert.
\newblock \emph{{From Holomorphic Functions to Complex Manifolds}}.
\newblock {Graduate Texts in Mathematics}. Springer, 2002.

\bibitem[Giusto \& Saxena(2007)]{Giusto:2007fx}
S.~Giusto \& A.~Saxena.
\newblock {Stationary Axisymmetric Solutions of Five Dimensional Gravity}.
\newblock \emph{Class.Quant.Grav.}, 24:\penalty0 4269--4294, 2007,
  hep-th/0705.4484.

\bibitem[Grauert et~al.(1994)Grauert, Peternell, \& Remmert]{Grauert:1994aa}
H.~Grauert, T.~Peternell, \& R.~Remmert.
\newblock \emph{{Several Complex Variables VII: Sheaf-Theoretical Methods in
  Complex Analysis}}.
\newblock Number v. 74 in {Encyclopaedia of Mathematical Sciences}.
  Springer-Verlag, 1994.

\bibitem[Harmark(2004)]{Harmark:2004rm}
T.~Harmark.
\newblock {Stationary and Axisymmetric Solutions of Higher-Dimensional General
  Relativity}.
\newblock \emph{Phys. Rev. D (3)}, 70\penalty0 (12):\penalty0 124002, 25, 2004,
  hep-th/0408141.

\bibitem[Hawking(1972)]{Hawking:1972aa}
S.~W. Hawking.
\newblock {Black Holes in General Relativity}.
\newblock \emph{Comm. Math. Phys.}, 25:\penalty0 152--166, 1972.

\bibitem[Hawking \& Ellis(1973)]{Hawking:1973aa}
S.~W. Hawking \& G.~F.~R. Ellis.
\newblock \emph{{The Large Scale Structure of Space-Time}}.
\newblock Cambridge University Press, London, 1973.

\bibitem[Hollands \& Yazadjiev(2008)]{Hollands:2008fp}
S.~Hollands \& S.~Yazadjiev.
\newblock {Uniqueness Theorem for 5-Dimensional Black Holes with Two Axial
  {K}illing Fields}.
\newblock \emph{Comm. Math. Phys.}, 283\penalty0 (3):\penalty0 749--768, 2008,
  gr-qc/0707.2775.

\bibitem[Hollands et~al.(2007)Hollands, Ishibashi, \& Wald]{Hollands:2007aa}
S.~Hollands, A.~Ishibashi, \& R.~Wald.
\newblock {A Higher-Dimensional Stationary Rotating Black Hole Must be
  Axisymmetric}.
\newblock \emph{Communications in Mathematical Physics}, 271:\penalty0
  699--722, 2007, gr-qc/0605106.

\bibitem[Israel(1968)]{Israel:1968aa}
W.~Israel.
\newblock {Event Horizons in Static Electrovac Space-Times}.
\newblock \emph{Communications in Mathematical Physics}, 8:\penalty0 245--260,
  1968.

\bibitem[Kanti(2004)]{Kanti:2004nr}
P.~Kanti.
\newblock {Black Holes in Theories with Large Extra Dimensions: A Review}.
\newblock \emph{Int.J.Mod.Phys.}, A19:\penalty0 4899--4951, 2004,
  hep-ph/0402168.

\bibitem[Klein \& Richter(2005)]{Klein:2005aa}
C.~Klein \& O.~Richter.
\newblock \emph{{Ernst Equation and {R}iemann Surfaces}}, Volume 685 of
  \emph{{Lecture Notes in Physics}}.
\newblock Springer-Verlag, Berlin, 2005.

\bibitem[Kobayashi \& Nomizu(1996{\natexlab{a}})]{Kobayashi:1996aa}
S.~Kobayashi \& K.~Nomizu.
\newblock \emph{{Foundations of Differential Geometry}}, Volume~I of
  \emph{{Wiley Classics Library}}.
\newblock John Wiley \& Sons Inc., New York, 1996{\natexlab{a}}.

\bibitem[Kobayashi \& Nomizu(1996{\natexlab{b}})]{Kobayashi:1996ab}
S.~Kobayashi \& K.~Nomizu.
\newblock \emph{{Foundations of Differential Geometry}}, Volume~II of
  \emph{{Wiley Classics Library}}.
\newblock John Wiley \& Sons Inc., New York, 1996{\natexlab{b}}.

\bibitem[Maison(1979)]{Maison:1979aa}
D.~Maison.
\newblock {{E}hlers-{H}arrison-Type Transformations for {J}ordan's Extended
  Theory of Gravitation}.
\newblock \emph{Gen. Relativity Gravitation}, 10\penalty0 (8):\penalty0
  717--723, 1979.

\bibitem[Maldacena(1998)]{Maldacena:1997re}
J.~M. Maldacena.
\newblock {The Large N Limit of Superconformal Field Theories and
  Supergravity}.
\newblock \emph{Adv.Theor.Math.Phys.}, 2:\penalty0 231--252, 1998,
  hep-th/9711200.

\bibitem[Mason \& Woodhouse(1996)]{Mason:1996hl}
L.~J. Mason \& N.~M.~J. Woodhouse.
\newblock \emph{{Integrability, Self-Duality, and Twistor Theory}}.
\newblock {London Mathematical Society Monographs. New Series}. The Clarendon
  Press Oxford University Press, New York, 1996.

\bibitem[Myers \& Perry(1986)]{Myers:1986aa}
R.~Myers \& M.~Perry.
\newblock {Black Holes in Higher-Dimensional Space-Times}.
\newblock \emph{Annals of Physics}, 172\penalty0 (2):\penalty0 304 -- 347,
  1986.

\bibitem[Myers(2011)]{Myers:2011yc}
R.~C. Myers.
\newblock {Myers-Perry Black Holes}, 2011, gr-qc/1111.1903.

\bibitem[Penrose(2007)]{Penrose:2007aa}
R.~Penrose.
\newblock \emph{{The Road to Reality: A Complete Guide to the Laws of the
  Universe}}.
\newblock Vintage, 2007.

\bibitem[Pomeransky \& Sen'kov(2006)]{Pomeransky:2006bd}
A.~Pomeransky \& R.~Sen'kov.
\newblock {Black Ring with Two Angular Momenta}, 2006, hep-th/0612005.

\bibitem[Popov(1999)]{Popov:1999aa}
A.~D. Popov.
\newblock {Self-Dual Yang-Mills: Symmetries and Moduli Space}.
\newblock \emph{Rev. Math. Phys.}, 11:\penalty0 1091, 1999, hep-th/9803183.

\bibitem[Robinson(1975)]{Robinson:1975aa}
D.~C. Robinson.
\newblock {Uniqueness of the Kerr Black Hole}.
\newblock \emph{Phys. Rev. Lett.}, 34:\penalty0 905--906, Apr 1975.

\bibitem[Strominger \& Vafa(1996)]{Strominger:1996sh}
A.~Strominger \& C.~Vafa.
\newblock {Microscopic Origin of the Bekenstein-Hawking Entropy}.
\newblock \emph{Phys. Lett.}, B379:\penalty0 99--104, 1996, hep-th/9601029.

\bibitem[Sudarsky \& Wald(1992)]{Sudarsky:1992aa}
D.~Sudarsky \& R.~M. Wald.
\newblock {Extrema of Mass, Stationarity, and Staticity, and Solutions to the
  Einstein-Yang-Mills Equations}.
\newblock \emph{Phys. Rev. D}, 46:\penalty0 1453--1474, Aug 1992.

\bibitem[Tomizawa et~al.(2004)Tomizawa, Uchida, \& Shiromizu]{Tomizawa:2004aa}
S.~Tomizawa, Y.~Uchida, \& T.~Shiromizu.
\newblock {Twist of a Stationary Black Hole or Ring in Dive Dimensions}.
\newblock \emph{Phys. Rev. D (3)}, 70\penalty0 (6):\penalty0 064020, 5, 2004,
  gr-qc/0405134.

\bibitem[Tomizawa et~al.(2009)Tomizawa, Yasui, \& Ishibashi]{Tomizawa:2009aa}
S.~Tomizawa, Y.~Yasui, \& A.~Ishibashi.
\newblock {Uniqueness Theorem for Charged Rotating Black Holes in
  Five-Dimensional Minimal Supergravity}.
\newblock \emph{Phys. Rev. D}, 79:\penalty0 124023, Jun 2009, hep-th/0901.4724.

\bibitem[Townsend(1997)]{Townsend:1997aa}
P.~K. Townsend.
\newblock {{B}lack {H}oles ({L}ecture {N}otes {P}art {III})}, 1997.
\newblock URL \url{http://arxiv.org/pdf/gr-qc/9707012v1}.

\bibitem[Wald(1984)]{Wald:1984rz}
R.~M. Wald.
\newblock \emph{{General Relativity}}.
\newblock University of Chicago Press, Chicago, IL, 1984.

\bibitem[Ward(1983)]{Ward:1983yg}
R.~S. Ward.
\newblock {Stationary Axisymmetric Space-Times: A New Approach}.
\newblock \emph{Gen. Relativity Gravitation}, 15\penalty0 (2):\penalty0
  105--109, 1983.

\bibitem[Ward(1977)]{Ward:1977ab}
R.~Ward.
\newblock {On Self-Dual Gauge Fields}.
\newblock \emph{Physics Letters A}, 61\penalty0 (2):\penalty0 81 -- 82, 1977.

\bibitem[Witten(1979)]{Witten:1979aa}
L.~Witten.
\newblock {Static Axially Symmetric Solutions of Self-Dual SU(2) Gauge Fields
  in Euclidean Four-Dimensional Space}.
\newblock \emph{Phys. Rev. D}, 19:\penalty0 718--720, Jan 1979.

\bibitem[Woodhouse \& Mason(1988)]{Woodhouse:1988ek}
N.~M.~J. Woodhouse \& L.~J. Mason.
\newblock {The {G}eroch Group and Non-{H}ausdorff Twistor Spaces}.
\newblock \emph{Nonlinearity}, 1\penalty0 (1):\penalty0 73--114, 1988.

\bibitem[Yang(1977)]{Yang:1977aa}
C.~N. Yang.
\newblock {Condition of Self-Duality for SU(2) Gauge Fields on Euclidean
  Four-Dimensional Space}.
\newblock \emph{Phys. Rev. Lett.}, 38:\penalty0 1377--1379, Jun 1977.

\end{thebibliography}
\end{document}